\date{}
\newcommand*{\defname}[1]{\emph{\bf #1}}
\newcommand*{\logsys}[1]{\textnormal{\textsf{#1}}}
\newcommand*{\A}{\Gamma}
\newcommand*{\B}{\Delta}
\def\C{\mathcal{C}} %pas définie chez moi
\def\Exp{\mathcal{N}}
\newcommand*{\ocProofs}{\C^{\oc}}
\newcommand*{\ocsProofs}{\C^{\oc_s}}
\newcommand*{\ocgProofs}{\C^{\ocg}}
\newcommand*{\ocfProofs}{\C^{\ocf}}
\newcommand*{\ocuProofs}{\C^{\ocu}}
\newcommand*{\Var}{\mathcal{V}}
\newcommand*{\Atom}{\mathcal{A}}
\newcommand*{\redseq}{\rightarrow}
\newcommand\rmcutpar{\ensuremath{\mathsf{mcut(\iota, \perp\!\!\!\perp)}}\xspace}
\newcommand*{\rmcutparprime}{{\scriptsize\ensuremath{\mathsf{mcut(\iota', \perp\!\!\!\perp')}}}\xspace}
\newcommand*{\rmcutpardouble}{{\scriptsize\ensuremath{\mathsf{mcut(\iota'', \perp\!\!\!\perp'')}}}\xspace}
\newcommand*{\cutrel}{\perp\!\!\!\perp}
\newcommand*{\True}{\text{true}}
\newcommand*{\False}{\text{false}}
\newcommand*{\LL}{\logsys{\ensuremath{\text{LL}}}}
\newcommand*{\ELL}{\logsys{\ensuremath{\text{ELL}}}}
\newcommand*{\muELLinf}{\logsys{\ensuremath{\mu \text{ELL}^\infty}}}
\newcommand*{\LLL}{\logsys{\ensuremath{\text{LLL}}}}
\newcommand*{\SLL}{\logsys{\ensuremath{\text{SLL}}}}
\newcommand*{\seLL}{\logsys{\ensuremath{\text{seLL}}}}
\newcommand*{\MALL}{\logsys{\ensuremath{\text{MALL}}}}
\newcommand*{\superLL}{\logsys{\ensuremath{\text{superLL}}}}
\newcommand*{\muMALLinf}{\logsys{\ensuremath{\mu \text{MALL}^{\infty}}}}
\newcommand*{\muLLinf}{\logsys{\ensuremath{\mu \text{LL}^{\infty}}}}
\newcommand*{\muLLi}{\logsys{\ensuremath{\mu \text{LL}^{\infty}}}}
\newcommand*{\muLKinf}{\logsys{\ensuremath{\mu \text{LK}^{\infty}}}}
\newcommand*{\muLKmodinf}{\logsys{\ensuremath{\mu \text{LK}_{\Box}^{\infty}}}}
\newcommand*{\muLLmodinf}{\logsys{\ensuremath{\mu \text{LL}_{\Box}^{\infty}}}}
\newcommand*{\musuperLLinf}{\logsys{\ensuremath{\mu \text{superLL}^{\infty}}}}
\newcommand*{\musuperLLinfSig}{\logsys{\ensuremath{\mu \text{superLL}^{\infty}_{\Sig}}}}
\newcommand*{\ax}{\text{ax}}
\newcommand*{\exch}{\text{ex}}
\newcommand*{\mcut}{\text{mcut}}
\newcommand*{\cut}{\text{cut}}
\newcommand*{\wnwk}{\wn_{\text{w}}}
\newcommand*{\wnde}{\wn_{\text{d}}}
\newcommand*{\wncontr}{\wn_{\text{c}}}
\newcommand*{\ocprom}{\oc_{\text{p}}}
\newcommand*{\diacontr}{\lozenge_{\text{c}}}
\newcommand*{\diawk}{\lozenge_{\text{w}}}
\newcommand*{\boxprom}{\Box_{\text{p}}}
\newcommand*{\ocpromloz}{\ocprom^{\lozenge}}
\newcommand*{\mpx}[1]{\wn_{\text{m}_{#1}}}
\newcommand*{\contr}[1]{\wn_{\text{c}_{#1}}}
\newcommand*{\size}[1]{\#(#1)}
\newcommand*{\trans}[1]{{#1}^{\bullet}}
\newcommand*{\Sig}{\mathcal{E}}
\newcommand*{\Sigfun}[1]{[#1]}
\newcommand*{\e}{\sigma}
\newcommand*{\f}{\tau}
\newcommand*{\g}{\rho}
\newcommand*{\opmpx}{{{\scriptstyle\mathcal{O}}_{\text{mpx}}}}
\newcommand*{\perm}{\text{perm}}
\newcommand{\axgmpx}{(\mathsf{Ax^{g}_{m})}}
\newcommand{\axfumpx}{(\mathsf{Ax^{fu}_{m})}}
\newcommand{\axcontr}{(\mathsf{Ax_{c})}}
\newcommand{\axTrans}{(\mathsf{Ax_{trans})}}
\newcommand{\axleqgs}{(\mathsf{Ax_\leq^{gs})}}
\newcommand{\axleqfu}{(\mathsf{Ax_\leq^{fu})}}
\newcommand{\axleqfg}{(\mathsf{Ax_\leq^{fg})}}
\newcommand{\axlequs}{(\mathsf{Ax_\leq^{us})}}
\newcommand{\refgmpxAx}{\hyperref[gmpxAx]{$\axgmpx$}}
\newcommand{\refgmpx}{\hyperref[gmpxAx]{$\axgmpx$}}
\newcommand{\reffumpxAx}{\hyperref[fumpxAx]{$\axfumpx$}}
\newcommand{\reffumpx}{\hyperref[fumpxAx]{$\axfumpx$}}
\newcommand{\refcontrAx}{\hyperref[contrAx]{$\axcontr$}}
\newcommand{\refleqTrans}{\hyperref[leqTrans]{$\axTrans$}}
\newcommand{\refleqgs}{\hyperref[leqgs]{$\axleqgs$}}
\newcommand{\refleqfu}{\hyperref[leqfu]{$\axleqfu$}}
\newcommand{\refleqfg}{\hyperref[leqfg]{$\axleqfg$}}
\newcommand{\reflequs}{\hyperref[lequs]{$\axlequs$}}
\newcommand{\AIC}[1]{\AxiomC{\ensuremath{#1}}}
\newcommand{\ZIC}[1]{\AxiomC{}\UnaryInfC{\ensuremath{#1}}}
\newcommand{\UIC}[1]{\UnaryInfC{\ensuremath{#1}}}
\newcommand{\BIC}[1]{\BinaryInfC{\ensuremath{#1}}}
\newcommand{\TIC}[1]{\TrinaryInfC{\ensuremath{#1}}}
\newcommand{\QIC}[1]{\QuaternaryInfC{\ensuremath{#1}}}
\newcommand{\RL}[1]{\RightLabel{\ensuremath{#1}}}
\newcommand{\DP}{\DisplayProof}
\def\proofref#1{
\marginpar{\vspace{-0.2cm}
\colorbox{cyan}{\begin{minipage}{1.8cm}
%\colorbox{lightgray}
{\scriptsize{#1}}
\end{minipage}}}}
\def\defref#1{
\marginpar{\vspace{-0.2cm}
\colorbox{lightgray}{\begin{minipage}{1.8cm}
%\colorbox{lightgray}
{\scriptsize{#1}}
\end{minipage}}}}
\newtheorem{prop}{Proposition}
\newtheorem{defi}{Definition}
\newtheorem{lem}{Lemma}
\newtheorem{nota}{Notation}
\newtheorem{exa}{Example}
\newtheorem{coro}{Corollary}
\newtheorem{thm}{Theorem}
\newtheorem{rem}{Remark}
\newcommand*{\commEsaie}[1]{}
\newcommand*{\commAlex}[1]{}
\newcommand*{\todoa}[1]{}
\newcommand*{\leqg}{\leq_{\text{g}}}
\newcommand*{\leqf}{\leq_{\text{f}}}
\newcommand*{\lequ}{\leq_{\text{u}}}
\newcommand*{\ocg}{{\oc_{\text{g}}}}
\newcommand*{\ocf}{{\oc_{\text{f}}}}
\newcommand*{\ocu}{{\oc_{\text{u}}}}
\newcommand*{\muEAL}{\logsys{\ensuremath{\text{EAL}_\mu}}}
\newcommand{\orig}[2]{
\begin{tikzpicture}[remember picture]
\node[inner sep=0pt,outer sep=1pt] (#1) {\ensuremath{#2}};\end{tikzpicture}}
\tikzset{%
  remember picture with id/.style={%
    remember picture,
    overlay,
    save picture id=#1,
  },
  save picture id/.code={%
    \edef\pgf@temp{#1}%
    \immediate\write\pgfutil@auxout{%
      \noexpand\savepointas{\pgf@temp}{\pgfpictureid}}%
  },
  if picture id/.code args={#1#2#3}{%
    \@ifundefined{save@pt@#1}{%
      \pgfkeysalso{#3}%
    }{
      \pgfkeysalso{#2}%
    }
  }
}
\def\savepointas#1#2{%
  \expandafter\gdef\csname save@pt@#1\endcsname{#2}%
}
\def\tmk@labeldef#1,#2\@nil{%
  \def\tmk@label{#1}%
  \def\tmk@def{#2}%
}
\newcommand\tikzmark[2][]{%
\tikz[remember picture with id=#2] #1;}
\tikzset{%
  remember picture with id/.style={%
    remember picture,
    overlay,
    save picture id=#1,
  },
  save picture id/.code={%
    \edef\pgf@temp{#1}%
    \immediate\write\pgfutil@auxout{%
      \noexpand\savepointas{\pgf@temp}{\pgfpictureid}}%
  },
  if picture id/.code args={#1#2#3}{%
    \@ifundefined{save@pt@#1}{%
      \pgfkeysalso{#3}%
    }{
      \pgfkeysalso{#2}%
    }
  }
}
\def\savepointas#1#2{%
  \expandafter\gdef\csname save@pt@#1\endcsname{#2}%
}
\def\tmk@labeldef#1,#2\@nil{%
  \def\tmk@label{#1}%
  \def\tmk@def{#2}%
}
\title{A uniform cut-elimination theorem for linear logics with fixed points and super exponentials} %TODO Please add
\titlerunning{Super exponentials with fixed-points} %TODO optional, please use if title is longer than one line
\author{Esaïe \textbf{B\footnotesize{AUER}} \& Alexis \textbf{S\footnotesize{AURIN}}}{Université Paris Cité \& CNRS \& INRIA, Pl. Aurélie Nemours,  75013 Paris, France}{alexis.saurin@irif.fr}{https://www.irif.fr/users/saurin/index}{}
\authorrunning{E. \textbf{B\footnotesize{AUER}} \& A. \textbf{S\footnotesize{AURIN}}} %TODO mandatory. First: Use abbreviated first/middle names. Second (only in severe cases): Use first author plus 'et al.'
\keywords{cut elimination, exponential modalities, fixed-points, linear logic, light logics, mu-calculus, non-wellfounded proofs, proof theory, sequent calculus, subexponentials, super exponentials} %TODO mandatory; please add comma-separated list of keywords
\begin{document}

\maketitle

 \begin{abstract}
In the realm of light logics deriving from linear logic, a number of variants of exponential rules have been investigated. The profusion of such proof systems induces the need for cut-elimination theorems for each logic, the proof of which may be redundant. A number of approaches in proof theory have been adopted to cope with this need. In the present paper, we consider this issue from the point of view of enhancing linear logic with least and greatest fixed-points and considering such a variety of exponential connectives.

Our main contribution is to provide a uniform cut-elimination theorem for a parametrized system with fixed-points by combining two approaches: cut-elimination proofs by reduction (or translation) to another system and the identification of sufficient conditions for cut-elimination. 

More precisely, we examine a broad range of systems, taking inspiration from Nigam and Miller's subexponentials and from Bauer and Laurent's super exponentials. Our work is motivated on the one hand by Baillot's work on light logics with recursive types and on the other hand by Bauer and Saurin's recent work on the proof theory of the modal $\mu$-calculus.

%In this paper, we explore the connection between fixed-point and parametrized modalities in linear logic. We tackle this by examining a broad range of systems, taking inspiration from the subexponential system from Nigam and Miller and the super exponential system from Bauer and Laurent. Our work is inspired by Baillot’s research on light linear logics with recursive types, but also extends to systems with multiple modalities not necessarily exponential, like Bauer and Saurin's modal calculus-based linear system. Unlike Baillot's approach, we focus on inductive and coinductive types in a non-wellfounded context, rather than finite recursive types. We introduce a new non-wellfounded system, $\mu$-superLL, which is a variant of superLL with fixed-point, and prove a syntactic cut-elimination for it using methods from non-wellfounded linear logic.
\end{abstract}

% To prove the cut-elimination of \muLKmodinf, our approach will consist in encoding \muLKmodinf{} into a new, more structured system: \muLLmodinf{}.
%     The system \muLLmodinf{} is an instance of \musuperLLinf{}, therefore we get cut-elimination directly as a corollary of %\Cref{musuperLLmodinfCutElim}
%     .
%     However, by looking in the details of the proof, we can state some results in a slightly more general way.

\todo[inline,color=cyan]{A FAIRE:
- Finir intro

- faire schéma résultats

- discuter comparaison notre superLL et celui de BL: (i) pas de digging, pourquoi compliqué à modéliser avec points fixes? 
(ii) on a limité la forme des règles de contraction, mais uniquement pour simplifier la présentation (iii) on traite trois types de promotions, donc permet de modéliser LLL: donner un exemple avec LLL, peut-être SLL pour montrer le multiplexing en action? -> voir les exemples qu'a Esaïe.

- Dire que !u sert pour LLL?

- introduire full modal mu-calculus et faire noter qu'on étend donc le résultat de FOSSACS au multi-modal mu-calcul.

%- Typo avant lemme 2, 3rd person

%- Typo dans lemme 3: $C^{!g}$ doit etre $C^!$ et en annexe D.1.3.

%- Reformuler le lemme 10 robustness of translation, avec un iff. et juste après, problème d'accord sequence*s*

- discuter des extensions, notamment de BLL si on a les contractions plus élaborées de BL21, mais aussi de la question du digging et de traiter d'autres logiques modales de cette manière.}

\section{Introduction}

%\noindent{\bf Linear logics and cut-elimination.}
%The discovery of linear logic, \LL, by Girard 40 years ago~\cite{girard87}, 
%was triggered by semantical investigations in the structures of models 
%of System F, but the true common point of the studies that 
%flourished since then, be they mainly grounded in semantics, proof-theory, 
%typed programming, etc.  is the fine-grained scrutiny of the phenomena of 
%cut-elimination and normalization in logic and programming, which goes 
%with what Girard called the \emph{explicitation} role of proof theory. 
%
%In this regard, \LL{} is much more than a logic of resources but it is a logic 
%aiming at describing interactions logical and 
%computational structures. Of course, the explicit and controlled 
%management of structural rules by means of the exponential modalities 
%permits a fine-grained quantitative analysis of the logic and the rise of light 
%logics and the implicit characterization of complexity classes they allow is 
%a striking application in this direction. 
%\\

\noindent{\bf On the redundancy of cut-elimination proofs.}
While cut-elimination is certainly a cornerstone of structural proof theory since Gentzen's introduction of the sequent calculus, an annoying fact is that a slight change in a proof system induces the need to reprove globally the cut-elimination property. Such redundant new proofs are usually quite boring and fastidious, often lacking any new insight: cut-elimination results lack modularity. This results in the need of reestablishing a theorem which differ only very marginally from a previously proven one, even though the details are very technical and the failure of cut-elimination may hide in those small variants.
There are mainly two directions to try and make cut-elimination results more uniform, reduction and axiomatization:

\begin{description}
\item[Cut-elimination by reduction] The first option consists in proving a new cut-elimination result 
%from a previously obtained theorem
by means of translation between proof systems, allowing to reduce the cut-elimination property of a given system to that of another one for which the property is already known. Very frequent in term-calculi such as the variants of the $\lambda$-calculus, this approach is also applied in proof theory, for instance in translations between classical, intuitionistic and linear logics~\cite{DanosJS97, girard87} where linear translations come with simulation results. A more recent application of this approach is the second author's proof of cut-elimination for \muLLinf, the infinitary proof system for linear logic extended with least and greatest fixed-points, which is proved~\cite{TABLEAUX23} by a reduction to the cut-elimination property of the exponential-free fragment of the logic~\cite{CSL16}.
\item[Axiomatizing systems eliminating cuts]
The second option consists in abstracting properties ensuring that cut-elimination holds in a sequent calculus, and to provide sufficient conditions for cut-elimination to hold. For instance, after Miller and Nigam's work on subexponentials~\cite{Nigam09} providing a family of logics extending \LL{} with exponential admitting various structural rules, Bauer and Laurent provided a systematic and generic setting that captures most of the light logics to be found in the literature~\cite{lll,sll}, \superLL, for which they provided a uniform proof of cut-elimination based on an axiomatization stating a set of sufficient conditions for cut-elimination to hold~\cite{TLLA21}. Another line of work, more algebraic, establishing sufficient conditions for cut-elimination is that of Terui {\it et al.}\cite{Ciabattoni06,Terui_2007,Ciabattoni09,Terui_11,CIABATTONI12} which established modular and systematic cut-elimination results by combining methods from proof theory and algebra.
\end{description}

We will see in the present paper that the two approaches can be mixed in order to provide a uniform cut-elimination proof for a large family of logics that we call \musuperLLinf{} and that extends both \muLLinf{} and super exponentials: we shall obtain a single proof for a large class of proof systems and, by relying on a proof translation-method, we shall not need to design a new termination measure but we will simply rely on simulation results from one logic to another.
  
\medskip
\noindent{\bf Linear modal $\mu$-calculus.}
One of our motivations originated in a recent work, where we established a cut-elimination theorem for the classical modal $\mu$-calculus with infinite proofs~\cite{FOSSACS25}.
A key step in this work consisted in proving cut-elimination  of \muLLmodinf{}, a linear variant of the classical modal $\mu$-calculus,
%(defined as the target of a linear translation of \muLKmodinf{}). 
to which we could reduce cut-elimination of the classical modal $\mu$-calculus. 
Indeed linear logic offers powerful tools for translating systems like \muLKinf{} from~\cite{TABLEAUX23} and \muLKmodinf{}~\cite{DBLP:journals/tcs/Kozen83} into linear systems making the transfer of properties of those system to others logics efficient. 
%, by extending the usual linear embeddings of classical logic to a linear modal  $\mu$-calculus. 
%Then we proved cut-elimination by extending a previous proof method by the second author which consisted in analyzing the cut-reduction behaviour of exponential modalities in \muLLinf{} through a fixed-point interpretation of exponentials. 
Proving cut-elimination for  \muLLmodinf{} we were led to consider a more systematic treatment of exponentials and modalities revisiting a previous work by the first author with Laurent~\cite{TLLA21} and introducing  \musuperLLinf{}.

%The extension of those methods to modal logic led us to introduce \musuperLLinf{} while considering a more systematic treatment of exponentials and modalities revisiting a previous work by the first author with Laurent~\cite{TLLA21}.
%The study of proof theory for certain logical systems via translations has been explored in~\cite{FOSSACS25}. 
%Linear logic offers powerful tools for proof theory and translating systems like \muLKinf{} from~\cite{TABLEAUX23} and \muLKmodinf{}~\cite{DBLP:journals/tcs/Kozen83} into linear systems makes the transfer of these tools to other systems efficient. 
%In~\cite{FOSSACS25}, a non-wellfounded linear version of the modal $\mu$-calculus, \muLLmodinf{}, is defined as a target for a linear translation of \muLKmodinf{}. 

\medskip

\noindent{\bf Light logics with least and greatest fixed points.}
Taming the deductive power of linear logic's exponential connectives allows one to get complexity bounds on the cut-elimination process~\cite{lll, sll}. Adding fixed points in such logic enriches the study of complexity classes~\cite{BAILLOT2015, BrunelTerui10, RoversiVercelli10, dal2006light}, as well as the study of light $\lambda$-calculus enriched with fixpoints as in~\cite{BAILLOT18}.

In~\cite{BAILLOT2015}, enriching \emph{elementary affine logic} with fixed points allows one to refine the complexity results from \ELL{}, and to characterize a hierarchy of the elementary complexity classes. In~\cite{Nguyen19}, it is even shown that the fixed-point-free version of this logic gets a very different characterization of complexity bounds for similar types.

The systems  defined in the present article differ from those discussed in the previous paragraph: they are based on recursive types rather than extremal fixed-points (ie. inductive and coinductive types), we base our study on potentially infinite and regular derivation trees, etc. 
However, both systems have strong similarities that we shall discuss in a later section,
%the principal type used to prove complexity bounds that uses fixed points is the Scott binary words type, which is: \quad $\mu\beta.\forall\alpha. (\beta\multimap\alpha)\multimap(\beta\multimap\alpha)\multimap (\alpha\multimap\alpha)$. Here, the fixed point variable $\beta$ is always in a positive position ({\it i.e.} to the left of an even number of linear implications) and it will correspond to a type that we can express in our system.
%Moreover, assigning to $\mu$ the role of least fixed-points ensures that a proof of such type remains finite.
which makes a stronger link between our systems and light systems from the literature.

\medskip

\noindent{\bf Organization and contributions of the paper.}
The main contribution of this paper is a syntactic cut-elimination result for a large class of (parametrized) linear systems wit least and greatest fixed-points coming with a notion of non-wellfounded and regular proofs. 
In \Cref{section:background}, we recall some definitions and results about infinitary rewriting theory and linear logic; we also give definitions of a variant of Bauer and Laurent's system of super exponentials~\cite{TLLA21}. 
We set up in \Cref{section:musuperll} a parametrized system, \musuperLLinf{}, which is \superLL{} extended with fixed-points and non-wellfounded proofs. 
Finally, in \Cref{section:cut-elimination}, we define the cut reduction system that achieve syntactic cut-elimination and provide the proof of our main theorem, the cut-elimination of \musuperLLinf{}, through an encoding into \muLLinf{}.

\todo[inline,color=green]{Prevoir une schéma qui décrit musuperLL, superLL, quelques instances et les traductions? Pour illustrer les résultats de l'article???

Dire aussi qu'on discute des instances que capture notre système.}

    \section{Background on \LL, fixed-points and non-wellfounded proofs}
\label{section:background}

%\subsection{Linear logic, fixed-points and non-wellfounded proofs}

In this paper, we will study proof theory of different systems of linear logic (\LL). It is much more convenient to work on one-sided sequents systems as proofs as well as the description of these systems are more compact than the two-sided version.
However, The results for the two-sided systems can be retrieved systematically from the one-sided systems with translations between them as in~\cite{TABLEAUX23} for instance.

\subsection{Formulas, sequent calculi and non-wellfounded proofs}
\label{section:definitionsOfKnownSequentCalculi}
\label{sequentCalculiSection}

Let $\Var$ and $\Atom$ be two disjoint sets of \emph{fixed-point variables} and \emph{atomic formulas} respectively.
The \emph{(pre-)formulas} of linear logic with fixed-points are defined inductively as ($a\in\Atom, X\in\Var$):\\
$F, G \hspace{0.5em}::= \hspace{0.5em} a  \mid a^\perp \mid X \mid \mu X. F \mid \nu X. F \mid F \parr G \mid F \otimes G \mid \bot \mid 1 \mid F \oplus G \mid F \with G \mid 0 \mid \top \mid \wn F \mid \oc F.$\\
\emph{Formulas} of \muLLinf{} are such closed pre-formulas ($\mu$ and $\nu$ being binders for variables in $\Var$).
By considering the $\mu, \nu, X$-free formulas of this system, we get \LL{}, the usual formulas of linear logic~\cite{girard87}. By considering the $\oc, \wn$-free formulas of it, we get the formulas $\muMALLinf{}$ \emph{the multiplicative and additive linear logic with fixed points}~\cite{CSL16}. By considering the intersection of these two subset of formulas, we get the formulas of \MALL{} the \emph{multiplicative and additive linear logic}. The $\wn, \oc$-fragment is called \emph{the exponential fragment} of linear logic.

\begin{defi}[Negation]\label{negationDef}
%\todoa{Inconsitent notations here.}\commEsaie{Qu'est ce qui est inconsistant ?}
We define $(-)^\perp$ to be the involution on formulas satisfying:\\%\proofref{Details in App.~\ref{app:negationDef}.}
$\begin{array}{clclclcl}
\bot^\perp & = 1~~ & X^\perp & = X~~ & (A_1\otimes A_2)^\perp & = A_1^\perp\parr A_2^\perp~~ & (A_1\with A_2)^\perp & = A_1^\perp \oplus A_2^\perp \\[-2pt]
\top^\perp & = 0 & {a^\perp}^\perp & = a & (\mu X. F)^\perp & = \nu X. F^\perp & (\wn F)^\perp & = \oc F^\perp
\end{array}$
\end{defi}

The sequent calculi that we consider in this paper are built on one-sided sequents: A \defname{sequent} is a list of formulas $\A$, that we usually write $\vdash \A$.
Usually, in the literature, derivation rules are defined as a scheme of one \defname{conclusion sequent} and a (possibly empty) list of \defname{hypotheses sequents}.
In our system, the derivation rules come equipped with an \defname{ancestor relation} linking each formula in the conclusion to zero, one or several formulas of the hypotheses.
When defining our rules, we provide this link by drawing the ancestor relation with colors. (See \Cref{fig:MALLrules,fig:ExponentialRules,fig:fixFragment}.)
As usual, some formulas may be distinguished as \defname{principal formulas}: both formulas in the conclusion of an axiom rule are principal, no formula is principal in the conclusion of an $(\exch)$ or $(\cut)$ inference while in other rules of \Cref{fig:MALLrules,fig:ExponentialRules,fig:fixFragment} the leftmost occurrence of each conclusion sequent is principal.

\begin{defi}[\MALL, \LL{} and \muLLinf{} inference rules]
\Cref{fig:MALLrules} defines \MALL{} inference rules.
\begin{figure}
\centering
\scalebox{0.9}{
\hspace{-1cm}$
\begin{array}{c}
\AIC{}
\RL{\ax}
\UIC{\vdash F, F^{\bot}}
\DP\qquad
\AIC{\vdash F, \tikzmark{cutc12}\A}
\AIC{\vdash F^{\bot}, \B\tikzmark{cutc22}}
\RL{\cut}
\BIC{\vdash \tikzmark{cutc11}\A, \B\tikzmark{cutc21}}
\DP\qquad
\AIC{\vdash \tikzmark{llexchc3}\A, \tikzmark{llexchf3}G, F\tikzmark{llexchf4}, \B\tikzmark{llexchc4}}
\RL{\exch}
\UIC{\vdash \tikzmark{llexchc1}\A, \tikzmark{llexchf1}F, \tikzmark{llexchf2}G, \B\tikzmark{llexchc2}}
\DP\qquad
\AIC{\vdash \tikzmark{parrf2}F, \tikzmark{parrf3}G, \tikzmark{parrc2}\A}
\RL{\parr}
\UIC{\vdash \tikzmark{parrf1}F\parr G, \tikzmark{parrc1}\A}
\DP \qquad
\AIC{\vdash \tikzmark{otimesf2}F, \B_1\tikzmark{otimesc12}}
\AIC{\vdash \tikzmark{otimesf3}G, \B_2\tikzmark{otimesc22}}
\RL{\otimes}
\BIC{\vdash \tikzmark{otimesf1}F\otimes G, \tikzmark{otimesc11}\B_1, \tikzmark{otimesc21}\B_2}
\DP
\\[2ex]
\AIC{\vdash \tikzmark{oplus1f2}F_1, \tikzmark{oplus1c2}\A}
\RL{\oplus^1}
\UIC{\vdash \tikzmark{oplus1f1}F_1 \oplus F_2, \tikzmark{oplus1c1}\A}
\DP
\qquad 
\AIC{\vdash \tikzmark{oplus2f2}F_2, \tikzmark{oplus2c2}\A}
\RL{\oplus^2}
\UIC{\vdash \tikzmark{oplus2f1} F_1 \oplus F_2, \tikzmark{oplus2c1}\A}
\DP
\qquad
\AIC{\vdash \tikzmark{withf2}F_1, \tikzmark{withc2}\A}
\AIC{\vdash \tikzmark{withf3}F_2, \tikzmark{withc3}\A}
\RL{\with}
\BIC{\vdash \tikzmark{withf1}F_1 \with F_2, \tikzmark{withc1}\A}
\DP\qquad
\AIC{}
\RL{1}
\UIC{\vdash 1}
\DP \qquad
\AIC{\vdash \tikzmark{botc2}\A}
\RL{\bot}
\UIC{\vdash \bot, \tikzmark{botc1}\A}
\DP
\qquad
\AIC{}
\RL{\top}
\UIC{\vdash \top, \A}
\DP
\end{array}
\begin{tikzpicture}[overlay,remember picture,-,line cap=round,line width=0.1cm]
   %%cut
   \draw[rounded corners, smooth=2,red, opacity=.4] ([xshift=1mm] pic cs:cutc11) to ([yshift=2mm] pic cs:cutc12);
   \draw[rounded corners, smooth=2,red, opacity=.4] ([xshift=-2mm] pic cs:cutc21) to ([yshift=2mm] pic cs:cutc22);
   %%exch
   \draw[rounded corners, smooth=2,red, opacity=.4] ([xshift=1mm] pic cs:llexchc1) to ([yshift=2mm] pic cs:llexchc3);
   \draw[rounded corners, smooth=2,red, opacity=.4] ([xshift=-1mm, yshift=1mm] pic cs:llexchc2) to ([xshift=-1mm,yshift=1mm] pic cs:llexchc4);
   \draw[rounded corners, smooth=2,green, opacity=.4] ([xshift=1mm] pic cs:llexchf1) to ([yshift=2mm] pic cs:llexchf4);
   \draw[rounded corners, smooth=2,green, opacity=.4] ([xshift=1mm] pic cs:llexchf2) to ([yshift=2mm] pic cs:llexchf3);
   %%bot
   \draw[rounded corners, smooth=2,red, opacity=.4] ([xshift=1mm] pic cs:botc1) to ([yshift=2mm] pic cs:botc2);
   %%otimes
   \draw[rounded corners, smooth=2,green, opacity=.4] ([xshift=4mm] pic cs:otimesf1) to ([yshift=2mm] pic cs:otimesf2);
   \draw[rounded corners, smooth=2,green, opacity=.4] ([xshift=4mm] pic cs:otimesf1) to ([xshift=2mm, yshift=2mm] pic cs:otimesf3);
   \draw[rounded corners, smooth=2,red, opacity=.4] ([xshift=1mm] pic cs:otimesc11) to ([xshift=-2mm, yshift=2mm] pic cs:otimesc12);
   \draw[rounded corners, smooth=2,red, opacity=.4] ([xshift=1mm] pic cs:otimesc21) to ([yshift=2mm] pic cs:otimesc22);
   %%parr
   \draw[rounded corners, smooth=2,green, opacity=.4] ([xshift=4mm] pic cs:parrf1) to ([yshift=2mm] pic cs:parrf2);
   \draw[rounded corners, smooth=2,green, opacity=.4] ([xshift=4mm] pic cs:parrf1) to ([xshift=2mm, yshift=2mm] pic cs:parrf3);
   \draw[rounded corners, smooth=2,red, opacity=.4] ([xshift=1mm] pic cs:parrc1) to ([yshift=2mm] pic cs:parrc2);
   %%oplus1
   \draw[rounded corners, smooth=2,green, opacity=.4] ([xshift=4mm] pic cs:oplus1f1) to ([xshift=2mm, yshift=2mm] pic cs:oplus1f2);
   \draw[rounded corners, smooth=2,red, opacity=.4] ([xshift=1mm] pic cs:oplus1c1) to ([yshift=2mm] pic cs:oplus1c2);
   %%oplus2
   \draw[rounded corners, smooth=2,green, opacity=.4] ([xshift=4mm] pic cs:oplus2f1) to ([xshift=2mm, yshift=2mm] pic cs:oplus2f2);
   \draw[rounded corners, smooth=2,red, opacity=.4] ([xshift=1mm] pic cs:oplus2c1) to ([yshift=2mm] pic cs:oplus2c2);
   %%with
   \draw[rounded corners, smooth=2,green, opacity=.4] ([xshift=4mm] pic cs:withf1) to ([yshift=2mm] pic cs:withf2);
   \draw[rounded corners, smooth=2,green, opacity=.4] ([xshift=4mm] pic cs:withf1) to ([xshift=2mm, yshift=2mm] pic cs:withf3);
   \draw[rounded corners, smooth=2,red, opacity=.4] ([xshift=1mm] pic cs:withc1) to ([yshift=2mm] pic cs:withc2);
   \draw[rounded corners, smooth=2,red, opacity=.4] ([xshift=1mm] pic cs:withc1) to ([xshift=1mm, yshift=2mm] pic cs:withc3);
\end{tikzpicture}
$
}
\caption{one-sided \MALL{} rules}\label{fig:MALLrules}
\end{figure}
 \LL{} inferences are obtained by considering  \Cref{fig:MALLrules,fig:ExponentialRules}.
%We add rules of \Cref{fig:ExponentialRules} to those of \MALL{} to get \LL{} inferences.
\begin{figure}
\centering
$
\AIC{\vdash \tikzmark{llwkc2}\A}
\RL{\wnwk}
\UIC{\vdash \wn F, \tikzmark{llwkc1}\A}
\DP
\quad
\AIC{\vdash \tikzmark{llcontrf2}\wn F, \tikzmark{llcontrf3}\wn F, \tikzmark{llcontrc2}\A}
\RL{\wncontr}
\UIC{\vdash \tikzmark{llcontrf1}\wn F, \tikzmark{llcontrc1}\A}
\DP
\quad
\AIC{\vdash \tikzmark{lldef2}F, \tikzmark{lldec2}\A}
\RL{\wnde}
\UIC{\vdash \tikzmark{lldef1}\wn F, \tikzmark{lldec1}\A}
\DP
\quad
\AIC{\vdash \tikzmark{llpromf2}F, \tikzmark{llpromc2}\wn\A}
\RL{\ocprom}
\UIC{\vdash \tikzmark{llpromf1}\oc F, \tikzmark{llpromc1}\wn\A}
\DP
$
\begin{tikzpicture}[overlay,remember picture,-,line cap=round,line width=0.1cm]
%wk   
   \draw[rounded corners, smooth=2,red, opacity=.4] ([xshift=1mm, yshift=0mm] pic cs:llwkc1) to ([xshift=1mm, yshift=2mm] pic cs:llwkc2);
%contr
   \draw[rounded corners, smooth=2,green, opacity=.4] ([xshift=1mm, yshift=0mm] pic cs:llcontrf1) to ([xshift=2mm, yshift=2mm] pic cs:llcontrf2);
   \draw[rounded corners, smooth=2,green, opacity=.4] ([xshift=1mm, yshift=0mm] pic cs:llcontrf1) to ([xshift=2mm, yshift=2mm] pic cs:llcontrf3);
   \draw[rounded corners, smooth=2,red, opacity=.4] ([xshift=0mm, yshift=0mm] pic cs:llcontrc1) to ([xshift=1mm, yshift=2mm] pic cs:llcontrc2);
%de
   \draw[rounded corners, smooth=2,green, opacity=.4] ([xshift=2mm, yshift=0mm] pic cs:lldef1) to ([xshift=1mm, yshift=2mm] pic cs:lldef2);
   \draw[rounded corners, smooth=2,red, opacity=.4] ([xshift=1mm, yshift=0mm] pic cs:lldec1) to ([xshift=1mm, yshift=2mm] pic cs:lldec2);
%prom
   \draw[rounded corners, smooth=2,green, opacity=.4] ([xshift=2mm, yshift=0mm] pic cs:llpromf1) to ([xshift=1mm, yshift=2mm] pic cs:llpromf2);
   \draw[rounded corners, smooth=2,red, opacity=.4] ([xshift=2mm, yshift=0mm] pic cs:llpromc1) to ([xshift=2mm, yshift=2mm] pic cs:llpromc2);
\end{tikzpicture}
\caption{one-sided exponential fragment of \LL}\label{fig:ExponentialRules}
\end{figure}
Finally, inference rules for \muMALLinf{} and \muLLinf{} are obtained by adding rules of  \Cref{fig:fixFragment} to \MALL{} and \LL{} inferences.
%Finally, we add rules of \Cref{fig:fixFragment} to \MALL{} and \LL{} to get the fixed-point versions of these systems, \muMALLinf and \muLLinf.
\begin{figure}[t]
    \centering
$
\AIC{\vdash \tikzmark{muf2}F[X:=\mu X.F], \tikzmark{muc2}\A}
\RL{\mu}
\UIC{\vdash \tikzmark{muf1}\mu X.F, \tikzmark{muc1}\A}
\DP\qquad\qquad
\AIC{\vdash \tikzmark{nuf2}F[X:=\nu X.F], \tikzmark{nuc2}\A}
\RL{\nu}
\UIC{\vdash \tikzmark{nuf1}\nu X.F,\tikzmark{nuc1}\A}
\DP
$
\begin{tikzpicture}[overlay,remember picture,-,line cap=round,line width=0.1cm]
\draw[rounded corners, smooth=2,green, opacity=.4] ([xshift=4mm] pic cs:muf1) to ([xshift=2mm, yshift=2mm] pic cs:muf2);
   \draw[rounded corners, smooth=2,red, opacity=.4] (pic cs:muc1) to ([xshift=2mm, yshift=2mm] pic cs:muc2);
   %%nu
   \draw[rounded corners, smooth=2,green, opacity=.4] ([xshift=4mm] pic cs:nuf1) to ([xshift=2mm, yshift=2mm] pic cs:nuf2);
   \draw[rounded corners, smooth=2,red, opacity=.4] (pic cs:nuc1) to ([xshift=2mm, yshift=2mm] pic cs:nuc2);   
\end{tikzpicture}
    \caption{Rules for the fixed-point fragment\label{fig:fixFragment}}
\end{figure}
\end{defi}

In the rest of the article, we will not write the exchange rules explicitly: one can assume that every rule is preceded and followed by a finite number of instances of $(\exch)$.
While proofs for \MALL{} and \LL{} are the usual trees inductively generated by the inference rules, defining non-wellfounded proofs for fixed-point logics requires some definitions:

\begin{defi}[Pre-proofs]
Given a set of derivation rules, we define \defname{pre-proofs} to be the trees co-inductively generated by rules of each of those systems.
\defname{Regular (or circular) pre-proofs} are those pre-proofs having a finite number of sub-proofs. 
\end{defi}

We represent regular proofs with back-edges as in the following example:

% \commEsaie{Pour la place : peut-on effacer cet exemple ?}
\begin{exa}[Regular proof]
    \label{exa:regProof}
We give an example of circular proof:
\hfill $
\AIC{\vdash \nu X. \oc X, \orig{circularExas}{\wn 0}}
\RL{\ocprom}
\UIC{\vdash \oc \nu X. \oc X, \wn 0}
\RL{\nu}
\UIC{\vdash \nu X. \oc X, \orig{circularExat}{\wn 0}}
\DP
$
\begin{tikzpicture}[remember picture,overlay]
     \draw [->,>=latex] ([yshift=1mm] circularExas.east) .. controls +(15:2cm) and +(-11:2cm) .. (circularExat.east);
    \end{tikzpicture}
\end{exa}
% We consider definitions from Fischer-Ladner sub-formula set $\FL(F)$ of a formula $F$ in appendix~\ref{app:fischerLadner}.

%\begin{rem}
%In the rest of the article, we will not write the exchange rules: one can assume that every rule is preceded and followed by a finite number of instances of $(\exch)$.
%\end{rem}

%\begin{defi}[Active \& Principal occurrence of a rule]
%\defname{Active occurrences} (resp. \defname{principal formula}) of the rules of \Cref{fig:MALLrules,fig:ExponentialRules,fig:fixFragment} are the leftmost occurrence (resp. formula) of each conclusion sequent except for
%%\begin{itemize}
%%    \item 
%(i) 
% $(\exch)$ which does not contain any active occurrences nor principal rules
%and (ii)
%%    \item 
% $(\cut)$ which does not contain active occurrences but has $F,F^\perp$ as principal formulas;
%%\end{itemize}
%\end{defi}

From that, we define the proofs as a subset of the pre-proofs:
\begin{defi}[Validity and proofs]
Let $b=(s_i)_{i\in\omega}$ be a sequence of sequents defining an infinite branch in a pre-proof $\pi$.
A \defname{thread} of $b$ is a sequence $(F_i\in s_i)_{i>n}$ of formula occurrences such that for each $j$, $F_j$ and $F_{j+1}$ are satisfying the ancestor relation.
%\todoa{il faut expliciter que ce sont des occurrences ici, may on n'en a pas parlé, pas plus que de la relation de parenté (ancestor) entre formules} \commEsaie{Oui je pensais la rajouter dans les règles à un moment, je vais mettre ça en TODO}
We say that a thread of $b$ is \defname{valid} if the minimal recurring formula
%\todoa{Attention: double erreur ici: (i) ce n'est pas la formule minimale mais la formule *récurrente* minimale et (ii) il faut que la formule soit infiniment principale, sinon une formule $\nu$ qui n'est jamais principale dans une branche la valide...}
of this sequence, for sub-formula ordering, exists and is a $\nu$-formula and that the formulas of this threads are infinitely often principal. A branch $b$ is \defname{valid} if there exists a valid thread of $b$.
A pre-proof is \defname{valid} and is a \defname{proof} if each of its infinite branches is valid.
\end{defi}

\begin{exa}
\label{ex:prom-encoding}
Given a formula $A$, let us consider $\trans{\wn} A = \mu X. ({A} \oplus (\bot \oplus (X\parr X)))$ and $\trans{\oc} A = \nu X. ({A} \with (1 \with (X\otimes X)))$.
Assuming a context $\Gamma$ and a valid proof $\pi$ of $\vdash A, \wn\Gamma$, the following is a valid proof of  $\vdash \trans{\oc} A, \wn\Gamma$:\\
(In every infinite branch along\\ 
the 2 back-edges, $\trans{\oc} A$ is the \\
minimal recurring formula.)

\vspace{-1.5cm}~\hfill $
\AIC{\pi}
\noLine
\UIC{\vdash A, \wn \Gamma}
\AIC{}
\RL{1}
\UIC{\vdash 1}
\RL{\wnwk^\star}
\UIC{\vdash 1, \wn\Gamma}
\AIC{\vdash \trans{\oc} A, \orig{circularProms1}{\wn\Gamma}}
\AIC{\vdash \trans{\oc} A, \orig{circularProms2}{\wn\Gamma}}
\RL{\otimes}
\BIC{\vdash \trans{\oc} A \otimes \trans{\oc} A, \wn\Gamma,  \wn\Gamma}
\RL{\wncontr^\star}
\UIC{\vdash \trans{\oc} A \otimes \trans{\oc} A, \wn\Gamma}
\RL{\with,\with}
\TIC{\vdash {A} \with (1 \with (\trans{\oc} A \otimes \trans{\oc} A)), \wn\Gamma}
\RL{\nu}
\UIC{\vdash \trans{\oc}A, \orig{circularPromt}{\wn \Gamma}}
\DP
$
\begin{tikzpicture}[remember picture,overlay]
     \draw [->,>=latex] ([yshift=1mm] circularProms1.east) .. controls +(16:5.5cm) and +(-11:5cm) .. (circularPromt.east);
     \draw [->,>=latex] ([yshift=1mm] circularProms2.east) .. controls +(10:2.4cm) and +(-11:4cm) .. ([yshift=1mm] circularPromt.east);
    \end{tikzpicture}
\end{exa}

\subsection{Cut-elimination for linear logic with fixed-point}
\label{multicutdef}
Cut-elimination holds for \muMALLinf{} and \muLLinf{} in the form of the infinitary weak normalization of a multicut-reduction relation: a new rule, the \defname{multicut (\mcut)}, is introduced, that corresponds to an abstraction of several cuts. \defref{Details in appendix~\ref{app:multicutdef}.}
This rule has an arbitrary number of premises:\quad
    $
    \AIC{\vdash \A_1}
    \AIC{\dots}
    \AIC{\vdash \A_n}
    \RL{\rmcutpar}
    \TIC{\vdash\A}
    \DP
    $
    and it is parameterized by two relations: (i) the \emph{ancestor relation} $\iota$ which relates each formula of the conclusion to exactly one formula among the hypotheses and (ii) \emph{the multicut relation}, $\cutrel$, which links {\it cut-formulas} together. $\iota$ and $\cutrel$ are subject to a number of conditions detailed in \Cref{app:multicutdef}.

%\begin{defi}[Multicut rule]\label{multicutdef}
%    The multicut rule has an arbitrary number of hypotheses:\quad
%    $
%    \AIC{\vdash \A_1}
%    \AIC{\dots}
%    \AIC{\vdash \A_n}
%    \RL{\rmcutpar}
%    \TIC{\vdash\A}
%    \DP
%    $.
%    The ancestor relation $\iota$ sends one formula of the conclusion to exactly one formula of the hypotheses; whereas the $\cutrel$-relation links cut-formulas together.
%    \end{defi}
    
    \begin{exa}
    \label{ex:mcut}
Representing $\iota$ and $\cutrel$ in red and blue, the (\cut/\mcut) step is as follows:
%, $\iota$ and $\cutrel$ respectively depicted in red and blue:

{\footnotesize
        $
        \AIC{\vdash A\tikzmark{prePC1}, B}
        \AIC{\vdash B^\bot,C}
        \AIC{\vdash C^\bot,D}
        \RL{\cut}
        \BIC{\vdash B^\bot\tikzmark{prePC3}, D}
        \RL{\rmcutpar}
        \BIC{\vdash A\tikzmark{preCC},D}
        \DP
        ~\rightsquigarrow~
        \AIC{\vdash A\tikzmark{PC1}, B}
        \AIC{\vdash B^\bot\tikzmark{PC2},C}
        \AIC{\vdash C^\bot\tikzmark{PC3},D}
        \RL{\rmcutparprime}
        \TIC{\vdash A\tikzmark{CC},D}
        \DP
        \begin{tikzpicture}[overlay,remember picture,-,line cap=round,line width=0.1cm]
        \draw[rounded corners, smooth=2,red, opacity=.25] ($(pic cs:CC)+(-.2cm,.1cm)$)to ($(pic cs:PC1)+(-.2cm,-.1cm)$)to ($(pic cs:PC1)+(-.2cm,.1cm)$); 
        \draw[rounded corners, smooth=2,red, opacity=.25] ($(pic cs:CC)+(.3cm,.1cm)$)to ($(pic cs:PC3)+(.3cm,-.1cm)$)to ($(pic cs:PC3)+(.3cm,.1cm)$); 
        \draw[rounded corners, smooth=2,cyan, opacity=.25] ($(pic cs:PC1)+(.25cm,.1cm)$)to ($(pic cs:PC1)+(.25cm,-.1cm)$)to ($(pic cs:PC2)+(-.4cm,-.1cm)$) to ($(pic cs:PC2)+(-.4cm,.1cm)$);
            \draw[rounded corners, smooth=2,cyan, opacity=.25] ($(pic cs:PC2)+(.3cm,.1cm)$)to ($(pic cs:PC2)+(.3cm,-.1cm)$)to ($(pic cs:PC3)+(-.4cm,-.1cm)$) to ($(pic cs:PC3)+(-.4cm,.1cm)$);  
        
        \draw[rounded corners, smooth=2,red, opacity=.25] ($(pic cs:preCC)+(-.2cm,.1cm)$)to ($(pic cs:prePC1)+(-.2cm,-.1cm)$)to ($(pic cs:prePC1)+(-.2cm,.1cm)$); 
        \draw[rounded corners, smooth=2,red, opacity=.25] ($(pic cs:preCC)+(.3cm,.1cm)$)to ($(pic cs:prePC3)+(.3cm,-.1cm)$)to ($(pic cs:prePC3)+(.3cm,.1cm)$); 
        \draw[rounded corners, smooth=2,cyan, opacity=.25] ($(pic cs:prePC1)+(.25cm,.1cm)$)to ($(pic cs:prePC1)+(.25cm,-.1cm)$)to ($(pic cs:prePC3)+(-.4cm,-.1cm)$) to ($(pic cs:prePC3)+(-.4cm,.1cm)$);
        \end{tikzpicture}
        $
        }
        \end{exa}

To define the (\mcut) reduction step we need a last definition, that will be also useful when defining the reduction step of the super exponential system:
\begin{defi}[Restriction of a multicut context]\defref{Details in appendix~\ref{app:multicutRestriction}}
    \label{multicutRestriction}
    Let $
    \AIC{\C}
    \RL{\rmcutpar}
    \UIC{s}
    \DP
    $ be a multicut occurrence with $\C = s_1\dots s_n$ and $s_i$ be $\vdash F_1 \dots F_{k_i}$. 
    %We define 
    For $1\leq j \leq k_i$, $\C_{F_j}$ 
    %($1\leq j \leq k_i$) 
    is the restriction of $\C$ to the sequents
    %  that are
     hereditarily linked to $F_j$ with the $\cutrel$-relation.    
\end{defi}

The previous definition extends to contexts, writing $\C_{F_1\dots F_n}$.
For instance, writing $\C$ for the premises of the rightmost mcut in \Cref{ex:mcut}, 
$\C_{B^\perp} = \{\vdash A,B ; \vdash C^\perp, D\}$ while $\C_A=\emptyset$.

\smallskip
Cut-elimination for \muMALLinf{} and \muLLinf{} is proved syntactically with a rewriting system on proof with (\mcut), whose steps are given in appendix~\ref{app:mumallonestep}. As standard in sequent caluli, those (m)cut-reduction steps are divided in principal cases and (m)cut-commutation cases.
%, which corresponds to the usual (\cut) rewriting steps of \LL with the addition of the fixed-point step:
% $$
% \AIC{\vdash F[X:=\nu X. F], \A}
% \RL{\nu}
% \UIC{\vdash \nu X. F, \A}
% \AIC{\vdash F[X:=\mu X. F], \B}
% \RL{\mu}
% \UIC{\vdash \mu X. F, \B}
% \RL{\cut}
% \BIC{\vdash \A, \B}
% \DP\quad\rightsquigarrow\quad
% \AIC{\vdash F[X:=\nu X. F], \A}
% \AIC{\vdash F[X:=\mu X. F], \B}
% \RL{\cut}
% \BIC{\vdash \A, \B}
% \DP
% $$

The cut elimination result is then stated as a strong normalization result for a class of infinitary reduction, initiated with proofs containing exactly one (\mcut) at the root of the proof. 
Indeed, strong normalization is trivially lost in such infinitary settings as one can always build infinite sequences that never activate some (\mcut), thus converging to a non cut-free proof.
\emph{Fair reductions} precisely prevent this situation by asking that \emph{no (\mcut) that can be activated remains forever inactive along the reduction sequence.} The following definition is borrowed from~\cite{LICS22,CSL16}, residuals corresponding to the usual notion of TRS~\cite{TERESE}:
%where residuals are defined:
%This condition happens to be sufficient for the normalization result. 
\begin{defi}%[Fair reduction sequences \cite{CSL16,LICS22}]
    A reduction sequence $(\pi_i)_{i\in\omega}$ is \emph{fair}, if for each $\pi_i$ such that there is a reduction $\mathcal{R}$ to a proof $\pi'$, there exist a $j>i$ such that $\pi_j$ does not contain any residual of $\mathcal{R}$.
\end{defi}

This fairness condition allowed Baelde {\it et al.}~\cite{LICS22, CSL16} to obtain a (multi)cut-elimination result for \muMALLinf{} which, combined with the following encoding of exponential formulas using notations from \Cref{ex:prom-encoding}, $\trans{(\wn A)} = \trans{\wn}\trans{A}$ and $\trans{(\oc A)} = \trans{\oc}\trans{A}$ (extended to proof and cut-reduction steps),
 induces the following  \muLLinf{} multicut-elimination result~\cite{TABLEAUX23}:

%\begin{thm}
%    \label{muMALLcutelim}
%Every  \muMALLinf{}  fair reduction sequence converges to a cut-free (valid) proof.
%\end{thm}
%
%
%In~\cite{TABLEAUX23}, exponential formulas, proofs and cut-steps are encoded into \muMALLinf{}, following \quad $\trans{(\wn A)} = \mu X. (\trans{A} \oplus (\bot \oplus (X\parr X)))\qquad\trans{(\oc A)} = \nu X. (\trans{A} \with (1 \with (X\otimes X)))$.
%Then \Cref{muMALLcutelim} is used to prove the following theorem:
\begin{thm}
    \label{thm:mullcutelim}
Every fair  \muLLinf{}  (\mcut)-reduction sequence converges to a cut-free proof.
\end{thm}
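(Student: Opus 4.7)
The plan is to reduce cut-elimination for \muLLinf{} to the already established cut-elimination theorem for \muMALLinf{} of Baelde \emph{et al.}~\cite{LICS22,CSL16}, via the formula translation $\trans{(-)}$ introduced just above the statement, sending $\oc$ and $\wn$ to their $\nu$- and $\mu$-fixed-point encodings $\trans{\oc}$ and $\trans{\wn}$ and commuting homomorphically with all other connectives.

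The first step is to extend $\trans{(-)}$ to pre-proofs. Each \MALL{} or fixed-point rule is translated to itself (up to the translation of its context). For the exponential rules, I would map each to a finite \muMALLinf{} derivation using the encoding: $\wnwk$ becomes a $\bot,\oplus^2,\mu$ block; $\wnde$ becomes $\oplus^1,\mu$; $\wncontr$ becomes $\parr,\oplus^2,\oplus^2,\mu$; and $\ocprom$ becomes exactly the circular derivation given in Example~\ref{ex:prom-encoding}, introducing the two back-edges on $\trans{\oc}A$. The $(\mcut)$ rule translates to an $(\mcut)$ of the same shape, with $\iota$ and $\cutrel$ inherited pointwise. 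Since regularity of $\pi$ propagates to $\trans{\pi}$, this gives a map from \muLLinf{} pre-proofs to \muMALLinf{} pre-proofs.

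The second step is to show that $\trans{(-)}$ preserves validity. Threads that never cross the encoded $\ocprom$ rules lift directly. The only new infinite branches introduced by the translation are those traversing the back-edges of the $\ocprom$-encoding; along each such branch $\trans{\oc}A$ is, by construction, a recurring $\nu$-formula whose occurrences are infinitely often principal, and any $\nu$-thread above it in the original proof similarly lifts. Hence $\pi$ valid implies $\trans{\pi}$ valid.

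The third step is a simulation lemma: each \muLLinf{} $(\mcut)$-reduction step $\pi \redseq \pi'$ is simulated by a non-empty sequence $\trans{\pi}\redseq^+\trans{\pi'}$ in \muMALLinf{}. The commutation and \MALL/fixed-point principal cases are immediate. The exponential principal cases (promotion against dereliction/weakening/contraction) are simulated by unfolding the $\mu$/$\nu$ encodings of $\trans{\oc}A$ and $\trans{\wn}A$ once (via $(\mu)/(\nu)$ reductions) and then reducing the resulting $\oplus/\with$, $\bot/1$, $\parr/\otimes$ principal cuts in \muMALLinf{}; the residual exponential rules inside the sub-proofs are replayed by following the translated back-edges of $\trans{\oc}A$.

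Finally, given a fair $(\mcut)$-reduction sequence $(\pi_i)_{i\in\omega}$ in \muLLinf{}, concatenating the simulating \muMALLinf{} sequences yields a reduction sequence $(\sigma_j)_{j\in\omega}$ starting from $\trans{\pi_0}$. The inductive definition of $\trans{(-)}$ on cuts ensures that every $(\mcut)$ residual in $\sigma_j$ descends from an $(\mcut)$ residual in some $\pi_i$, so the fairness of $(\pi_i)$ transfers to fairness of $(\sigma_j)$. Applying Baelde \emph{et al.}'s $(\mcut)$-elimination theorem to $(\sigma_j)$, it converges to a cut-free \muMALLinf{} proof; since each $\sigma_j$ has the shape $\trans{\pi_{f(j)}}$ (at the boundaries between simulating segments) and the translation reflects cut-freeness (a cut in $\pi$ produces a cut in $\trans{\pi}$), the original sequence $(\pi_i)$ converges to a cut-free \muLLinf{} proof.

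The main obstacle will be the fairness transfer and the book-keeping of residuals through the simulation: one \muLLinf{} step may fan out into several \muMALLinf{} steps, so one must carefully check that no \muMALLinf{} multicut residual is forgotten along the simulating interleaving, and that the limit of the \muMALLinf{} sequence is exactly the translation of the limit of the \muLLinf{} one.
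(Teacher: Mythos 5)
Your proposal is correct and follows essentially the same route as the paper, which establishes \Cref{thm:mullcutelim} precisely by extending the fixed-point encoding $\trans{(\oc A)}=\trans{\oc}\trans{A}$, $\trans{(\wn A)}=\trans{\wn}\trans{A}$ to proofs and (\mcut)-reduction steps and invoking the \muMALLinf{} multicut-elimination result of Baelde \emph{et al.}, the delicate points being exactly the simulation of exponential steps and the transfer of fairness that you identify. (Only a cosmetic slip: the translation of $(\wnwk)$ needs both injections, $\bot,\oplus^1,\oplus^2,\mu$, to reach the $\bot$ branch of $\trans{A}\oplus(\bot\oplus(X\parr X))$.)
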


\section{Super exponentials}
\label{secsuperLL}

In this section, we define a family of parameterized logical systems, adapting the methodology of~\cite{TLLA21} and using the sequent formalism from the previous section. Consequently, the section lies in between background on the work by the first author and Laurent and new material since we propose an alternative system, with an alternative choice of formalization. We discuss briefly some of these differences here and shall come back to this comparison in the discussion of related works. 
Bauer and Laurent's super exponentials~\cite{TLLA21} only include \emph{functorial promotion} and rely on the so-called \emph{digging} rule to recover the usual \emph{Girard's promotion} rule.
On the other hand, we propose below another formalization of super exponentials, adapting the system to capture both functorial and Girard's promotions primitively while we discard the digging which is not needed nor well-suited for the extension we aim with fixed-points.

This means that the general philosophy of this section follows that of~\cite{TLLA21} and in particular we show how their proofs can be adapted to the present setting in~\ref{app:superllcutelim}. On the other hand, we will show in \Cref{section:cut-elimination} that our uniform cut-elimination theorem provides an alternative, completely new, proof of cut-elimination for the super exponential of the present section in the  sense that it does not rely on adapting the techniques and proof by the first author and Laurent.
%In this section, we define a family of parameterized logical systems, following the methodology of~\cite{TLLA21} and using the sequent formalism from the previous section.
%The rules of Bauer and Laurent's system~\cite{TLLA21} only include functorial promotion and one needs to use the digging rule to get the usual \emph{Girard's promotion} rule.
%As we do not cover the digging rule in our system, we use an alternative \superLL{} formalization.
%We adapt of the proofs of~\cite{TLLA21} in~\ref{app:superllcutelim}.
The first parameters of these systems will allow us to define formulas:

\begin{defi}[Superexponential formulas]
Let $\Sig$ be a set. \defname{Formulas of $\superLL{}(\Sig)$} are the formulas of \MALL{} together with exponential connectives subscripted by an element $\e\in\Sig$:

\quad$F, G \hspace{0.5em}::= \hspace{0.5em} a\in\Atom \mid a^\perp \mid F \parr G \mid F \otimes G \mid \bot \mid 1 \mid F \oplus G \mid F \with G \mid 0 \mid \top \mid \wn_\e F \mid \oc_\e F.$

Elements of $\Sig$ are called \defname{exponential signatures}. The orthogonal $(-)^\perp$ is defined as the involution satisfying extending that of \Cref{negationDef} with: $ (\oc_\e A)^\perp = \wn_\e A^\perp$ for any $\e\in \Sig$.
\end{defi}

\begin{nota}[List of exponential signatures]
Let $\B=A_1 \dots A_n$ be a list of $n$ formulas and $\vec{\e}=\e_1 \dots \e_n$ a list of $n$ exponential signatures. 
The list of formulas $\wn_{\e_1}A_1 \dots \wn_{\e_n}A_n$ is written  $\wn_{\vec{\e}}\B$.
Moreover, given a binary relation $R$ on exponential signatures and two lists of exponential signatures $\vec{\e}=\e_1, \dots, \e_m$ and $\vec{\e'}=\e'_1,\dots, \e'_n$, we write $\vec{\e} \mathop{R} \vec{\e'}$ for $\bigwedge\limits_{\substack{1\leq i\leq m\\ 1\leq j \leq n}} \e_i \mathop{R} \e'_j$.
\end{nota}

While each element of $\e\in\Sig$ induces two exponential modalities, $\wn_\e,\oc_\e$, the inference rules will be described in two phases: first each $\e\in\Sig$ will be equipped with a set of rule names $\{\mpx{i}\mid i\in\mathbb{N}\}\cup \{\contr{i}\mid i\geq 2\}$ which can be used to introduce the connective $\wn_\e$.
Second, some binary relations over $\Sig$ will govern the available promotion rules, introducing $\oc_\e$.
% that will be available in the system.  

\begin{defi}%[Exponential rule names]
The set of \defname{exponential rule names} is ${\Exp} = \{\mpx{i}\mid i\in\mathbb{N}\}\cup \{\contr{i}\mid i\geq 2\}$.

To each \defname{exponential signature} $\e\in \Sig$, one associates a subset of $\Exp$, $\Sigfun{\e}$.
\end{defi}

%\begin{defi}[Exponential rule names]
%The set of \defname{exponential rule names} is ${\Exp} = \{\mpx{i}\mid i\in\mathbb{N}\}\cup \{\contr{i}\mid i\geq 2\}$.
%
%An \defname{exponential signature} is the data of a set of exponential names $\Sig$ together with a function 
%$\Sigfun$ from  $\Sig$ to $\powerset(\Exp)$.
%\end{defi}

For the sake of clarity, given $\e\in \Sig$ we will write (when unambiguous) $\e$ instead of $\Sigfun{\e}$, omitting $\Sigfun{\cdot}$ throughout the paper. 
We shall also switch freely from viewing $\e$ (more precisely, $\Sigfun{\e}$) as a subset of $\Exp$ or as its boolean characteristic function, write, for instance, $\mpx{i}\in\e$ (resp. $\contr{i}\in\e$) when convenient, or considering $\e(\mpx{i})$ (resp. $\e(\contr{i})$) as a truth value.

\begin{defi}
For one set of signatures $\Sig$, we define many systems, parameterized by three binary relations on $\Sig$: $\leqg, \leqf$ and $\lequ$.
Rules for this system are the rules of \MALL{} from \Cref{fig:MALLrules} in combination with the super exponential rules of \Cref{fig:musuperllexprules}: multiplexing ($\mpx{i}$), contraction ($\contr{i}$) as well as functorial ($\ocf$), Girard ($\ocg$) and unary ($\ocu$) promotions. 

Each exponential rule comes with a side-condition written to the right of the premises.
\end{defi}

\begin{rem} Below, the side-condition for an exponential rule may also be written next to the rule label or simply omitted when it has been checked elsewhere. Those side-conditions are not part of the proof-object itself: all exponential inferences are {\it unary rules}.

Note that nullary multiplexing rule corresponds to usual weakening $(\wnwk)$ and unary multiplexing corresponds to dereliction $(\wnde)$.
\end{rem}

\begin{figure*}
    \centering
%    \hspace*{-2.5cm}$
%    {\scriptsize\AIC{\vdash \overbrace{\tikzmark{sllmpxf2}A, \dots, \tikzmark{sllmpxf3}A}^i, \tikzmark{sllmpxc2}\A}
%    \AIC{\e(\mpx{i})}
%    \RL{\mpx{i}}
%    \BIC{\vdash \tikzmark{sllmpxf1} \wn_\e A, \tikzmark{sllmpxc1}\A}
%    \DP
%    \quad
%    \AIC{\vdash \overbrace{\tikzmark{sllcontrf2}\wn_\e A, \dots, \wn_\e A\tikzmark{sllcontrf3}}^i, \tikzmark{sllcontrc2}\A}
%    \AIC{\e(\contr{i})}
%    \RL{\contr{i}}
%    \BIC{\vdash \tikzmark{sllcontrf1}\wn_\e A, \tikzmark{sllcontrc1}\A}
%    \DP\quad
%    \AIC{\vdash \tikzmark{sllocgf2}A, \wn_{\vec{\e'}}\B\tikzmark{sllocgc3}}
%    \AIC{\e\leqg\vec{\e'}}
%	\RL{\ocg}
%    \BIC{\vdash\tikzmark{sllocgf1}\oc_\e A, \tikzmark{sllocgc1}\wn_{\vec{\e'}}\B}
%    \DP
%    \quad
%    \AIC{\vdash A\tikzmark{sllocff2}, \tikzmark{sllocfc3}\B}
%    \AIC{\e\leqf\vec{\e'}}
%	\RL{\ocf}
%    \BIC{\vdash \oc_\e A\tikzmark{sllocff1}, \tikzmark{sllocfc1}\wn_{\vec{\e'}}\B}
%    \DP\quad
%    \AIC{\vdash A\tikzmark{sllocuf2}, \tikzmark{sllocuc2}B}
%    \AIC{\e_1\lequ \e_2}
%    \RL{\ocu}
%    \BIC{\vdash  \oc_{\e_1} A\tikzmark{sllocuf1}, \tikzmark{sllocuc1}\wn_{\e_2} B}
%    \DP}
%    $
    \hspace*{-2.5cm}$
    {\scriptsize\AIC{\vdash \overbrace{\tikzmark{sllmpxf2}A, \dots, \tikzmark{sllmpxf3}A}^i, \tikzmark{sllmpxc2}\A ~~(\e(\mpx{i}))}
 %   \AIC{(\e(\mpx{i}))}
    \RL{\mpx{i}}
    % \text{ if } 
    %\e(\mpx{i})}
    \UIC{\vdash \tikzmark{sllmpxf1} \wn_\e A, \tikzmark{sllmpxc1}\A}
    \DP
    ~~%\quad
    \AIC{\vdash \overbrace{\tikzmark{sllcontrf2}\wn_\e A, \dots, \wn_\e A\tikzmark{sllcontrf3}}^i, \tikzmark{sllcontrc2}\A 
    ~~ (\e(\contr{i}))}
%   \AIC{(\e(\contr{i}))}
    \RL{\contr{i}}% \text{ if }\e(\contr{i})}
    \UIC{\vdash \tikzmark{sllcontrf1}\wn_\e A, \tikzmark{sllcontrc1}\A}
    \DP
    ~~%\quad
    \AIC{\vdash \tikzmark{sllocgf2}A, \wn_{\vec{\e'}}\B\tikzmark{sllocgc3} ~~ (\e\leqg\vec{\e'})}
    %\AIC{\e\leqg\vec{\e'}}
	\RL{\ocg}% \text{ if }\e\leqg\vec{\e'}}
    \UIC{\vdash\tikzmark{sllocgf1}\oc_\e A, \tikzmark{sllocgc1}\wn_{\vec{\e'}}\B}
    \DP
    ~~%\quad
    \AIC{\vdash A\tikzmark{sllocff2}, \tikzmark{sllocfc3}\B ~~ (\e\leqf\vec{\e'})}
    %\AIC{\e\leqf\vec{\e'}}
	\RL{\ocf}% \text{ if }\e\leqf\vec{\e'}}
    \UIC{\vdash \oc_\e A\tikzmark{sllocff1}, \tikzmark{sllocfc1}\wn_{\vec{\e'}}\B}
    \DP
    ~~%\quad
    \AIC{\vdash A\tikzmark{sllocuf2}, \tikzmark{sllocuc2}B ~~ (\e_1\lequ \e_2)}
    %\AIC{\e_1\lequ \e_2}
    \RL{\ocu}% \text{ if }\e_1\lequ \e_2}
    \UIC{\vdash  \oc_{\e_1} A\tikzmark{sllocuf1}, \tikzmark{sllocuc1}\wn_{\e_2} B}
    \DP}
    $
    \begin{tikzpicture}[overlay,remember picture,-,line cap=round,line width=0.1cm]
   %%mpx
   \draw[rounded corners, smooth=2,green, opacity=.4] ([xshift=1mm] pic cs:sllmpxf1) to ([yshift=2mm] pic cs:sllmpxf2);
   \draw[rounded corners, smooth=2,green, opacity=.4] ([xshift=1mm] pic cs:sllmpxf1) to ([yshift=2mm] pic cs:sllmpxf3);
   \draw[rounded corners, smooth=2,red, opacity=.4] ([xshift=1mm] pic cs:sllmpxc1) to ([yshift=2mm] pic cs:sllmpxc2);
   %%contr
   \draw[rounded corners, smooth=2,green, opacity=.4] ([xshift=1mm] pic cs:sllcontrf1) to ([yshift=2mm] pic cs:sllcontrf2);
   \draw[rounded corners, smooth=2,green, opacity=.4] ([xshift=1mm] pic cs:sllcontrf1) to ([yshift=2mm] pic cs:sllcontrf3);
   \draw[rounded corners, smooth=2,red, opacity=.4] ([xshift=1mm] pic cs:sllcontrc1) to ([yshift=2mm] pic cs:sllcontrc2);
   %%ocg
   \draw[rounded corners, smooth=2,red, opacity=.4] ([xshift=3mm] pic cs:sllocgc1) to ([xshift=-1mm, yshift=1mm] pic cs:sllocgc3);
   \draw[rounded corners, smooth=2,green, opacity=.4] ([xshift=2mm] pic cs:sllocgf1) to ([xshift=1mm,yshift=1mm] pic cs:sllocgf2);
   %%ocf
   \draw[rounded corners, smooth=2,red, opacity=.4] ([xshift=2mm] pic cs:sllocfc1) to ([xshift=1mm,yshift=1mm] pic cs:sllocfc3);
   \draw[rounded corners, smooth=2,green, opacity=.4] ([xshift=-1mm] pic cs:sllocff1) to ([xshift=-1mm,yshift=1mm] pic cs:sllocff2);
   %%ocu
   \draw[rounded corners, smooth=2,red, opacity=.4] ([xshift=1mm] pic cs:sllocuc1) to ([xshift=0mm,yshift=1mm] pic cs:sllocuc2);
   \draw[rounded corners, smooth=2,green, opacity=.4] ([xshift=-3mm] pic cs:sllocuf1) to ([xshift=-3mm,yshift=1mm] pic cs:sllocuf2);
	\end{tikzpicture}
    \caption{Exponential fragment of \musuperLLinf
    \label{fig:musuperllexprules}}
\end{figure*}

\begin{defi}[$\superLL(\Sig, \leqg, \leqf, \lequ)$]
$\superLL(\Sig, \leqg, \leqf, \lequ)$ proofs are the trees inductively generated by those inferences, satisfying the above side-conditions.
\end{defi}

There are instances of \superLL{} where cut-elimination fails: some conditions are required, so that cut inferences can indeed be eliminated.

The following two definitions aim at formulating these conditions in a suitable way:
\begin{defi}[Derivability closure] 
%\commEsaie{Rien de faux ici, mais de la manière dont j'ai défini ça, j'oublie les cas $\contr{0}$ dans la 2ème règle (et donc je n'ai pas que les signatures sont closes vers le bas pour les $\contr{i}$ lorsqu'on a le weakening). C'est dommage, on pourrait le rajouter pour une autre version.}
Given a signature $\e$, we define the derivability closure $\bar{\e}$ to be the signature inductively defined by:
%\begin{align*}
%\e(r) &\Rightarrow \bar{\e}(r) \qquad &
%\bar{\e}(\contr{i})\Rightarrow\bar{\e}(\contr{j})&\Rightarrow \bar{\e}(\contr{i+j-1})\\
%\text{For } i, j\neq 0,\quad \e(\contr{2}) \Rightarrow \bar{\e}(\mpx{i}) \Rightarrow \bar{\e}(\mpx{j}) &\Rightarrow \bar{\e}(\mpx{i+j}) \qquad&
%\e(\mpx{1}) \Rightarrow \bar{\e}(\contr{i}) & \Rightarrow \bar{\e}(\mpx{i})
%\end{align*}
$$  \AIC{\e(r)}
  \RL{}
  \UIC{\bar{\e}(r)}
  \DP
\qquad 
  \AIC{\bar{\e}(\contr{i})\quad \bar{\e}(\contr{j})}
  \RL{}
  \UIC{\bar{\e}(\contr{i+j-1})}
  \DP
\qquad
  \AIC{ \e(\contr{2}) \quad\bar{\e}(\mpx{i})\quad\bar{\e}(\mpx{j}) \quad i, j\neq 0}
  \RL{}
  \UIC{\bar{\e}(\mpx{i+j})}
  \DP
  \qquad
  \AIC{\e(\mpx{1})\quad\bar{\e}(\contr{i})}
  \RL{}
  \UIC{ \bar{\e}(\mpx{i})}
  \DP
$$
%\begin{align*}
%\e(r) &\Rightarrow \bar{\e}(r) \qquad &
%\bar{\e}(\contr{i})\Rightarrow\bar{\e}(\contr{j})&\Rightarrow \bar{\e}(\contr{i+j-1})\\
%\text{For } i, j\neq 0,\quad \e(\contr{2}) \Rightarrow \bar{\e}(\mpx{i}) \Rightarrow \bar{\e}(\mpx{j}) &\Rightarrow \bar{\e}(\mpx{i+j}) \qquad&
%\e(\mpx{1}) \Rightarrow \bar{\e}(\contr{i}) & \Rightarrow \bar{\e}(\mpx{i})
%\end{align*}

\end{defi}

Derivability closure comes with the following property, proved by induction on $\bar{\e}(r)$:
\begin{prop}
\label{prop:closureDerivabilityDerivationequivalence}
If $\bar{\e}(r)$ holds, then $(r)$ is derivable for 
%a principal formula with principal
 connective $\wn_\e$, using only inference rules $\mpx{i}$ and $\contr{i}$ on this connective.
\todo{Expliquer et donner un exemple.}
\end{prop}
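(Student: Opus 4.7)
The plan is a straightforward induction on the derivation of $\bar{\e}(r)$, following the four clauses of its inductive definition. In the base case, $\e(r)$ holds directly, so $(r)$ is already a primitive rule for $\wn_\e$ and there is nothing to do. In each of the three inductive cases the strategy will be the same: paste together the derivations supplied by the induction hypotheses with at most one application of a primitive rule of $\e$ so as to realize the desired macro-rule.

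Concretely, for the clause deriving $\bar{\e}(\contr{i+j-1})$ from $\bar{\e}(\contr{i})$ and $\bar{\e}(\contr{j})$, I would start from a premise $\vdash \wn_\e A, \dots, \wn_\e A, \A$ with $i+j-1$ copies of $\wn_\e A$, apply the IH derivation of $\contr{j}$ on $j$ of these copies (leaving $i$ copies), and then the IH derivation of $\contr{i}$ to collapse them to a single $\wn_\e A$. For the clause deriving $\bar{\e}(\mpx{i+j})$ from $\e(\contr{2})$, $\bar{\e}(\mpx{i})$ and $\bar{\e}(\mpx{j})$ with $i, j \neq 0$, I would start from $\vdash A, \dots, A, \A$ with $i+j$ copies, apply the IH derivation of $\mpx{i}$ to $i$ of them and the IH derivation of $\mpx{j}$ to the remaining $j$, obtaining $\vdash \wn_\e A, \wn_\e A, \A$, and close with the primitive rule $\contr{2}$. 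For the clause deriving $\bar{\e}(\mpx{i})$ from $\e(\mpx{1})$ and $\bar{\e}(\contr{i})$, I would apply $\mpx{1}$ (dereliction) independently to each of the $i$ copies of $A$, getting $i$ occurrences of $\wn_\e A$, and conclude with the IH derivation of $\contr{i}$.

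I do not expect any genuine obstacle: the proof is essentially bookkeeping of rule compositions. The only care to take is to preserve the invariant that every derivation produced only uses $\mpx{k}$ and $\contr{k}$ rules applied to the single connective $\wn_\e$, which is immediate by construction of the compositions above. The side conditions in the inductive clauses (in particular $i, j \neq 0$ in the multiplexing-combination clause) are precisely what is needed for these compositions to be well-typed, since each $\mpx{k}$ requires a nonempty group of $A$-occurrences on which to act.
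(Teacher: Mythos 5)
Your proof is correct and follows exactly the route the paper indicates (induction on the derivation of $\bar{\e}(r)$, composing the derivations given by the induction hypotheses with at most one primitive rule in each clause); the paper itself only states that the proposition is ``proved by induction on $\bar{\e}(r)$'' without spelling out the cases, and your case analysis fills that in correctly. One marginal inaccuracy in your closing remark: the side condition $i,j\neq 0$ is not needed for the composition to typecheck (since $\mpx{0}$ is weakening and acts on an empty group of $A$-occurrences, the construction would still go through), but this does not affect the validity of your argument.
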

\begin{nota}
We name $\contr{i}^{\bar{\e}}$ (resp. $\mpx{i}^{\bar{\e}}$), for $i\in\mathbb{N}$, any derivation using only $\contr{j}$ and $\mpx{j}$ rules and having the same conclusion and hypothesis as $\contr{i}$ (resp. $\mpx{i}$).
We write $\bar{\e}(\contr{0})$ for $\bar{\e}(\mpx{0})$ and set $\bar{\e}(\contr{1})$ to $\True$ for all $\e$ and $\contr{1}^{\bar{\e}}$ to be the empty derivation.
\end{nota}

To define a cut-reduction system, we consider cut-elimination axioms defined in \Cref{cutElimAxs}.
In \superLL{}-systems each axiom corresponds to one step of cut-elimination.
However, as our reduction system with fixed-points is based on the (\mcut)-rule,
% which corresponds to the concatenation of many cuts, 
some axioms will be used in several reduction cases.
%\begin{table*}
%\caption{Cut-elimination axioms}\label{cutElimAxs}
%For convenience, we will use the notation $\contr{0}:=\mpx{0}$ and set $\bar{\e}(\contr{1})=\True$ for all $\e$.
%\begin{align}
%\e\leqg\e' & \Rightarrow \e(\mpx{i})\Rightarrow\bar{\e'}(\contr{i}) &i\geq 0\label{gmpxAx}\\
%\e\leq_s\e' & \Rightarrow \e(\mpx{i})\Rightarrow\bar{\e'}(\mpx{i}) &i\geq 0 \text{ and }s\neq g\label{fumpxAx}\\
%\e\leq_s\e' & \Rightarrow \e(\contr{i})\Rightarrow\bar{\e'}(\contr{i}) & i\geq 2\label{contrAx}\\
%\e\leq_s \e' & \Rightarrow \e'\leq_s\e'' \Rightarrow \e\leq_s\e''\label{leqTrans}\\
%\e\leqg\e' &\Rightarrow \e'\leq_s\e'' \Rightarrow \e\leqg\e'\label{leqgs}\\
%\e\leqf\e' &\Rightarrow \e'\lequ \e''\Rightarrow \e\leqf\e''\label{leqfu}\\
%\e\leqf\e' &\Rightarrow \e'\leqg\e'' \Rightarrow (\e\leqg\e'' \land (\e\leqf\e''' \Rightarrow (\e\leqg\e''' \land \e'''(\mpx{1})))\label{leqfg}\\
%\e\lequ\e' & \Rightarrow \e'\leq_s\e''\Rightarrow \e\leq_s\e''\label{lequs}
%\end{align}^
%with $s\in\{g,f,u\}$, all the axioms are universally quantified.
%\end{table*}
%~\hspace{-1cm}
\begin{table*}
\centering
$\begin{array}{|clcll|c|}
\hline
\e\leqg\e' & \Rightarrow & \e(\mpx{i})&\Rightarrow &\bar{\e'}(\contr{i}) \hfill i\geq 0 &  \axgmpx \label{gmpxAx}\\
\e\leq_s\e' & \Rightarrow & \e(\mpx{i})&\Rightarrow&\bar{\e'}(\mpx{i}) \hfill i\geq 0  \text{ and }s\neq g &  \axfumpx \label{fumpxAx}\\
\e\leq_s\e' & \Rightarrow &\e(\contr{i})&\Rightarrow&\bar{\e'}(\contr{i}) \hfill i\geq 2 &  \axcontr \label{contrAx}\\
\e\leq_s \e' & \Rightarrow& \e'\leq_s\e''& \Rightarrow &\e\leq_s\e''&  \axTrans \label{leqTrans}\\
\e\leqg\e' &\Rightarrow &\e'\leq_s\e'' &\Rightarrow& \e\leqg\e''& \axleqgs \label{leqgs}\\
\e\leqf\e' &\Rightarrow &\e'\lequ \e''&\Rightarrow& \e\leqf\e''& \axleqfu\label{leqfu}\\
\e\leqf\e' &\Rightarrow &\e'\leqg\e'' &\Rightarrow &\e\leqg\e'' 
%&\hspace{-0.37cm} 
\land (\e\leqf\e''' \Rightarrow (\e\leqg\e''' \land \e'''(\mpx{1}))) & \axleqfg\label{leqfg}\\
\e\lequ\e' & \Rightarrow& \e'\leq_s\e''&\Rightarrow &\e\leq_s\e''&  \axlequs\label{lequs}\\
\hline
\end{array}
$\\[2ex]
with $s\in\{g,f,u\}$, all the axioms are universally quantified.\\
For convenience, we use the notation $\contr{0}:=\mpx{0}$ and set $\bar{\e}(\contr{1})=\True$ for all $\e$.\\[2ex]
\caption{Cut-elimination axioms}\label{cutElimAxs}
\end{table*}
%%Remarque : Dans le papier TLLA on ne traîte pas la promotion unaire, il faudrait le dire à un moment. Aussi le fait qu'on ne traîte pas le digging
In Bauer and Laurent's system~\cite{TLLA21}, properties of {\it axiom expansion} and {\it cut-elimination} hold. 
We defer the former to  \Cref{app:axexp} and focus on  the latter:

\begin{thm}[Cut Elimination]\label{superllcutelim}\proofref{See a direct proof in appendix~\ref{app:superllcutelim}}
% \proofref{See details App.~\ref{app:superllcutelim}}
As soon as the 8 cut-elimination axioms of \Cref{cutElimAxs} are satisfied, cut elimination holds for $\superLL(\Sig,\leqg,\leqf,\lequ)$.
\end{thm}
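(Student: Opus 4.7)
The plan is to establish cut-elimination by defining a local cut-reduction system $\to$ on $\superLL(\Sig,\leqg,\leqf,\lequ)$-proofs and proving its strong normalization, after which normal forms are shown to contain no cut. The reductions split into the usual families: axiom cuts, \MALL{} principal cases ($\otimes/\parr$, $\with/\oplus_i$, $1/\bot$), exponential principal cases (a promotion against a multiplexing or contraction on the dual), and commutative cases (the cut is pushed above a rule that is not principal). The \MALL{} and axiom cases are standard; the content of the theorem lies in the exponential cases, where I expect that each axiom of Table~\ref{cutElimAxs} corresponds to precisely what is required to keep the reduct inside the system.

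For the principal exponential reductions, consider a cut on $\oc_\e A / \wn_\e A^\perp$ produced by $\ocg$ (side-condition $\e\leqg\vec{\e'}$) against $\mpx{i}$ (side-condition $\e(\mpx{i})$). The reduct duplicates the premise of the promotion $i$ times, cuts each copy against a corresponding $A^\perp$-premise of the multiplexing, and then applies a derived $i$-ary contraction $\contr{i}^{\bar{\e'_j}}$ on each $\wn_{\e'_j}B_j$; by \Cref{prop:closureDerivabilityDerivationequivalence}, these derived contractions exist because \refgmpxAx{} delivers $\bar{\e'_j}(\contr{i})$ from $\e\leqg\e'_j$ and $\e(\mpx{i})$. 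The cases $\ocf/\mpx{i}$ and $\ocu/\mpx{i}$ are handled analogously using \reffumpxAx, while every promotion against $\contr{i}$ is handled via \refcontrAx. When $i=0$ one recovers the weakening step, and when $i=1$ the dereliction step.

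The commutative cases are more delicate in the exponential fragment because pushing a cut through a promotion rule may transform its side-condition. When an exponential cut traverses a second promotion ending the other premise, the preserved promotion must still satisfy its side-condition in the reduct: this is exactly what axioms \refleqTrans, \refleqgs, \refleqfu, \refleqfg{} and \reflequs{} express, each stating that a specific composition of $\leq_s$-relations collapses. The subtlest is \refleqfg, which rules the interaction of functorial and Girard promotions via the auxiliary requirement $\e'''(\mpx{1})$; this case must be checked carefully but poses no structural difficulty once the axiom is available.

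For termination, I would use a Gentzen-style lexicographic measure on (cut-rank, local size), where cut-rank is the maximal complexity of a cut formula occurring in the proof and local size measures the height above a topmost maximal cut. Principal reductions strictly decrease cut-rank; commutative reductions decrease local size at constant cut-rank; and the duplication happening in the promotion/multiplexing or promotion/contraction principal reduction is harmless because the new cuts are all on $A$ or $A^\perp$, strictly smaller than $\oc_\e A$. The main obstacle I anticipate is not conceptual but bureaucratic: one must enumerate every pair of exponential rules that may meet at a cut and verify in each case that the side-conditions of the resulting inferences follow from Table~\ref{cutElimAxs}. Since this is precisely the adaptation of the analysis carried out by Bauer and Laurent in~\cite{TLLA21}, I would organize the appendix as a systematic case-analysis table mirroring theirs, with the twist that the primitive presence of both $\ocg$ and $\ocf$ (and the absence of digging) forces the commutation cases with Girard promotion to be treated on their own via \refleqgs{} and \refleqfg, rather than being derived through digging.
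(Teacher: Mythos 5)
Your overall plan---a local cut-reduction system plus a Gentzen-style lexicographic measure---is a genuinely different route from the paper's, and the divergence is exactly where the gap lies. The critical false step is your claim that ``the new cuts are all on $A$ or $A^\perp$, strictly smaller than $\oc_\e A$''. This holds for the promotion/$\mpx{i}$ principal case (the premise of $\mpx{i}$ exposes $i$ copies of $A^\perp$), but it fails for the promotion/$\contr{i}$ case: the premise of $\contr{i}$ contains $i$ occurrences of $\wn_\e A$, so the reduct must cut $i$ copies of the promotion against formulas $\wn_\e A$ of the \emph{same} size as the original cut formula. Cut-rank does not decrease, the duplication of the promotion subproof prevents any height- or size-based secondary measure from decreasing on the outer of the resulting nested same-rank cuts, and your lexicographic induction does not close. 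This is the classical obstruction to naive Gentzen induction for exponentials with contraction, and it is precisely why the paper never performs a local promotion/contraction rewrite.

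What the paper does instead (Appendix~\ref{app:superllcutelim}) is prove three global \emph{substitution lemmas} (\Cref{subs1}, \Cref{subs2} and \Cref{subs3}): if $\e_1\leq_s\vec{\e_2}$ and cut-free provability of $\vdash A,\A$ entails cut-free provability of $\vdash\wn_{\vec{\e_2}}\B,\A$ (resp.\ $\vdash\B,\A$), then any cut-free proof of $\vdash\wn_{\e_1}A,\dotsc,\wn_{\e_1}A,\A$ is transformed, by induction on that proof, into a cut-free proof in which every $\wn_{\e_1}A$ is hereditarily replaced by $\wn_{\vec{\e_2}}\B$. All eight axioms of \Cref{cutElimAxs} are consumed inside the inductive cases of these lemmas (one case per rule that the substituted formula may traverse, including the promotion commutations you describe), and the main theorem is then a short induction on $(t,s)$, with $t$ the size of the cut formula and $s$ the sum of the sizes of the premises: an exponential cut whose $\oc$-side ends in a promotion is discharged in one stroke by the relevant substitution lemma, the hypothesis of that lemma being supplied by the outer induction at the strictly smaller cut formula $A$. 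Your mapping of axioms to rule interactions is essentially correct, but without the substitution lemmas (or an equivalent device such as the multicut machinery of \Cref{section:cut-elimination}) the termination argument you propose does not go through.
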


This theorem will be proved, in \Cref{section:cut-elimination}, as a corollary of  \musuperLLinf{} cut-elimination theorem.
Many existing variants of \LL{} are instances of \superLL{}, {\it e.g.} let us consider \ELL~\cite{lll,djell}:
\begin{exa}
    \label{exa:ell}
\defname{Elementary Linear Logic} (\ELL) is a variant of \LL{} where ($\wnde$) and $(\ocprom)$ are replaced by  functorial promotion:
$
  \AIC{\vdash A,\A}
  \RL{\ocf}
  \UIC{\vdash \oc A,\wn\A}
  \DP
$.
This system is captured as the instance of $\superLL(\Sig, \leqg, \leqf, \lequ)$ system with $\Sig=\{\bullet\}$, defined by $\bullet(\contr{2})=\bullet(\mpx{0})=\True$ (and $(\bullet)(r)=\False$ otherwise), ${\leqg}={\lequ}=\emptyset$ and $\bullet\leqf\bullet$.
\defref{Details in appendix \ref{app:ELLexa}}

This $\superLL(\Sig, {\leqg}, {\leqf}, {\lequ})$ instance is \ELL{} and satisfies the axioms of cut-elimination.
% and axiom expansion defined in \Cref{app:axexp}.%\proofref{See details in App. \ref{app:ELLexa}}.
\end{exa}

As argued in~\cite{TLLA21}, the \superLL{}-systems subsume many other existing variants of \LL{} such as \SLL~\cite{sll}, \LLL~\cite{lll}, \seLL~\cite{Nigam09}.
The last two are particularly interesting as they require more than one exponential signature to be formalized.
In the following section, we will look at some examples for the fixed-point version of \musuperLLinf{}.

    \section{Super exponentials with fixed-points}
\label{section:musuperll}

In this section, we define \musuperLLinf{} and give some interesting instances of it.

\subsection{Definition of \musuperLLinf{}}

%\subsubsection{Formulas}

Let $\Sig$ be an exponential name, the pre-formulas of $\musuperLLinf(\Sig)$ are $\superLL{}(\Sig)$ formulas extended with fixed-point variables and fixed-points constructs (with $a\in\Atom$, $ X\in\Var$, $\sigma\in \Sig$):

\noindent $F, G \hspace{0.1em}::= \hspace{0.1em} a \mid a^\perp \mid X \mid F \parr G \mid F \otimes G \mid \bot \mid 1 \mid F \oplus G \mid F \with G \mid 0 \mid \top \mid \wn_\e F \mid \oc_\e F\mid \mu X. F\mid \nu X. F.$

Formulas of $\musuperLLinf{}(\Sig)$ are the closed pre-formulas.
Negation is defined as the smallest involution on formulas satisfying the relations of \Cref{negationDef} as well as:\quad $ (\wn_\e F)^\perp := \oc_\e F^\perp. $

\medskip
%\subsubsection{Rules and proofs}

Again, for one set of signatures $\Sig$ we define many systems, each parametrized with $\leqg, \leqf$ and $\lequ$. 
The inference rules for this system are the rules of $\superLL(\Sig, \leqg, \leqf, \lequ)$ together with the fixed-point fragment of \Cref{fig:fixFragment}.
As before, pre-proofs of $\musuperLLinf(\Sig, \leqg, \leqf, \lequ)$ are the trees coinductively generated by the rules of $\musuperLLinf(\Sig, \leqg, \leqf, \lequ)$ and validity is defined in the same way as for \muLLinf.

\subsection{Some instances of \musuperLLinf{}}

In this subsection, we give some interesting instances of \musuperLLinf{}.
% The goal of next session will be to prove (infinitary) normalization of an $(\mcut)$ reduction system for instances of $\musuperLLinf{}(\Sig, \leqg, \leqf, \lequ)$ satisfying axioms of Table \ref{cutElimAxs}. However, we can already check if these instances satisfies those axioms.

\subsubsection{A linear modal $\mu$-calculus}

%Super exponentials were originally designed to prove cut-elimination for a lot of instances of light linear logic.
An application of super exponentials can be found in modelling the linear modal $\mu$-calculus introduced in~\cite{FOSSACS25} to prove a cut-elimination theorem for the modal $\mu$-calculus.
We show below how one can view a multi-modal $\mu$-calculus as \muLLmodinf{} as an instance of \musuperLLinf{}.

Let us consider a set of actions $\mathsf{ Act}$.
Formulas of \muLLmodinf{} are those of \muLLinf{} with the addition of a pair of modalities, $ \lozenge_\alpha F$ and $\Box_\alpha F$, for each $\alpha\in \mathsf{Act}$.
Rules of \muLLmodinf{} are the rules of \muLLinf{} where the promotion is extended with $\lozenge$-contexts. Rules on modalities are a functorial promotion (called the modal rule) and a contraction and a weakening on $\lozenge$-formulas:
$$
\AIC{\vdash F, \wn\A, \lozenge_{\alpha_1}G_1, \dots, \lozenge_{\alpha_n}G_n}
%\lozenge_{\vec \alpha}\B}
\RL{\ocpromloz}
\UIC{\vdash \oc F, \wn\A, \lozenge_{\alpha_1}G_1, \dots, \lozenge_{\alpha_n}G_n}
%\lozenge_{\vec \alpha}\B}
\DP\quad
\AIC{\vdash F, \A}
\RL{\boxprom}
\UIC{\vdash \Box_{\alpha} F, \lozenge_{\alpha}\A}
\DP\quad
\AIC{\vdash \lozenge_{\alpha} F, \lozenge_{\alpha} F, \A}
\RL{\diacontr}
\UIC{\vdash \lozenge_{\alpha} F, \A}
\DP\quad
\AIC{\vdash \A}
\RL{\diawk}
\UIC{\vdash \lozenge_{\alpha} F, \A}
\DP
$$
 (with $\alpha, \alpha_1, \dots \alpha_n\in \mathsf{Act}$) The system considered in \cite{FOSSACS25} corresponds to the case where $\mathsf{Act}$ is a singleton, that is a calculus with two exponential names, one of these names representing the $\mu$-calculus modality rather than a linear exponential.

\muLLmodinf{} can be modelled as the super exponential system $\musuperLLinf(\Sig, \leqg, \leqf, \lequ)$ with:
\begin{itemize}
\item $\Sig:=\{\bullet\}\cup \mathsf{Act}$.
\item $\contr{2}(\bullet) = \mpx{0}(\bullet)= \mpx{1}(\bullet) = \True$,
for any $\alpha\in \mathsf{Act}, \contr{2}(\alpha) = \mpx{0}(\alpha)=\True$, and all the other elements have value $\False$ for both signatures.
\item $\bullet\leqg\bullet$ ; $\bullet \leqg\alpha$; $\alpha\leqf\alpha$ for any $\alpha\in \mathsf{Act}$ and all other couples for the three relations $\leqg, \leqf$ and $\lequ$ are false.\defref{Details in \Cref{app:mullmodIsMusuperLL}}
%\item The set of signatures contains two elements $\Sig:=\{\bullet, \star\}$.
%\item $\contr{2}(\bullet) = \contr{2}(\star) = \True$
%$\mpx{1}(\bullet)=\True$, $\mpx{0}(\bullet)=\mpx{0}(\star)=\True$, and all the other elements have value $\False$ for both signatures.
%\item $\bullet\leqg\bullet$ ; $\bullet \leqg\star$, $\star\leqf\star$, and all couples for the three relations $\leqg, \leqf$ and $\lequ$ being false.\defref{Details in \Cref{app:mullmodIsMusuperLL}}
\end{itemize}
This system is \muLLmodinf{} when taking: \quad$ \wn_{\bullet} := \wn,\quad \oc_{\bullet}:=\oc,\quad \wn_{\alpha}:= \lozenge_\alpha \quad\text{and}\quad \oc_{\alpha}:= \Box_\alpha$.
Moreover, the system satisfies cut-elimination axioms of \Cref{cutElimAxs}.

\subsubsection{\ELL{} with fixed points}

In~\cite{BAILLOT2015}, an affine version of second-order \ELL{} with recursive types, called \muEAL{}, is introduced. This system allows only finite proofs. \emph{Affine} means weakening applies to any formulas. Fixed points are added to a two-sided version with $\multimap$ and $(-)^\perp$ formulas, without any positivity condition on the fixed point variables, unlike what is enforced in our one-sided sequent version. The paper proves \muEAL{} cut-elimination and refines complexity bounds from \ELL{}.

Considering \muELLinf{}, an instance of \Cref{exa:ell} with fixed points, gives us a typing system which is close to \muEAL{}. Namely, consider $\musuperLLinf(\Sig, \leqg, \leqf, \lequ)$ with the same $\Sig, \leqg, \leqf$, and $\lequ$ as in \Cref{exa:ell}. Since the axioms in Table~\ref{cutElimAxs} only concern $\Sig, \leqg, \leqf$, and $\lequ$, they are also satisfied by this instance of \musuperLLinf{}.

Our systems differs in two ways from that of Baillot: (i) the extremal fixed-points instead of generic fixed-points and the condition of positivity on fixed-point variables, and (ii) the infinite nature of our proofs. Thus, our cut-elimination theorem may not apply due to (i), and even if it did, it might not ensure finite proofs because of (ii). However, Baillot~\cite{BAILLOT2015} uses only fixed-point variables in positive positions when proving complexity bounds, which addresses (i). Additionally, using only $\mu$-fixed-points to encode fixed points which ensures that cut-free proofs remain finite, resolving the incompatibility induced by (ii) by preventing infinite branches. (Moreover, the impact of weakening can be tamed by designing a translation making the system affine as well.)

\begin{rem}
    Note that there is no proof of the conclusion sequent of \Cref{exa:regProof} in \muELLinf{}.
\end{rem}

      \section{Cut-elimination}
    \label{section:cut-elimination}

In this section, we only consider instances of \musuperLLinf{} satisfying the axioms of \Cref{cutElimAxs}. Let us assume given such an instance, $\musuperLLinf(\Sig, \leqg, \leqf, \lequ)$, that we simply refer to as 
$\musuperLLinfSig$ in the following keeping the relations $ \leqg, \leqf$ and $\lequ$ implicit.

\subsection{(\mcut)-elimination steps}\label{mcutonestep}
Here, we define the (\mcut)-elimination steps of $\musuperLLinfSig$.
%(\Sig, \leqg, \leqf, \lequ)$.
%Left proofs of reduction rules of figures~\ref{fig:mullmodinfexpcommcutstep}~and~\ref{fig:mullmodinfexpprincipcutstep} come with hypotheses that are not written in the figure but that are written in the corresponding lemma.
%We first start by the commutation of the promotions. 
%The following lemmas get all the possibilities and we will explain how. The first case is the simpler, we commute an $(\ocg)$-rule, there will only be one case about it:
To do so, it is suitable to have a specific notation for the premisses containing only proofs concluded by a promotion. We use similar notations to those of \muLLi{} cut-elimination proof~\cite{TABLEAUX23}:
\begin{nota}[$(\oc)$-contexts]
$\ocProofs$ denotes a list of $\musuperLLinfSig$-proofs which are all concluded by some promotion rule ($\ocg, \ocf$ or $\ocu$).
Given $s\in\{g, f, u\}$, $\ocsProofs$ denotes a list of $\musuperLLinfSig$-proofs which are all concluded by an $(\oc_s)$-rule.
In both cases, $\C$ denotes the list of $\musuperLLinfSig$-proofs formed by gathering the immediate subproofs of the last promotion (being either $\C^\oc$, or $\C^{\oc_s}$).
%Lists of proofs for which each\linebreak $\musuperLLinfSig$-proofs are concluded by:
%\begin{itemize}
%\item Any promotion ($\ocg, \ocf$ or $\ocu$) are denoted by super-scripting the context by an $\oc$: $\ocProofs$.
%\item An $(\oc_s)$-rule, are denoted by super-scripting an $\oc_s$: $\ocsProofs$ (for $s\in\{g, f, u\}$).
%%\item An $\oc_s$ promotion with an empty context, are denoted by $\ocsProofsEmpty$ (for $s\in\{g, f\}$).
%\end{itemize}
%In each cases, $\C$ designate the list of $\musuperLLinfSig$-proofs formed by proofs rooted at hypothesis of each promotions (of either $\C^\oc$, either $\C^{\oc_s}$).
\end{nota}

We now give a series of lemmas that will be used to justify the (\mcut)-reduction steps defined in 
\Cref{def:musuperLLcutred}.
%{fig:musuperllexpcommcutstep,fig:musuperllexpcommcutstep2,fig:musuperllexpprincipcutstep}.
%We write some conditions on the proofs of these figure in their corresponding lemma.
We only give a proof sketch of \Cref{ocfgreadinesscondition}, and give complete proofs of each lemma in \Cref{app:mcutstepJustif}.
We start by the commutation cases of the different promotions.
%Commutation cases for the multiplexing and the contraction rules are given in \Cref{mcutonestep}.
The case~\hyperref[ocgcomm]{$({\text{comm}}_\ocg)$}  covers all the case where $(\ocg)$ commutes under the cut:
\begin{lem}[%Justification for s
Step~{\hyperref[ocgcomm]{$({\text{comm}}_\ocg)$}}]
\label{ocgreadinesscondition}
If\quad
\proofref{Details in \Cref{app:ocgreadinesscondition}}
$
\footnotesize{
\AIC{\pi}
\noLine
\UIC{\vdash A, \wn_{\vec{\f}}\B}
%\AIC{\e\leqg\vec{\f}}
\RL{\ocg}
\UIC{\vdash \oc_{\e} A, \wn_{\vec{\f}}\B}
\AIC{\ocProofs}
\RL{\rmcutpar}
\BIC{\vdash \oc_{\e} A, \wn_{\vec{\g}}\A}
\DP
}
$
is a $\musuperLLinfSig$-proof then\quad
$
\footnotesize{
\AIC{\pi}
\noLine
\UIC{\vdash A, \wn_{\vec{\f}}\B}
\AIC{\ocProofs}
\RL{\rmcutpar}
\BIC{\vdash A, \wn_{\vec{\g}}\A}
%\AIC{\text{By lemma~\ref{ocgreadinesscondition}}}
%\AIC{\e\leqg\vec{\g}}
\RL{\ocg}
\UIC{\vdash \oc_{\e} A, \wn_{\vec{\g}}\A}
\DP
}
$
is also a $\musuperLLinfSig$-proof.
%(and $\e\leqg\vec{\g}$).
\end{lem}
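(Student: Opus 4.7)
The plan is to verify that the transformation preserves well-formedness of the derivation and, crucially, that the side-condition of the $\ocg$ rule at the bottom of the rewritten proof remains satisfied. Syntactically, the commutation is immediate: the premises of the new multicut are $\pi$ together with $\ocProofs$, and its ancestor relation $\iota'$ and cut relation $\cutrel'$ are obtained from those of the original multicut by redirecting what used to go through the intermediate $\ocg$ inference directly to $\pi$'s conclusion. In particular, the $\wn$-formulas of $\wn_{\vec{\f}}\B$ that the original multicut matched with $\oc$-partners in $\ocProofs$ keep exactly those matches. The conclusion of the new multicut is then $\vdash A, \wn_{\vec{\g}}\A$, where $\wn_{\vec{\g}}\A$ coincides with the context of the original multicut conclusion (minus $\oc_\e A$), so it is precisely what the final $\ocg$ requires as context.

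The delicate point is to re-establish the premise $\e \leqg \vec{\g}$ of the rewritten $\ocg$ rule. In the original derivation we had $\e \leqg \vec{\f}$. Each formula occurring in $\wn_{\vec{\g}}\A$ either descends from a non-cut formula of $\wn_{\vec{\f}}\B$, in which case its signature $\f$ already satisfies $\e \leqg \f$ by hypothesis, or it is a non-cut formula in the context of some promotion occurring in $\ocProofs$, reachable from $\wn_{\vec{\f}}\B$ by following the $\cutrel$-chain of the original multicut. For the second case, I would proceed by induction on the length of that chain. At each step, a formula $\wn_{\f'} F$ is cut against the principal formula $\oc_{\f'} F^\perp$ of a promotion whose rule is one of $\ocg$, $\ocf$, $\ocu$, giving $\f' \leq_s \vec{\g'}$ for the appropriate $s \in \{g,f,u\}$. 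Starting from $\e \leqg \f'$ at the first link (which is provided by the original $\ocg$ hypothesis) and invoking axiom $\refleqgs$, we conclude $\e \leqg \g''$ for every signature $\g''$ appearing in $\vec{\g'}$.

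The main obstacle is handling the propagation of $\leqg$ through arbitrarily long chains that mix $\ocg$, $\ocf$, and $\ocu$ premises. The key observation is that $\refleqgs$ is uniform in $s \in \{g,f,u\}$, so a single application lets us extend an established $\e \leqg \f'$ through any of the three promotion kinds, regardless of which rule concluded each intermediate proof in $\ocProofs$. Transitivity of $\leqg$ along a sequence of $\leqg$-steps is then just the case $s=g$, subsumed by $\refleqgs$ (and also available as $\refleqTrans$). Iterating this argument along every $\cutrel$-chain reaching a signature of $\vec{\g}$ yields $\e \leqg \vec{\g}$, which is exactly the side-condition needed for the bottom $\ocg$ rule of the rewritten proof, completing the justification of the step.
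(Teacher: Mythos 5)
Your proof is correct and follows essentially the same route as the paper: an induction along the $\cutrel$-chains rooted at $\vdash \oc_\e A, \wn_{\vec{\f}}\B$, with the base case given by the original $\ocg$ side-condition and the inductive step propagating $\e\leqg\cdot$ through each intermediate promotion. The only cosmetic difference is that the paper invokes \refleqTrans{} for the $\ocg$-premise case and \refleqgs{} for the $\ocf/\ocu$ cases, whereas you observe that \refleqgs{} with $s=g$ already subsumes the former; both yield the same conclusion $\e\leqg\vec{\g}$.
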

%    We run through the $\ocProofs$ context and use axioms \refleqTrans{} and \refleqgs{} to get that $\e \leqg \f'$ for each signatures $\f'$ appearing in $\ocProofs$.

%    As signatures of $\vec{\g}$ are all appearing in $\ocProofs$, we get that $\e\leqg\f'$.   
% \end{proof}

The case~\hyperref[ocfcommocgEmpty]{$({\text{comm}}^1_\ocf)$} covers the case of commutation of an $(\ocf)$-promotion but where only $(\ocg)$-rules with empty contexts appear in the hypotheses of the multi-cut.
Note that an $(\ocg)$ occurrence with empty context could be seen as an $(\ocf)$ occurrence (with empty context).
\begin{lem}[%Justification for s
Step~{\hyperref[ocfcommocgEmpty]{$({\text{comm}}^1_\ocf)$}}]
\label{ocfreadinesscondition}
%If each  $(\ocg)$ of $\ocProofs$ has an empty context and
If each sequent in $\ocProofs$ concluded by an $(\ocg)$ has an empty context and
\proofref{Details in \Cref{app:ocfreadinesscondition}}
$
\footnotesize{
\AIC{\pi}
\noLine
\UIC{\vdash A, \B}
%\AIC{\e\leqf\vec{\f}}
\RL{\ocf}
\UIC{\vdash \oc_{\e} A, \wn_{\vec{\f}}\B}
\AIC{\ocProofs}
\RL{\rmcutpar}
\BIC{\vdash \oc_{\e} A, \wn_{\vec{\g}}\A}
\DP
}
$
is a $\musuperLLinfSig$-proof, then\quad
%is a $\musuperLLinfSig$-proof with $\ocProofs$ such that each sequent concluded by an $(\ocg)$ has an empty context, then\quad
$
\footnotesize{
\AIC{\pi}
\noLine
\UIC{\vdash A, \B}
\AIC{\C}
\RL{\rmcutpar}
\BIC{\vdash A, \A}
%\AIC{\text{By lemma~\ref{ocfreadinesscondition}}}
%\noLine
%\AIC{\e\leqf\vec{\g}}
\RL{\ocf}
\UIC{\vdash \oc_{\e} A, \wn_{\vec{\g}}\A}
\DP
}
$
is a $\musuperLLinfSig$-proof.
\end{lem}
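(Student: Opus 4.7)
My plan is to verify that the rearranged tree is a well-formed $\musuperLLinfSig$-proof by checking three things in turn: (a) the side-condition of the new bottom $(\ocf)$ rule, (b) well-formedness of the new $(\mcut)$, i.e.\ both its cut-relation and its ancestor relation, and (c) validity of its infinite branches. I expect (a) to be the main obstacle, and the empty-context assumption on $(\ocg)$-promotions in $\ocProofs$ will be used precisely to make it go through.

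For (a), I have to establish $\e \leqf \g_i$ for each $\g_i$ in $\vec{\g}$. My idea is to trace each $\wn_{\g_i} A_i$ of the conclusion context through the ancestor relation of the $(\mcut)$ back to a $\wn$-formula in the context of some promotion $\rho$ of $\ocProofs$; call $\f_j$ the signature of the principal formula $\oc_{\f_j} B_j^\perp$ of $\rho$, which by the cut-relation is paired with the $\wn_{\f_j} B_j$ coming from the topmost $(\ocf)$. The topmost $(\ocf)$'s side-condition already gives $\e \leqf \f_j$. I would then split on the kind of $\rho$: if $\rho$ is an $(\ocg)$, its context is empty by hypothesis so no $\g_i$ arises this way; if $\rho$ is an $(\ocf)$, its own side-condition gives $\f_j \leqf \g_i$, so transitivity of $\leqf$ (axiom~\refleqTrans) yields $\e \leqf \g_i$; if $\rho$ is an $(\ocu)$, its side-condition gives $\f_j \lequ \g_i$, and axiom~\refleqfu\ yields $\e \leqf \g_i$ from $\e \leqf \f_j$ and $\f_j \lequ \g_i$.

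For (b), I would take as premises of the new $(\mcut)$ the proof $\pi$ and the list $\C$ of immediate sub-derivations of the promotions of $\ocProofs$; the new cut-relation is obtained by replacing each old cut-pair $(\wn_{\f_j} B_j, \oc_{\f_j} B_j^\perp)$ --- the second component being principal in some promotion of $\ocProofs$ --- by the pair $(B_j, B_j^\perp)$ sitting in the corresponding element of $\C$, while the ancestor relation for the context formulas $\A$ transfers verbatim from the contexts of the promotions of $\ocProofs$, preserved in $\C$, and then becomes the context of the bottom $(\ocf)$. For (c), the transformation is purely local around the $(\mcut)$ and preserves the bodies of $\pi$ and of the elements of $\C$, so any infinite branch of the new proof corresponds, up to one relocated $(\ocf)$ inference, to an infinite branch of the original proof, and valid threads transfer pointwise. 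The real obstacle is (a): without the assumption that $(\ocg)$-promotions of $\ocProofs$ have empty context, one would need to combine $\e \leqf \f_j$ with $\f_j \leqg \g_i$, which by axiom~\refleqfg\ only yields $\e \leqg \g_i$ rather than the required $\e \leqf \g_i$, making the bottom $(\ocf)$ ill-formed.
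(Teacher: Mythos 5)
Your overall plan (check the side-condition of the relocated $(\ocf)$, well-formedness of the new $(\mcut)$, validity of infinite branches) is the right one, and your case analysis on the kind of promotion --- transitivity \refleqTrans{} for $(\ocf)$, axiom \refleqfu{} for $(\ocu)$, and the empty-context hypothesis disposing of $(\ocg)$ --- is exactly the local reasoning required. But there is a gap in how you reach the promotion $\rho$ whose context carries $\wn_{\g_i}A_i$: you assert that the principal formula $\oc_{\f_j}B_j^\perp$ of $\rho$ is $\cutrel$-paired with a $\wn_{\f_j}B_j$ occurring in the context $\wn_{\vec{\f}}\B$ of the topmost $(\ocf)$, so that the side-condition $\e\leqf\vec{\f}$ of that rule immediately gives $\e\leqf\f_j$. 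This only covers promotions \emph{directly} cut-connected to the $(\ocf)$-sequent. In general $\ocProofs$ forms a whole tree under the $\cutrel$-relation: the principal formula of $\rho$ may instead be cut against a $\wn$-formula sitting in the context of another promotion of $\ocProofs$, which is itself cut against the context of yet another, and only after several steps does one reach $\wn_{\vec{\f}}\B$. For such a $\rho$ your argument, as written, gives no bound on $\f_j$ at all.

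The missing ingredient is an induction on this tree, rooted at $\vdash\oc_\e A,\wn_{\vec{\f}}\B$ and well-founded because the $\cutrel$-graph on the premises of a multicut is acyclic and connected: one proves that \emph{every} sequent $\vdash\oc_{\e'}A',\wn_{\vec{\f'}}\B'$ of the tree satisfies $\e\leqf\vec{\f'}$, the base case being the side-condition of the original $(\ocf)$, and the inductive step being precisely your three-way case split (the induction hypothesis applied to the parent yields $\e\leqf\e'$, since $\wn_{\e'}({A'}^\perp)$ lies in the parent's context). This is how the paper argues; once the invariant holds for all sequents of the tree, $\e\leqf\vec{\g}$ follows because every $\g_i$ is among the $\vec{\f'}$ of some sequent. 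Your parts (b) and (c) are fine, as is your closing observation about why the empty-context hypothesis on $(\ocg)$ is indispensable (a non-empty $(\ocg)$-context would only yield $\e\leqg\g_i$ via \refleqfg{}, which is the situation handled by the separate step $({\text{comm}}^2_\ocf)$).
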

%    As for \Cref{ocgreadinesscondition}, we run through the $\ocProofs$ context and use axiom \refleqTrans{} and \refleqfu{} to get that $\e \leqf \f'$ for each signatures $\f'$ appearing in $\ocProofs$.
%    % \proofref{See details in App.~\ref{app:mcutonestep}}

%    As signatures of $\vec{\g}$ are all appearing in $\ocProofs$, we get that $\e\leqf\vec{g}$.
   
%    % For \Cref{ocfreadinesscondition}, the emptiness constraint on the $\oc_g$ rules ensures that the resulting proof is indeed a \musuperLLinf-proof.   
% \end{proof}

We then have the following case where we commute an $(\ocf)$-rule, but where there is at least one $(\ocg)$-promotion with a non-empty context in the premisses of the multicut rule:
\begin{lem}[%Justification for s
Step~{\hyperref[ocfcommocgNonEmpty]{$({\text{comm}}^2_\ocf)$}}]
\label{ocfgreadinesscondition}
If  some $(\ocg)$-rule in $\ocgProofs$ has at least one formula in the context and
$
\footnotesize{
\AIC{\pi}
\noLine
\UIC{\vdash A, \B}
%\AIC{\e\leqf\vec{\f}}
\RL{\ocf}
\UIC{\vdash \oc_{\e} A, \wn_{\vec{\f}}\B}
\AIC{\ocgProofs}
\RL{\rmcutpar}
\BIC{\vdash \oc_{\e} A, \wn_{\vec{\g}}\A}
\DP
}
$
is a $\musuperLLinfSig$-proof, then
%is a $\musuperLLinfSig$-proof with $\ocgProofs$ containing a sequent conclusion of an $(\ocg)$-rule with at least one formula in the context, then
$
\footnotesize{
\AIC{\pi}
\noLine
\UIC{\vdash A, \B}
%\AIC{\text{By axiom~\refleqfg}}
%\noLine
%\AIC{\vec{\f}(\mpx{1})}
\doubleLine
\RL{\mpx{1}}
\UIC{\vdash A, \wn_{\vec{\f}}\B}
\AIC{\ocgProofs}
\RL{\rmcutpar}
\BIC{\vdash A, \wn_{\vec{\g}}\A}
%\AIC{\text{By lemma~\ref{ocfgreadinesscondition}}}
%\noLine
%\AIC{\e\leqg\vec{\g}}
\RL{\ocg}
\UIC{\vdash \oc_{\e} A, \wn_{\vec{\g}}\A}
\DP
}
$
is also a $\musuperLLinfSig$-proof.
\end{lem}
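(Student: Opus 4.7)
The plan is to verify that the transformed proof is a well-formed $\musuperLLinfSig$-derivation. The transformation differs from the original only in two ways: the $(\ocf)$-promotion above the mcut is replaced by a cascade of $(\mpx{1})$-rules, and an $(\ocg)$-promotion is performed after the mcut. The ancestor and cut-relations of the new mcut are inherited from the original one: $\wn_{\vec{\f}}\B$, now introduced below the mcut by the $(\mpx{1})$-cascade rather than above it by $(\ocf)$, is still cut against the same principal formulas of the subproofs in $\ocgProofs$. So the only nontrivial task is to verify the side-conditions of the new exponential rules.

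Two conditions have to be checked: (i) $\f(\mpx{1})$ for each $\f$ in $\vec{\f}$, for the $(\mpx{1})$-cascade; and (ii) $\e \leqg \g$ for each $\g$ in $\vec{\g}$, for the final $(\ocg)$-rule. Both will follow from axiom \refleqfg. The assumption on $\ocgProofs$ provides an $(\ocg)$-proof with non-empty context, concluding some $\vdash \oc_{\f_0} C, \wn_{\vec{\g_0}}\A_0'$ with side-condition $\f_0 \leqg \vec{\g_0}$; pick any $\g_0 \in \vec{\g_0}$. Since $\wn_{\f_0} C^\perp$ is a cut-formula in $\wn_{\vec{\f}}\B$ (by the $\cutrel$ of the mcut), the side-condition $\e \leqf \vec{\f}$ of the original $(\ocf)$ yields $\e \leqf \f_0$. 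Axiom \refleqfg{} applied to $\e \leqf \f_0$ and $\f_0 \leqg \g_0$ activates its second conjunct: for every $\e'$ with $\e \leqf \e'$, both $\e \leqg \e'$ and $\e'(\mpx{1})$ hold. Condition (i) is then immediate, since $\e \leqf \f$ for every $\f \in \vec{\f}$.

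Condition (ii) needs one more step. Each $\g \in \vec{\g}$ appears in the context of some $(\ocg)$-proof in $\ocgProofs$ whose principal formula $\oc_{\f'} D$ is cut against a $\wn_{\f'} D^\perp \in \wn_{\vec{\f}}\B$; hence $\e \leqf \f'$ and $\f' \leqg \g$, so \refleqfg{} (first conjunct) yields $\e \leqg \g$, as required.

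The main obstacle I expect is not the logical content, which reduces to two direct applications of \refleqfg, but the administrative bookkeeping on the mcut's $\iota$ and $\cutrel$ relations: one must argue precisely that every $\g \in \vec{\g}$ traces back through $\iota$ to the context of some $(\ocg)$-subproof, and that every cut-formula in $\wn_{\vec{\f}}\B$ is linked by $\cutrel$ to a principal $\oc$-formula of one of those subproofs. These are formal consequences of the conditions on multicuts recalled in \Cref{multicutdef}, but they must be invoked explicitly to close the argument.
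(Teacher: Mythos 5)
Your identification of the two side-conditions to check, and of \refleqfg{} as the key axiom, matches the paper's strategy. But there is a genuine gap: you treat every premise of the multicut as if it were \emph{directly} $\cutrel$-connected to the sequent $\vdash \oc_{\e} A, \wn_{\vec{\f}}\B$, and both of your key inequalities rest on that. The premises form a tree under the sequent-level $\cutrel$-relation, rooted at the $(\ocf)$-sequent, and in general the $(\ocg)$-rule with non-empty context sits at arbitrary depth in that tree, reached through a chain of intermediate promotion-concluded sequents (which, per the reduction step in the figure, may end in $(\ocf)$ or $(\ocu)$ as well as $(\ocg)$). So your assertion that ``$\wn_{\f_0}C^\perp$ is a cut-formula in $\wn_{\vec{\f}}\B$'' need not hold, and $\e\leqf\f_0$ does not follow from the side-condition $\e\leqf\vec{\f}$ alone: it must be propagated along the path by an induction using \refleqTrans{} (through $\ocf$-nodes) and \refleqfu{} (through $\ocu$-nodes) before \refleqfg{} can be fired. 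The same problem recurs for condition (ii): a signature $\g\in\vec{\g}$ can live in the context of a premise at depth two or more, whose principal formula is cut against a context formula of an \emph{intermediate} premise rather than against $\wn_{\vec{\f}}\B$, and that premise need not be $(\ocg)$-concluded; your single application of the first conjunct of \refleqfg{} only covers depth-one $(\ocg)$-premises.

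The paper's proof is organised around exactly the two inductions you omit: first along the path from the root to the nearest non-empty-context $(\ocg)$-sequent to establish $\e\leqf\e'$ for its signature, whence \refleqfg{} yields $\vec{\f}(\mpx{1})$ and $\e\leqg\vec{\f}$; then over the whole $\cutrel$-tree, using \refleqgs{} and \refleqTrans{}, to establish $\e\leqg\vec{\f''}$ for every premise and hence $\e\leqg\vec{\g}$. Your closing remark files this under ``administrative bookkeeping,'' but it is the mathematical core of the lemma: for premises not adjacent to the root, the inequalities you need are simply not derivable from the two applications of \refleqfg{} you invoke, and closing the gap requires the additional axioms \refleqTrans{}, \refleqfu{} and \refleqgs{} that your argument never uses.
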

\begin{proof}[Proof sketch]
\proofref{Details in \Cref{app:ocfgreadinesscondition}.}
First notice that, by hypothesis, $\e\leqf\vec{\f}$.   The proof is done in two steps:
   \begin{enumerate}
      \item From $\vdash \oc_\e A, \wn_{\vec{\f}}\B$ we follow mcut-connected sequents until reaching one $\vdash \oc_{\e'} A', \wn_{\vec{\f'}}\B'$ conclusion of an $(\ocg)$-rule with $\wn_{\vec{\f'}}\B'$ non empty, for each signature $\e'$ in these sequents, we prove that $\e\leqf\e'$ using axiom~\refleqTrans{} or \refleqfu. Then we use axiom~\refleqfg{} to prove that $\vec{\f}(\mpx{1})$ holds and $\e\leqg\vec{\f}$. Since $\vec{\f}(\mpx{1})$ holds, application of $(\mpx{1})$ is allowed.
      \item We run through all the sequents and using axiom~\refleqgs, we prove that $\e\leqg\e''$ for each signature $\e''$ we encounter.
   \end{enumerate}
   We therefore have $\e\leqg\vec{\g}$ as signatures from $\vec{\g}$ are contained on hypotheses of the \mcut: the application of $(\ocg)$ is therefore legal. 
   %And we use first step for the condition of application of $(\mpx{1})$.
\end{proof}

We then cover the cases where we commute an $(\ocu)$-rule with the multi-cut.
The first case is where there are only a list of $(\ocu)$-rules in the hypotheses of the multi-cut:
\begin{lem}[%Justification for s
Step~{\hyperref[ocucommOnlyocu]{$({\text{comm}}^1_\ocu)$}}]
\label{ocureadinesscondition}
If
\proofref{Details in \Cref{app:ocureadinesscondition}}
$
\footnotesize{
\AIC{\pi}
\noLine
\UIC{\vdash A, C}
%\AIC{\e\lequ\f}
\RL{\ocu}
\UIC{\vdash \oc_{\e} A, \wn_{\f} C}
\AIC{\ocuProofs}
\RL{\rmcutpar}
\BIC{\vdash \oc_{\e} A, \wn_{\g} B}
\DP
}
$
is a $\musuperLLinfSig$-proof, then
$
\footnotesize{
\AIC{\pi}
\noLine
\UIC{\vdash A, C}
\AIC{\C}
\RL{\rmcutpar}
\BIC{\vdash A, B}
%\AIC{\text{By lemma~\ref{ocureadinesscondition}}}
%\noLine
%\AIC{\e\lequ\g}
\RL{\ocu}
\UIC{\vdash \oc_{\e} A, \wn_{\g} B}
\DP
}
$
is a $\musuperLLinfSig$-proof.
\end{lem}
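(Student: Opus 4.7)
My plan is to trace the chain of multicut-connected sequents from the principal formula $\oc_\e A$ to the output formula $\wn_\g B$, and conclude $\e \lequ \g$, which is precisely the side-condition needed for the new $(\ocu)$-rule in the target derivation. The key structural observation is that, because every premise above the multicut is concluded by a $(\ocu)$-rule and therefore has a \emph{binary} conclusion of the form $\vdash \oc_{\e'} A', \wn_{\f'} B'$ (carrying side-condition $\e' \lequ \f'$), the $\cutrel$-graph on these premises forms a linear chain: each sequent has exactly two formulas, each of which is either output or cut-connected to exactly one formula of another premise.

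I would then enumerate the premises along this chain, starting from the commuted $(\ocu)$-conclusion $\vdash \oc_\e A, \wn_\f C$ and following $\cutrel$ through successive premises $\vdash \oc_\f C^\perp, \wn_{\e_1} A_1$, $\vdash \oc_{\e_1} A_1^\perp, \wn_{\e_2} A_2$, $\ldots$, terminating at the premise whose second formula is $\wn_\g B$. This yields the chain of inequalities $\e \lequ \f \lequ \e_1 \lequ \cdots \lequ \g$, obtained by combining the side-condition of the commuted $(\ocu)$ with the side-conditions of each $(\ocu)$ in $\ocuProofs$. Iterated application of axiom \reflequs{} (taken with $s = u$, which specialises to transitivity of $\lequ$) then collapses the chain to the single inequality $\e \lequ \g$, legitimating the $(\ocu)$-rule at the bottom of the target proof.

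The target proof itself is then assembled by a straightforward bookkeeping step: replace $\pi$'s $(\ocu)$-rule and each $(\ocu)$-rule in $\ocuProofs$ by their immediate subproofs (so $\C$ is exactly the list of these subproofs), form a new multicut whose $\cutrel$-relation is inherited by stripping one layer of exponentials from each cut-pair, and close with a single $(\ocu)$-rule below. Validity of the resulting pre-proof follows immediately, since this is a purely local and finite rearrangement that does not affect any infinite branch. The only mild obstacle is checking that cut-formulas still match after peeling the exponential wrappers (they do, by the shape of the $(\ocu)$-rule) and that the chain is indeed linear (forced by the binary shape of $(\ocu)$-conclusions); neither introduces genuine difficulty, which is why this case is markedly simpler than the $(\ocf)$-commutation case of \Cref{ocfgreadinesscondition}.
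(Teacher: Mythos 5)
Your proposal is correct and takes essentially the same approach as the paper: both follow the $\cutrel$-connected $(\ocu)$-concluded premises from $\oc_\e A$ and chain their side-conditions by transitivity to obtain $\e\lequ\g$, which licenses the final $(\ocu)$-rule. The only cosmetic differences are that you observe the tree of premises is in fact a linear chain (forced by the binary $(\ocu)$-conclusions) where the paper phrases an induction over the tree ordering, and that you invoke \reflequs{} at $s=u$ where the paper invokes \refleqTrans{} at $s=u$ --- both specialise to transitivity of $\lequ$.
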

%    We do a proof similar to \Cref{ocgreadinesscondition} and \Cref{ocfreadinesscondition} using axiom~\refleqTrans.
% \end{proof}

The second case of $(\ocu)$-commutation is where we have an $(\ocf)$-rule and where the hypotheses concluded by an $(\ocg)$-rule have empty contexts.
\begin{lem}[%Justification for s
Step~{\hyperref[ocucommocgEmpty]{$({\text{comm}}^2_\ocu)$}}]
\label{ocufreadinesscondition}
%Let
If $\ocProofs$ contains at least one $(\ocf)$, if each $(\ocg)$ has empty context and if
\proofref{Details in \Cref{app:ocufreadinesscondition}}
$
\footnotesize{
\AIC{\pi}
\noLine
\UIC{\vdash A, B}
%\AIC{\e\lequ\f}
\RL{\ocu}
\UIC{\vdash \oc_{\e} A, \wn_{\f} B}
\AIC{\ocProofs}
\RL{\rmcutpar}
\BIC{\vdash \oc_{\e} A, \wn_{\vec{\g}}\A}
\DP
}
$
is a $\musuperLLinfSig$-proof, then
%be a $\musuperLLinfSig$-proof with $\ocProofs$ containing at least one proof concluded by an $(\ocf)$-promotion ; and such that for each sequent conclusion of an $(\ocg)$-promotion has empty context. We have that
$
\footnotesize{
\AIC{\pi}
\noLine
\UIC{\vdash A, B}
\AIC{\C}
\RL\rmcutpar
\BIC{\vdash A, \A}
%\AIC{\text{By lemma~\ref{ocufreadinesscondition}}}
%\noLine
%\AIC{\e\leqf\vec{\g}}
\RL{\ocf}
\UIC{\vdash \oc_{\e} A, \wn_{\vec{\g}}\A}
\DP
}
$
is also a $\musuperLLinfSig$-proof.
\end{lem}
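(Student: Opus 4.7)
The strategy is to adapt the approach of Lemma~\ref{ocfgreadinesscondition} in order to validate the side-condition $\e \leqf \vec{\g}$ required by the $(\ocf)$-rule applied at the bottom of the reduced proof. The cut-elimination axioms of~\Cref{cutElimAxs} will allow us to accumulate signature relations as we traverse the multicut structure rooted at the main premise.

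First I would observe that, since $\vdash \oc_{\e} A, \wn_{\f} B$ is the conclusion of an $(\ocu)$-inference, we have $\e \lequ \f$. The formula $\wn_{\f} B$ is $\cutrel$-paired with some $\oc_{\f} B^\perp$ in the conclusion of a premise in $\ocProofs$. Viewing the multicut as a tree rooted at $\vdash \oc_{\e} A, \wn_{\f} B$ whose edges are given by the cut-pairs (each $\wn$-formula in a node being either linked to the conclusion of a single child or appearing in the mcut conclusion), I would walk along this tree and prove by induction on depth that, for every signature $\e'$ labelling a promotion or a context formula of a reachable sequent, the relation $\e \lequ \e'$ holds while only $(\ocu)$-premises have been crossed on the way, and $\e \leqf \e'$ holds as soon as at least one $(\ocf)$-premise has been crossed. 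The transition from $\lequ$ to $\leqf$ at an $(\ocf)$-step uses axiom~\reflequs; thereafter, $(\ocu)$-steps preserve $\leqf$ via~\refleqfu and $(\ocf)$-steps via~\refleqTrans. The $(\ocg)$-premises, having empty contexts by hypothesis, are terminal leaves and contribute no signature to $\vec{\g}$.

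It then remains to verify that every $\g_i \in \vec{\g}$ lies on a path from the root crossing at least one $(\ocf)$-premise, so that we obtain $\e \leqf \g_i$ for all $i$. This is where the hypothesis that $\ocProofs$ contains at least one $(\ocf)$ is used, and I expect the main obstacle to lie in this combinatorial analysis of the multicut tree. One argues that a $(\ocu)$-premise contributes to $\vec{\g}$ only when its unique $\wn$-formula is kept in the mcut conclusion, making that premise a leaf of the tree; and that any chain of $(\ocu)$-premises starting at the root is linear, so the hypothesised $(\ocf)$ could not fit beside such a chain without violating connectivity. Connectivity is forced here by the fact that the sole $\oc$-formula in the mcut conclusion is $\oc_{\e} A$, so the hypothesised $(\ocf)$ must indeed belong to the tree rooted at the main premise. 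Consequently, every conclusion signature $\g_i$ is reached by a path crossing that $(\ocf)$, yielding $\e \leqf \g_i$ and validating the final $(\ocf)$-inference of the reduced proof.
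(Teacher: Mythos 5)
Your proof is correct and follows essentially the same route as the paper's: an induction along the $\cutrel$-tree that establishes $\e\lequ\cdot$ along the initial chain of $(\ocu)$-premises, switches to $\e\leqf\cdot$ at the first $(\ocf)$ via \reflequs{}, and propagates $\leqf$ onwards with \refleqfu{} and \refleqTrans{}. Your explicit combinatorial check that no $(\ocu)$-premise preceding the first $(\ocf)$ can contribute to $\vec{\g}$ (and that the initial chain must indeed terminate at an $(\ocf)$, the degenerate empty-context case giving an empty $\vec{\g}$) is only implicit in the paper's proof, but it is the same underlying argument.
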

% If one $(\ocf)$-rule has empty contexts, there is only one $(\ocf)$,$\wn_{\vec{\g}}\A$ is empty and therefore $\e\leqf\vec{\g}$ is easily satisfied.
% If not, we do our proof in two steps the same way than for \Cref{ocfgreadinesscondition}. We use axiom~\refleqTrans{} and \reflequs for the first step and axiom \refleqTrans{} and \refleqTrans{} and \refleqfu{} for the second step.
% \end{proof}

The following lemma deals with the case where there are sequents concluded by an $(\ocg)$-rule with non-empty context and where the first rule encountered is an $\ocf$-rule.
\begin{lem}[%Justification for s
Step~{\hyperref[ocucommocgNonEmptyocfFirst]{$({\text{comm}}^3_\ocu)$}}]
\proofref{Details in \Cref{app:ocugreadinessconditionfcase}}
\label{ocugreadinessconditionfcase}
Let
 $\ocProofs_2$ contain a $(\ocg)$ with non-empty context, $\C:=\{\vdash \oc_{\e} A, \wn_{\f} B\} \cup \ocuProofs_1 \cup \{\vdash \oc_{\e'} C, \wn_{\vec{\f'}}\B\}$ is cut-connected and $\C':=\{\vdash \oc_{\e'} C, \wn_{\vec{\f'}}\B\} \cup \ocProofs_2$ as well. If
 %be cut-connected subsets of sequents. If
$
{\scriptsize
\AIC{\pi_1}
\noLine
\UIC{\vdash A, B}
%\AIC{\e\lequ\f}
\RL{\ocu}
\UIC{\vdash \oc_{\e} A, \wn_{\f} B}
\AIC{\hspace{-.5cm}\ocuProofs_1~~ \ocProofs_2\hspace{-.5cm}}
\AIC{\pi_2}
\noLine
\UIC{\vdash C, \B}
%\AIC{\e'\leqf\vec{\f'}}
\RL{\ocf}
\UIC{\vdash \oc_{\e'} C, \wn_{\vec{\f'}}\B}
\RL{\rmcutpar}
\TIC{\vdash \oc_{\e} A, \wn_{\vec{\g}}\A}
\DP}
$
%be
is
 a $\musuperLLinfSig$-proof
 %, such that $\ocProofs_2$ contains a $(\ocg)$ rule with non-empty context ; $\C:=\{\vdash \oc_{\e} A, \wn_{\f} B\} \cup \ocuProofs_1 \cup \{\vdash \oc_{\e'} C, \wn_{\vec{\f'}}\B\}$ and $\C':=\{\vdash \oc_{\e'} C, \wn_{\vec{\f'}}\B\} \cup \ocProofs_2$ are cut-connected subsets of sequents.
%We have that
then
$
{\scriptsize\AIC{\pi_1}
\noLine
\UIC{\vdash A, B}
\AIC{\hspace{-.5cm}\mathcal{C}_1 ~~ \ocProofs_2\hspace{-.5cm}}
\AIC{\pi_2}
\noLine
\UIC{\vdash C, \B}
%\AIC{\text{By lemma~\ref{ocugreadinessconditionfcase}}}
%\noLine
%\AIC{\vec{\f'}(\mpx{1})}
\doubleLine
\RL{\mpx{1}}
\UIC{\vdash C, \wn_{\vec{\f'}}\B}
\RL{\rmcutpar}
\TIC{\vdash A, \wn_{\vec{\g}}\A}
%\AIC{\text{By lemma~\ref{ocugreadinessconditionfcase}}}
%\noLine
%\AIC{\e\leqg\vec{\g}}
\RL{\ocg}
\UIC{\vdash \oc_{\e} A, \wn_{\vec{\g}}\A}
\DP}
$
is also a $\musuperLLinfSig$-proof.
\end{lem}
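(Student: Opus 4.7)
The plan follows the structure of the proof of Lemma~\ref{ocfgreadinesscondition}, with an additional preliminary stage that traverses the $(\ocu)$-chain from $\vdash \oc_\e A, \wn_\f B$ down to the $(\ocf)$-premise. I must verify two side conditions to legitimise the rewritten proof: $\f'_i(\mpx{1})$ for each $i$, justifying the $(\mpx{1})$ step above $\pi_2$, and $\e \leqg \vec{\g}$, justifying the terminal $(\ocg)$-rule. The rearrangement of the ancestor relation $\iota$ and the multicut relation $\cutrel$ is routine, following the pattern of the previous commutation cases.

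\emph{Stage~1 (from $\lequ$ to $\leqf$).} Starting from $\vdash \oc_\e A, \wn_\f B$, I follow the $\cutrel$-chain through the $(\ocu)$-premises of $\ocuProofs_1$. Each such premise $\vdash \oc_{\sigma_i} X_i, \wn_{\tau_i} Y_i$ carries side condition $\sigma_i \lequ \tau_i$, and by construction of the chain $\sigma_1 = \f$, $\sigma_{i+1} = \tau_i$, and the final $\tau_k$ equals the output signature $\e'$ of the $(\ocf)$-premise. Iterated transitivity of $\lequ$ (axiom \refleqTrans{} with $s = u$) gives $\e \lequ \e'$. Combining this with $\e' \leqf \vec{\f'}$ via axiom \reflequs{} then yields $\e \leqf \vec{\f'}$.

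\emph{Stage~2 (invoking \refleqfg).} Since $\C'$ is cut-connected and contains an $(\ocg)$-premise with non-empty $\wn$-context, some $\cutrel$-path in $\C'$ starting at the $(\ocf)$-premise reaches such an $(\ocg)$; I pick the first one encountered. Along the way I propagate $\e \leqf$: at an intermediate $(\ocf)$-promotion I use axiom \refleqTrans{} (for $s=f$); at an intermediate $(\ocu)$-promotion I use axiom \refleqfu{}; any $(\ocg)$-promotion with empty $\wn$-context does not extend the chain. When the target $(\ocg)$-rule $\vdash \oc_\alpha X, \wn_{\vec{\beta}} Y$ with $\alpha \leqg \vec{\beta}$ non-empty is reached, I have $\e \leqf \alpha$, and axiom \refleqfg{} delivers $\e \leqg \vec{\beta}$ together with the clause that $\e \leqf \gamma$ implies both $\e \leqg \gamma$ and $\gamma(\mpx{1})$. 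Instantiating at each $\f'_i$ produces $\vec{\f'}(\mpx{1})$, the first required side condition, along with $\e \leqg \vec{\f'}$.

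\emph{Stage~3 (propagating $\leqg$ to $\vec{\g}$).} I then walk through the remaining cut-connected sequents, propagating $\e \leqg$ from each already-handled output signature $\mu$ (for which $\e \leqg \mu$) to every signature $\nu$ appearing in its $\wn$-context, via axiom \refleqgs{} combined with that rule's own side condition $\mu \leq_s \nu$, whatever the flavour $s \in \{g, f, u\}$ of the promotion. Every element of $\vec{\g}$ is either reached this way as a $\wn$-context signature of some premise or already lies in $\vec{\f'}$, so $\e \leqg \vec{\g}$, establishing the second side condition. The main obstacle lies in Stage~2: unlike in Lemma~\ref{ocfgreadinesscondition}, the topmost side condition $\e \lequ \f$ is too weak to invoke axiom \refleqfg{} directly, so the $(\ocu)$-chain must first be traversed and an $(\ocf)$-promotion encountered in order to promote $\lequ$ to $\leqf$; this asymmetry is precisely what dictates the interplay between $\lequ$, $\leqf$, and $\leqg$ encoded in axiom \refleqfg{}.
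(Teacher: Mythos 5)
Your proof is correct and follows essentially the same three-stage argument as the paper's (traverse the $(\ocu)$-chain to get $\e\lequ\e'$, locate the closest non-empty-context $(\ocg)$ in $\C'$ and invoke \refleqfg{} to obtain $\vec{\f'}(\mpx{1})$ and the $\leqg$-inequalities, then propagate $\leqg$ through the rest of $\C'$ via \refleqgs). The only (harmless) variation is that you first convert $\e\lequ\e'$ into $\e\leqf\vec{\f'}$ via \reflequs{} and apply \refleqfg{} with $\e$ itself as the base of the $\leqf$-propagation, whereas the paper propagates $\e'\leqf$ along the path and only transfers back to $\e$ afterwards; both orderings yield the same side conditions.
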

% We do our proof in three steps:
% \begin{enumerate}
%    \item Let $S'$ be the sequent $\vdash \oc_{\e'} C, \wn_{\vec{\f'}}\B$, taking $S''$ the closest sequent conclusion of an $(\ocg)$-rule with non-empty contexts, we run through those sequents the same way than the first step of the proof of \Cref{ocfgreadinesscondition} using \refleqfu{} and \refleqTrans{} and then \refleqfg{} to get that $\e'\leqg\vec{\f'}$ and $\vec{\f'}(\mpx{1})$.
% \item We run through $\ocuProofs_1$ to prove that $\e\lequ\e^3$ for each signature $\e^3$ encountered.
% \item Finally using axiom~\reflequs{} and then axiom \refleqgs, we prove that $\e\leqg\e^3$ for each signatures on top of the (\mcut).
% \end{enumerate}
% \end{proof}

The last lemma of promotion commutation is about the case where we commute an $(\ocu)$-promotion but when first meeting an $(\ocg)$-promotion.
\begin{lem}[%Justification for s
Step~{\hyperref[ocucommocgNonEmptyocgFirst]{$({\text{comm}}^4_\ocu)$}}]
\label{ocugreadinessconditiongcase}
Let $\C:=\{\vdash \oc_{\e} A, \wn_{\f} B\} \cup \ocuProofs_1 \cup \{\vdash \oc_{\e'} C, \wn_{\vec{\f'}}\B\}$ be cut-connected and $\C':=\{\vdash \oc_{\e'} C, \wn_{\vec{\f'}}\B\} \cup \ocProofs_2$ as well. If
\proofref{Details in \Cref{app:ocugreadinessconditiongcase}}
$
{\scriptsize\AIC{\pi_1}
\noLine
\UIC{\vdash A, B}
%\AIC{\e\lequ\f}
\RL{\ocu}
\UIC{\vdash \oc_{\e} A, \wn_{\f} B}
\AIC{\hspace{-.4cm}\ocuProofs_1
~~ \ocProofs_2\hspace{-.4cm}}
\AIC{\pi_2}
\noLine
\UIC{\vdash C, \wn_{\vec{\f'}}\B}
%\AIC{\e'\leqg\vec{\f'}}
\RL{\ocg}
\UIC{\vdash \oc_{\e'} C, \wn_{\vec{\f'}}\B}
\RL{\rmcutpar}
\TIC{\vdash \oc_{\e} A, \wn_{\vec{\g}}\A}
\DP}
$
is a $\musuperLLinfSig$-proof then
%be a $\musuperLLinfSig$-proof such that $\C:=\{\vdash \oc_{\e} A, \wn_{\f} B\} \cup \ocuProofs_1 \cup \{\vdash \oc_{\e'} C, \wn_{\vec{\f'}}\B\}$ are a cut-connected subset of sequents ; and $\C':=\{\vdash \oc_{\e'} C, \wn_{\vec{\f'}}\B\} \cup \ocProofs_2$ another one. Then,
$
{\scriptsize\AIC{\pi_1}
\noLine
\UIC{\vdash A, B}
\AIC{\hspace{-.4cm}\C_1~~\ocProofs_2\hspace{-.4cm}}
\AIC{\pi_2}
\noLine
\UIC{\vdash C, \wn_{\vec{\f'}}\B}
\RL{\rmcutpar}
\TIC{\vdash A, \wn_{\vec{\g}}\A}
%\AIC{\text{By lemma~\ref{ocugreadinessconditiongcase}}}
%\noLine
%\AIC{\e\leqg\vec{\g}}
\RL{\ocg}
\UIC{\vdash \oc_{\e} A, \wn_{\vec{\g}}\A}
\DP}
$
is also a $\musuperLLinfSig$-proof.
\end{lem}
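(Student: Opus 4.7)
The plan is to argue that the side-condition $\e\leqg\vec{\g}$ of the new $(\ocg)$-rule at the root of the transformed derivation follows from the cut-elimination axioms of \Cref{cutElimAxs}; the transformation itself is only a local permutation of one promotion past the multicut, so it does not affect the validity of infinite branches and the whole proof reduces to checking that side-condition. I will perform a two-phase walk of signatures along the cut-connected components $\C$ and $\C'$.

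In the first phase, I would start from the top-left $(\ocu)$-rule, which provides $\e\lequ\f$. Every sequent inside $\ocuProofs_1$ is concluded by an $(\ocu)$-rule and hence carries one $\lequ$-relation between its promoted signature and the signature of its unique context $\wn$-formula; because $\C$ is cut-connected and $(\ocu)$-sequents carry exactly one $\wn$-formula in their context, all such $\wn$-formulas are cut formulas internal to the chain. Iterating \reflequs{} with $s=\text{u}$ propagates the relation $\e\lequ\cdot$ along this chain up to $\e'$, the promoted signature of the $(\ocg)$-sequent $\vdash\oc_{\e'}C,\wn_{\vec{\f'}}\B$. The $(\ocg)$-rule there supplies $\e'\leqg\vec{\f'}$, and one more invocation of \reflequs{} with $s=\text{g}$ yields $\e\leqg\vec{\f'}$.

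In the second phase, I would walk through $\ocProofs_2$, which is cut-connected to the $(\ocg)$-sequent via $\vec{\f'}$. At each step the current signature $\tau$ already satisfies $\e\leqg\tau$; the next promotion on $\oc_\tau$ (be it $(\ocg)$, $(\ocf)$ or $(\ocu)$) supplies a relation $\tau\leq_s$(context signature) for some $s\in\{\text{g},\text{f},\text{u}\}$, and one invocation of \refleqgs{} preserves $\e\leqg\cdot$ to each context signature. Iterating along the cut-connectedness of $\C'$ eventually produces $\e\leqg\g$ for every non-cut $\wn$-signature $\g$ appearing in the hypotheses of the multicut, hence for every $\g\in\vec{\g}$, which is exactly the required side-condition.

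The main obstacle will be the bookkeeping of signatures along the two cut-connected components, and in particular verifying that no $\wn$-formula of $\ocuProofs_1$ survives as a non-cut formula in $\vec{\g}$, since one would otherwise only know $\e\lequ\cdot$ rather than $\e\leqg\cdot$ for its signature. As in the proof sketch of \Cref{ocugreadinessconditionfcase}, this will reduce to the observation that a $(\ocu)$-sequent has a unique $\wn$-formula and that the cut-connectedness of $\C$ forces that formula to be a cut formula, so that every non-cut $\wn$-signature entering $\vec{\g}$ comes either from $\vec{\f'}$ or from $\ocProofs_2$ and is handled uniformly by the second phase; no axiom beyond \reflequs{} and \refleqgs{} is invoked.
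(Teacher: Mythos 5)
Your proposal is correct and follows essentially the same route as the paper's proof: a first induction along the $\lequ$-chain through $\ocuProofs_1$ establishing $\e\lequ\e'$, an application of \reflequs{} against the $(\ocg)$ side-condition to get $\e\leqg\vec{\f'}$, and a second induction over the tree $\C'$ propagating $\e\leqg\cdot$ to every context signature, hence to all of $\vec{\g}$. The only cosmetic difference is that you invoke \reflequs{} (with $s=\text{u}$) and \refleqgs{} (with $s=\text{g}$) where the paper cites \refleqTrans{}; these instances coincide, so nothing is lost.
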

% The proof is similar to \Cref{ocfgreadinesscondition}, using axiom \refleqTrans{}, \reflequs{} and \refleqgs.
% \end{proof}

%\newcommand*{\seqcopy}[2]{(#1)^#2}
%Below, $\seqcopy{\Gamma}{i}$ will denotes $i$ copies of $\Gamma$.
The principal cases start with the contraction:

\begin{lem}[%Justification for s
Step~{\hyperref[contrPrincip]{$(\text{principal}_{\contr{}})$}}]
\label{contrPrincipLemma}
If
\proofref{Details in \Cref{app:contrPrincipLemma}}
$
\footnotesize{
\AIC{\tikzmark{contrPrincipLemma12}\C_\B}
\AIC{\pi}
\noLine
\UIC{\vdash \overbrace{\wn_\e A, \dots, \wn_\e A}^i, \B}
%\AIC{\e(\contr{i})}
\RL{\contr{i}}
\UIC{\vdash \wn_\e A, \tikzmark{contrPrincipLemma12p}\B}
\AIC{\tikzmark{contrPrincipLemma22}\ocProofs_{\wn_\e A}}
\RL{\rmcutpar}
\TIC{\vdash \tikzmark{contrPrincipLemma11}\A, \wn_{\vec{\g}}\A'\tikzmark{contrPrincipLemma21}}
\DP
}
$
\begin{tikzpicture}[overlay,remember picture,-,line cap=round,line width=0.1cm]
   \draw[rounded corners, smooth=2,green, opacity=.4] ([xshift=-2mm] pic cs:contrPrincipLemma21) to ([xshift=2mm, yshift=2mm] pic cs:contrPrincipLemma22);
   \draw[rounded corners, smooth=2,red, opacity=.4] (pic cs:contrPrincipLemma11) to ([xshift=2mm, yshift=2mm] pic cs:contrPrincipLemma12);
   \draw[rounded corners, smooth=2,red, opacity=.4] (pic cs:contrPrincipLemma11) to ([xshift=2mm, yshift=2mm] pic cs:contrPrincipLemma12p);
\end{tikzpicture}
is a $\musuperLLinfSig$-proof, then
$
{\footnotesize\AIC{\tikzmark{contrPrincipLemmapostred12}\C_\B}
\AIC{\pi}
\noLine
\UIC{\vdash \overbrace{\wn_\e A, \dots, \wn_\e A}^i, \B\tikzmark{contrPrincipLemmapostred12p}}
\AIC{\tikzmark{contrPrincipLemmapostred22}\overbrace{\ocProofs_{\wn_\e A} \dots \ocProofs_{\wn_\e A}}^i\tikzmark{contrPrincipLemmapostred32}}
\RL{\mathsf{mcut(\iota', \perp\!\!\!\perp')}}
\TIC{\tikzmark{contrPrincipLemmapostred11}\A, \tikzmark{contrPrincipLemmapostred21}\wn_{\vec{\g}}\A', \dots, \wn_{\vec{\g}}\A'\tikzmark{contrPrincipLemmapostred31}}
%\AIC{\text{By lemma~\ref{app:contrPrincipLemma}}}
%\noLine
%\AIC{\bar{\vec{\g}}(\contr{i})}
\doubleLine
\RL{\contr{i}^{\bar{\vec{\g}}}}
\UIC{\vdash \A, \wn_{\vec{\g}}\A'}
\DP}
$
\begin{tikzpicture}[overlay,remember picture,-,line cap=round,line width=0.1cm]
   \draw[rounded corners, smooth=2,green, opacity=.4] ([xshift=0mm] pic cs:contrPrincipLemmapostred21) to ([xshift=2mm, yshift=2mm] pic cs:contrPrincipLemmapostred22);
   \draw[rounded corners, smooth=2,green, opacity=.4] ([xshift=-2mm] pic cs:contrPrincipLemmapostred31) to ([xshift=-2mm, yshift=2mm] pic cs:contrPrincipLemmapostred32);
   \draw[rounded corners, smooth=2,red, opacity=.4] (pic cs:contrPrincipLemmapostred11) to ([xshift=2mm, yshift=2mm] pic cs:contrPrincipLemmapostred12);
   \draw[rounded corners, smooth=2,red, opacity=.4] (pic cs:contrPrincipLemmapostred11) to ([xshift=-1mm, yshift=1mm] pic cs:contrPrincipLemmapostred12p);
\end{tikzpicture}
%is also a $\musuperLLinfSig$-proof.
is so.
\end{lem}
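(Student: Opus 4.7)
The proof requires checking two things: that the new $\mathsf{mcut}(\iota', \perp\!\!\!\perp')$ is well-formed, and that the derived contraction $\contr{i}^{\bar{\vec{\g}}}$ on the context $\wn_{\vec{\g}}\A'$ is indeed legal.

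First, I would define $\iota'$ and $\perp\!\!\!\perp'$ explicitly. The ancestors of formulas in $\A$ are inherited from $\iota$, now routed through $\C_\B$ or through $\pi$'s sequent (whose $\B$-ancestors are unchanged; only the single $\wn_\e A$ ancestor has been replaced by $i$ copies of $\wn_\e A$). Each of the $i$ copies of $\wn_{\vec{\g}}\A'$ in the new conclusion is mapped to the corresponding duplicated copy of $\ocProofs_{\wn_\e A}$. The relation $\perp\!\!\!\perp'$ inherits the internal links of $\C_\B$ and of each copy of $\ocProofs_{\wn_\e A}$ from $\perp\!\!\!\perp$, and links each of the $i$ occurrences of $\wn_\e A$ at the top of $\pi$'s premise to its own copy of $\ocProofs_{\wn_\e A}$ using the analogue of the original cut-link. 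Cut-connectedness and the functional conditions on $\iota'$ are then immediate.

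Second, by Proposition~\ref{prop:closureDerivabilityDerivationequivalence} it suffices to prove $\bar{\g}(\contr{i})$ for every signature $\g$ occurring in $\vec{\g}$. We have $\e(\contr{i})$ since the original contraction was well-formed. For each $\g \in \vec{\g}$, the formula $\wn_\g A'$ lies in the conclusion of the original multicut and can be traced back via $\perp\!\!\!\perp$ through a chain of promotion-concluded proofs in $\ocProofs_{\wn_\e A}$ until reaching the topmost promotion cut against $\wn_\e A$, whose principal formula is $\oc_\e A^\perp$. Each step of that chain records a side-condition of one of the forms $\e_k \leqg \vec{\f_k}$, $\e_k \leqf \vec{\f_k}$ or $\e_k \lequ \f_k$. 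The composition axioms \refleqTrans, \refleqgs, \refleqfu, \refleqfg, and \reflequs allow us to collapse the chain into a single relation $\e \leq_{s} \g$ for some $s \in \{g,f,u\}$. Applying \refcontrAx then yields $\bar{\g}(\contr{i})$, and the derivation $\contr{i}^{\bar{\vec{\g}}}$ is thus valid.

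The main obstacle is this second step. The chain of relations traversed may freely alternate between $\leqg, \leqf$ and $\lequ$, and it is not obvious a priori that every such alternating composition can be reduced to some $\leq_s$. Axiom \refleqfg is the most delicate, because its conclusion branches into two subcases depending on whether a further $\leqf$-step is encountered; however, since we only require \emph{some} $\leq_s$ and not a specific one, a routine case analysis on the pattern $s_1,\dots,s_n$ of the chain shows that each combination is absorbed by precisely one axiom of \Cref{cutElimAxs}. The remaining cases in the commutation lemmas established above already showcase the same pattern of composition.
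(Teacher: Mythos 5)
Your proposal is correct and follows essentially the same route as the paper's proof: establish $\e\leq_s\vec{\f''}$ for every promotion-concluded sequent of $\ocProofs_{\wn_\e A}$ by composing the promotions' side-conditions along the $\cutrel$-tree rooted at the sequent cut-connected to $\wn_\e A$ (using the composition axioms of \Cref{cutElimAxs}), then apply \refcontrAx{} together with $\e(\contr{i})$ and \Cref{prop:closureDerivabilityDerivationequivalence} to legitimate $\contr{i}^{\bar{\vec{\g}}}$. The only difference is cosmetic: you make the well-formedness of the new multicut explicit and phrase the tree induction as a path argument for each $\g\in\vec{\g}$, and your verification that every alternation of $\leqg,\leqf,\lequ$ is absorbed by some axiom is exactly the case analysis the paper leaves implicit in its inductive step.
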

Before giving the principal case for the multiplexing, we need to define $\opmpx_{S^\oc}(\ocProofs)$ contexts. The intuition is that when a multiplexing rule reduces (i) with a Girard's promotion, they simply cancel each other while when it  interacts (ii) with a $(\ocf)$ or $(\ocu)$, not only those two rules cancel, but also the other promotions hereditarily $\cutrel$-connected to the first $(\ocf)$ or $(\ocu)$ rule, \emph{until some Girard's promotion is reached}, in which case this propagation stops:

\begin{defi}[$\opmpx_{S^\oc}(\ocProofs)$ contexts]
   \label{defi:mpxPrincipompx}
\defref{A graphical representation of this definition is given in \Cref{app:mpxPrincipLemma}, \Cref{app:mpxPrincipompx}.}
Let $\pi$ be some  $\musuperLLinfSig$-proof concluded in a $\mcut(\iota,\cutrel)$ inference,
$\ocProofs$ a context of the multicut which is a tree with respect to a cut-relation $\cutrel$ and
$S^\oc$ be a sequent of $\ocProofs$ that we shall consider as the root of the tree. 

We define a $\musuperLLinfSig$-context $\opmpx_{S^\oc}(\ocProofs)$ altogether with two sets of sequents, $\mathcal{S}^{\mpx{}}_{\C^\oc, S^\oc}$ and $\mathcal{S}^{\contr{}}_{\C^\oc, S^\oc}$, by induction on the tree ordering on $\ocProofs$:

Let $\ocProofs_1, \dots, \ocProofs_n$ be the sons of $S^\oc$, such that $\ocProofs = (S^\oc, (\ocProofs_1, \dots, \ocProofs_n))$, we have two cases:
\begin{itemize}
\item $S^\oc = S^\ocg$, then we define $\opmpx_{S^\oc}(\ocProofs):= (S, (\ocProofs_1, \dots, \ocProofs_n))$ ; $ \mathcal{S}^{\mpx{}}_{\ocProofs, S^\oc}:=\emptyset$ ; $ \mathcal{S}^{\contr{}}_{\ocProofs, S^\oc} := \ocProofs$.

\item $S^\oc = S^\ocf$ or $S^\oc = S^\ocu$, then let the root of $\ocProofs_i$ be $S_i^\oc$, we define $\opmpx_{S^\oc}(\ocProofs)$ as\\ $(S, \opmpx_{S_1^\oc}(\ocProofs_1), \dots, \opmpx_{S_n^\oc}(\ocProofs_n))$ ; $\mathcal{S}^{\mpx{}}_{\ocProofs, S^\oc} := \{S^\oc\}\cup \bigcup \mathcal{S}^{\mpx{}}_{\ocProofs_i, S_i^\oc}$ ; $\mathcal{S}^{\contr{}}_{\ocProofs, S^\oc} := \bigcup \mathcal{S}^{\contr{}}_{\ocProofs_i, S_i^\oc}$.
\end{itemize}
\end{defi}

We can now state the multiplexing principal case:
\begin{lem}[%Justification for s
Step~{\hyperref[mpxcomm]{$(\text{principal}_{\mpx{}})$}}]
   \label{mpxPrincipLemma}
If \proofref{Details in \Cref{app:mpxPrincipLemma}}
$
\footnotesize{
\AIC{\C_\B}
\AIC{\vdash 
\overbrace{A, \dots, A}^i, 
%(A)^i,
\B}
%\AIC{\e(\mpx{i})}
\RL{\mpx{i}}
\UIC{\vdash \wn_\e A, \B}
\AIC{\ocProofs_{\wn_\e A}}
\RL{\rmcutpar}
\TIC{\vdash \A, \wn_{\g'}\A', \wn_{\g''}\A''}
\DP
}
$
is a $\musuperLLinfSig$-proof with $\A$ sent on $\C_{\B}\cup{\B}$ by $\iota$ ; $\wn_{\vec{\g''}} \A''$ sent on sequents of $\mathcal{S}^{\mpx{}}_{\ocProofs, S^\oc}$ ; and $\wn_{\vec{\g'}}\A'$  sent on $\mathcal{S}^{\contr{}}_{\ocProofs, S^\oc}$, where $S^\oc :=\oc_\e A, \wn_{\vec{\f'}}\B'$ is the sequent cut-connected to $\vdash \wn_\e A, \B$ on the formula $\wn_\e A$, then
$
{\footnotesize\AIC{\C_\B}
\AIC{\vdash 
\overbrace{A, \dots, A}^i, 
%(A)^i,
\B}
\AIC{
\overbrace{\opmpx_{S^\oc}(\ocProofs_{\wn_\e A}) \dots \opmpx_{S^\oc}(\ocProofs_{\wn_\e A})}^i
%(\opmpx_{S^\oc}(\ocProofs_{\wn_\e A}))^i
}
\RL{\mathsf{mcut(\iota', \perp\!\!\!\perp')}}
\TIC{\vdash\A,
%\overbrace{\A', \dots, \A'}^i, 
%(\A')^i,
\A', \dots,\A',
%\overbrace{\wn_{\vec{\g''}}\A'', \dots, \wn_{\vec{\g''}}\A''}^i
%(\wn_{\vec{\g''}}\A'')^i
\wn_{\vec{\g''}}\A'', \dots, \wn_{\vec{\g''}}\A''
}
%\AIC{\text{By lemma~\ref{mpxPrincipLemma}}}
%\noLine
%\AIC{\hspace{-1.5cm}\bar{\vec{\g'}}(\mpx{i})}
\doubleLine
\RL{\mpx{i}^{\bar{\vec{\g'}}}}
\UIC{\vdash \A, \wn_{\vec{\g'}}\A', 
%\overbrace{\wn_{\vec{\g''}}\A'', \dots, \wn_{\vec{\g''}}\A''}^i
%(\wn_{\vec{\g''}}\A'')^i
\wn_{\vec{\g''}}\A'', \dots, \wn_{\vec{\g''}}\A''
}
%\AIC{\text{By lemma~\ref{mpxPrincipLemma}}}
%\noLine
%\AIC{\hspace{-1cm}\bar{\vec{\g''}}(\contr{i})}
\doubleLine
\RL{\contr{i}^{\bar{\vec{\g''}}}}
\UIC{\vdash \A, \wn_{\vec{\g'}}\A', \wn_{\vec{\g''}}\A''}
\DP}
$
is also a $\musuperLLinfSig$-proof.

%is also a $\musuperLLinfSig$-proof.
\end{lem}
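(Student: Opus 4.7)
The strategy is to construct the reduced tree and verify that every new ingredient is admissible in $\musuperLLinfSig$. The argument breaks into three subgoals: (a) defining the new multicut relations $(\iota',\cutrel')$ on the $i$ duplicated copies of $\opmpx_{S^\oc}(\ocProofs_{\wn_\e A})$; (b) establishing the existence of the two derived rules $\mpx{i}^{\bar{\vec{\g'}}}$ and $\contr{i}^{\bar{\vec{\g''}}}$ via \Cref{prop:closureDerivabilityDerivationequivalence}; (c) verifying that the resulting object remains a valid proof, not just a pre-proof.

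For (a), the construction is mechanical. Each of the $i$ occurrences of $A$ from the premise of $(\mpx{i})$ becomes $\cutrel'$-linked to the corresponding $A^\perp$ at the opened root of one dedicated copy of $\opmpx_{S^\oc}(\ocProofs_{\wn_\e A})$; formulas in $\C_\B$ and in the rest of $\B$ retain their original $\iota$- and $\cutrel$-connections; and escaping formulas from the $i$ copies are collected into the lists $\A',\dots,\A'$ (inherited from the $\mathcal{S}^{\contr{}}_{\ocProofs_{\wn_\e A},S^\oc}$ regions of each copy) and $\wn_{\vec{\g''}}\A'',\dots,\wn_{\vec{\g''}}\A''$ (inherited from the $\mathcal{S}^{\mpx{}}_{\ocProofs_{\wn_\e A},S^\oc}$ regions). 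Well-formedness of $(\iota',\cutrel')$ in the sense of \Cref{multicutdef} then reduces to that of $(\iota,\cutrel)$.

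The heart of the argument is (b). We must establish $\bar{\g'}(\mpx{i})$ for every $\g'\in\vec{\g'}$ and $\bar{\g''}(\contr{i})$ for every $\g''\in\vec{\g''}$, starting from the sole hypothesis $\e(\mpx{i})$, which holds because the $(\mpx{i})$ rule is legally applied to $\wn_\e A$ in the original proof. We propagate this property along the tree structure of $\ocProofs_{\wn_\e A}$ rooted at $S^\oc$. At an $(\ocf)$- or $(\ocu)$-node with signature $\e'$ and context signatures $\vec{\f'}$, axioms \refleqTrans{}, \reffumpx{} and \reflequs{} transport $\bar{\e'}(\mpx{i})$ and hence $\bar{\vec{\f'}}(\mpx{i})$ into the subtree, which covers the $\vec{\g''}$-signatures inherited from $\mathcal{S}^{\mpx{}}$. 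At a Girard node with signature $\e''$ and context $\vec{\f''}$, axiom \refleqfg{} reconciles the preceding $\leqf$-chain with the subsequent $\leqg$, yielding $\e\leqg\e''$ and hence $\e\leqg\vec{\f''}$ via \refleqgs{}; axiom \refgmpxAx{} then gives $\bar{\vec{\f''}}(\contr{i})$, which axiom \refcontrAx{} carries through the entire $\mathcal{S}^{\contr{}}$-subtree, covering the $\vec{\g'}$-signatures.

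The main obstacle is exactly this axiom-chasing: a branch of $\ocProofs_{\wn_\e A}$ may freely alternate between the three forms of promotion before either escaping to the conclusion or reaching a Girard node, so \refleqfg{} must be invoked precisely at the $\leqf$/$\leqg$ transitions and \reflequs{} at the $\lequ$-transitions, while keeping the propagated closure property consistent across branches. Once (b) is in place, step (c) is routine: each of the $i$ duplicated copies of $\opmpx_{S^\oc}(\ocProofs_{\wn_\e A})$ locally mirrors the threads of the original subderivations, and the finite $\mpx{i}^{\bar{\vec{\g'}}}$/$\contr{i}^{\bar{\vec{\g''}}}$ block appended at the bottom cannot disrupt any valid infinite thread.
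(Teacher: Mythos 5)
Your overall architecture coincides with the paper's: a well-founded induction on the $\cutrel$-tree of $\ocProofs_{\wn_\e A}$ rooted at $S^\oc$, a case split on the kind of promotion concluding each node, and a final appeal to \Cref{prop:closureDerivabilityDerivationequivalence} to realize the derived blocks $\mpx{i}^{\bar{\vec{\g'}}}$ and $\contr{i}^{\bar{\vec{\g''}}}$. The well-formedness of $(\iota',\cutrel')$ is indeed mechanical, and validity preservation is handled globally in the paper rather than inside this lemma, so your part (c) is harmless.

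The problem is the invariant you propagate in (b): as stated, the induction does not go through. You carry the closure properties $\bar{\e'}(\mpx{i})$ and $\bar{\vec{\f''}}(\contr{i})$ from node to node, invoking \reffumpxAx{} and \refcontrAx{} to push them into the next subtree. But those axioms have \emph{non-closed} hypotheses: \reffumpxAx{} requires $\e(\mpx{i})$ and \refcontrAx{} requires $\e(\contr{i})$, not $\bar{\e}(\mpx{i})$ or $\bar{\e}(\contr{i})$, and the closure $\e\mapsto\bar{\e}$ is not monotone along the order relations (the rules defining $\bar{\e}$ themselves carry non-closed side conditions such as $\e(\contr{2})$ and $\e(\mpx{1})$, so the barred versions of the transfer axioms are not derivable). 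In particular nothing guarantees that $\contr{i}$ holds un-barred at the context signatures of the first Girard node, so your step ``\refcontrAx{} carries $\bar{\vec{\f''}}(\contr{i})$ through the $\mathcal{S}^{\contr{}}$-subtree'' is unavailable; indeed \refcontrAx{} plays no role in the paper's proof of this lemma, since the only hypothesis on $\e$ here is $\e(\mpx{i})$. The correct invariant, which is what the paper maintains, is the order relation measured from the \emph{fixed} root signature $\e$: one shows $\e\leqf\vec{\f''}$ or $\e\lequ\vec{\f''}$ for every sequent of $\mathcal{S}^{\mpx{}}_{\ocProofs, S^\oc}$ and $\e\leqg\vec{\f''}$ for every sequent of $\mathcal{S}^{\contr{}}_{\ocProofs, S^\oc}$, propagated by \refleqTrans{}, \refleqfu{} and \reflequs{} in the $\ocf/\ocu$ region (you omit \refleqfu{}, needed when an $\ocu$ node follows an $\ocf$ node), by \refleqfg{} or \reflequs{} at the transition to the first Girard node, and by \refleqgs{} below it; only then does one apply \reffumpxAx{} and \refgmpxAx{}, each exactly once per sequent and always from the single hypothesis $\e(\mpx{i})$, to obtain $\bar{\vec{\f''}}(\mpx{i})$ resp.\ $\bar{\vec{\f''}}(\contr{i})$. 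With that correction your argument becomes the paper's.
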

% We prove that for each sequent $\vdash \oc_{\e''} A'', \wn_{\vec{\f''}} \B''$ of $\mathcal{S}^{\contr{}}_{\ocProofs, S^\oc}$, $\e\leqg\vec{\f''}$ and that for each sequent $\vdash \oc_{\e''} A'', \wn_{\vec{\f''}} \B''$ of $\mathcal{S}^{\mpx{}}_{\ocProofs, S^\oc}$, $\e\leqf\vec{\f''}$ or $\e\lequ\vec{\f''}$.
% We then use axiom \reffumpx{} and \refgmpxAx{} and property~\ref{prop:closureDerivabilityDerivationequivalence}, to conclude that $\bar{\vec{g'}}(\mpx{i})$ and $\bar{\vec{g''}}(\contr{i})$ are true and that the right proof is correct.
% \end{proof}

\begin{defi}
\label{def:musuperLLcutred}
\Cref{fig:musuperllexpcommcutstep,fig:musuperllexpcommcutstep2,fig:musuperllexpprincipcutstep} (with the applicability conditions stated in the corresponding lemmas) induce the  (\mcut)-reduction relation over $\musuperLLinfSig$ proofs. 
\end{defi}

\begin{rem}
No justification lemma is stated for $(\text{comm}_{\mpx{}})$ nor $(\text{comm}_{\contr{}})$ as applicability of $(\mpx{})$ and $(\contr{})$ only depends on the connective and not on the context.

Even though some reduction rules presented in \Cref{fig:musuperllexpcommcutstep} may seem to overlap, note that the applicability conditions of the Lemmas ensure that it is not the case.
\end{rem}

\begin{figure}[t]
$$
\hspace{-1cm}\begin{array}{c|c|c}
\text{Reduction} & \text{Name} & \text{Lemma}\\
\footnotesize{
\AIC{\pi}
\noLine
\UIC{\vdash A, \wn_{\vec{\f}}\B}
%\AIC{\e\leqg\vec{\f}}
\RL{\ocg}
\UIC{\vdash \oc_{\e} A, \wn_{\vec{\f}}\B}
\AIC{\ocProofs}
\RL{\rmcutpar}
\BIC{\vdash \oc_{\e} A, \wn_{\vec{\g}}\A}
\DP\quad\rightsquigarrow\quad
\AIC{\pi}
\noLine
\UIC{\vdash A, \wn_{\vec{\f}}\B}
\AIC{\ocProofs}
\RL{\rmcutpar}
\BIC{\vdash A, \wn_{\vec{\g}}\A}
%\AIC{\text{By lemma~\ref{ocgreadinesscondition}}}
%\AIC{\e\leqg\vec{\g}}
\RL{\ocg}
\UIC{\vdash \oc_{\e} A, \wn_{\vec{\g}}\A}
\DP}
& ({\text{comm}}_\ocg)\label{ocgcomm}& \ref{ocgreadinesscondition}
\\[2ex]
\footnotesize{
\AIC{\pi}
\noLine
\UIC{\vdash A, \B}
%\AIC{\e\leqf\vec{\f}}
\RL{\ocf}
\UIC{\vdash \oc_{\e} A, \wn_{\vec{\f}}\B}
\AIC{\ocProofs}
\RL{\rmcutpar}
\BIC{\vdash \oc_{\e} A, \wn_{\vec{\g}}\A}
\DP\quad\rightsquigarrow\quad\nolinebreak
\AIC{\pi}
\noLine
\UIC{\vdash A, \B}
\AIC{\C}
\RL{\rmcutpar}
\BIC{\vdash A, \A}
%\AIC{\text{By lemma~\ref{ocfreadinesscondition}}}
%\noLine
%\AIC{\e\leqf\vec{\g}}
\RL{\ocf}
\UIC{\vdash \oc_{\e} A, \wn_{\vec{\g}}\A}
\DP
}
& ({\text{comm}}^1_\ocf)\label{ocfcommocgEmpty}
& \ref{ocfreadinesscondition}
\\[2ex]
{%\scriptsize
\footnotesize
\AIC{\pi}
\noLine
\UIC{\vdash A, \B}
%\AIC{\e\leqf\vec{\f}}
\RL{\ocf}
\UIC{\vdash \oc_{\e} A, \wn_{\vec{\f}}\B}
\AIC{\ocProofs}
\RL{\rmcutpar}
\BIC{\vdash \oc_{\e} A, \wn_{\vec{\g}}\A}
\DP
\quad\rightsquigarrow\quad
\AIC{\pi}
\noLine
\UIC{\vdash A, \B}
%\AIC{\text{By axiom~\refleqfg}}
%\noLine
%\AIC{\vec{\f}(\mpx{1})}
\doubleLine
\RL{\mpx{1}}
\UIC{\vdash A, \wn_{\vec{\f}}\B}
\AIC{\ocProofs}
\RL{\rmcutpar}
\BIC{\vdash A, \wn_{\vec{\g}}\A}
%\AIC{\text{By lemma~\ref{ocfgreadinesscondition}}}
%\noLine
%\AIC{\e\leqg\vec{\g}}
\RL{\ocg}
\UIC{\vdash \oc_{\e} A, \wn_{\vec{\g}}\A}
\DP} & ({\text{comm}}^2_\ocf) \label{ocfcommocgNonEmpty}
& \ref{ocfgreadinesscondition}
\\[2ex]
\footnotesize{
\AIC{\pi}
\noLine
\UIC{\vdash A, C}
%\AIC{\e\lequ\f}
\RL{\ocu}
\UIC{\vdash \oc_{\e} A, \wn_{\f} C}
\AIC{\ocuProofs}
\RL{\rmcutpar}
\BIC{\vdash \oc_{\e} A, \wn_{\g} B}
\DP
\quad\rightsquigarrow\quad
\AIC{\pi}
\noLine
\UIC{\vdash A, C}
\AIC{\C}
\RL{\rmcutpar}
\BIC{\vdash A, B}
%\AIC{\text{By lemma~\ref{ocureadinesscondition}}}
%\noLine
%\AIC{\e\lequ\g}
\RL{\ocu}
\UIC{\vdash \oc_{\e} A, \wn_{\g} B}
\DP 
}
& ({\text{comm}}^1_\ocu)\label{ocucommOnlyocu}
& \ref{ocureadinesscondition}
\\[2ex]
%\hspace{-2.1cm}
\footnotesize{
\AIC{\pi}
\noLine
\UIC{\vdash A, B}
%\AIC{\e\lequ\f}
\RL{\ocu}
\UIC{\vdash \oc_{\e} A, \wn_{\f} B}
\AIC{\ocProofs}
\RL{\rmcutpar}
\BIC{\vdash \oc_{\e} A, \wn_{\vec{\g}}\A}
\DP
\quad\rightsquigarrow\quad
\AIC{\pi}
\noLine
\UIC{\vdash A, B}
\AIC{\C}
\RL{\rmcutpar}
\BIC{\vdash A, \A}
%\AIC{\text{By lemma~\ref{ocufreadinesscondition}}}
%\noLine
%\AIC{\e\leqf\vec{\g}}
\RL{\ocf}
\UIC{\vdash \oc_{\e} A, \wn_{\vec{\g}}\A}
\DP
}
& ({\text{comm}}^2_\ocu)\label{ocucommocgEmpty}
& \ref{ocufreadinesscondition}
\\[2ex]
%\AIC{\pi_1}
%\noLine
%\UIC{\vdash A, B}
%%\AIC{\e\lequ\f}
%\RL{\ocu}
%\UIC{\vdash \oc_{\e} A, \wn_{\f} B}
%\AIC{\ocuProofs_1}
%\AIC{\pi_2}
%\noLine
%\UIC{\vdash C, \B}
%%\AIC{\e'\leqf\vec{\f'}}
%\RL{\ocf}
%\UIC{\vdash \oc_{\e'} C, \wn_{\vec{\f'}}\B}
%\AIC{\ocProofs_2}
%\RL{\rmcutpar}
%\QIC{\vdash \oc_{\e} A, \wn_{\vec{\g}}\A}
%\DP
%&&\\[2ex]
%\rightsquigarrow\quad
%\AIC{\pi_1}
%\noLine
%\UIC{\vdash A, B}
%\AIC{\C_1}
%\AIC{\pi_2}
%\noLine
%\UIC{\vdash C, \B}
%%\AIC{\text{By lemma~\ref{ocugreadinessconditionfcase}}}
%%\noLine
%%\AIC{\vec{\f'}(\mpx{1})}
%\doubleLine
%\RL{\mpx{1}}
%\UIC{\vdash C, \wn_{\vec{\f'}}\B}
%\AIC{\ocProofs_2}
%\RL{\rmcutpar}
%\QIC{\vdash A, \wn_{\vec{\g}}\A}
%%\AIC{\text{By lemma~\ref{ocugreadinessconditionfcase}}}
%%\noLine
%%\AIC{\e\leqg\vec{\g}}
%\RL{\ocg}
%\UIC{\vdash \oc_{\e} A, \wn_{\vec{\g}}\A}
%\DP & ({\text{comm}}^3_\ocu)\label{ocucommocgNonEmptyocfFirst}
%& \ref{ocugreadinessconditionfcase}
%\\[2ex]
{\footnotesize \AIC{\pi_1}
\noLine
\UIC{\vdash A, B}
%\AIC{\e\lequ\f}
\RL{\ocu}
\UIC{\vdash \oc_{\e} A, \wn_{\f} B}
\AIC{\hspace{-.4cm}\ocuProofs_1 ~~ \ocProofs_2\hspace{-.4cm}}
\AIC{\pi_2}
\noLine
\UIC{\vdash C, \B}
%\AIC{\e'\leqf\vec{\f'}}
\RL{\ocf}
\UIC{\vdash \oc_{\e'} C, \wn_{\vec{\f'}}\B}
\RL{\rmcutpar}
\TIC{\vdash \oc_{\e} A, \wn_{\vec{\g}}\A}
\DP
\rightsquigarrow\quad
\AIC{\pi_1}
\noLine
\UIC{\vdash A, B}
\AIC{\hspace{-.4cm}\C_1 ~~ \ocProofs_2\hspace{-.4cm}}
\AIC{\pi_2}
\noLine
\UIC{\vdash C, \B}
%\AIC{\text{By lemma~\ref{ocugreadinessconditionfcase}}}
%\noLine
%\AIC{\vec{\f'}(\mpx{1})}
\doubleLine
\RL{\mpx{1}}
\UIC{\vdash C, \wn_{\vec{\f'}}\B}
\RL{\rmcutpar}
\TIC{\vdash A, \wn_{\vec{\g}}\A}
%\AIC{\text{By lemma~\ref{ocugreadinessconditionfcase}}}
%\noLine
%\AIC{\e\leqg\vec{\g}}
\RL{\ocg}
\UIC{\vdash \oc_{\e} A, \wn_{\vec{\g}}\A}
\DP
} & ({\text{comm}}^3_\ocu)\label{ocucommocgNonEmptyocfFirst}
& \ref{ocugreadinessconditionfcase}
\\[2ex]
%{%\footnotesize
%\AIC{\pi_1}
%\noLine
%\UIC{\vdash A, B}
%%\AIC{\e\lequ\f}
%\RL{\ocu}
%\UIC{\vdash \oc_{\e} A, \wn_{\f} B}
%\AIC{\ocuProofs_1}
%\AIC{\pi_2}
%\noLine
%\UIC{\vdash C, \wn_{\vec{\f'}}\B}
%%\AIC{\e'\leqg\vec{\f'}}
%\RL{\ocg}
%\UIC{\vdash \oc_{\e'} C, \wn_{\vec{\f'}}\B}
%\AIC{\ocProofs_2}
%\RL{\rmcutpar}
%\QIC{\vdash \oc_{\e} A, \wn_{\vec{\g}}\A}
%\DP
%}
%&&\\[2ex]
%{%\footnotesize
%\rightsquigarrow
%\AIC{\pi_1}
%\noLine
%\UIC{\vdash A, B}
%\AIC{\C_1}
%\AIC{\pi_2}
%\noLine
%\UIC{\vdash C, \wn_{\vec{\f'}}\B}
%\AIC{\ocProofs_2}
%\RL{\rmcutpar}
%\QIC{\vdash A, \wn_{\vec{\g}}\A}% \qquad \e\leqg\vec{\g}}
%%\AIC{\text{By lemma~\ref{ocugreadinessconditiongcase}}}
%%\noLine
%%\AIC{\e\leqg\vec{\g}}
%\RL{\ocg\quad}% (\e\leqg\vec{\g})}
%%\BIC
%\UIC{\vdash \oc_{\e} A, \wn_{\vec{\g}}\A}
%\DP} & ({\text{comm}}^4_\ocu) \label{ocucommocgNonEmptyocgFirst}
%& \ref{ocugreadinessconditiongcase}
{\footnotesize
\AIC{\pi_1}
\noLine
\UIC{\vdash A, B}
%\AIC{\e\lequ\f}
\RL{\ocu}
\UIC{\vdash \oc_{\e} A, \wn_{\f} B}
\AIC{\hspace{-.4cm}\ocuProofs_1 ~~\ocProofs_2\hspace{-.4cm}}
\AIC{\pi_2}
\noLine
\UIC{\vdash C, \wn_{\vec{\f'}}\B}
%\AIC{\e'\leqg\vec{\f'}}
\RL{\ocg}
\UIC{\vdash \oc_{\e'} C, \wn_{\vec{\f'}}\B}
\RL{\rmcutpar}
\TIC{\vdash \oc_{\e} A, \wn_{\vec{\g}}\A}
\DP
\rightsquigarrow
\AIC{\pi_1}
\noLine
\UIC{\vdash A, B}
\AIC{\hspace{-.4cm}\C_1 ~~ \ocProofs_2\hspace{-.4cm}}
\AIC{\pi_2}
\noLine
\UIC{\vdash C, \wn_{\vec{\f'}}\B}
\RL{\rmcutpar}
\TIC{\vdash A, \wn_{\vec{\g}}\A}% \qquad \e\leqg\vec{\g}}
%\AIC{\text{By lemma~\ref{ocugreadinessconditiongcase}}}
%\noLine
%\AIC{\e\leqg\vec{\g}}
\RL{\ocg\quad}% (\e\leqg\vec{\g})}
%\BIC
\UIC{\vdash \oc_{\e} A, \wn_{\vec{\g}}\A}
\DP} & ({\text{comm}}^4_\ocu) \label{ocucommocgNonEmptyocgFirst}
& \ref{ocugreadinessconditiongcase}
\end{array}$$
    \caption{Commutative cut-reduction steps of the  \musuperLLinf{} promotion rules
    \label{fig:musuperllexpcommcutstep}}
\end{figure}

\begin{figure}[t]
$$
\hspace{-1cm}\begin{array}{c|c|c}
%\text{Reduction} & \text{Name} & \text{Lemma}\\
{\footnotesize
\AIC{\pi}
\noLine
\UIC{\vdash \overbrace{A,\dots, A}^i, \B}
%\AIC{\e(\mpx{i})}
\RL{\mpx{i}}
\UIC{\vdash \wn_\e A, \B}
\AIC{\mathcal{C}}
\RL{\rmcutpar}
\BIC{\vdash \wn_\e A, \A}
\DP\rightsquigarrow
\AIC{\pi}
\noLine
\UIC{\vdash \overbrace{A,\dots, A}^i, \B}
\AIC{\mathcal{C}}
\RL{\mathsf{mcut(\iota', \perp\!\!\!\perp')}}
\BIC{\vdash A,\dots, A, \A}
%\AIC{\e(\mpx{i})}
\RL{\mpx{i}}
\UIC{\vdash \wn_\e A, \A}
\DP}&(\text{comm}_{\mpx{}})\label{mpxcomm}
& 
\\[2ex]
{\footnotesize
\AIC{\pi}
\noLine
\UIC{\vdash \overbrace{\wn_{\e} A, \dots \wn_{\e} A}^i, \B}
%\AIC{\e(\contr{i})}
\RL{\contr{i}}
\UIC{\vdash \wn_\e A, \B}
\AIC{\mathcal{C}}
\RL{\rmcutpar}
\BIC{\vdash \wn_\e A, \A}
\DP\rightsquigarrow
\AIC{\pi}
\noLine
\UIC{\vdash \overbrace{\wn_{\e} A, \dots, \wn_{\e} A}^i, \B}
\AIC{\mathcal{C}}
\RL{\mathsf{mcut(\iota', \perp\!\!\!\perp')}}
\BIC{\vdash \wn_{\e} A,\dots \wn_{\e} A, \A}
%\AIC{\e(\contr{i})}
\RL{\contr{i}}
\UIC{\vdash \wn_\e A, \A}
\DP}&(\text{comm}_{\contr{}})\label{contrcomm}
& 
\end{array}$$
    \caption{Commutative cut-reduction steps for \musuperLLinf{} contraction and multiplexing rules
    \label{fig:musuperllexpcommcutstep2}}
\end{figure}

\begin{figure*}[t]
%\centering
$
\hspace{-1cm}
\tiny{
\begin{array}{c|c}
%\hspace*{-5cm}
\AIC{\tikzmark{contrPrincipRed12}\C_\B}
\AIC{\pi}
\noLine
\UIC{\vdash \overbrace{\wn_\e A, \dots, \wn_\e A}^i, \B}
%\AIC{\e(\contr{i})}
\RL{\contr{i}}
\UIC{\vdash \wn_\e A, \B\tikzmark{contrPrincipRed12p}}
\AIC{\tikzmark{contrPrincipRed22}\ocProofs_{\wn_\e A}}
\RL{\rmcutpar}
\TIC{\vdash \tikzmark{contrPrincipRed11}\A, \wn_{\vec{\g}}\A'\tikzmark{contrPrincipRed21}}
\DP\rightsquigarrow\quad
%$
%\\[2ex]
%$
%{\tiny
%\hspace*{4cm}
\AIC{\tikzmark{contrPrincipRedpostred12}\C_\B}
\AIC{\pi}
\noLine
\UIC{\vdash \overbrace{\wn_\e A, \dots, \wn_\e A}^i, \B\tikzmark{contrPrincipRedpostred12p}}
\AIC{\tikzmark{contrPrincipRedpostred22}\overbrace{\ocProofs_{\wn_\e A}\dots \ocProofs_{\wn_\e A}}^i\tikzmark{contrPrincipRedpostred32}}
\RL{\mathsf{mcut(\iota', \perp\!\!\!\perp')}}
\TIC{\tikzmark{contrPrincipRedpostred11}\A, \tikzmark{contrPrincipRedpostred21}\wn_{\vec{\g}}\A', \dots, \wn_{\vec{\g}}\A'\tikzmark{contrPrincipRedpostred31}}
%\AIC{\text{By lemma~\ref{contrPrincipLemma}}}
%\noLine
%\AIC{\bar{\vec{e}}(\contr{i})}
\doubleLine
\RL{\contr{i}^{\bar{\vec{\g}}}}
\UIC{\vdash \A, \wn_{\vec{\g}}\A'}
\DP
& 
\begin{array}{c}
(\text{principal}_{\contr{}})\label{contrPrincip}\\
\Cref{contrPrincipLemma}
\end{array}
\begin{tikzpicture}[overlay,remember picture,-,line cap=round,line width=0.1cm]
   \draw[rounded corners, smooth=2,green, opacity=.4] ([xshift=-2mm] pic cs:contrPrincipRed21) to ([xshift=2mm, yshift=2mm] pic cs:contrPrincipRed22);
   \draw[rounded corners, smooth=2,red, opacity=.4] (pic cs:contrPrincipRed11) to ([xshift=2mm, yshift=2mm] pic cs:contrPrincipRed12);
   \draw[rounded corners, smooth=2,red, opacity=.4] (pic cs:contrPrincipRed11) to ([xshift=-1mm, yshift=1mm] pic cs:contrPrincipRed12p);
   \draw[rounded corners, smooth=2,green, opacity=.4] ([xshift=0mm] pic cs:contrPrincipRedpostred21) to ([xshift=2mm, yshift=2mm] pic cs:contrPrincipRedpostred22);
   \draw[rounded corners, smooth=2,green, opacity=.4] ([xshift=-2mm] pic cs:contrPrincipRedpostred31) to ([xshift=-2mm, yshift=2mm] pic cs:contrPrincipRedpostred32);
   \draw[rounded corners, smooth=2,red, opacity=.4] (pic cs:contrPrincipRedpostred11) to ([xshift=2mm, yshift=2mm] pic cs:contrPrincipRedpostred12);
   \draw[rounded corners, smooth=2,red, opacity=.4] (pic cs:contrPrincipRedpostred11) to ([xshift=-1mm, yshift=1mm] pic cs:contrPrincipRedpostred12p);
\end{tikzpicture}
\\[2ex]
%$
%{\tiny
%\hspace*{-6cm}
\AIC{\C_\B}
\AIC{\vdash \overbrace{A, \dots, A}^i, \B}
%\AIC{\e(\mpx{i})}
\RL{\mpx{i}}
\UIC{\vdash \wn_\e A, \B}
\AIC{\ocProofs_{\wn_\e A}}
\RL{\rmcutpar}
\TIC{\vdash \A, \wn_{\g'}\A', \wn_{\g''}\A''}
\DP\rightsquigarrow
%}
%$
%\\[2ex]
%$
%{\tiny
%\hspace*{2cm}
\AIC{\C_\B}
\AIC{\vdash \overbrace{A, \dots, A}^i, \B}
\AIC{\overbrace{\opmpx_S(\ocProofs_{\wn_\e A})  \dots  \opmpx_S(\ocProofs_{\wn_\e A})}^i}
\RL{\mathsf{mcut(\iota', \perp\!\!\!\perp')}}
\TIC{\vdash\A,\overbrace{\A', \dots, \A'}^i, \overbrace{\wn_{\vec{\g''}}\A'', \dots, \wn_{\vec{\g''}}\A''}^i}
%\AIC{\text{By lemma~\ref{mpxPrincipLemma}}}
%\noLine
%\AIC{\bar{\vec{\g'}}(\mpx{i})}
\doubleLine
\RL{\mpx{i}^{\bar{\vec{\g'}}}}
\UIC{\vdash \A, \wn_{\vec{\g'}}\A', \overbrace{\wn_{\vec{\g''}}\A'', \dots, \wn_{\vec{\g''}}\A''}^i}
%\AIC{\text{By lemma~\ref{mpxPrincipLemma}}}
%\noLine
%\AIC{\bar{\vec{\g''}}(\contr{i})}
\doubleLine
\RL{\contr{i}^{\bar{\vec{\g''}}}}
\UIC{\vdash \A, \wn_{\vec{\g'}}\A', \wn_{\vec{\g''}}\A''}
\DP
&
%\hspace*{-2cm}
\begin{array}{c}
(\text{principal}_{\mpx{}})\label{mpxPrincip}\\
\Cref{mpxPrincipLemma}
\end{array}
\\[2ex]
\end{array}
}$
with $S$ being the sequent cut-connected to $\wn_\e A, \B$ on the formula $\wn_\e A$.
\caption{Principal cut-reduction steps of the exponential fragment of \musuperLLinf{}}\label{fig:musuperllexpprincipcutstep}
\end{figure*}

    \subsection{Translating \musuperLLinf{} into \muLLinf}\label{musuperlltollTranslation}
%As just said, a way of proving cut-elimination for \muLLmodinf, would be to encode it into \muMALLinf{} as it was done in~\cite{TABLEAUX23}, then transposing the proof to this particular framework. However, in this paragraph we will 
We now give a translation of $\musuperLLinf(\Sig, \leqg, \leqf, \lequ)$ into \muLLinf{} using directly the results of~\cite{TABLEAUX23} to deduce $\musuperLLinf(\Sig, \leqg, \leqf, \lequ)$ cut-elimination in a more modular way:
\begin{defi}[$(-)^\circ$-translation]
We define $(-)^\circ$ by induction on formulas ($c$ is any non-exponential connective):
$
c(F_1,\dots, F_n)^\circ  := c(F_1^\circ, \dots, F_n^\circ); \quad
X^\circ  := X; \quad
\forall \e, (\wn_\e A)^{\circ}  := \wn A^{\circ}; \quad
a^\circ  := a;\quad
(\oc_\e A)^{\circ}  := \oc A^{\circ}.
$
We define translations for exponential rules of $\musuperLLinf(\Sig, \leqg, \leqf, \lequ)$ in \Cref{fig:superlltollRules}. Other rules have their translations equal to themselves.
Proof translation $\pi^\circ$ of $\pi$ is the proof coinductively defined on $\pi$ from rule translations.
\end{defi}

\begin{figure*}
    \centering
    \hspace*{-.5cm}$
    {\small\AIC{\vdash \overbrace{A, \dots, A}^i, \A}
    \AIC{\hspace{-.4cm}i\neq 0}
    \noLine
    \UIC{\hspace{-.4cm}\e(\mpx{i})}
    \RL{\mpx{i}}
    \BIC{\vdash \wn_\e A, \A}
    \DP
    ~\rightsquigarrow~
    	\AIC{\vdash \overbrace{A^\circ, \dots, A^\circ}^i, \A^\circ}
	\doubleLine
	\RL{\wnde\times i}
	\UIC{\vdash \wn A^\circ, \dots, \wn A^\circ, \A^\circ}
	\RL{\wncontr\times (i-1)}
	\UIC{\vdash \wn A^\circ, \A^\circ}
    \DP
\quad
    \AIC{\vdash \A}
    \AIC{\hspace{-.4cm}\e(\mpx{0})}
    \RL{\mpx{0}}
    \BIC{\vdash \wn_\e A, \A}
    \DP
    ~\rightsquigarrow~
        \AIC{\vdash \A^\circ}
    \RL{\wnwk}
    \UIC{\vdash \wn A^\circ, \A^\circ}
    \DP}
    $\\[0ex]
    \hspace*{-1.5cm}
    $
    {\scriptsize
    \AIC{\vdash \overbrace{\wn_\e A, \dots, \wn_\e A}^i, \A}
    \AIC{\hspace{-.5cm}\e(\contr{i})}
    \RL{\contr{i}}
    \BIC{\vdash \wn_\e A, \A}
    \DP
        ~\rightsquigarrow~
	\AIC{\vdash \overbrace{\wn A^\circ, \dots, \wn A^\circ}^i, \A^\circ}
	\doubleLine
	\RL{\wncontr\times i}
	\UIC{\vdash \wn A^\circ, \A^\circ}
    \DP\quad
    \AIC{\vdash A, \wn_{\e_1} A_1, \dots, \wn_{\e_n} A_n}
    \AIC{\hspace{-.5cm}i\in\llbracket 1, n\rrbracket}
	\noLine    
    \UIC{\hspace{-.5cm}\e\leqg\e_i}
	\RL{\ocg}    
    \BIC{\vdash \oc_\e A, \wn_{\e_1} A_1, \dots, \wn_{\e_n} A_n}
    \DP
        ~\rightsquigarrow~
    \AIC{\vdash  A^\circ, \wn A_1^\circ, \dots, \wn A_n^\circ}
    \RL{\ocprom}
    \UIC{\vdash \oc A^\circ, \wn A_1^\circ, \dots, \wn A_n^\circ}
    \DP}
    $\\[1ex]
    \hspace*{-1cm}$
    {\small\AIC{\vdash A, A_1, \dots, A_n}
    \AIC{\hspace{-.4cm}i\in\llbracket 1, n\rrbracket}
	\noLine    
    \UIC{\hspace{-.4cm}\e\leqf\e_i}
	\RL{\ocf}
    \BIC{\vdash \oc_\e A, \wn_{\e_1} A_1, \dots, \wn_{\e_n} A_n}
    \DP
        ~\rightsquigarrow~
            \AIC{\vdash A^\circ, A_1^\circ, \dots, A_n^\circ}
    \doubleLine
    \RL{\wnde}
    \UIC{\vdash A^\circ, \wn A_1^\circ, \dots, \wn A_n^\circ}
    \RL{\ocprom}
    \UIC{\vdash \oc A^\circ, \wn A_1^\circ, \dots, \wn A_n^\circ}
    \DP
    \qquad
    \AIC{\vdash A, B}
    \AIC{\hspace{-.4cm}\e_1\lequ \e_2}
    \RL{\ocu}
    \BIC{\vdash \oc_{\e_1} A, \wn_{\e_2} B}
    \DP
    ~\rightsquigarrow~
    \AIC{\vdash A^\circ, B^\circ}
    \RL{\wnde}
    \UIC{\vdash A^\circ, \wn B^\circ, }
    \RL{\ocprom}
    \UIC{\vdash \oc A^\circ, \wn B^\circ}
    \DP}
    $
    \caption{Exponential rule translations from $\musuperLLinf(\Sig, \leqg, \leqf, \lequ)$ into \muLLinf}\label{fig:superlltollRules}
\end{figure*}

%The following lemma is immediate and comes from the fact that fixed-points are not affected by the translation:
Since fixed-points are not affected by the translation, we have the following lemma:
\begin{lem}[$(-)^\circ$ preserves validity]\label{circValidityRobustness}
%Valid pre-proofs $\pi$ translates to valid pre-proofs $\pi^\circ$.
%Conversely, 
%if $\pi^\circ$ is a valid pre-proof, then 
$\pi$ is a valid proof if and only if $\pi^\circ$ is a valid proof.
\end{lem}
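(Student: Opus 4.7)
The plan is to exploit the local nature of the $(-)^\circ$-translation, which rewrites each $\musuperLLinf$-rule instance into a finite derivation in $\muLLinf$ having the image of the conclusion at the bottom and the images of the premises at the leaves. This induces a bijection between sequent occurrences in $\pi$ and distinguished \emph{anchor} sequents in $\pi^\circ$, and thereby a bijection between infinite branches: each branch of $\pi$ is stretched by a bounded finite amount at each rule, and anchor sequents occur infinitely often along the stretched branch in $\pi^\circ$.

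The first step is the formula-level book-keeping: I would observe that $(-)^\circ$ is structurally defined on formulas, commutes with all non-exponential constructors and with $\mu$ and $\nu$, and only forgets the exponential subscripts. Consequently $A$ is a subformula of $B$ iff $A^\circ$ is a subformula of $B^\circ$, and $F$ is a $\nu$-formula iff $F^\circ$ is a $\nu$-formula; moreover every subformula of $B^\circ$ is uniquely of the form $C^\circ$ for a subformula $C$ of $B$. The second step is to transport threads: a thread $(F_i)$ along an infinite branch of $\pi$ lifts to a thread along the corresponding branch of $\pi^\circ$ by following the ancestor relations of each rule-expansion in \Cref{fig:superlltollRules}, with the values at anchor sequents being the $F_i^\circ$. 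Conversely, a thread in $\pi^\circ$ restricted to anchor sequents defines a thread in $\pi$, thanks to the bijection of the first step.

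The third step is to match the two validity conditions. Concerning the minimal recurring subformula: by step one it is a $\nu$-formula on one side iff it is on the other, because translations of $\wn_\e$ or $\oc_\e$ are never fixed-points. Concerning ``infinitely often principal'': the principal formula of each $\musuperLLinf$-rule matches the principal formula of the outermost rule of its translation; in the converse direction, any principal occurrence strictly inside a rule-expansion concerns an $\wn$-formula or $\oc$-formula introduced by $(\wnde)$, $(\wncontr)$ or $(\ocprom)$, which cannot be the minimal recurring formula of a valid thread. Combining these, a thread in $\pi$ is valid iff its lift in $\pi^\circ$ is valid, which gives both directions of the equivalence by running through the bijection on infinite branches.

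The step I expect to be the main obstacle is the careful handling of the expansions of $(\mpx{i})$, $(\ocf)$ and $(\ocu)$, whose translations unfold into a sequence of derelictions and contractions, or into a dereliction immediately followed by a Girard promotion. I will have to check that lifting a $\pi$-thread through such an expansion remains infinitely often principal at its anchor points, and that conversely a valid $\pi^\circ$-thread cannot gain its validity from spurious principal positions inside these expansions; the latter is ensured by the fact that the minimal recurring formula of a valid $\pi^\circ$-thread must be a $\nu$-formula, hence occurs only at anchor sequents, where the correspondence with $\pi$ is direct.
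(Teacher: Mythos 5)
Your argument is correct and takes essentially the same route as the paper, which justifies this lemma only by the one-line observation that fixed-points are not affected by the translation; you have simply made explicit the underlying branch/thread correspondence between $\pi$ and $\pi^\circ$ and checked that the extra principal positions created inside the expansions of $(\mpx{i})$, $(\contr{i})$ and the promotions cannot affect validity. No gap to report.
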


% As one-step \mcut-reductions of $\musuperLLinf(\Sig, \leqg, \leqf, \lequ)$ are defined from proofs to set of proofs, we define a sequence of \mcut-reduction $(\pi_i)_{i\in 1+\lambda}$ ($\lambda\in\omega$) to be a sequence of proofs such that for each $i\in 1+\lambda$, there is a $\Pi_{i+1}$ such that $\pi_i\rightsquigarrow \Pi_{i+1}$ and $\pi_{i+1}\in\Pi_{i+1}$.
The goal of this section is to prove that each
%such
fair reductions sequence converges to a cut-free proof.
We have to make sure $(\mcut)$-reduction sequences are robust under this translation. In our proof of the final theorem, we also need one-step reduction-rules to be simulated by a finite number of reduction steps in the translation, which is the objective of the following lemma. We only give a proof sketch here, full proof can be found in appendix~\ref{app:detailsredSeqTranslationFiniteness}.
\begin{lem}
\label{redSeqTranslationFiniteness}
Let $\pi_0$ be a $\musuperLLinf(\Sig, \leqg, \leqf, \lequ)$ proof and let $\pi_0\rightsquigarrow \pi_1$ be a $\musuperLLinf(\Sig, \leqg, \leqf, \lequ)$ step of reduction. There exist a finite number of \muLLinf{} proofs $\theta_0, \dots, \theta_n$ such that $\theta_0\redseq\dots\redseq\theta_n,\quad
\pi_0^\circ = \theta_0$ and $\theta_n = \pi_1^{\circ}$ up to a finite number of rule permutations, done only on rules that just permuted down the $(\mcut)$.
% Let $\pi_0$ be a $\musuperLLinf(\Sig, \leqg, \leqf, \lequ)$ proof and let $\pi_0\rightsquigarrow \Pi_1$ be a $\musuperLLinf(\Sig, \leqg, \leqf, \lequ)$ step of reduction. For each $\pi_1\in\Pi_1$ and for each $\pi'_0\in\pi^\circ_0$, there exist a finite number of \muLLinf{} proofs $\theta_0, \dots, \theta_n$ such that $\theta_0\redseq\dots\redseq\theta_n,\quad
% \pi'_0 = \theta_0$ and $\theta_n \in \pi_1^{\circ}$ up to a finite number of rule permutations, done only on rules that just permuted down the $(\mcut)$.%Viens du fait que le cas clef multiplexing/ocf ne produit pas des multiplexing en dessous de la coupure, mais des contractions sur les formules du contexte suivi de dérelictions sur ces formules.
\end{lem}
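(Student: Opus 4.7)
The proof proceeds by case analysis on the eleven one-step reduction rules of $\musuperLLinfSig$ gathered in \Cref{fig:musuperllexpcommcutstep,fig:musuperllexpcommcutstep2,fig:musuperllexpprincipcutstep}. For each case, the plan is to exhibit an explicit finite sequence of \muLLinf{} reductions taking $\pi_0^\circ$ to a proof equal to $\pi_1^\circ$ up to some finite rearrangement of rules that have just been commuted past the $(\mcut)$.

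For the seven promotion-commutation cases, recall that the translation sends $(\ocg)$ to a single $(\ocprom)$, and sends each of $(\ocf)$ and $(\ocu)$ to an $(\ocprom)$ prefixed by a $(\wnde)$. Thus every one of these $\musuperLLinfSig$ steps is simulated by one or two \muLLinf{} commutations of $(\ocprom)$ and $(\wnde)$ past the $(\mcut)$. The occurrences of $(\mpx{1})$ which appear on the right-hand sides of $(\text{comm}^2_\ocf)$ and $(\text{comm}^3_\ocu)$ translate to a single $(\wnde)$: they are obtained by leaving that $(\wnde)$ in place above the $(\mcut)$ rather than commuting it, which is precisely the just-permuted-rule leeway allowed by the statement. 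For $(\text{comm}_{\mpx{}})$ and $(\text{comm}_{\contr{}})$, the translations of $\mpx{i}$ and $\contr{i}$ are finite composites of $(\wnde)$, $(\wncontr)$ and $(\wnwk)$ rules, each of which individually commutes with the $(\mcut)$, yielding $O(i)$ \muLLinf{} commutation steps.

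For the principal contraction case, the simulation consists of $(i-1)$ principal $(\wncontr)/(\ocprom)$ reductions in \muLLinf{}, each performed at the root of the $(\ocprom)$-subproof sitting at the top of the cut-connected translation of $\ocProofs_{\wn_\e A}$. Each such step duplicates one copy of the translated subproof, so that after $(i-1)$ steps we obtain the $i$ copies required by $\pi_1^\circ$. The derivation $\contr{i}^{\bar{\vec{\g}}}$ appearing below the new $(\mcut)$ translates to a finite $(\wncontr)$-tree which matches the final shape of the simulation modulo permutations of adjacent contractions.

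The principal multiplexing case is the main obstacle, because $\opmpx_{S^\oc}$ has a recursive action along the tree of cut-connected promotions rooted at $S^\oc$. The simulation exploits the asymmetry between the translations of Girard promotions (pure $(\ocprom)$) and of functorial or unary promotions (prefixed by a $(\wnde)$): one first performs $(i-1)$ principal $(\wncontr)/(\ocprom)$ reductions at the translation of the root to produce $i$ copies, then one principal $(\wnde)/(\ocprom)$ reduction on each copy. If the root was $(\ocg)$, the simulation stops there, matching the sequents accumulated in $\mathcal{S}^{\contr{}}_{\ocProofs, S^\oc}$; if it was $(\ocf)$ or $(\ocu)$, the translation exposes an additional $(\wnde)$ that propagates into the cut-connected subproofs, where the same local pattern is reapplied, mirroring exactly the recursive clause of \Cref{defi:mpxPrincipompx}. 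Termination is ensured by the finiteness of the multicut context $\ocProofs$, and the derivations $\mpx{i}^{\bar{\vec{\g'}}}$ and $\contr{i}^{\bar{\vec{\g''}}}$ introduced below translate to finite trees of structural rules which match the outcome of the simulation modulo a finite number of permutations of just-commuted rules.
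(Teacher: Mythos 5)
Your case analysis has the right overall architecture, and your account of the two principal cases (in particular the recursive propagation of $(\wnde)/(\ocprom)$ principal steps along the tree of cut-connected promotions, mirroring the recursive clause of $\opmpx$) matches the intent of the paper's proof. But there is a genuine gap in your treatment of the promotion-commutation cases, and it sits exactly where the paper has to do real work.

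You claim that each of the seven promotion commutations ``is simulated by one or two \muLLinf{} commutations of $(\ocprom)$ and $(\wnde)$ past the $(\mcut)$''. This is false for the five steps whose reducts replace the cut-connected context $\ocProofs$ (or part of it) by $\C$ (or $\C_1$), i.e.\ strip the promotions off the other premises of the multicut: $({\text{comm}}^1_\ocf)$, $({\text{comm}}^1_\ocu)$, $({\text{comm}}^2_\ocu)$, $({\text{comm}}^3_\ocu)$ and $({\text{comm}}^4_\ocu)$. The translation of $(\ocf)$ and $(\ocu)$ places its derelictions on the \emph{context} formulas $\wn_{\vec\f}\B$, which are cut formulas of the multicut; a $(\wnde)$ whose principal formula is a cut formula admits no commutation step at all, only the $(\wnde)/(\ocprom)$ principal reduction against the promotion concluding the cut-connected premise. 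Simulating these steps therefore requires a number of \emph{principal} reductions proportional to the size of the multicut context (for instance $\#(\ocuProofs)$ of them for $({\text{comm}}^1_\ocu)$), not a bounded number of commutations. Moreover one must argue that these principal steps can be scheduled at all: a $(\wnde)/(\ocprom)$ redex only exists once both the dereliction and the matching promotion are exposed at the roots of their respective premises, and exposing them may itself require earlier principal or commutation steps on other premises. The paper isolates this scheduling argument in a dedicated intermediary lemma, proved by induction on $\sum_i d_i + \sum_i p_i$ over configurations of premises ending in $d_i$ derelictions possibly capped by a promotion, and reuses it both for these commutation cases and for the second phase of the multiplexing principal case. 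Some equivalent of that lemma is what your proof is missing.
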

\begin{proof}[Proof sketch] \proofref{Details in \Cref{app:detailsredSeqTranslationFiniteness}.}
    Non exponential cases and commutations of multiplexing or contraction are immediate.
    Promotion commutations translate to commutation 
    rules and promotion key-cases.
    %of promotions and derelictions and promotion/dereliction key-case. 
    We must ensure that there exists a sequence of reductions commuting the translation of each promotion.
%    Commutations of multiplexing or contraction are very straightforward.
    Key-cases are trickier as they do not send the rules in the correct order: we need rule permutations to recover the translation of the target proof of the step.
\end{proof}
    
    Now that we know that a step of (\mcut)-reduction in $\musuperLLinf(\Sig, \leqg, \leqf, \lequ)$ translates to some steps of (\mcut)-reduction \muLLinf, the following lemma allows us to control the fairness:
    \begin{lem}[Completeness of the (\mcut)-reduction system]
        \label{mcutonestepCompleteness}
    If there is a \muLLinf{}-redex $\mathcal{R}$ sending $\pi^\circ$ to ${\pi'}^\circ$ then there exists a $\musuperLLinf(\Sig, \leqg, \leqf, \lequ)$-redex $\mathcal{R}'$ sending $\pi$ to a proof $\pi''$, such that in the translation of $\mathcal{R}'$, $\mathcal{R}$ is applied.
        \proofref{See proof in appendix, lemma~\ref{app:mcutonestepCompleteness}}
    \end{lem}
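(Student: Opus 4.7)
I proceed by case analysis on the $\muLLinf$-redex $\mathcal{R}$ in $\pi^\circ$, relying in each case on \Cref{redSeqTranslationFiniteness} to verify that the chosen $\musuperLLinfSig$-redex $\mathcal{R}'$ simulates $\mathcal{R}$ within its translation sequence. The first step is to classify the possible bottom rules of subproofs of $\pi^\circ$ that can serve as the principal premise of $\mathcal{R}$: by inspection of \Cref{fig:superlltollRules}, each such rule admits a canonical origin in $\pi$. A non-exponential rule comes from an identical non-exponential rule of $\pi$; a bottom $(\wnwk)$ must come from a $(\mpx{0})$; a bottom $(\wnde)$ from a $(\mpx{1})$; a bottom $(\wncontr)$ from a $(\contr{i})$ or a $(\mpx{i})$ for some $i \geq 2$; and a bottom $(\ocprom)$ from one of $(\ocg)$, $(\ocf)$, $(\ocu)$, distinguishable by the shape of the translation block just above it. The rule of $\pi$ so identified becomes the principal premise of $\mathcal{R}'$.

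Having fixed this correspondence, $\mathcal{R}'$ is defined as the $\musuperLLinfSig$-redex of the same kind as $\mathcal{R}$ (commutative or principal) interacting with the multicut through the identified principal rule, the specific variant being selected using the signature data governing the cut-connected premises. For the non-exponential reductions and for the commutations $(\text{comm}_{\mpx{}})$, $(\text{comm}_{\contr{}})$ and $({\text{comm}}_\ocg)$, the translation of $\mathcal{R}'$ amounts essentially to $\mathcal{R}$ itself, modulo the irrelevant permutations allowed by \Cref{redSeqTranslationFiniteness}. The delicate cases arise for translations that expand into several $\muLLinf$ rules: the commutations of $(\ocf)$ and $(\ocu)$, whose translations commute both the $\ocprom$ and the intermediate $\wnde$-rules past the multicut, and the principal cases $(\text{principal}_{\mpx{}})$ and $(\text{principal}_{\contr{}})$, whose translations produce cascades of $\muLLinf$ principal steps together with $\ocprom$ duplications.

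In each delicate case I apply \Cref{redSeqTranslationFiniteness} to $\mathcal{R}'$ to obtain the $\muLLinf$ reduction sequence $\theta_0 \redseq \dots \redseq \theta_n$ simulating $\mathcal{R}'$, and I identify $\mathcal{R}$ with one specific step of this cascade: the $\ocprom$-commutation step in the $(\ocf)/(\ocu)$ commutation cases, and one of the $\wnde$-vs-$\ocprom$ (resp.\ $\wncontr$-vs-$\ocprom$) principal steps in the $(\text{principal}_{\mpx{}})$ (resp.\ $(\text{principal}_{\contr{}})$) case. The rule permutations left implicit in \Cref{redSeqTranslationFiniteness} only reorder rules just commuted past the multicut and therefore do not affect the identity of $\mathcal{R}$.

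The main obstacle will be the $(\text{principal}_{\mpx{}})$ case, where the cascade contains $i$ separate principal $\wnde$-vs-$\ocprom$ steps together with duplications of promotion subproofs rather than a single principal step, so the matching between our given $\mathcal{R}$ and one cascade step requires care. I expect to resolve this by tracking, through \Cref{defi:mpxPrincipompx}, how $\opmpx_{S^\oc}(\ocProofs_{\wn_\e A})$ specifies which promotion subproofs are replicated, and by confirming that after the auxiliary $(\mpx{i}^{\bar{\vec{\g'}}})$ and $(\contr{i}^{\bar{\vec{\g''}}})$ derivations are expanded by the translation, the resulting $\muLLinf$ cascade covers exactly the $i$ principal interactions, one of which is $\mathcal{R}$.
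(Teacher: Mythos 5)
Your proposal is correct and follows essentially the same route as the paper's own proof: a case analysis on the \muLLinf{} redex $\mathcal{R}$, identification of the originating $\musuperLLinfSig$ rule via the translation table, and verification through the simulation of \Cref{redSeqTranslationFiniteness} that $\mathcal{R}$ occurs among the steps of the translated cascade. The one point the paper spells out that you leave implicit in ``the specific variant being selected using the signature data'' is the exhaustiveness check for the promotion-commutation case, namely that the applicability conditions of the seven rules of \Cref{fig:musuperllexpcommcutstep} cover every multicut all of whose premises end in promotions; that check is the bulk of the paper's argument for that case and should not be skipped.
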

    % The non-exponential cases are immediate.
    % Commutations of weakenings, derelictions or contractions are immediate, as they are translations of a multiplexing or a contraction and that there are no condition to commute such rules. 
   
    % For the key-cases and for the commutation of promotions, we have to make sure that some part (or all) of the contexts are promotions. As a promotion rule in such a translation necessarily comes from a promotion in \musuperLLinf{}, we have that if each contexts formed of promotions are covered. For the key-case it's very easy, because they are applicable whatever the type of promotions. For the commutation of promotions it is a bit more tricky: the case \hyperref[ocgcomm]{$({\text{comm}}_\ocg)$} covers all the $\ocg$ commutation cases, the cases \hyperref[ocfcommocgEmpty]{$({\text{comm}}^1_\ocf)$} and \hyperref[ocfcommocgNonEmpty]{$({\text{comm}}^2_\ocf)$} cover all the $\ocf$ commutation cases and the last four cases cover all the $\ocu$ commutation cases.
% \end{proof}
We define rule permutation with precision in appendix~\ref{app:rulePerm}. Here we show that validity is preserved if each rule is permuted a finite number of times:
\begin{prop}
%[Preservation of validity for permutations with finite permutation of rules]
\proofref{See proof in Appendix, \Cref{app:validFinitePerm}}
    \label{prop:validFinitePerm}
    If $\pi$ is a \muLLinf{} pre-proof sent to a pre-proof $\pi'$, via a permutation for which the permutation of one particular rule is finite, then $\pi$ is valid if and only if $\pi'$ is.
    \end{prop}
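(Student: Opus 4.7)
The plan is to transport valid threads between $\pi$ and $\pi'$ through the bijection on formula occurrences induced by the permutation. Since rule permutations are invertible, the two implications are symmetric and it suffices to prove that if $\pi$ is valid then $\pi'$ is valid.

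A rule permutation decomposes into a sequence of local swaps of adjacent inference rules. Each individual swap is a purely local operation: it exchanges two rules while preserving the formulas appearing along every branch, the global tree shape, the principality status of every formula occurrence, and (up to a local rearrangement at the swap site) the ancestor relation. The hypothesis that one particular rule is permuted only finitely often is used to ensure that the overall permutation induces a well-defined bijection between formula occurrences of $\pi$ and those of $\pi'$; without it, an occurrence could in principle be displaced by infinitely many swaps and have no limit position. Concretely, each formula occurrence of $\pi$ has a well-defined image in $\pi'$, and the induced bijection preserves the ancestor relation globally, by composing the finitely many local perturbations that can affect any given pair of occurrences.

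Given an infinite branch $b$ of $\pi$ with valid thread $t = (F_i)_{i > n}$, the bijection sends $b$ to an infinite branch $b'$ of $\pi'$ and transports $t$ to a sequence $t' = (F'_i)_{i > n}$ of formula occurrences along $b'$ that is again a thread, since the ancestor relation is preserved. Moreover $F_i$ and $F'_i$ are the same formula for every $i$; only their depths in the proof tree differ. From this it is immediate that the three ingredients of validity are preserved: the minimal recurring formula of $t$ for the subformula ordering is the same as that of $t'$; it is still a $\nu$-formula; and the property of being infinitely often principal transfers through the bijection because principality of a formula occurrence is intrinsic to it. Applying this to an arbitrary infinite branch of $\pi'$ (via the inverse bijection) yields the required validity of $\pi'$.

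The main obstacle lies in the first step: giving a precise and self-contained account of the bijection induced by a potentially infinite sequence of local swaps, and identifying exactly where the finiteness hypothesis on one particular rule is invoked to guarantee its coherence. Once this combinatorial setup is in place, the thread-transport and the preservation of the three validity conditions proceed by purely local, routine verifications.
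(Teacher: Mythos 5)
Your proposal is correct and follows essentially the same route as the paper: the paper first notes (via a separate lemma) that one-step permutations of adjacent unary rules preserve the tree structure, then observes that the threads of $\pi$ and $\pi'$ coincide up to the indices where the thread is inactive, and concludes that validity is preserved in both directions — exactly the occurrence-bijection and thread-transport argument you describe, with the finiteness hypothesis likewise used only to ensure the limit proof and the induced correspondence are well defined. Your write-up is in fact more explicit than the paper's own (very terse) proof about why the three validity ingredients transfer.
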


%This corollary is stated more precisely in appendix, \Cref{app:FairredSeqTranslationFiniteness}:
\begin{coro}
\label{FairredSeqTranslationFiniteness}
% {\color{red} THIS LEMMA SHOULD BE STATED MORE PRECISELY!}
For every fair $\musuperLLinf(\Sig, \leqg, \leqf, \lequ)$ reduction sequences $(\pi_i)_{i\in\omega}$, there exists:
    \begin{itemize}
        \item a fair \muLLinf{} reduction sequence $(\theta_i)_{i\in\omega}$;
        \item a sequence of strictly increasing $(\varphi(i))_{i\in\omega}$ natural numbers;
        \item for each $i$, an integer $k_i$ and a finite sequence of rule permutations $(p_i^k)_{k\in\llbracket 0, k_i-1\rrbracket}$ starting from $\pi_i^\circ$ and ending $\theta_{\varphi(i)}$. For convenience in the proof, let's denote by $(\pi_i^k)_{k\in\llbracket 0, k_i\rrbracket}$ be the sequence of proofs associated to the permutation;
        \item for all $i>i'$, $p_i^k>p_i^{k'}$ if $k'\in\llbracket 0, k_{i'}-1$ and $k\geq k_{i'}$;
        \item for all $i,k$, $p_i^k$ are positions lower than the multicuts in $\pi_i^\circ$.
        \item for each $i'\geq i$ and for each $k\in\llbracket 0, k_i-1\rrbracket, p^k_{i'}=p^k_{i}$
    \end{itemize}
\proofref{See details on proof in \Cref{app:FairredSeqTranslationFiniteness}}
\end{coro}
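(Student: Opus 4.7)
The plan is to construct the \muLLinf{} reduction sequence $(\theta_j)_{j\in\omega}$ together with all data by a straightforward induction on $i$, gluing together the finite simulations provided by Lemma~\ref{redSeqTranslationFiniteness}. Set $\theta_0 := \pi_0^\circ$ and $\varphi(0) := 0$, with no permutations ($k_0 := 0$). Given that $\theta_{\varphi(i)}$ has been constructed and is equal to $\pi_i^\circ$ up to the accumulated permutations $(p_i^k)_{k\in\llbracket 0,k_i-1\rrbracket}$, apply Lemma~\ref{redSeqTranslationFiniteness} to $\pi_i \rightsquigarrow \pi_{i+1}$: this yields a finite \muLLinf{} reduction sequence from $\pi_i^\circ$ to a proof $\pi_i^\bullet$ equal to $\pi_{i+1}^\circ$ up to a finite sequence of rule permutations located below the $(\mcut)$ occurrences. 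We splice these reduction steps into $(\theta_j)_{j}$ — defining $\varphi(i+1) := \varphi(i) + n_i$ where $n_i$ is the number of new \muLLinf{} reduction steps — and extend the permutation list by appending the new permutations to obtain $(p_{i+1}^k)_{k\in\llbracket 0,k_{i+1}-1\rrbracket}$.

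Because the new permutations introduced at stage $i+1$ arise from rules that \emph{just permuted down the $(\mcut)$} during the simulation (as stated in Lemma~\ref{redSeqTranslationFiniteness}), they are naturally positioned at strictly lower positions than the $(\mcut)$s in $\pi_{i+1}^\circ$, giving the required property that all $p_i^k$ sit below the multicuts. Moreover, by constructing the $(p_i^k)$ cumulatively, the stability property $p^k_{i'} = p^k_i$ for $i' \geq i$ and $k \in \llbracket 0, k_i-1\rrbracket$ holds by definition. The ordering property $p_i^k > p_i^{k'}$ whenever $k' < k_{i'} \leq k$ reflects that newer permutations introduced at later stages sit above previously committed ones in the proof tree, which is forced by the fact that earlier permutations were applied below the old multicuts and the simulation step operates on (or above) the multicuts.

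The remaining nontrivial point is that $(\theta_j)_{j \in \omega}$ is \emph{fair}. Suppose for contradiction that some redex $\mathcal{R}$ appears in $\theta_{j_0}$ and admits residuals in $\theta_j$ for all $j \geq j_0$. Fix $i$ large enough that $\varphi(i) \geq j_0$; then the residual of $\mathcal{R}$ persists in $\theta_{\varphi(i)}$, hence in $\pi_i^\circ$ up to permutation (rule permutations below the $(\mcut)$ do not affect $(\mcut)$-redexes, as rules permuted below a multicut can be permuted back). By Lemma~\ref{mcutonestepCompleteness}, there is a $\musuperLLinfSig$ redex $\mathcal{R}'_i$ in $\pi_i$ such that its translation activates $\mathcal{R}$. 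Tracking residuals across the simulation in Lemma~\ref{redSeqTranslationFiniteness}, we obtain a $\musuperLLinfSig$-residual of $\mathcal{R}'_{i_0}$ in every $\pi_i$ for $i \geq i_0$, contradicting fairness of $(\pi_i)_{i \in \omega}$. Note that Proposition~\ref{prop:validFinitePerm} is not yet needed here but will be used downstream to lift validity back.

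The main obstacle is the bookkeeping of permutations — precisely verifying that the simulation of successive $\musuperLLinfSig$-steps does not force us to alter earlier permutations, and that the accumulated sequence stays well-defined with the monotonicity property. The technical crux is that each simulation step from Lemma~\ref{redSeqTranslationFiniteness} only introduces permutations on rules freshly displaced by a principal promotion reduction, which always live strictly below the newly created multicut structure; this geometric separation ensures both the stability across stages and the order on positions required by the statement.
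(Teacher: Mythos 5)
Your proposal follows essentially the same route as the paper's proof: the same inductive construction gluing the finite simulations of Lemma~\ref{redSeqTranslationFiniteness}, the same transport of the new reduction steps across the previously accumulated permutations (justified by the fact that permutations live below the multicuts while reductions act at or above them), and the same fairness argument via Lemma~\ref{mcutonestepCompleteness} and residual tracking. The argument is sound and matches the paper's; no substantive difference to report.
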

\begin{proof}[Proof sketch]
%, see full proof in appendix, \Cref{app:FairredSeqTranslationFiniteness}]
We construct the sequence by induction on the steps of reductions of $(\pi_i)_{i\in\omega}$, starting with $\theta_0=\pi_0^\circ$, $\varphi(0)=0$ and $k_0=0$ and then applying \Cref{redSeqTranslationFiniteness} for each of the following steps.
We get fairness of $(\theta_i)_{i\in\omega}$ from \Cref{mcutonestepCompleteness}.
\end{proof}

Finally, we have our main result, proving cut-elimination of $\musuperLLinf(\Sig, \leqg, \leqf, \lequ)$:
\begin{thm}
\label{musuperLLinfCutElim}
\label{musuperLLmodinfCutElim}%mauvais nom de label, mais je laisse comme ça
If the axioms of \Cref{cutElimAxs} are satisfied, then every fair (\mcut)-reduction sequence of $\musuperLLinf(\Sig, \leqg, \leqf, \lequ)$ converges to a $\musuperLLinf(\Sig, \leqg, \leqf, \lequ)$ cut-free proof.
\end{thm}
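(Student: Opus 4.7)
The plan is to reduce cut-elimination for $\musuperLLinf(\Sig,\leqg,\leqf,\lequ)$ to the already known cut-elimination for $\muLLinf$ (\Cref{thm:mullcutelim}), exploiting the translation $(-)^\circ$ and all the ``transport'' lemmas prepared in \Cref{musuperlltollTranslation}. Given a fair $(\mcut)$-reduction sequence $(\pi_i)_{i\in\omega}$ starting from a $\musuperLLinfSig$ proof $\pi_0$ with a single $(\mcut)$ at its root, I will construct a limit proof and argue that it is cut-free and valid.

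First, I apply \Cref{FairredSeqTranslationFiniteness} to obtain a fair $\muLLinf$ reduction sequence $(\theta_j)_{j\in\omega}$ together with an extraction $\varphi$ and the accompanying permutation data $(p_i^k)$. By \Cref{thm:mullcutelim} the sequence $(\theta_j)_{j\in\omega}$ converges to a cut-free $\muLLinf$ proof $\theta_\infty$. Next, I use the fact that each rule permutation $p_i^k$ occurs below the multicuts and is applied only finitely often at any given position (this is exactly the content of the fourth and sixth bullets of \Cref{FairredSeqTranslationFiniteness}): this lets me define a limit $\pi_\infty$ of the original sequence $(\pi_i)_{i\in\omega}$ such that $\pi_\infty^\circ$ is equal to $\theta_\infty$ up to a (possibly transfinite but locally finite) reshuffling of non-exponential rules below cut positions, and so is also cut-free.

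Then I verify cut-freeness and validity of $\pi_\infty$. For cut-freeness, I use \Cref{mcutonestepCompleteness} contrapositively: if $\pi_\infty$ contained an $(\mcut)$-redex $\mathcal{R}'$, its translation would be an $(\mcut)$-redex of $\pi_\infty^\circ = \theta_\infty$, contradicting the cut-freeness of $\theta_\infty$. For validity, I use \Cref{circValidityRobustness} combined with \Cref{prop:validFinitePerm}: $\theta_\infty$ is valid (being a limit of a fair reduction sequence in $\muLLinf$, by \Cref{thm:mullcutelim}); since $\pi_\infty^\circ$ differs from $\theta_\infty$ only by a permutation that is locally finite on each rule occurrence, $\pi_\infty^\circ$ is valid, and then \Cref{circValidityRobustness} transports validity back to $\pi_\infty$.

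The main obstacle I expect is the bookkeeping around the permutation sequence $(p_i^k)$ and the passage to the limit: the translation of a single super-exponential step may not produce exactly the $\muLLinf$ counterpart of the next proof, but only up to a finite permutation of rules that have just slid below the $(\mcut)$, and these permutations pile up along $\omega$. I need the stability property ``for all $i'\geq i$ and $k<k_i$, $p^k_{i'}=p^k_i$'' from \Cref{FairredSeqTranslationFiniteness} to guarantee that the sequence of partially-permuted proofs $(\pi_i^k)$ itself converges, so that the limit $\pi_\infty^\circ$ is well-defined and differs from $\theta_\infty$ by a coherent, locally finite permutation; this is exactly what \Cref{prop:validFinitePerm} is designed to accommodate. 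Once this is set up carefully, the three-step chain ``fair reduction in $\musuperLLinfSig$ $\Rightarrow$ fair reduction in $\muLLinf$ $\Rightarrow$ cut-free valid limit in $\muLLinf$ $\Rightarrow$ cut-free valid limit in $\musuperLLinfSig$'' closes the proof.
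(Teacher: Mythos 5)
Your proposal follows essentially the same route as the paper's proof: translate via $(-)^\circ$, transfer fairness with \Cref{FairredSeqTranslationFiniteness}, invoke \Cref{thm:mullcutelim} on the resulting \muLLinf{} sequence, and pull cut-freeness and validity back through \Cref{circValidityRobustness} and \Cref{prop:validFinitePerm}. The only minor divergences are that the paper also dispatches the finite-sequence case separately via \Cref{redSeqTranslationFiniteness}, and it obtains convergence and cut-freeness of the limit $\pi$ by arguing that the rule permutations (being below the multicuts) preserve multicut depths, so productivity of $(\theta_j)$ forces the multicuts in $(\pi_i)$ to recede to arbitrary depth---rather than by appealing to \Cref{mcutonestepCompleteness}, which is used only to establish fairness of the translated sequence.
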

\begin{proof}[Proof sketch, see full proof in appendix, \Cref{app:musuperLLmodinfCutElim}]
Consider $(\pi_i)_{i\in 1+\lambda}$, $\lambda\in\omega+1$,
a fair $\musuperLLinf(\Sig, \leqg, \leqf, \lequ)$ cut-reduction sequence. If the sequence is finite, we use  \Cref{redSeqTranslationFiniteness} and we are done.
If the sequence is infinite, using \Cref{FairredSeqTranslationFiniteness} we get a fair infinite \muLLinf{} reduction sequence $(\theta_i)_{i\in\omega}$.
% {\color{red} NEED TO BE MORE PRECISE: exists $\varphi$ increasing such that for every i, $\pi_i^\circ \ni \theta_{\varphi(i)}$?}
By \Cref{thm:mullcutelim}, we know that $(\theta_i)_{i\in\omega}$ converges to a cut-free proof $\theta$ of \muLLinf{}.
% We want to prove that this sequence converges to the translation of a $\musuperLLinf(\Sig, \leqg, \leqf, \lequ)$ pre-proof $\pi$ (up to a finite number of permutation for each rules).
% We assume, for the sake of contradiction, that it does not.
% Meaning that for any permutation of rules (as long as a rule is permuted a finite number of time), the proof is not the translation of a $\musuperLLinf(\Sig, \leqg, \leqf, \lequ)$.
% Meaning that for any such permutation, there is a sequence of rule ending at depth $d$ which is not the translation of a $\musuperLLinf(\Sig, \leqg, \leqf, \lequ)$ rule.
% From corollary~\ref{FairredSeqTranslationFiniteness} for each $i$, there is an index $k$ such that $\pi_j^\circ=\theta_i$ up to a fixed and finite number of permutation for each rule in the translation.
% As $(\theta_i)_{i\in\mathbb{N}}$ converges to $\theta$, we know that for each $d\in\mathbb{N}$, $\theta_i$ is equal to $\theta$ under depth $d$.
% As $\theta_i$ is the translation of a $\musuperLLinf(\Sig, \leqg, \leqf, \lequ)$ proof, we know that each rules under the $d$-depth of $\theta_i$ are translations of $\musuperLLinf(\Sig, \leqg, \leqf, \lequ)$ rules.
% This give us our contradiction and we get a $\musuperLLinf(\Sig, \leqg, \leqf, \lequ)$ pre-proof $\pi$ such that $\pi^\circ=\theta$ up to finite rule permutations.
% We now prove that $\pi_i$
% Since the validity is preserved in both directions of the translation, and that permuting rules a finite number of time preserves validity, $\pi$ is valid and cut-free and the sequence $(\pi_i)_{i\in\mathbb{N}}$ converges to it.
We prove that $(\pi_i)_{i\in\omega}$ converges to a $\musuperLLinf(\Sig, \leqg, \leqf, \lequ)$ pre-proof using the fact that $(\theta_i)_i$ is the translation of $(\pi_i)_i$ and that it is productive.

Validity of the limit $\pi$ of $(\pi_i)_i$ follows from the translation of $\pi$ being equal to $\theta$ up to rule-permutation (each particular rule permutes finitely). From \Cref{circValidityRobustness} and \Cref{prop:validFinitePerm}, these two operations preserve validity, therefore $\pi$ is valid which concludes the proof.
\end{proof}

An important remark is that the above proof does not rely on \Cref{superllcutelim} in any way. As a consequence, cut-elimination for \superLL{} is in fact a direct corollary of \Cref{musuperLLinfCutElim}:
%We obtain another proof of the result of~\cite{TLLA21}:
\proofref{See proof in appendix, \Cref{app:superllcutelim2}} 
\begin{coro}[Cut Elimination for \superLL, that is, \Cref{superllcutelim}]
    \label{coro:superllcutelim2}
Cut elimination holds for $\superLL(\Sig,\leqg,\leqf,\lequ)$ as soon as the 8 cut-elimination axioms of definition~\ref{cutElimAxs} are satisfied.
\end{coro}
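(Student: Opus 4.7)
The plan is to lift a \superLL{} proof into \musuperLLinf{} and apply \Cref{musuperLLinfCutElim}. Given a finite $\superLL(\Sig,\leqg,\leqf,\lequ)$ proof $\pi$ of $\vdash \Gamma$, I would first view it as a proof of $\musuperLLinf(\Sig,\leqg,\leqf,\lequ)$: $\pi$ contains no fixed-point rule and is a finite tree, so it has no infinite branches and the validity condition is vacuously satisfied. To accommodate the $(\mcut)$-based cut-elimination setup, I would also replace each $(\cut)$ in $\pi$ by a binary $(\mcut)$ with the obvious ancestor and cut-relations, producing a $\musuperLLinfSig$-proof $\tilde{\pi}$ of the same sequent.

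Next, I would take any fair $(\mcut)$-reduction sequence $(\tilde{\pi}_i)_{i\in\omega}$ starting from $\tilde{\pi}_0 := \tilde{\pi}$ and invoke \Cref{musuperLLinfCutElim} to obtain convergence to a cut-free $\musuperLLinfSig$-proof $\pi^{\infty}$ of $\vdash \Gamma$. The core step at this stage is an invariant: absence of fixed-point formulas (hence of $(\mu)$ and $(\nu)$ rules) is preserved by every one of the $(\mcut)$-reduction steps listed in \Cref{fig:musuperllexpcommcutstep,fig:musuperllexpcommcutstep2,fig:musuperllexpprincipcutstep}. Indeed, each reduction is a local rewrite that duplicates, reorders or commutes subproofs built out of the same connectives as those already present in the redex, and none of them introduces fresh $\mu$- or $\nu$-connectives. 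By induction on $i$, every $\tilde{\pi}_i$ is fixed-point free, and since the limit is taken pointwise so is $\pi^{\infty}$.

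To conclude, I would show that $\pi^{\infty}$ is finite. Being a valid $\musuperLLinfSig$-proof without any $\nu$-formula, $\pi^{\infty}$ admits no valid thread at all, since a valid thread requires its minimal recurring formula to be a $\nu$-formula; so validity of $\pi^{\infty}$ forces it to have no infinite branches, hence to be a finite tree. Being moreover cut-free and using only rules of \MALL{} together with the super exponential rules of \Cref{fig:musuperllexprules}, $\pi^{\infty}$ is the desired cut-free $\superLL(\Sig,\leqg,\leqf,\lequ)$-proof of $\vdash \Gamma$. The main technical obstacle will be the fixed-point-free invariant: one must go through each case of \Cref{fig:musuperllexpcommcutstep,fig:musuperllexpcommcutstep2,fig:musuperllexpprincipcutstep} and check that no rule secretly creates a fixed-point connective; in particular, the derivations $\contr{i}^{\bar{\vec{\g}}}$ and $\mpx{i}^{\bar{\vec{\g}}}$ used in the principal cases must be checked to only involve $(\mpx{})$ and $(\contr{})$ inferences on pre-existing $\wn_\e$-formulas, which is guaranteed by the definition of the derivability closure.
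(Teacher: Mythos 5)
Your proposal is correct and follows essentially the same route as the paper: embed the finite $\superLL$ proof into $\musuperLLinf$ (where validity is vacuous), run \Cref{musuperLLinfCutElim} on a fair $(\mcut)$-reduction sequence, and observe that the resulting cut-free proof must again be a $\superLL$ proof. The paper's own proof is far terser — it simply asserts that a cut-free $\musuperLLinf$ proof of a fixed-point-free sequent is a $\superLL$ proof — whereas you make explicit the two points it leaves implicit, namely the preservation of fixed-point-freeness along reductions and the finiteness of the limit via the absence of $\nu$-threads; both checks are sound.
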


\begin{rem}
This result not only gives another way of proving cut-elimination for \superLL{}-systems but the reductions we build in it are generally different from the ones that are built in~\cite{TLLA21}. Indeed, we are eliminating cuts from the bottom of the proof using the multicut rule whereas in~\cite{TLLA21} the deepest cuts in the proof are eliminated first.
\end{rem}

Since  $\muLLmodinf$ and $\muELLinf$ are instances of \musuperLLinf{} satisfying the cut-elimination axioms, we have the following results as immediate corollaries of \Cref{musuperLLinfCutElim}: 
 
\begin{coro}[Cut Elimination for \muLLmodinf]
    \label{coro:muLLmodinf}
Cut elimination holds for \muLLmodinf{}.
\end{coro}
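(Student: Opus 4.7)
The plan is to deduce this corollary as a direct application of \Cref{musuperLLinfCutElim} to the specific instance of $\musuperLLinf(\Sig,\leqg,\leqf,\lequ)$ described in Section~3.2.1, namely the one with $\Sig = \{\bullet\} \cup \mathsf{Act}$, exponential signatures declaring $\bullet(\mpx{0}) = \bullet(\mpx{1}) = \bullet(\contr{2}) = \True$ and $\alpha(\mpx{0}) = \alpha(\contr{2}) = \True$ for $\alpha \in \mathsf{Act}$, and relations $\bullet \leqg \bullet$, $\bullet \leqg \alpha$, and $\alpha \leqf \alpha$. Since \muLLmodinf{} is this instance under the dictionary $\wn_\bullet = \wn$, $\oc_\bullet = \oc$, $\wn_\alpha = \lozenge_\alpha$, $\oc_\alpha = \Box_\alpha$, it suffices to verify that the eight axioms of \Cref{cutElimAxs} are satisfied.

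First I would dispatch the axioms that are vacuously true: \reflequs{} and \refleqfu{} are trivial because $\lequ = \emptyset$; \refleqfg{} is trivial because the only $\leqf$-pair is $\alpha \leqf \alpha$ but $\alpha \leqg \e''$ never holds, so there is no chain $\e \leqf \e' \leqg \e''$ to consider. Next I would handle the transitivity-style axioms \refleqTrans{} and \refleqgs{} by direct case inspection: the only nontrivial $\leqg$-chains are $\bullet \leqg \bullet \leqg \bullet$, $\bullet \leqg \bullet \leqg \alpha$ (both recover $\bullet \leqg \cdot$), and $\bullet \leqg \alpha \leqf \alpha$ (gives $\bullet \leqg \alpha$); the only $\leqf$-chain is $\alpha \leqf \alpha \leqf \alpha$.

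The remaining axioms \refgmpx, \reffumpx{} and \refcontrAx{} require computing (parts of) the derivability closures $\bar\bullet$ and $\bar\alpha$. I would first observe that, by the closure rules, $\bar\bullet$ contains $\contr{i}$ for all $i \geq 2$ (iterate the rule producing $\contr{i+j-1}$ from $\contr{i}$ and $\contr{j}$ starting from $\contr{2}$) and $\mpx{i}$ for all $i \geq 0$ (using $\bullet(\mpx{1})$ and $\bar\bullet(\contr{i})$), while $\bar\alpha$ contains $\contr{i}$ for all $i \geq 2$ and $\mpx{0}$ — enough for our purposes. For \refgmpx, only $\bullet(\mpx{0})$ and $\bullet(\mpx{1})$ need be checked, and both $\bar\bullet(\contr{0}) = \bar\bullet(\mpx{0})$ and $\bar\alpha(\contr{0}) = \bar\alpha(\mpx{0})$ hold, while $\bar{\e'}(\contr{1}) = \True$ by convention. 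For \reffumpx, only $\alpha \leqf \alpha$ together with $\alpha(\mpx{0})$ matters, and $\bar\alpha(\mpx{0})$ holds trivially. For \refcontrAx, $\contr{2}$ is present in each signature, so the closure $\bar{\e'}(\contr{2})$ is immediate in all relevant cases.

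The verification is therefore routine but fully combinatorial; there is no real obstacle, only the need to enumerate a finite list of cases carefully. Once the axioms are established, \Cref{musuperLLinfCutElim} immediately yields the cut-elimination property for this instance, which by the correspondence of Section~3.2.1 is exactly \muLLmodinf{}.
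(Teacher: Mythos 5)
Your proposal is correct and follows essentially the same route as the paper: \Cref{coro:muLLmodinf} is obtained there as an immediate instance of \Cref{musuperLLinfCutElim}, with the verification of the eight axioms of \Cref{cutElimAxs} for the $\{\bullet\}\cup\mathsf{Act}$ signature carried out by the same finite case analysis (vacuous axioms from $\lequ=\emptyset$ and the absence of $\leqf;\leqg$ chains, transitivity checks on the few nontrivial chains, and closure computations for \refgmpx, \reffumpx{} and \refcontrAx). Your case enumeration matches the paper's, so nothing further is needed.
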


\begin{coro}[Cut Elimination for \muELLinf]
    \label{coro:muELLinf}
Cut elimination holds for $\muELLinf$.
\end{coro}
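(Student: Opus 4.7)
The plan is to obtain this corollary as a direct application of \Cref{musuperLLinfCutElim}, using the identification of $\muELLinf$ as a specific instance of $\musuperLLinf$ for which the cut-elimination axioms hold.

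First, I would recall from \Cref{exa:ell} that $\ELL$ corresponds to the instance $\superLL(\Sig, \leqg, \leqf, \lequ)$ with $\Sig = \{\bullet\}$, $\bullet(\contr{2}) = \bullet(\mpx{0}) = \True$ (all other rule names evaluating to $\False$), $\leqg = \lequ = \emptyset$, and $\bullet \leqf \bullet$. Extending this parametrization by the fixed-point fragment of \Cref{fig:fixFragment} yields $\musuperLLinf(\Sig, \leqg, \leqf, \lequ)$, which is precisely $\muELLinf$: the fixed-point rules are orthogonal to the exponential discipline, and the exponential rules of this instance coincide with those of $\ELL$ once $\wn$ and $\oc$ are identified with $\wn_\bullet$ and $\oc_\bullet$.

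Second, I would verify that the eight cut-elimination axioms of \Cref{cutElimAxs} hold for this instance. Since the axioms depend only on the tuple $(\Sig, \leqg, \leqf, \lequ)$, the verification is identical to the one performed in the fixed-point-free case of \Cref{exa:ell}. Concretely, axioms \refgmpxAx, \refleqgs, \refleqfu, \reflequs{} and \refleqfg{} hold vacuously because their antecedents involve $\leqg$ or $\lequ$, which are empty relations; \reffumpxAx{} reduces to $\bar{\bullet}(\mpx{0})$, immediate from $\bullet(\mpx{0})$; \refcontrAx{} requires $\bar{\bullet}(\contr{i})$ for all $i \geq 2$, obtained by induction from $\bullet(\contr{2})$ via the closure rule producing $\bar{\bullet}(\contr{i+j-1})$ from $\bar{\bullet}(\contr{i})$ and $\bar{\bullet}(\contr{j})$; and \refleqTrans{} holds trivially, since the only instance of $\leqf$ to consider is $\bullet \leqf \bullet$.

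With all axioms established, \Cref{musuperLLinfCutElim} applies to this instance, giving that every fair $(\mcut)$-reduction sequence converges to a cut-free proof, which is exactly the claimed cut-elimination statement for $\muELLinf$. I do not foresee any substantive obstacle: the argument amounts to a bookkeeping verification of the eight axioms against the parametrization of \Cref{exa:ell}, observing that this verification transfers unchanged from the fixed-point-free setting to the non-wellfounded proof setting because the axioms themselves make no reference to fixed-point machinery.
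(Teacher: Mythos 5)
Your proposal is correct and follows essentially the same route as the paper: identify $\muELLinf$ as the fixed-point extension of the $\superLL$ instance of \Cref{exa:ell}, observe that the eight axioms of \Cref{cutElimAxs} depend only on the tuple $(\Sig,\leqg,\leqf,\lequ)$ and hence carry over unchanged, and apply \Cref{musuperLLinfCutElim}. Your axiom-by-axiom check matches the verification the paper records for \ELL{} (the only minor overstatement is that \refcontrAx{} needs $\bar{\bullet}(\contr{i})$ only for the single $i$ with $\bullet(\contr{i})$ true, namely $i=2$, though the stronger claim you prove is also true).
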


    \section{Conclusion}

We introduced a family of logical systems, \musuperLLinf{}, and proved a syntactic cut-elimination theorem for them. Our systems features various exponential modalities with least and greatest fixed-points in the setting of circular and non-wellfounded  proofs. Our aim in doing so is to develop a methodology to make cut-elimination proofs more uniform and reusable. A key feature of our development is to combine proof-theoretical methods for establishing cut-elimination properties using translation and simulation results with axiomatization of sufficient conditions for cut-elimination.

While our initial motivation was to make more systematic a key step in our recent proof of cut-elimination for the modal $\mu$-calculus~\cite{FOSSACS25}, this allowed us to generalize our previous result (capturing directly the multi-modal $\mu$-calculus without needing for a proof) but also to capture various extensions of light logics with induction and coinuctions, notably a calculus close to Baillot \muEAL.
Our system therefore  encompasses various fixed-point extensions of existing linear logic systems, including well-known light logics extended with least and greatest fixed-points and a non-well-founded proof system. We provide a relatively simple and uniform proof of cut-elimination for these extensions. 
%Additionally, \musuperLLinf{} captures a linear logic version of the modal $\mu$-calculus, where modalities are used in a quite different way than exponential modalities from linear logic.
Quite interestingly, the addition of fixed-points provide a new cut-elimination proof for the fixed-point free setting.
\medskip

The \musuperLLinf{} system, as defined in this paper, does not include the digging rule. 
We plan to work on this question in future work, at least for restrictions of the digging. 
Indeed digging is a very challenging rule wrt to its possible modelling using fixed-points as it would contradict the finiteness of the Fisher-Ladner closure, a basic property of fixed-point systems. On the other hand, incorporating digging would enable us to cover all of the super exponential version from~\cite{TLLA21} while our current system is incomparable with that of~\cite{TLLA21}. It could also be relevant for modal calculus, as the digging rule for modal formulas is equivalent to Axiom 4 of modal logic. Other modal logic axioms, such as Axiom T and co-dereliction rules from differential linear logic, can be viewed as rules in linear logic.

Another natural future work would be to explore linear translations of affine linear logic and/or intuitionistic/classical translations of these systems, facilitating the study of proof theory closer to~\cite{BAILLOT2015}.

Finally, while we started with non-wellfounded proofs, studying how these results can be adapted to finitary version of \musuperLLinf{} is another interesting open question.

 \bibliography{biblio}

\begin{thebibliography}{10}

\bibitem{LICS22}
David Baelde, Amina Doumane, Denis Kuperberg, and Alexis Saurin.
\newblock Bouncing threads for circular and non-wellfounded proofs: Towards
  compositionality with circular proofs.
\newblock In {\em Proceedings of the 37th Annual ACM/IEEE Symposium on Logic in
  Computer Science}, LICS '22, New York, NY, USA, 2022. Association for
  Computing Machinery.
\newblock \href {https://doi.org/10.1145/3531130.3533375}
  {\path{doi:10.1145/3531130.3533375}}.

\bibitem{CSL16}
David Baelde, Amina Doumane, and Alexis Saurin.
\newblock {Infinitary Proof Theory: the Multiplicative Additive Case}.
\newblock In {\em CSL 2016}, volume~62 of {\em LIPIcs}, pages 42:1--42:17,
  2016.
\newblock URL: \url{http://drops.dagstuhl.de/opus/volltexte/2016/6582}, \href
  {https://doi.org/10.4230/LIPIcs.CSL.2016.42}
  {\path{doi:10.4230/LIPIcs.CSL.2016.42}}.

\bibitem{BAILLOT2015}
Patrick Baillot.
\newblock On the expressivity of elementary linear logic: Characterizing ptime
  and an exponential time hierarchy.
\newblock {\em Information and Computation}, 241:3--31, 2015.
\newblock URL:
  \url{https://www.sciencedirect.com/science/article/pii/S0890540114001369},
  \href {https://doi.org/https://doi.org/10.1016/j.ic.2014.10.005}
  {\path{doi:https://doi.org/10.1016/j.ic.2014.10.005}}.

\bibitem{BAILLOT18}
Patrick Baillot, Erika {De Benedetti}, and Simona {Ronchi Della Rocca}.
\newblock Characterizing polynomial and exponential complexity classes in
  elementary lambda-calculus.
\newblock {\em Information and Computation}, 261:55--77, 2018.
\newblock Developments in Implicit Computational Complexity (DICE) 2014 and
  2015.
\newblock URL:
  \url{https://www.sciencedirect.com/science/article/pii/S0890540118300774},
  \href {https://doi.org/https://doi.org/10.1016/j.ic.2018.05.005}
  {\path{doi:https://doi.org/10.1016/j.ic.2018.05.005}}.

\bibitem{FOSSACS25}
Esa{\"i}e Bauer and Alexis Saurin.
\newblock On the cut-elimination of the modal $\mu$-calculus: Linear logic to
  the rescue.
\newblock In Parosh~Aziz Abdulla and Delia Kesner, editors, {\em Foundations of
  Software Science and Computation Structures}, pages 133--154, Cham, 2025.
  Springer Nature Switzerland.

\bibitem{TLLA21}
Esaïe Bauer and Olivier Laurent.
\newblock Super exponentials in linear logic.
\newblock {\em Electronic Proceedings in Theoretical Computer Science},
  353:50–73, December 2021.
\newblock URL: \url{http://dx.doi.org/10.4204/EPTCS.353.3}, \href
  {https://doi.org/10.4204/eptcs.353.3} {\path{doi:10.4204/eptcs.353.3}}.

\bibitem{BrunelTerui10}
Alo{\"{\i}}s Brunel and Kazushige Terui.
\newblock Church ={\textgreater} scott = ptime: an application of resource
  sensitive realizability.
\newblock In Patrick Baillot, editor, {\em Proceedings International Workshop
  on Developments in Implicit Computational complExity, {DICE} 2010, Paphos,
  Cyprus, 27-28th March 2010}, volume~23 of {\em {EPTCS}}, pages 31--46, 2010.
\newblock \href {https://doi.org/10.4204/EPTCS.23.3}
  {\path{doi:10.4204/EPTCS.23.3}}.

\bibitem{CIABATTONI12}
Agata Ciabattoni, Nikolaos Galatos, and Kazushige Terui.
\newblock Algebraic proof theory for substructural logics: Cut-elimination and
  completions.
\newblock {\em Annals of Pure and Applied Logic}, 163(3):266--290, 2012.
\newblock URL:
  \url{https://www.sciencedirect.com/science/article/pii/S0168007211001254},
  \href {https://doi.org/https://doi.org/10.1016/j.apal.2011.09.003}
  {\path{doi:https://doi.org/10.1016/j.apal.2011.09.003}}.

\bibitem{Ciabattoni09}
Agata Ciabattoni, Lutz Stra{\ss}burger, and Kazushige Terui.
\newblock Expanding the realm of systematic proof theory.
\newblock In Erich Gr{\"a}del and Reinhard Kahle, editors, {\em Computer
  Science Logic}, pages 163--178, Berlin, Heidelberg, 2009. Springer Berlin
  Heidelberg.

\bibitem{Ciabattoni06}
Agata Ciabattoni and Kazushige Terui.
\newblock Modular cut-elimination: Finding proofs or counterexamples.
\newblock In Miki Hermann and Andrei Voronkov, editors, {\em Logic for
  Programming, Artificial Intelligence, and Reasoning}, pages 135--149, Berlin,
  Heidelberg, 2006. Springer Berlin Heidelberg.

\bibitem{dal2006light}
Ugo Dal~Lago and Patrick Baillot.
\newblock On light logics, uniform encodings and polynomial time.
\newblock {\em Mathematical Structures in Computer Science}, 16(4):713--733,
  2006.

\bibitem{djell}
Vincent Danos and Jean-Baptiste Joinet.
\newblock Linear logic and elementary time.
\newblock 183(1):123--137, May 2003.
\newblock \href {https://doi.org/10.1016/S0890-5401(03)00010-5}
  {\path{doi:10.1016/S0890-5401(03)00010-5}}.

\bibitem{DanosJS97}
Vincent Danos, Jean{-}Baptiste Joinet, and Harold Schellinx.
\newblock A new deconstructive logic: Linear logic.
\newblock {\em J. Symb. Log.}, 62(3):755--807, 1997.
\newblock \href {https://doi.org/10.2307/2275572} {\path{doi:10.2307/2275572}}.

\bibitem{aminaphd}
Amina Doumane.
\newblock {\em On the infinitary proof theory of logics with fixed points}.
\newblock Phd thesis, Paris Diderot University, 2017.

\bibitem{girard87}
Jean{-}Yves Girard.
\newblock Linear logic.
\newblock {\em Theor. Comput. Sci.}, 50:1--102, 1987.
\newblock \href {https://doi.org/10.1016/0304-3975(87)90045-4}
  {\path{doi:10.1016/0304-3975(87)90045-4}}.

\bibitem{lll}
Jean-Yves Girard.
\newblock Light linear logic.
\newblock 143(2):175--204, June 1998.
\newblock \href {https://doi.org/10.1006/inco.1998.2700}
  {\path{doi:10.1006/inco.1998.2700}}.

\bibitem{DBLP:journals/tcs/Kozen83}
Dexter Kozen.
\newblock Results on the propositional mu-calculus.
\newblock {\em Theor. Comput. Sci.}, 27:333--354, 1983.
\newblock \href {https://doi.org/10.1016/0304-3975(82)90125-6}
  {\path{doi:10.1016/0304-3975(82)90125-6}}.

\bibitem{sll}
Yves Lafont.
\newblock Soft linear logic and polynomial time.
\newblock {\em Theoretical Computer Science}, 318(1--2):163--180, June 2004.
\newblock \href {https://doi.org/10.1016/j.tcs.2003.10.018}
  {\path{doi:10.1016/j.tcs.2003.10.018}}.

\bibitem{Nguyen19}
Lê Thành~Dũng Nguyen.
\newblock On the elementary affine lambda-calculus with and without fixed
  points.
\newblock {\em Electronic Proceedings in Theoretical Computer Science},
  298:15–29, August 2019.
\newblock URL: \url{http://dx.doi.org/10.4204/EPTCS.298.2}, \href
  {https://doi.org/10.4204/eptcs.298.2} {\path{doi:10.4204/eptcs.298.2}}.

\bibitem{Nigam09}
Vivek Nigam and Dale Miller.
\newblock Algorithmic specifications in linear logic with subexponentials.
\newblock In {\em PPDP 2009}, page 129–140, New York, NY, USA, 2009. ACM.
\newblock \href {https://doi.org/10.1145/1599410.1599427}
  {\path{doi:10.1145/1599410.1599427}}.

\bibitem{RoversiVercelli10}
Luca Roversi and Luca Vercelli.
\newblock Safe recursion on notation into a light logic by levels.
\newblock In Patrick Baillot, editor, {\em Proceedings International Workshop
  on Developments in Implicit Computational complExity, {DICE} 2010, Paphos,
  Cyprus, 27-28th March 2010}, volume~23 of {\em {EPTCS}}, pages 63--77, 2010.
\newblock \href {https://doi.org/10.4204/EPTCS.23.5}
  {\path{doi:10.4204/EPTCS.23.5}}.

\bibitem{TABLEAUX23}
Alexis Saurin.
\newblock {A linear perspective on cut-elimination for non-wellfounded sequent
  calculi with least and greatest fixed points}.
\newblock TABLEAUX '23. Springer, 2023.
\newblock URL: \url{https://hal.science/hal-04169137}.

\bibitem{TERESE}
Terese.
\newblock {\em Term Rewriting Systems}, volume~55 of {\em Cambridge Tracts in
  Theoretical Computer Science}.
\newblock Cambridge University Press, 2003.

\bibitem{Terui_2007}
Kazushige Terui.
\newblock Which structural rules admit cut elimination? an algebraic criterion.
\newblock {\em Journal of Symbolic Logic}, 72(3):738–754, 2007.
\newblock \href {https://doi.org/10.2178/jsl/1191333839}
  {\path{doi:10.2178/jsl/1191333839}}.

\bibitem{Terui_11}
Kazushige Terui.
\newblock Proof theory and algebra in substructural logics.
\newblock In Kai Br{\"u}nnler and George Metcalfe, editors, {\em Automated
  Reasoning with Analytic Tableaux and Related Methods}, pages 20--20, Berlin,
  Heidelberg, 2011. Springer Berlin Heidelberg.

\end{thebibliography}

\clearpage
%\newpage
%\input{saved-from-trash}

\appendix

\section{Details on \Cref{section:background}}

\subsection{Details on the multicut rule (\Cref{multicutdef})}\label{app:multicutdef}

We recall the conditions on the multi-cut rule~\cite{CSL16,aminaphd,TABLEAUX23}.
The multi-cut rule is a rule with an arbitrary number of hypotheses:
$$
\AIC{\vdash \B_1}
\AIC{\dots}
\AIC{\vdash \B_n}
\RL{\rmcutpar}
\TIC{\vdash\B}
\DP$$
% The ancestor relation $\iota$ is an injection from formulas of the conclusion sequent to formulas of the hypotheses sequent. Such that antecedent are sent on antecedent and succedent on succedent.
Let $C:=
%\{(1, i, j)\mid i\in\llbracket 1, n\rrbracket, j\in\llbracket 1, \#\A_i\rrbracket\}$,
%$C_2:=
\{(i, j)\mid i\in\llbracket 1, n\rrbracket, j\in\llbracket 1, \#\B_i\rrbracket\}$, 
$\iota$ is a map from $\llbracket 1, \# \B\rrbracket$ to $C$ and $\cutrel$ is binary a relation on $C$:
\begin{itemize}
%\item Elements of $(k, n)$ are sent on $C_k$;
\item The map $\iota$ is injective;
%\item If $(i,j)\cutrel (i',j')$ then $k\neq k'$;
\item The relation $\cutrel$ is defined for $C\setminus \iota$, and is total for this set;
\item The relation $\cutrel$ is symmetric;
\item Each index can be related at most once to another one;
\item If $(i,j) \cutrel (i',j')$, then the $\B_i[j] = (\B_{i'}[j'])^\perp$;
\item The relation on premisses sequents defined as: $\{(i,i') \mid \exists j,j', (i,j) \cutrel (i',j')\}$ is acyclic and connected.
%$\cutrel$ induces an acyclic and connected relation on premises sequents.
\end{itemize}

%The multi-cut rule is a rule with an arbitrary number of hypotheses:
%$$
%\AIC{\A_1\vdash \B_1}
%\AIC{\dots}
%\AIC{\A_n\vdash \B_n}
%\RL{\rmcutpar}
%\TIC{\A\vdash\B}
%\DP$$
%% The ancestor relation $\iota$ is an injection from formulas of the conclusion sequent to formulas of the hypotheses sequent. Such that antecedent are sent on antecedent and succedent on succedent.
%Let $C_1:=\{(1, i, j)\mid i\in\llbracket 1, n\rrbracket, j\in\llbracket 1, \#\A_i\rrbracket\}$,$C_2:=\{(2, i, j)\mid i\in\llbracket 1, n\rrbracket, j\in\llbracket 1, \#\B_i\rrbracket\}$, $\iota$ is a map from $(1,\llbracket 1, \# \A\rrbracket) \cup (2, \llbracket 1, \# \B\rrbracket)$ to $C = C_1\cup C_2$ and $\cutrel$ is a relation on $C$:
%\begin{itemize}
%\item Elements of $(k, n)$ are sent on $C_k$;
%\item The map $\iota$ is injective;
%\item If $(k,i,j)\cutrel (k',i',j')$ then $k\neq k'$;
%\item The relation $\cutrel$ is defined for $C\setminus \iota$, and is total for this set;
%\item The relation $\cutrel$ is symmetric;
%\item Each index can be related at most once to another one;
%\item If $(1,i,j) \cutrel (2,i',j')$, then the $\A_i[j] = \B_{i'}[j']$;
%\item The projection of $\cutrel$ on the second element is acyclic and connected.
%\end{itemize}

\subsection{Details on the restriction of a multicut context (\Cref{multicutRestriction})}\label{app:multicutRestriction}
\begin{defi}[Restriction of a multicut context]
Let $
\AIC{\C}
\RL{\rmcutpar}
\UIC{s}
\DP
$ be a multicut occurrence such that $\C = s_1\quad\dots\quad s_n$ and let $s_i :=\vdash G_1, \dots, G_{r_i} $, we define $\C_{G_j}$ with $G_j\in s_i$ to be the least sub-context of $\C$ such that:
\begin{itemize}
\item The sequent $s_i$ is in $\C_{G_j}$;
\item If there exists $l$ such that $(i,j)\cutrel(k,l)$ then $s_k\in\C_{G_j}$;
\item For any $k\neq i$, if there exists $l$ such that $(k,l)\cutrel(k',l')$ and that $s_k\in\C_{G_j}$ then $s_{k'}\in\C_{G_j}$.
\end{itemize}
We then extend the notation to contexts, setting $\C_\emptyset := \emptyset$ and $\C_{F, \A} := \C_F\cup\C_\A$.
\end{defi}

\subsection{One-step multicut-elimination for \muMALLinf{}}\label{app:mumallonestep}
Commutative one-step reductions for \muMALLinf{} are given in \Cref{fig:muMALLonestepComm} whereas principal reductions in \Cref{fig:muMALLonestepPrincip}.
\begin{figure*}
\centering
$
\AIC{}
\RL{\ax}
\UIC{\vdash F, F^\perp}
\RL{\rmcutpar}
\UIC{\vdash F, F^\perp}
\DP
\rightsquigarrow
\AIC{}
\RL{\ax}
\UIC{\vdash F, F^\perp}
\DP
$\\[2ex]
$
\AIC{\C_{\A'}}
\AIC{\C_{\B'}}
\AIC{\vdash F, \A'}
\AIC{\vdash G, \B'}
\RL{\otimes}
\BIC{\vdash F\otimes G, \A', \B'}
\RL{\rmcutpar}
\TIC{\vdash \A, \B}
\DP\rightsquigarrow
\AIC{\C_{\A'}}
\AIC{\vdash F, \A'}
\RL{\rmcutparprime}
\BIC{\vdash F, \A}
\AIC{\C_{\B'}}
\AIC{\vdash G, \B'}
\RL{\rmcutpardouble}
\BIC{\vdash G, \B}
\RL{\otimes}
\BIC{\vdash F\otimes G, \A, \B}
\DP
$\\[2ex]
$
\AIC{\C}
\AIC{\vdash F, G, \A'}
\RL{\parr}
\UIC{\vdash F\parr G, \A'}
\RL\rmcutpar
\BIC{\vdash F\parr G, \A}
\DP
\rightsquigarrow
\AIC\C
\AIC{\vdash F, G, \A'}
\RL\rmcutparprime
\BIC{\vdash F, G, \A}
\RL\parr
\UIC{\vdash F\parr G, \A}
\DP
$\\[2ex]
$
\AIC{\C}
\AIC{\vdash F_i, \A'}
\RL{\oplus^i}
\UIC{\vdash F_1\oplus F_2, \A'}
\RL\rmcutpar
\BIC{\vdash F_1\oplus F_2, \A}
\DP
\rightsquigarrow
\AIC\C
\AIC{\vdash F_i, \A'}
\RL\rmcutpar
\BIC{\vdash F_i, \A}
\RL{\oplus^i}
\UIC{\vdash F_1\oplus F_2, \A}
\DP
$\\[2ex]
$
\AIC{\C}
\AIC{\vdash F, \A'}
\AIC{\vdash G, \A'}
\RL{\with}
\BIC{\vdash F\with G, \A'}
\RL\rmcutpar
\BIC{\vdash F\with G, \A}
\DP
\rightsquigarrow
\AIC\C
\AIC{\vdash F, \A'}
\RL\rmcutpar
\BIC{\vdash F, \A}
\AIC\C
\AIC{\vdash G, \A'}
\RL\rmcutpar
\BIC{\vdash G, \A}
\RL\with
\BIC{\vdash F\with G, \A}
\DP
$\\[2ex]
$
\AIC{\C}
\AIC{\vdash F[\delta X. F/X], \A'}
\RL{\delta}
\UIC{\vdash \delta X. F, \A'}
\RL\rmcutpar
\BIC{\vdash \delta X. F, \A}
\DP
\rightsquigarrow
\AIC{\C}
\AIC{\vdash F[\delta X. F/X], \A'}
\RL\rmcutpar
\BIC{\vdash F[\delta X. F/X], \A'}
\RL{\delta}
\UIC{\vdash \delta X. F, \A}
\DP\quad\text{with $\delta\in\{\mu, \nu\}$}
$\\[2ex]
$
{\small\AIC\C
\AIC{}
\RL{\top}
\UIC{\vdash \top, \A'}
\RL{\rmcutpar}
\BIC{\vdash \top, \A}
\DP
\rightsquigarrow
\AIC{}
\RL{\top}
\UIC{\vdash \top, \A}
\DP\qquad
\AIC{}
\RL{1}
\UIC{\vdash 1}
\RL{\rmcutpar}
\UIC{\vdash 1}
\DP
\rightsquigarrow
\AIC{}
\RL{1}
\UIC{\vdash 1}
\DP
}
$\\[2ex]
$
{\small
\AIC{\C}
\AIC{\vdash \A'}
\RL\bot
\UIC{\vdash \bot, \A'}
\RL\rmcutpar
\BIC{\vdash \bot, \A}
\DP
\rightsquigarrow
\AIC{\C}
\AIC{\vdash \A'}
\RL\rmcutparprime
\BIC{\vdash \A}
\RL{\bot}
\UIC{\vdash \bot, \A}
\DP}
$
\caption{Commutative one-step reduction rules for \muMALLinf{}}\label{fig:muMALLonestepComm}
\end{figure*}

\begin{figure*}
\centering
$
\AIC{\C}
\AIC{}
\RL\ax
\UIC{\vdash F, F^\perp}
\RL{\rmcutpar}
\BIC{\A}
\DP
\rightsquigarrow
\AIC\C
\RL{\rmcutparprime}
\UIC{\A}
\DP
$\\[2ex]
$
\AIC{\C}
\AIC{\vdash F, \A'}
\AIC{\vdash F^\perp, \B}
\RL{\cut}
\BIC{\vdash \A', \B}
\RL\rmcutpar
\BIC{\vdash \A}
\DP
\rightsquigarrow
\AIC{\C}
\AIC{\vdash F, \A'}
\AIC{\vdash F^\perp, \B}
\RL\rmcutparprime
\TIC{\vdash \A}
\DP
$\\[2ex]
$
\AIC{\C}
\AIC{\vdash F, G, \B}
\RL{\parr}
\UIC{\vdash F\parr G, \B}
\AIC{\vdash F^\perp, \A_1}
\AIC{\vdash G^\perp, \A_2}
\RL{\otimes}
\BIC{\vdash F^\perp \otimes G^\perp, \A_1, \A_2}
\RL{\rmcutpar}
\TIC{\vdash \A}
\DP
\rightsquigarrow
\AIC{\C}
\AIC{\vdash F, G, \B}
\AIC{\vdash F^\perp, \A_1}
\AIC{\vdash G^\perp, \A_2}
\RL{\rmcutparprime}
\QIC{\vdash \A}
\DP
$\\[2ex]
$
\AIC{\C}
\AIC{\vdash F_i, \B}
\RL{\oplus_i}
\UIC{\vdash F_1\oplus F_2, \B}
\AIC{\vdash F_1^\perp, \A'}
\AIC{\vdash F_2^\perp, \A'}
\RL{\with}
\BIC{\vdash F_1 \with F_2^\perp, \A'}
\RL{\rmcutpar}
\TIC{\vdash\A}
\DP
\rightsquigarrow
\AIC{\C}
\AIC{\vdash F_i, \B}
\AIC{\vdash F_i^\perp, \A'}
\RL\rmcutparprime
\TIC{\vdash \A}
\DP
$\\[2ex]
$
\AIC{\C}
\AIC{\vdash F[X:=\mu X. F], \B}
\RL{\mu}
\UIC{\vdash \mu X. F, \B}
\AIC{\vdash F[X:=\nu X. F], \B'}
\RL{\nu}
\UIC{\vdash \nu X. F, \B'}
\RL{\rmcutpar}
\TIC{\vdash \A}
\DP\rightsquigarrow
\AIC{\C}
\AIC{\vdash F[X:=\mu X. F], \B}
\AIC{\vdash F[X:=\nu X. F], \B'}
\RL{\rmcutpar}
\TIC{\vdash \A}
\DP
$\\[2ex]
$
\AIC{\C}
\AIC{}
\RL{1}
\UIC{\vdash 1}
\AIC{\vdash \A'}
\RL{\bot}
\UIC{\vdash \bot, \A'}
\RL\rmcutpar
\TIC{\vdash \A}
\DP
\rightsquigarrow
\AIC{\C}
\AIC{\vdash \A'}
\RL\rmcutparprime
\BIC{\vdash \A}
\DP
$
\caption{Principal one-step reduction rules for \muMALLinf{}}\label{fig:muMALLonestepPrincip}
\end{figure*}

\subsection{One-step multicut-elimination for \muLLinf{}}\label{app:mullonestep}
Commutative one-step reductions for \muLLinf{} are steps from \muMALLinf{} together with the reduction of the exponential fragment given in \Cref{fig:muLLonestep}.
\begin{figure*}
    $
    \AIC{\pi}
    \noLine
    \UIC{\vdash A, \wn\B}
    \RL{\ocprom}
    \UIC{\vdash \oc A, \wn\B}
    \AIC{\ocProofs}
    \RL{\rmcutpar}
    \BIC{\vdash \oc A, \wn\A}
    \DP\quad\rightsquigarrow\quad
    \AIC{\pi}
    \noLine
    \UIC{\vdash A, \wn\B}
    \AIC{\ocProofs}
    \RL{\rmcutpar}
    \BIC{\vdash A, \wn\A}
    %\AIC{\text{By lemma~
    \RL{\ocprom}
    \UIC{\vdash \oc A, \wn\A}
    \DP
    $\\[2ex]
    $
{\footnotesize\AIC{\pi}
\noLine
\UIC{\vdash \B}
\RL{\wnwk}
\UIC{\vdash \wn A, \B}
\AIC{\mathcal{C}}
\RL{\rmcutpar}
\BIC{\vdash \wn A, \A}
\DP\rightsquigarrow
\AIC{\pi}
\noLine
\UIC{\vdash \B}
\AIC{\mathcal{C}}
\RL{\mathsf{mcut(\iota', \perp\!\!\!\perp')}}
\BIC{\vdash \A}
\RL{\wnwk}
\UIC{\vdash \wn A, \A}
\DP}
$
\\[2ex]
$
{\footnotesize\AIC{\pi}
\noLine
\UIC{\vdash A, \B}
\RL{\wnde}
\UIC{\vdash \wn A, \B}
\AIC{\mathcal{C}}
\RL{\rmcutpar}
\BIC{\vdash \wn A, \A}
\DP\rightsquigarrow
\AIC{\pi}
\noLine
\UIC{\vdash A, \B}
\AIC{\mathcal{C}}
\RL{\mathsf{mcut(\iota', \perp\!\!\!\perp')}}
\BIC{\vdash A, \A}
\RL{\wnde}
\UIC{\vdash \wn A, \A}
\DP}
$
\\[2ex]
$
{\footnotesize\AIC{\pi}
\noLine
\UIC{\vdash \wn A, \wn A, \B}
\RL{\wncontr}
\UIC{\vdash \wn A, \B}
\AIC{\mathcal{C}}
\RL{\rmcutpar}
\BIC{\vdash \wn A, \A}
\DP\rightsquigarrow
\AIC{\pi}
\noLine
\UIC{\vdash \wn A, \wn A, \B}
\AIC{\mathcal{C}}
\RL{\mathsf{mcut(\iota', \perp\!\!\!\perp')}}
\BIC{\vdash \wn A, \wn A, \A}
\RL{\wncontr}
\UIC{\vdash \wn A, \A}
\DP}
$
$
\AIC{\tikzmark{prewnprincip5}\C_\B}
\AIC{\vdash \B}
\RL{\wnwk}
\UIC{\vdash \wn A, \tikzmark{prewnprincip4}\B}
\AIC{\tikzmark{prewnprincip2}\ocProofs_{\wn A}}
\RL{\rmcutpar}
\TIC{\vdash \tikzmark{prewnprincip3}\A, \tikzmark{prewnprincip1}\wn\A'}
\DP\rightsquigarrow
\AIC{\C_\B}
\AIC{\vdash \B}
\RL{\rmcutparprime}
\BIC{\vdash \A}
\doubleLine
\RL{\wnwk}
\UIC{\vdash \A, \wn\A'}
\DP
$\\[2ex]
\begin{tikzpicture}[overlay,remember picture,-,line cap=round,line width=0.1cm]
    \draw[rounded corners, smooth=2,green, opacity=.4] ([xshift=-2mm] pic cs:prewnprincip1) to ([xshift=2mm, yshift=2mm] pic cs:prewnprincip2);
    \draw[rounded corners, smooth=2,red, opacity=.4] (pic cs:prewnprincip3) to ([xshift=2mm, yshift=2mm] pic cs:prewnprincip4);
    \draw[rounded corners, smooth=2,red, opacity=.4] (pic cs:prewnprincip3) to ([xshift=2mm, yshift=2mm] pic cs:prewnprincip5);
\end{tikzpicture}
$
\AIC{\vdash A, \B}
\RL{\wnde}
\UIC{\vdash \wn A, \B}
\AIC{\vdash A^\perp, \wn\B'}
\RL{\ocprom}
\UIC{\vdash \oc A^\perp, \wn\B'}
\AIC{\C}
\RL{\rmcutpar}
\TIC{\vdash \A}
\DP\rightsquigarrow
\AIC{\vdash A, \B}
\AIC{\vdash A^\perp, \wn\B'}
\AIC{\C}
\RL{\rmcutparprime}
\TIC{\vdash \A}
\DP
$\\[2ex]
$
\AIC{\tikzmark{precontrprincip5}\C_\B}
\AIC{\vdash \wn A, \wn A, \B}
\RL{\wncontr}
\UIC{\vdash \wn A, \tikzmark{precontrprincip4}\B}
\AIC{\tikzmark{precontrprincip2}\ocProofs_{\wn A}}
\RL{\rmcutpar}
\TIC{\vdash \tikzmark{precontrprincip3}\A, \tikzmark{precontrprincip1}\wn\A'}
\DP\rightsquigarrow
\AIC{\C_\B}
\AIC{\vdash \wn A, \wn A,  \B}
\AIC{\ocProofs_{\wn A}}
\AIC{\ocProofs_{\wn A}}
\RL{\rmcutparprime}
\QIC{\vdash \wn\A', \wn\A', \A}
\doubleLine
\RL{\wncontr}
\UIC{\vdash \A, \wn\A'}
\DP
$
\begin{tikzpicture}[overlay,remember picture,-,line cap=round,line width=0.1cm]
    \draw[rounded corners, smooth=2,green, opacity=.4] ([xshift=-2mm] pic cs:precontrprincip1) to ([xshift=2mm, yshift=2mm] pic cs:precontrprincip2);
    \draw[rounded corners, smooth=2,red, opacity=.4] (pic cs:precontrprincip3) to ([xshift=2mm, yshift=2mm] pic cs:precontrprincip4);
    \draw[rounded corners, smooth=2,red, opacity=.4] (pic cs:precontrprincip3) to ([xshift=2mm, yshift=2mm] pic cs:precontrprincip5);
\end{tikzpicture}
        \caption{Multicut-elimination steps of the exponential fragment of \musuperLLinf
        \label{fig:muLLonestep}}
    \end{figure*}

\section{Details on \Cref{secsuperLL}}
\subsection{Proof of Axiom Expansion property %(lemma~\ref{axexp})
}
\label{app:axexp}

\begin{lem}[Axiom Expansion]
One-step axiom expansion holds for formulas $\wn_\e A$ and $\oc_\e A$ in $\superLL(\Sig, \leqg, \leqf, \lequ)$ if $\e$ satisfies the following \emph{expansion axiom}:
$$ \e\lequ\e\quad \lor\quad \e\leqf\e \quad\lor\quad (\e\leqg\e \land \e(\mpx{1})).$$

The axiom expansion holds in $\superLL(\Sig, \leqg, \leqf, \lequ)$ if all $\e$ satisfy the expansion axiom.
\end{lem}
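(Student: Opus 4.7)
The plan is to establish the one-step statement by a case analysis on the three disjuncts of the expansion axiom, then to obtain the full axiom expansion by structural induction on formulas.

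For the one-step axiom expansion, I assume given a proof of $\vdash A, A^\perp$ and I need to construct a proof of $\vdash \oc_\e A^\perp, \wn_\e A$. I consider each disjunct of the expansion axiom:

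\begin{itemize}
\item If $\e \lequ \e$, a single application of the unary promotion rule $(\ocu)$ to $\vdash A^\perp, A$ yields $\vdash \oc_\e A^\perp, \wn_\e A$, since the side-condition $\e \lequ \e$ is exactly what is needed.
\item If $\e \leqf \e$, a single application of the functorial promotion rule $(\ocf)$ to $\vdash A^\perp, A$ yields $\vdash \oc_\e A^\perp, \wn_\e A$, since the side-condition $\e \leqf \e$ is satisfied.
\item If $\e \leqg \e$ and $\e(\mpx{1})$, I first apply $(\mpx{1})$ (whose applicability is ensured by $\e(\mpx{1})$) to $\vdash A^\perp, A$, obtaining $\vdash A^\perp, \wn_\e A$, and then apply $(\ocg)$ (whose applicability is ensured by $\e \leqg \e$ applied to the single signature of the context) to obtain $\vdash \oc_\e A^\perp, \wn_\e A$.
\end{itemize}

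For the full axiom expansion, I proceed by structural induction on the formula $F$, showing that the generic axiom $\vdash F, F^\perp$ can be replaced by a derivation using only axioms on atoms. The atomic and multiplicative-additive cases are handled as in standard $\MALL$ axiom-expansion, using the induction hypothesis on immediate subformulas. For $F = \wn_\e A$ (the case $F = \oc_\e A$ is symmetric), the induction hypothesis yields a proof of $\vdash A, A^\perp$ from atomic axioms, and the one-step expansion above (applicable since $\e$ satisfies the expansion axiom by assumption) produces $\vdash \oc_\e A^\perp, \wn_\e A$.

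The proof is not delicate: each step is a direct application of a single rule whose side-condition is exactly the hypothesis made on $\e$. The only thing to verify carefully is that the side-conditions of the promotion rules are met in each disjunct, in particular that in the Girard case the context list has the single signature $\e$ for which we have $\e \leqg \e$. There is no real obstacle here; the interest of the statement is rather the identification of the minimal hypothesis on $\e$ under which expansion goes through.
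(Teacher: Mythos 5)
Your proposal is correct and follows essentially the same route as the paper: the same three-way case split on the disjuncts of the expansion axiom for the one-step part (with $(\ocu)$, $(\ocf)$, and $(\mpx{1})$ followed by $(\ocg)$ respectively), and an induction on the formula using the one-step result for the full expansion. The only cosmetic difference is that the paper applies the derived rule $\overline{\mpx{1}}$ via the derivability closure where you apply $(\mpx{1})$ directly, which is fine since the stated hypothesis is $\e(\mpx{1})$ itself.
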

\begin{proof}
We start by proving the first part of the theorem. We distinguish three cases depending on which branch of the disjunction holds for $\e$:
\begin{itemize}
	\item If $\e\lequ\e$ is true, then we have:
	$$
	\AIC{\vdash A^\perp,A}
	\ZIC{\e\lequ\e}
	\RL{\ocu}
	\BIC{\vdash\oc_\e A^\perp,\wn_\e A}
    \DP
$$
        \item If $\e\leqf\e$ is true, it is similar to the previous case:
	$$
	\AIC{\vdash A^\perp,A}
	\ZIC{\e\leqf\e}
	\RL{\ocf}
	\BIC{\vdash\oc_\e A^\perp,\wn_\e A}
\DP     $$
	\item And if $\e\leqg\e$ and $\overline{ (\e)}(\mpx{1})$:
	$$
	\AIC{\vdash A^\perp,A}
	\ZIC{\overline{ (\e)}(\mpx{1})}
	\RL{\overline{\mpx{1}}}
	\BIC{\vdash A^\perp,\wn_\e A}
	\ZIC{\e\leqg\e}
	\RL{\ocg}
	\BIC{\vdash\oc_\e A^\perp,\wn_\e A}
	\DP
	$$
\end{itemize}
The second part of the theorem is proved by induction on the size of the formula, 
%as for $\muLLinf$~\cite{aminaphd},
using the first part of the theorem.
\end{proof}

\subsection{Proof of cut-elimination of \superLL{} (\Cref{superllcutelim})}\label{app:superllcutelim}

We first need three lemmas called the substitution lemmas:
\begin{lem}[Girard Substitution Lemma]
\label{subs1}
Let $\e_1$ be a signature and $\vec{\e_2}$ a list of signatures such that $\e_1\leqg\vec{\e_2}$.
Let $A$ be a formula, and let $\B$ be a context, such that for all $\A$, if $\vdash A,\A$ is provable without using any cut then $\vdash\wn_{\vec{\e_2}}\B,\A$ is provable without using any cut.
Then we have that for all $\A$, if $\vdash\overbrace{\wn_{\e_1} A,\dotsc,\wn_{\e_1}A}^n,\A$ is provable without using any cut then $\vdash\overbrace{\wn_{\vec{\e_2}}\B,\dotsc,\wn_{\vec{\e_2}}\B}^n,\A$.
\end{lem}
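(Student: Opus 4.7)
The plan is to prove this by induction on the size of the cut-free derivation $\pi$ of $\vdash\overbrace{\wn_{\e_1}A,\dotsc,\wn_{\e_1}A}^n,\A$, with a case analysis on the last rule of $\pi$. The intuition is that structural rules on a $\wn_{\e_1}A$ formula must be simulated, after substitution, by the corresponding derived rules on the formulas $\wn_{\vec{\e_2}}\B$, whose availability is guaranteed by the cut-elimination axioms of \Cref{cutElimAxs}.

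First, I would dispatch the routine cases. If the last rule of $\pi$ is a non-exponential rule (including the axiom on atomic formulas, the fixed-point rules, and the \MALL{} rules) then none of the principal formulas is one of the $n$ distinguished $\wn_{\e_1}A$, so we simply apply the induction hypothesis to each premise and reapply the rule. The only exponential rules that do not act principally on a distinguished occurrence are treated identically. The two truly structural cases are (i) multiplexing $\mpx{i}$ on one of the $\wn_{\e_1}A$, where the premise is $\vdash\overbrace{A}^i,\overbrace{\wn_{\e_1}A}^{n-1},\A$: I would apply IH to erase the $n-1$ remaining $\wn_{\e_1}A$'s, then apply the outer hypothesis $i$ times to replace each $A$ by a $\wn_{\vec{\e_2}}\B$, and finally contract the $n-1+i$ copies into $n$ using a derived $\contr{i}^{\bar{\vec{\e_2}}}$, which exists thanks to axiom \refgmpxAx{} (and the convention $\bar{\e}(\contr{1}) = \True$); and (ii) contraction $\contr{i}$ on a distinguished occurrence, handled symmetrically using axiom \refcontrAx{}.

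The heart of the proof is the promotion case: the last rule of $\pi$ is some $\oc_{\e'}F'$ rule in which some of the $\wn_{\e_1}A$'s appear in the side-context. If this promotion is a Girard promotion, then $\e'\leqg\e_1$, so by axiom \refleqgs{} combined with $\e_1\leqg\vec{\e_2}$ I obtain $\e'\leqg\vec{\e_2}$; the IH applied to the premise yields a proof where the $\wn_{\e_1}A$'s have been replaced by $\wn_{\vec{\e_2}}\B$'s, and the Girard promotion can be reapplied. If the promotion is functorial or unary, the situation is subtler since the premise now contains bare copies of $A$ (not $\wn_{\e_1}A$); I would first apply the outer hypothesis iteratively to replace every such $A$ by $\wn_{\vec{\e_2}}\B$, and then reconstruct a promotion on top. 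The relation $\e'\leqf\e_1$ (or $\e'\lequ\e_1$) together with $\e_1\leqg\vec{\e_2}$ triggers axiom \refleqfg{} (resp.\ \reflequs), which precisely yields $\e'\leqg\vec{\e_2}$ together with the extra condition $\delta(\mpx{1})$ for every side-signature $\delta$ of the original functorial/unary rule: this lets me first insert $\mpx{1}$ derelictions on those side-formulas and then close with a Girard promotion.

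The main obstacle is exactly this last case, because it is where the interplay between the three promotion flavours is forced through the axioms: one has to use \refleqfg{} (or \reflequs) to justify both the signature relation needed for the new Girard promotion and the existence of the auxiliary $\mpx{1}$ derivations required on the $\wn_{\vec{\delta'}}\B'$-context in order for the Girard promotion to be legal. Once this is set up, every remaining obligation reduces to a straightforward verification from the cut-elimination axioms, and the induction closes.
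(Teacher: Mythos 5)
Your proposal follows essentially the same route as the paper's proof: turn the hypothesis on $A$ and $\B$ into an admissible substitution rule, induct on the cut-free derivation, dispatch the rules that do not act on the distinguished occurrences, simulate $(\mpx{i})$ and $(\contr{i})$ through derived contractions obtained from \refgmpxAx{} and \refcontrAx{}, and treat the three promotions exactly as you describe — transitivity for $(\ocg)$, and for $(\ocf)$ the axiom \refleqfg{} supplying simultaneously the two $\leqg$ relations and the $\mpx{1}$ condition needed to re-$\wn$ the non-distinguished side context before closing with a Girard promotion. One small imprecision: in the $(\ocu)$ case \reflequs{} yields only the relation $\leqg$ needed for the final Girard promotion; no $\mpx{1}$ condition is produced or needed there, since the unique side formula of the unary promotion is the distinguished occurrence itself, which is handled by the substitution rather than by dereliction.

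There is one case your dispatch does not cover. The axiom rule of this system is $\vdash F, F^\perp$ for an \emph{arbitrary} formula $F$, so the derivation may end with an axiom whose principal formula is a distinguished occurrence, i.e.\ conclude $\vdash \wn_{\e_1}A, \oc_{\e_1}A^\perp$ with $\A = \oc_{\e_1}A^\perp$; your ``routine'' bucket only mentions axioms on atomic formulas. The repair uses exactly the tools you already set up: start from the axiom $\vdash A^\perp, A$, apply the admissible substitution to obtain $\vdash A^\perp, \wn_{\vec{\e_2}}\B$, and close with a Girard promotion on $A^\perp$, legal because $\e_1\leqg\vec{\e_2}$. With that case added, the induction closes as you describe.
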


\begin{proof}
First we can notice that for any $\A$ the following rule:
$$
\AIC{\vdash A,\dotsc,A,\A}
\dashedLine
\RL{S_g}
\UIC{\vdash\wn_{\vec{\e_2}}\B,\dotsc,\wn_{\vec{\e_2}}\B,\A}
\DP
$$
is admissible in the system without cuts (by an easy induction on the number of $A$).

Now we show the lemma by induction on the proof of\linebreak$\vdash\wn_{\e_1} A,\dotsc,\wn_{\e_1}A,\A$. We distinguish cases according to the last rule:

\begin{itemize}
	\item If it is a rule on a formula of $\A$ which is not a promotion: 
\begin{equation*}
	\AIC{\pi}
	\noLine
	\RL{}
	\UIC{\vdash\wn_{\e_1} A,\dotsc,\wn_{\e_1} A,\A'}
	\RL{r}
	\UIC{\vdash\wn_{\e_1} A,\dotsc,\wn_{\e_1} A,\A}
	\DP
	\qquad\rightsquigarrow\qquad
	\AIC{IH(\pi)}
	\noLine
	\RL{}
	\UIC{\vdash\wn_{\vec{\e_2}}\B,\dotsc,\wn_{\vec{\e_2}} \B,\A'}
	\RL{r}
	\UIC{\vdash\wn_{\vec{\e_2}}\B,\dotsc,\wn_{\vec{\e_2}} \B,\A}
	\DP
\end{equation*}
	\item If it is a Girard's style promotion, thanks to the axiom~\refleqTrans{}, we have:
$$
	\AIC{\pi}
	\noLine	
	\RL{}
	\UIC{\vdash B,\wn_{\vec{\e_3}}\A',\wn_{\e_1} A,\dotsc,\wn_{\e_1}A}
	\AIC{\e_0\leqg\vec{\e_3}}
    \AIC{\e_0\leqg\e_1}
	\RL{\ocg}
	\TIC{\vdash\oc_{\e_0}B,\wn_{\vec{\e_3}}\A',\wn_{\e_1} A,\dotsc,\wn_{\e_1} A}
	\DP
	\qquad\rightsquigarrow
	$$
$$
	{\footnotesize\AIC{IH(\pi)}
	\noLine	
	\RL{}
	\UIC{\vdash B,\wn_{\vec{\e_3}}\A',\wn_{\vec{\e_2}}\B,\dotsc,\wn_{\vec{\e_2}}\B}
	\AIC{\e_0\leqg\vec{\e_3}}
	\AIC{\e_0\leqg\e_1}
	\ZIC{\e_1\leqg\vec{\e_2}}
	\RL{\text{\refleqTrans}}
	\BIC{\e_0\leqg\vec{\e_2}}
	\RL{\ocg}
	\TIC{\vdash\oc_{\e_0} B,\wn_{\vec{\e_3}}\A',\wn_{\vec{\e_2}}\B,\dotsc,\wn_{\vec{\e_2}}\B}
	\DP}
$$
    \item If it is a unary promotion, we use axiom~\reflequs{}:
	$$
	\AIC{\pi}
	\noLine
	\UIC{\vdash B,A}
	\AIC{\e_0\lequ\e_1}
	\RL{\ocu}
	\BIC{\vdash\oc_{\e_0}B,\wn_{\e_1}A}
	\DP
	\rightsquigarrow
	$$
	$$
	\AIC{\pi}
	\noLine
	\UIC{\vdash B,A}
	\dashedLine
	\RL{S_g}
	\UIC{\vdash B,\wn_{\vec{\e_2}}\B}
	\AIC{\e_0\lequ\e_1}
	\ZIC{\e_1\leqg\vec{\e_2}}
	\RL{\text{\reflequs}}
	\BIC{\e_0\leqg\vec{\e_2}}
	\RL{\ocg}
	\BIC{\vdash\oc_{\e_0}B,\wn_{\vec{\e_2}}\B}
	\DP
	$$
	\item If it is a functorial promotion:
	$$
			\hspace{-0.5cm}\AIC{\pi}
			\noLine
			\UIC{\vdash B,\A',\overbrace{A,\dotsc,A}^n}
			\AIC{\e_0\leqf\e_1}
			\AIC{\e_0\leqf\vec{\e_3}}
			\RL{\ocf}
			\TIC{\vdash\oc_{\e_0} B,\wn_{\vec{\e_3}}\A',\wn_{\e_1}A,\dotsc,\wn_{\e_1}A}
			\DP\quad\rightsquigarrow
	$$
	$$\hspace*{-3.7cm}
{\scriptsize			\AIC{IH(\pi)}
			\noLine
			\UIC{\vdash B,\A',\overbrace{A,\dotsc,A}^n}
			\dashedLine
			\RL{S_g}
			\UIC{\vdash B,\A',\wn_{\vec{\e_2}}\B,\dotsc,\wn_{\vec{\e_2}}\B}
			\AIC{\e_0\leqf\e_1}
			\ZIC{\e_1\leqg\vec{\e_2}}
			\AIC{\hspace{-0.5cm}e_0\leqf\vec{e_3}}
			\RL{\text{\refleqfg}}
			\TIC{\overline{ (\vec{\e_3})}(\mpx{1})}
			\doubleLine
			\RL{\overline{\mpx{1}}}
			\BIC{\vdash B,\wn_{\vec{\e_3}}\A',\wn_{\vec{\e_2}}\B,\dotsc,\wn_{\vec{\e_2}}\B}
		\AIC{\hspace{-0.5cm}\e_0\leqf\e_1}
		\ZIC{\hspace{-0.5cm}\e_1\leqg\vec{\e_2}}
		\AIC{\hspace{-0.5cm}\e_0\leqf\vec{e_3}}
		\RL{\text{\refleqfg}}
		\TIC{\hspace{-0.5cm}\e_0\leqg\vec{\e_3}}
		\AIC{\e_0\leqf\e_1}
		\ZIC{\e_1\leqg\vec{\e_2}}
		\RL{\text{\refleqfg}}
		\BIC{\e_0\leqg\vec{\e_2}}
		\RL{\ocg}
		\TIC{\vdash\oc_{\e_0}B,\wn_{\vec{\e_3}}\A',\wn_{\vec{\e_2}}\B,\dotsc,\wn_{\vec{\e_2}}\B}
		\DP}
		$$
	\item If it is a contraction ($\contr{i}$) on a $\wn_{\e_1}A$, we use axiom~\refcontrAx{}:
	$$
		\hspace{-0.5cm}
		\AIC{\pi}
		\noLine
		\UIC{\vdash\overbrace{\wn_{\e_1}A,\dotsc,\wn_{\e_1}A}^{i+n-1},\A}
		\AIC{ (\e_1)(\contr{i})}
		\RL{\contr{i}}
		\BIC{\vdash\wn_{\e_1}A,\dotsc,\wn_{\e_1}A,\A}
		\DP
	$$
	$$
		\hspace{2cm}\AIC{IH(\pi)}
		\noLine
		\UIC{\vdash\overbrace{\wn_{\vec{\e_2}}\B,\dotsc,\wn_{\vec{\e_2}}\B}^{n-1+i},\A}
		\AIC{ (\e_1)(\contr{i})}
		\ZIC{\e_1\leqg\vec{\e_2}}
		\RL{\text{\refcontrAx}}
		\BIC{\overline{ (\vec{\e_2})}(\contr{i})}
		\dashedLine
		\doubleLine
		\RL{\overline{\contr{i}}}
		\BIC{\vdash\wn_{\vec{\e_2}}\B,\dotsc,\wn_{\vec{\e_2}}\B,\A}
		\DP
	$$
	\item If it is a multiplexing ($\mpx{i}$) on a $\wn_{\e_1}A$, we use axiom~\refgmpxAx{}:
	$$
		\AIC{\pi}
		\noLine
		\UIC{\vdash\wn_{\e_1}A,\dotsc,\wn_{\e_1}A,\overbrace{A, \dots, A}^i,\wn_{\e_1}A,\dotsc,\wn_{\e_1}A,\A}
		\AIC{ (\e_1)(\mpx{i})}
		\RL{\mpx{i}}
		\BIC{\vdash\wn_{\e_1}A,\dotsc,\wn_{\e_1}A,\A}
		\DP
	\qquad\rightsquigarrow$$
$$		
\AIC{IH(\pi)}
		\noLine
		\UIC{\vdash\wn_{\vec{\e_2}}\B,\dotsc,\wn_{\vec{\e_2}}\B,\overbrace{A,\dotsc,A}^i,\wn_{\vec{\e_2}}\B,\dotsc,\wn_{\vec{\e_2}}\B,\A}
		\dashedLine
		\RL{S_g}
		\UIC{\vdash\wn_{\vec{\e_2}}\B,\dotsc,\wn_{\vec{\e_2}}\B,\A}
		\AIC{ (\e_1)(\mpx{i})}
		\ZIC{\e_1\leqg\vec{e_2}}
		\RL{\text{\refgmpxAx}}
		\BIC{\overline{ (\vec{\e_2})}(\contr{i})}
		\dashedLine		
		\doubleLine
		\RL{\overline{\contr{i}}}
		\BIC{\vdash\wn_{\vec{\e_2}}\B,\dotsc,\wn_{\vec{\e_2}}\B,\A}
		\DP
$$
	\item If it is an ($\ax$) rule on $\wn_{\e_1}A$. Then $\A=\oc_{\e_1}A^\perp$ and we have:
	$$
		\RL{\ax}
		\ZIC{\vdash A^\perp,A}
		\dashedLine
		\RL{S_g}
		\UIC{\vdash A^\perp,\wn_{\vec{\e_2}}\B}
        \ZIC{\e_1\leqg\vec{e_2}}
		\RL{\ocg}
		\BIC{\vdash\oc_{\e_1} A^\perp,\wn_{\vec{\e_2}}\B}
\DP
$$
\end{itemize}
\end{proof}

\begin{lem}[Functorial Substitution Lemma]
\label{subs2}
Let $\e_1$ be a signature and $\vec{\e_2}$ a list of signatures such that $\e_1\leqf\vec{\e_2}$.
Let $A$ be a formula, and let $\B$ be a context, such that for all $\A$, if $\vdash A,\A$ is provable without using any cut then $\vdash\B,\A$ is provable without using any cut.
Then we have that for all $\A$, if $\vdash\overbrace{\wn_{\e_1}A,\dotsc,\wn_{\e_1}A}^n,\A$ is provable without using any cut then $\vdash\overbrace{\wn_{\vec{\e_2}}\B,\dotsc,\wn_{\vec{\e_2}}\B}^n,\A$ as well.
\end{lem}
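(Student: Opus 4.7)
The plan is to mirror the structure of the proof of \Cref{subs1} just given, systematically replacing appeals to the Girard-style axioms on $\leqg$ by their functorial counterparts on $\leqf$. First, by an easy induction on $n$, the hypothesis of the lemma yields an admissible rule $S_f$ going from $\vdash \overbrace{A,\dotsc,A}^n,\A$ to $\vdash \overbrace{\B,\dotsc,\B}^n,\A$ (each step treating the remaining copies of $A$ together with the original context as the new $\A$, with implicit exchange). Then I proceed by induction on the cut-free derivation $\pi$ of $\vdash\overbrace{\wn_{\e_1}A,\dotsc,\wn_{\e_1}A}^n,\A$, splitting on the last rule of $\pi$.

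For rules acting only on $\A$ (and not a promotion), the induction step is immediate: apply IH to the premises and reapply the rule. For a contraction $(\contr{i})$ on some $\wn_{\e_1}A$, I invoke axiom~\refcontrAx{} (with $s=f$) to obtain $\bar{\vec{\e_2}}(\contr{i})$, and close the case with a derived contraction on $\vec{\e_2}$ via \Cref{prop:closureDerivabilityDerivationequivalence}. For a multiplexing $(\mpx{i})$ on some $\wn_{\e_1}A$, after applying IH, I use $S_f$ on the $i$ freshly exposed copies of $A$ to convert them to $\B$; axiom~\reffumpx{} then grants $\bar{\vec{\e_2}}(\mpx{i})$, from which a derived multiplexing collapses them to a single $\wn_{\vec{\e_2}}\B$. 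The axiom case on $\wn_{\e_1}A$ reduces, after a single application of $S_f$, to an $\ocf$-rule using $\e_1\leqf\vec{\e_2}$.

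The interesting cases are the three promotions. For a Girard promotion introducing $\oc_{\e_0}B$ with $\e_0\leqg\e_1$, I combine this with $\e_1\leqf\vec{\e_2}$ via axiom~\refleqgs{} to obtain $\e_0\leqg\vec{\e_2}$, and conclude by reapplying $\ocg$ to the inductively translated premise. For a functorial promotion with $\e_0\leqf\e_1$, the transitivity axiom~\refleqTrans{} yields $\e_0\leqf\vec{\e_2}$; applying $S_f$ to the premise and then $\ocf$ closes the case. Notably, unlike in \Cref{subs1}, no appeal to axiom~\refleqfg{} is required here, since we are not forced to simulate a Girard promotion through multiplexing. For a unary promotion with $\e_0\lequ\e_1$, axiom~\reflequs{} gives $\e_0\leqf\vec{\e_2}$ and we again conclude with $\ocf$ after $S_f$.

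The proof presents no genuinely new difficulty beyond \Cref{subs1}: the main point to watch is that, in each promotion case, the appropriate composition axiom among~\refleqTrans{}, \refleqgs{} and~\reflequs{} is available with $s=f$, which is precisely what the axioms of \Cref{cutElimAxs} guarantee. In fact, the functorial promotion case is \emph{simpler} than in \Cref{subs1}, because $S_f$ introduces no exponential context and therefore bypasses the multiplexing-to-Girard detour that forced the recourse to axiom~\refleqfg{} there.
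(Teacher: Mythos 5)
Your proof is correct and follows essentially the same route as the paper's: admissibility of the $S_f$ rule, then induction on the last rule of the cut-free derivation, with \refleqgs{}, \refleqTrans{} and \reflequs{} handling the Girard, functorial and unary promotions respectively, and \reffumpx{} handling the multiplexing case. (A minor point in your favour: for the contraction case the paper's text cites \reffumpx{} while its displayed derivation actually instantiates \refcontrAx{}, so your appeal to \refcontrAx{} is the right one; your observation that \refleqfg{} is not needed here, unlike in \Cref{subs1}, also matches the paper.)
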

\begin{proof}
First we can notice that for any $\A$ the following rule:
$$
\AIC{\vdash A,\dotsc,A,\A}
\dashedLine
\RL{S_f}
\UIC{\vdash\B,\dotsc,\B,\A}
\DP
$$
is admissible in the system without cuts (by an easy induction on the number of $A$).
Now we show the lemma by induction on the proof of $\vdash\wn_{\e_1}A,\dotsc,\wn_{\e_1}A,\A$. We distinguish cases according to the last applied rule :
\begin{itemize}
	\item If it is a rule on a formula of $\A$ which is not a promotion: 
\begin{equation*}
	\AIC{\pi}
	\noLine
	\UIC{\vdash\wn_{\e_1}A,\dotsc,\wn_{\e_1} A,\A'}
	\RL{r}
	\UIC{\vdash\wn_{\e_1}A,\dotsc,\wn_{\e_1}A,\A}
	\DP
	\qquad\rightsquigarrow\qquad
	\AIC{IH(\pi)}
	\noLine
	\UIC{\vdash\wn_{\vec{\e_2}}\B,\dotsc,\wn_{\vec{\e_2}}\B,\A'}
	\RL{r}
	\UIC{\vdash\wn_{\vec{\e_2}}\B,\dotsc,\wn_{\vec{\e_2}}\B,\A}
	\DP
\end{equation*}
	\item If it is a Girard's style promotion. Thanks to the axiom~\refleqgs{}, we have:
	$$
	\AIC{\pi}
	\noLine	
	\UIC{\vdash B,\wn_{\vec{\e_3}}\A',\wn_{\e_1}A,\dotsc,\wn_{\e_1}A}
	\AIC{\e_0\leqg\vec{\e_3}}
	\AIC{\e_0\leqg\e_1}
	\RL{\ocg}
	\TIC{\vdash\oc_{\e_0}B,\wn_{\vec{\e_3}}\A',\wn_{\e_1}A,\dotsc,\wn_{\e_1}A}
	\DP
	\qquad\rightsquigarrow
	$$
	$$
	\AIC{IH(\pi)}
	\noLine
	\UIC{\vdash B,\wn_{\vec{\e_3}}\A',\wn_{\vec{\e_2}}\B,\dotsc,\wn_{\vec{\e_2}}\B}
	\AIC{\e_0\leqg\vec{\e_3}}
	\AIC{\e_0\leqg\e_1}
	\ZIC{\e_1\leqf\vec{\e_2}}
	\RL{\text{\refleqgs}}
	\BIC{\e_0\leqg\vec{\e_2}}
	\RL{\ocg}
	\TIC{\vdash\oc_{\e_0}B,\wn_{\vec{\e_3}}\A',\wn_{\vec{\e_2}}\B,\dotsc,\wn_{\vec{\e_2}}\B}
	\DP
	$$
	\item If it is a unary promotion, we use axiom~\reflequs{}:
	$$
	\hspace{-2cm}\AIC{\pi}	
	\noLine
	\UIC{\vdash B,A}
	\AIC{\e_0\lequ\e_1}
	\RL{\ocu}
	\BIC{\vdash\oc_{\e_0}B,\wn_{\e_1}A}
	\DP
$$	

$$\AIC{\pi}
	\noLine
	\UIC{\vdash B,A}
	\dashedLine
	\RL{S_f}
	\UIC{\vdash  B,\B}
	\AIC{\e_0\lequ\e_1}
	\ZIC{\e_1\leqf\vec{\e_2}}
	\RL{\text{\reflequs}}
	\BIC{\e_0\leqf\vec{\e_2}}
	\RL{\ocf}
	\BIC{\vdash\oc_{\e_0}B,\wn_{\vec{\e_2}}\B}
	\DP
$$
	\item If it is a functorial promotion, thanks to the axiom~\refleqTrans{} we have:
	$$
	\AIC{\pi}
	\noLine
	\UIC{\vdash B,\A', A,\dotsc,A}
	\AIC{\e_0\leqf\vec{e_3}}
	\AIC{\e_0\leqf\e_1}
	\RL{\ocf}
	\TIC{\vdash\oc_{\e_0}B,\wn_{\vec{e_3}}\A',\wn_{\e_1}A,\dotsc,\wn_{\e_1}A}
	\DP
	\qquad\rightsquigarrow
	$$
	$$
	\AIC{IH(\pi)}
	\noLine
	\UIC{\vdash B,\A', A,\dotsc,A}
	\dashedLine
	\RL{S_f}
	\UIC{\vdash B,\A',\wn_{\vec{\e_2}}\B,\dotsc,\wn_{\vec{\e_2}}\B}
	\AIC{\e_0\leqf\vec{e_3}}
	\AIC{\e_0\leqf\e_1}
	\AIC{\e_1\leqf\vec{\e_2}}
	\RL{\text{\refleqTrans}}
	\BIC{\e_0\leqf\vec{\e_2}}
	\RL{\ocf}
	\TIC{\vdash\oc_{\e_0} B,\wn_{\vec{e_3}}\A',\wn_{\vec{\e_2}}\B,\dotsc,\wn_{\vec{\e_2}}\B}
	\DP
	$$
	\item If it is a contraction ($\contr{i}$) on $\wn_{\e_1}A$, we use axiom~\reffumpxAx{}:
	$$
		\AIC{\pi}
		\noLine
		\UIC{\vdash,\overbrace{\wn_{\e_1}A,\dotsc,\wn_{\e_1}A}^{n+i-1},\A}
		\AIC{(\e_1)(\contr{i})}
		\RL{\contr{i}}
		\BIC{\vdash\wn_{\e_1} A, \dots, \wn_{\e_1} A,\A}
		\DP
	\qquad\rightsquigarrow
	$$
	$$
		\AIC{IH(\pi)}
		\noLine
		\UIC{\vdash\overbrace{\wn_{\vec{\e_2}}\B,\dotsc,\wn_{\vec{\e_2}}\B}^{n+i-1},\A}
		\AIC{ (\e_1)(\contr{i})}
		\ZIC{\e_1\leqf\vec{e_2}}
		\RL{\text{\reffumpxAx}}
		\BIC{\overline{ (\vec{\e_2})}(\contr{i})}
		\dashedLine
		\doubleLine
		\RL{\overline{\contr{i}}}
		\BIC{\vdash\wn_{\vec{\e_2}}\B,\dotsc,\wn_{\vec{\e_2}}\B,\A}
		\DP
	$$
	\item If it is a multiplexing ($\mpx{i}$) on $\wn_{\e_1}A$, we use axiom~\reffumpxAx:
	$$
		\AIC{\pi}
		\noLine
		\UIC{\vdash\wn_{\e_1}A,\dotsc,\wn_{\e_1}A,\overbrace{A,\dotsc,A}^i,\wn_{\e_1}A,\dotsc,\wn_{\e_1}A,\A}
		\AIC{ (\e_1)(\mpx{i})}
		\RL{\mpx{i}}
		\BIC{\vdash\wn_{\e_1}A,\dotsc,\wn_{\e_1}A,\A}
		\DP\qquad
	\rightsquigarrow
	$$
	$$
		\AIC{IH(\pi)}
		\noLine
		\UIC{\vdash\wn_{\vec{\e_2}}\B,\dotsc,\wn_{\vec{\e_2}}\B,\overbrace{A,\dotsc,A}^i,\wn_{\vec{\e_2}}\B,\dotsc,\wn_{\vec{\e_2}}\B,\A}
		\dashedLine
		\RL{S_f}
		\UIC{\vdash\wn_{\vec{\e_2}}\B,\dotsc,\wn_{\vec{\e_2}}\B,\B,\dotsc,\B,\wn_{\vec{\e_2}}\B,\dotsc,\wn_{\vec{\e_2}}\B,\A}
		\AIC{(\e_1)(\mpx{i})}
		\ZIC{\e_1 \leqf \vec{\e_2}}
		\RL{\text{\reffumpxAx{}}}
		\BIC{\overline{(\vec{\e_2})}(\mpx{i})}
		\dashedLine
		\doubleLine
		\RL{\overline{\mpx{i}}}
		\BIC{\vdash\wn_{\vec{\e_2}}\B,\dotsc,\wn_{\vec{\e_2}}\B,\A}
		\DP
	$$
	\item If it is an ($\ax$) rule on $\wn_{\e_1} A$. Then $\A=\oc_{\e_1} A^\perp$ and we have:
	$$
		\RL{\ax}
		\ZIC{\vdash A^\perp,A}
		\dashedLine
		\RL{S_f}
		\UIC{\vdash A^\perp,\B}
                \ZIC{\e_1\leqf\vec{e_2}}
		\RL{\ocf}
		\BIC{\vdash\oc_{\e_1} A^\perp,\wn_{\vec{\e_2}} \B}
		\DP
	$$
\end{itemize}
\end{proof}

\begin{lem}[Unary Functorial Substitution Lemma]
\label{subs3}
Let $\e_1$ and $\e_2$ be two exponential signatures such that $\e_1\lequ\e_2$.
Let $A$ and $B$ be formulas, such that for all $\A$, if $\vdash A,\A$ is provable without using any cut then $\vdash B,\A$ is provable without using any cut.
Then we have that for all $\A$, if $\vdash\overbrace{\wn_{\e_1}A,\dotsc,\wn_{\e_1}A}^n,\A$ is provable without using any cut then $\vdash\overbrace{\wn_{\e_2}B,\dotsc,\wn_{\e_2}B}^n,\A$ as well, with $k_i$ positive integers.
\end{lem}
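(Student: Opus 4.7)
The proof follows closely the template of Lemmas~\ref{subs1} and~\ref{subs2}, proceeding by induction on the cut-free derivation $\pi$ of $\vdash\overbrace{\wn_{\e_1}A,\dotsc,\wn_{\e_1}A}^n,\A$, with a case analysis on the last inference rule. As a preliminary step, iterated application of the hypothesis on $A$ and $B$ yields that the derived rule
$$
\AIC{\vdash \overbrace{A,\dotsc,A}^n,\A}
\dashedLine
\RL{S_u}
\UIC{\vdash \overbrace{B,\dotsc,B}^n,\A}
\DP
$$
is admissible without cut.

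The case where the last rule acts on a formula of $\A$ distinct from the $\wn_{\e_1}A$ occurrences is handled by applying the induction hypothesis to each premise and re-applying the rule. The interesting cases are the promotion rules whose side-conditions need to be adapted through the cut-elimination axioms of Table~\ref{cutElimAxs}: for a $\ocg$-promotion with signature $\e_0 \leqg \e_1$, we use axiom~\refleqgs{} with $s = u$ to obtain $\e_0 \leqg \e_2$ from $\e_1 \lequ \e_2$; for a $\ocf$-promotion with $\e_0 \leqf \e_1$, we use axiom~\refleqfu{} to get $\e_0 \leqf \e_2$; for a $\ocu$-promotion with $\e_0 \lequ \e_1$, we use axiom~\refleqTrans{} (with $s = u$) to derive $\e_0 \lequ \e_2$. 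Contraction and multiplexing on $\wn_{\e_1}A$ are treated symmetrically to the corresponding cases of subs1 and subs2, using axioms~\refcontrAx{} and~\reffumpxAx{} with $s = u$ to transfer the rule availability from $\e_1$ to the derivability closure $\bar{\e_2}$, followed by the derived inferences $\contr{i}^{\bar{\e_2}}$ and $\mpx{i}^{\bar{\e_2}}$ (together with $S_u$ in the multiplexing case).

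The $(\ax)$ case forces $n = 1$ and $\A = \oc_{\e_1}A^\perp$; from $\vdash A^\perp, A$ one applies the hypothesis to obtain $\vdash A^\perp, B$ and then closes with the $\ocu$-rule, justified directly by $\e_1 \lequ \e_2$:
$$
\AIC{}
\RL{\ax}
\UIC{\vdash A^\perp, A}
\dashedLine
\RL{S_u}
\UIC{\vdash A^\perp, B}
\AIC{\e_1 \lequ \e_2}
\RL{\ocu}
\BIC{\vdash \oc_{\e_1}A^\perp, \wn_{\e_2}B}
\DP
$$

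No genuine obstacle is expected: the structure is parallel to subs1 and subs2, and the axioms~\refleqgs{}, \refleqfu{}, \refleqTrans{}, \refcontrAx{}, \reffumpxAx{} are precisely those designed so that $\lequ$ composes correctly with the three promotion flavours. The one subtlety worth checking is that, unlike in subs1 and subs2 where a single source formula is replaced by a whole context $\wn_{\vec{\e_2}}\B$, here a source formula is replaced by a single target formula $\wn_{\e_2}B$, which slightly simplifies the bookkeeping but requires no adjustment to the derived rule $S_u$ above.
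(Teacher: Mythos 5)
Your proposal is correct and matches the paper's approach: the paper's own proof of this lemma is literally the one-line remark that it is proven the same way as the Functorial Substitution Lemma (Lemma~\ref{subs2}), and your case analysis is the right instantiation of that template, with the axiom choices (\refleqgs{}, \refleqfu{}, \refleqTrans{} with $s=u$, \refcontrAx{}, \reffumpx{} with $s=u$) all checking out against Table~\ref{cutElimAxs}. The only cosmetic deviation is that in the $\ocu$-premise case you reclose with $\ocu$ rather than $\ocf$ as in Lemma~\ref{subs2}, which is legitimate here precisely because the replacement is the single formula $\wn_{\e_2}B$.
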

\begin{proof}
This lemma is proven the same way as Lemma~\ref{subs2}.
\end{proof}

Finally we prove cut-elimination theorem~\ref{superllcutelim}:
\begin{thm}[Cut Elimination]
Cut elimination holds for\linebreak $\superLL(\Sig,\leqg,\leqf,\lequ)$ as soon as the 8 cut-elimination axioms of \Cref{cutElimAxs} are satisfied.
\end{thm}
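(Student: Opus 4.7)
My plan is to prove the theorem by induction on the complexity of the cut formula, showing that any cut whose two premises are cut-free can itself be eliminated; iterating this reduction innermost-first then yields a cut-free proof. Formula complexity is chosen so that $\wn_\e A$ and $\oc_\e A^\perp$ are strictly more complex than $A$; in particular, cuts on strict subformulas of an exponential cut fall within the induction hypothesis. The three substitution lemmas above (\Cref{subs1}, \Cref{subs2}, and \Cref{subs3}) handle the exponential key cases, while the remaining cases are the standard \MALL{} reductions and commutations.

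Given a cut on $F$ with both premises cut-free, I case-split on whether $F$ is principal in both premises. If not, I commute the cut upward on the side where $F$ is not principal; this strictly decreases the size of the subproofs above the cut, and can be handled by a straightforward nested induction on that size. If $F$ is a \MALL{} connective or one of the units, principal on both sides, the standard \MALL{} key reductions turn the cut into cuts on strict subformulas, to which the outer induction hypothesis applies. The $\ax$-case is trivial.

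The exponential principal cases are where the substitution lemmas do the actual work. Suppose the cut is on $\oc_\e A^\perp$ versus $\wn_\e A$, with the left premise $\pi_L$ ending in a promotion rule and the right premise $\pi_R$ ending in a principal rule on $\wn_\e A$ (one of $\mpx{i}$, $\contr{i}$, or an $\ax$ reducible with the left proof directly). Depending on which promotion concludes $\pi_L$, I apply the matching substitution lemma to $\pi_R$: \Cref{subs1} when $\pi_L$ ends in $\ocg$, \Cref{subs2} when it ends in $\ocf$, and \Cref{subs3} when it ends in $\ocu$. The hypothesis of each substitution lemma --- that any cut-free proof of $\vdash A, \A$ can be transformed into a cut-free proof of the corresponding substituted sequent --- is supplied by the immediate subproof of $\pi_L$ together with a cut on $A$; since $A$ is of strictly smaller complexity than $\wn_\e A$, this auxiliary cut is eliminable by the outer induction hypothesis, so the supplied transformation is indeed cut-free. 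Applying the substitution lemma to $\pi_R$ then produces the desired cut-free proof of the original conclusion.

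The main obstacle has already been discharged inside the substitution lemmas themselves: their case analyses are precisely where the $8$ axioms of \Cref{cutElimAxs} are consumed, ensuring that the promotion side-conditions $\e \leqg \vec{\e'}$, $\e \leqf \vec{\e'}$ and $\e \lequ \e'$ carry through substitutions (via \refleqTrans, \refleqgs, \refleqfu, \refleqfg, \reflequs), and that the derivability closures $\contr{i}^{\bar{\vec{\e}}}$ and $\mpx{i}^{\bar{\vec{\e}}}$ remain admissible under the target signatures (via \refgmpxAx, \reffumpxAx, \refcontrAx). With those lemmas in hand, the cut-elimination theorem itself collapses to a clean enumeration of cases together with a single invocation of the appropriate substitution lemma in each exponential principal case.
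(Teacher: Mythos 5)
Your overall strategy is the paper's own: a primary induction on the size of the cut formula with a secondary induction on the sizes of the premises, the standard \MALL{} key cases, and the three substitution lemmas absorbing the exponential principal cases, with the hypothesis of each substitution lemma discharged by an auxiliary cut on the strictly smaller formula $A$ that the outer induction hypothesis eliminates. All of that matches the appendix proof.

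There is, however, one step that fails as written: the commutation case. You propose to ``commute the cut upward on the side where $F$ is not principal'', uniformly. On the $\wn$-side this breaks down when the last rule of $\pi_R$ is a promotion carrying the cut formula $\wn_\e A$ in its $\wn$-context: after pushing the cut above that promotion, the conclusion of the new cut contains the context $\Gamma_1$ of the $\oc$-premise, which need not be a $\wn_{\vec{\e'}}$-context satisfying the promotion's side condition, so the promotion cannot be re-applied below the new cut. The paper avoids this by commuting only on the $\oc_{\e}A^\perp$ side --- where a non-principal last rule can never be a promotion, since a promotion's context consists exclusively of $\wn$-formulas --- and by then handing the \emph{entire} right premise to the relevant substitution lemma as soon as the left premise ends in a promotion introducing $\oc_\e A^\perp$. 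The substitution lemmas are stated for an arbitrary cut-free proof of $\vdash \wn_{\e_1}A,\dots,\wn_{\e_1}A,\A$ precisely so that their internal induction can traverse promotions whose contexts contain $\wn_{\e_1}A$ (this is where \refleqTrans, \refleqgs, \refleqfu, \refleqfg{} and \reflequs{} are consumed). So you should drop the restriction that $\pi_R$ end in a rule principal on $\wn_\e A$ and invoke the substitution lemma on the whole of $\pi_R$; with that reorganization your argument coincides with the paper's.
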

\begin{proof}
We prove the result by induction on the couple $(t, s)$ with lexicographic order, where $t$ is the size of the cut formula and $s$ is the sum of the sizes of the premises of the cut. We distinguish cases depending on the last rules of the premises of the cut:
	\begin{itemize}
	  \item If one of the premises does not end with a rule acting on the cut formula, we apply the induction hypothesis with the premise(s) of this rule.
          \item If both last rules act on the cut formula which does not start with an exponential connective, we apply the standard reduction steps for non-exponential cuts leading to cuts involving strictly smaller cut formulas. We conclude by applying the induction hypothesis.
          \item If we have an exponential cut for which the cut formula $\oc_{\e_1}A^\perp$ is not the conclusion of a promotion rule introducing $\oc_{\e_1}$, the rule above $\oc_{\e_1}A^\perp$ cannot be a promotion rule, and we apply the induction hypothesis to its premise(s).
          \item If we have an exponential cut for which the cut formula $\oc_{\e_1}A^\perp$ is the conclusion of an ($\ocg$)-rule. We can apply:
          $$
			\AIC{\vdash A^\perp,\wn_{\vec{\e_2}}\B}
			\AIC{\e_1\leqg\vec{\e_2}}
			\RL{\ocg}
			\BIC{\vdash\oc_{\e_1}A^\perp,\wn_{\vec{\e_2}}\B}
			\AIC{\vdash\wn_{\e_1} A,\A}
			\RL{\cut}
			\BIC{\vdash\wn_{\vec{\e_2}}\B,\A}
			\DP
			\quad\rightsquigarrow\quad
			\AIC{\vdash\wn_{\e_1}A,\A}
			\AIC{\e_1\leqg\vec{\e_2}}
			\dashedLine			
			\RL{\text{Lem.}~\ref{subs1}}
			\BIC{\vdash\wn_{\vec{\e_2}}\B,\A}
			\DP
		$$
		We have that $A$ and $\B$ are such that for every $\A$ such that $\vdash A,\A$ is provable without cuts, $\vdash\wn_{\vec{\e_2}}\B,\A$ too. Indeed, $A$ and $\B$ are such that $\vdash A^\perp,\wn_{\vec{\e_2}}\B$ is provable without cuts and we can apply the induction hypothesis ($\size{A}<\size{\wn_{\e_1} A}$). Therefore, we can apply Lemma~\ref{subs1} on $\vdash\wn_{\e_1}A,\A$ and obtain that $\vdash\wn_{\vec{\e_2}}\B,\A$ is provable without cut.
		\item If we have an exponential cut for which the cut formula $\oc_{\e_1}A^\perp$ is the conclusion of an ($\ocf$)-rule. We can apply:
		$$
			\AIC{\vdash A^\perp,\B}
			\AIC{\e_1\leqf\vec{\e_2}}
			\RL{\ocf}
			\BIC{\vdash\oc_{\e_1}A^\perp,\wn_{\vec{\e_2}}\B}
			\AIC{\vdash\wn_{\e_1}A,\A}
			\RL{\cut}
			\BIC{\vdash\wn_{\vec{\e_2}}\B,\A}
			\DP
			\quad\rightsquigarrow\quad
			\AIC{\vdash\wn_{\e_1}A,\A}
			\AIC{\e_1\leqf\vec{\e_2}}
			\dashedLine			
			\RL{\text{Lem.}~\ref{subs2}}
			\BIC{\vdash\wn_{\vec{\e_2}}\B,\A}
			\DP
		$$
		We have that $A$ and $\B$ are such that for every $\A$ such that $\vdash A,\A$ is provable without cuts, $\vdash\B,\A$ too. Indeed, $A$ and $\B$ are such that $\vdash A^\perp,\B$ is provable without cuts and we can apply the induction hypothesis. Therefore, we can apply \Cref{subs2} on $\vdash\wn_{\e_1}A,\A$ and obtain that $\vdash\wn_{\vec{\e_2}}\B,\A$ is provable without cut.
		\item If we have an exponential cut for which the cut formula $\oc_{\e_1}A^\perp$ is the conclusion of an ($\ocu$)-rule, this case is treated in the exact same way as ($\ocf$), using \Cref{subs3}.
	\end{itemize}
\end{proof}

\subsection{Details on \ELL{} as instance of \superLL}\label{app:ELLexa}

\paragraph{Elementary Linear Logic.} Elementary Linear Logic (\ELL)~\cite{lll,djell} is a variant of \LL{} where we remove ($\wnde$) and $(\ocg)$ and add the functorial promotion:
\begin{equation*}
  \AIC{\vdash A,\A}
  \RL{\ocf}
  \UIC{\vdash \oc A,\wn\A}
  \DP
\end{equation*}
It is the $\superLL(\Sig, \leqg, \leqf, \lequ)$ system with $\Sig=\{\bullet\}$, defined by $\bullet(\contr{2})=\bullet(\mpx{0})=\True$ (and $(\bullet)(r)=\False$ otherwise), ${\leqg}={\lequ}=\emptyset$ and $\bullet\leqf\bullet$.
This $\superLL(\Sig, , {\leqg}, {\leqf}, {\lequ})$ instance is \ELL{} and satisfies the cut-elimination axioms and the expansion axiom:
\begin{itemize}
\item The rule ($\mpx{0}$) is the weakening rule ($\wnwk$), ($\contr{2}$) is the contraction rule ($\contr{}$), and we can always apply promotion ($\ocf$) as ${\leqf}$ is the plain relation on $\Sig$:
\begin{equation*}
\AIC{\vdash A,\A}
\ZIC{\bullet\leqf\bullet}
\RL{\ocf}
\BIC{\vdash\oc_{\bullet}A,\wn_{\bullet}\A}
\DP
\qquad\leftrightsquigarrow\qquad
\AIC{\vdash A,\A}
\RL{\oc_f}
\UIC{\vdash\oc A,\wn A}
\DP
\end{equation*}
We have that ($\ocg$) is a restriction of ($\ocf$) in \ELL{} and ($ \ocu $) is non-existent.

\item Moreover, the cut-elimination axioms are satisfied.
As $\Sig$ is a singleton, axioms \refgmpxAx, \reffumpxAx, \refcontrAx, \refleqTrans, \refleqgs, \refleqfu, \reflequs{} hold. Axiom \refleqfg{} is vacuously satisfied.

\item The expansion axiom is satisfied since $\leqf$ is reflexive.
\end{itemize}

\section{Details on \Cref{section:musuperll}}
\label{app:musuperll}

\subsection{Details on \muLLmodinf{} as an instance of \musuperLLinf{}}
\label{app:mullmodIsMusuperLL}

We show here in details how the system \muLLmodinf{} is an instance of super exponentials. 

\muLLmodinf{} coincides with  the system $\musuperLLinf(\Sig, \leqg, \leqf, \lequ)$ such that:
\begin{itemize}
\item The set of signatures contains two elements $\Sig:=\{\bullet, \star\}$.
\item $\contr{2}(\bullet) = \contr{2}(\star) = \True$
\item $\mpx{1}(\bullet)=\True$, 
\item $\mpx{0}(\bullet)=\mpx{0}(\star)=\True$, 
\item all the other elements have value $\False$ for both signatures.
\item $\bullet\leqg\bullet$ ; $\bullet \leqg\star$, $\star\leqf\star$, and all other couples for the three relations $\leqg, \leqf$ and $\lequ$ being false.
\end{itemize}

\bigskip
This system is \muLLmodinf{} when taking:
$$ \wn_{\bullet} := \wn,\quad \oc_{\bullet}:=\oc,\quad \wn_{\star}:= \lozenge \quad\text{and}\quad \oc_{\star}:= \Box. $$

\bigskip
We can indeed check that the system satisfies the cut-elimination axioms of \Cref{cutElimAxs}:
\begin{itemize}
    \item Hypotheses of axiom \refcontrAx{} are ony true for $i=2$ in two cases: for $\e=\e'=\bullet$, in that case $\bar{\e}(\contr{2}$ is true because $\e(\contr{2})$ is; or for $\e=\bullet$ and $\e'=\star$, in that case the axiom is satisfied as $\e'(\contr{2})$ is true.
    \item Hypotheses of axiom \refgmpx{} are true for $i=0$ when $\e=\e'=\bullet$, or for $\e=\bullet$ and $\e'=\star$, in both cases we have that $\bar{\e'}(\contr{0})$ is true because $\e'(\mpx{0})$ is true.
    \item Axiom \refgmpx{} is always true for $i=1$
    \item Hypotheses of axiom \refgmpx{} are not satisfied for $i>1$.
    \item Hypotheses of axiom \reffumpx{} are satisfied only for $\e=\e'=\star$ and so easily satisfied.
    \item Axiom \refleqTrans{} is satisfied as $\leqg$ and $\leqf$ are transitive.
    \item Hypotheses of axiom \refleqgs are only satisfied for $\e=\bullet$ and $\e'=\e''=\star$, and in this case the conclusion is one of the hypothesis.
    \item Hypotheses of the other axioms are never fully satisfied.
\end{itemize}

\section{Details on \Cref{section:cut-elimination}}

\subsection{Details on the justification of (\mcut)-steps}
\label{app:mcutstepJustif}

In the following, we shall prove the lemmas justifying the mcut-reduction steps.
The following statement are identical to those found in the body of the paper but for the fact that we make explicit the side conditions on the exponential rules: in the hypotheses of the lemmas, such side-conditions are assumptions we can use in our proof while in the conclusion derivation these side-conditions are goals to be proved in order to establish that the derivation is indeed a proof in the considered $\musuperLLinf(\Sig, \leqg, \leqf, \lequ)$ system.

\subsubsection{Justification for step~{\hyperref[ocgcomm]{$({\text{comm}}_\ocg)$}}: proof of \Cref{ocgreadinesscondition}}
\label{app:ocgreadinesscondition}

The case~\hyperref[ocgcomm]{$({\text{comm}}_\ocg)$}  covers all the case where $(\ocg)$ commute under the cut:
\begin{lem}[Justification for step~{\hyperref[ocgcomm]{$({\text{comm}}_\ocg)$}}]
If\quad
$$
\AIC{\pi}
\noLine
\UIC{\vdash A, \wn_{\vec{\f}}\B}
\AIC{\e\leqg\vec{\f}}
\RL{\ocg}
\BIC{\vdash \oc_{\e} A, \wn_{\vec{\f}}\B}
\AIC{\ocProofs}
\RL{\rmcutpar}
\BIC{\vdash \oc_{\e} A, \wn_{\vec{\g}}\A}
\DP
$$
is a $\musuperLLinf(\Sig, \leqg, \leqf, \lequ)$-proof then\quad
$$
\AIC{\pi}
\noLine
\UIC{\vdash A, \wn_{\vec{\f}}\B}
\AIC{\ocProofs}
\RL{\rmcutpar}
\BIC{\vdash A, \wn_{\vec{\g}}\A}
%\AIC{\text{By lemma~\ref{ocgreadinesscondition}}}
\AIC{\e\leqg\vec{\g}}
\RL{\ocg}
\BIC{\vdash \oc_{\e} A, \wn_{\vec{\g}}\A}
\DP
$$
is also a $\musuperLLinf(\Sig, \leqg, \leqf, \lequ)$-proof.
\end{lem}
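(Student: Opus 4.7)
The plan is to verify two things: first, that the inner multicut in the reduced derivation remains a well-formed $\musuperLLinfSig$-derivation; second, that the subsequent application of $(\ocg)$ satisfies the required side condition $\e \leqg \vec{\g}$.

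For the first point, I would observe that the principal formula $\oc_\e A$ of the top premise is never a cut-formula in the original multicut: it is precisely the formula mapped by $\iota$ to the conclusion $\oc_\e A$. The new multicut therefore reuses the same cut-relation $\cutrel$ on the same list of sequents, replacing only the first premise by $\vdash A, \wn_{\vec{\f}}\B$ and updating the ancestor map so that the $A$ in the conclusion is linked to the $A$ in the new first premise. All remaining well-formedness conditions on multicuts (injectivity of the new $\iota'$, symmetry and totality of $\cutrel$, matching of cut-formulas, acyclicity and connectedness of the sequent-graph) are inherited directly from the original multicut.

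For the second point, the key intermediate lemma I would establish, by induction on the distance in the cut-tree from a sequent $S$ of $\ocProofs$ to the first premise, is: for every $S$ with principal formula $\oc_{\e_S} A_S$, one has $\e \leqg \e_S$. In the base case, $\oc_{\e_S} A_S$ is cut directly with some $\wn_{\e_S} A_S^\perp$ occurring in $\wn_{\vec{\f}}\B$, hence $\e_S$ appears in $\vec{\f}$ and $\e \leqg \e_S$ follows from the original side condition of the top $(\ocg)$. In the inductive step, $\oc_{\e_S} A_S$ is cut with a $\wn_{\e_S}$-formula in the $\wn$-context of another sequent $S'$ of $\ocProofs$ closer to the first premise; by induction hypothesis $\e \leqg \e_{S'}$, and depending on whether $S'$ is concluded by $\ocg$, $\ocf$, or $\ocu$, the side condition of $S'$'s promotion yields $\e_{S'} \leqg \e_S$, $\e_{S'} \leqf \e_S$, or $\e_{S'} \lequ \e_S$ respectively; combined with $\e \leqg \e_{S'}$, one application of \refleqTrans{} or \refleqgs{} gives $\e \leqg \e_S$ in each case.

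Given this lemma, the required inequality $\e \leqg \g_j$ for each $\g_j$ in $\vec{\g}$ will follow by splitting on the origin of $\wn_{\g_j} C_j$: either it descends from $\wn_{\vec{\f}}\B$ in the first premise, in which case $\g_j$ is one of the $\f_i$ and the inequality is immediate from the original side condition; or it comes from the $\wn$-context of some $S$ in $\ocProofs$, and then the side condition of $S$'s own promotion gives $\e_S \leq_{p_S} \g_j$ with $p_S \in \{g,f,u\}$, so combining with the just-established $\e \leqg \e_S$ and applying \refleqTrans{} or \refleqgs{} concludes. The main subtlety, and the reason the axioms of \Cref{cutElimAxs} are tailored as they are, lies in this propagation of a $\leqg$-relation along chains of cuts involving an arbitrary mixture of the three promotion styles; axiom \refleqgs{} is precisely what makes this uniform propagation possible.
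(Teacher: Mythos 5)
Your proof is correct and follows essentially the same route as the paper's: an induction along the cut-connection tree rooted at the promoted premise, propagating $\e\leqg\cdot$ to every signature in the context via axioms \refleqTrans{} and \refleqgs{}. The only (cosmetic) difference is that you first establish $\e\leqg\e_S$ for the principal signature of each sequent and then handle the context signatures in a separate final step, whereas the paper's induction carries $\e\leqg\vec{\f'}$ for each sequent's whole $\wn$-context directly; your explicit check of the well-formedness of the new multicut is a welcome addition the paper leaves implicit.
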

\begin{proof}
We prove that for each sequent $\vdash \oc_{\e'} A', \wn_{\vec{\f'}}\B'$ of $\C':=\ocProofs \cup \{\vdash \oc_{\e} A, \wn_{\vec{\f}}\B\}$, we have that $\e\leqg\vec{\f'}$.

The $\cutrel$-relation extended to sequent defines a tree on $\C'$. Taking $\vdash \oc_{\e} A, \wn_{\vec{\f}}\B$ as the root, the ancestor relation of this tree is a well-founded relation. We can therefore do a proof by induction:
\begin{itemize}
\item The base case is given by the condition of application of $(\ocg)$ in the proof.
\item For heredity,  we have that there is a sequent $\vdash \oc_{\e''} A'', \wn_{\vec{\f''}}\B'', \wn_{\e'} ({A'}^\perp)$ of $\C'$, connected on $\oc_{\e'} A'$ to our sequent.
By induction hypothesis, we have that $\e\leqg\e'$.
The rule on top of $\vdash \oc_{\e'} A', \wn_{\vec{\f'}}\B'$ is a promotion. We have two cases:
\begin{itemize}
\item If it's a $(\ocg)$-promotion, we can use axiom~\refleqTrans{} with the application condition of the promotion, to get $\e\leqg\vec{\f'}$.
\item If it's an $(\ocf)$-promotion or an $(\ocu)$-promotion, we can use axiom~\refleqgs{} with the application condition of the promotion, to get $\e\leqg\vec{\f'}$.
\end{itemize}
\end{itemize}
We conclude by induction and use the inequalities to prove that $\e\leqg\vec{\g}$.
\end{proof}

\subsubsection{Justification for step~{\hyperref[ocfcommocgEmpty]{$({\text{comm}}^1_\ocf)$}}: proof of \Cref{ocfreadinesscondition}}
\label{app:ocfreadinesscondition}

The case~\hyperref[ocfcommocgEmpty]{$({\text{comm}}^1_\ocf)$} covers the case of commutation of an $(\ocf)$-promotion but where only $(\ocg)$-rules with empty contexts appears in the hypotheses of the multi-cut.
Note that an $(\ocg)$ occurrence with empty context could be seen as an $(\ocf)$ occurrence (with empty context).
\begin{lem}[Justification for step~{\hyperref[ocfcommocgEmpty]{$({\text{comm}}^1_\ocf)$}}]
If
$$
\AIC{\pi}
\noLine
\UIC{\vdash A, \B}
\AIC{\e\leqf\vec{\f}}
\RL{\ocf}
\BIC{\vdash \oc_{\e} A, \wn_{\vec{\f}}\B}
\AIC{\ocProofs}
\RL{\rmcutpar}
\BIC{\vdash \oc_{\e} A, \wn_{\vec{\g}}\A}
\DP
$$
is a $\musuperLLinf(\Sig, \leqg, \leqf, \lequ)$-proof with $\ocProofs$ such that each sequents concluded by an $(\ocg)$ have an empty context, then\quad
$$
\AIC{\pi}
\noLine
\UIC{\vdash A, \B}
\AIC{\C}
\RL{\rmcutpar}
\BIC{\vdash A, \A}
%\AIC{\text{By lemma~\ref{ocfreadinesscondition}}}
%\noLine
\AIC{\e\leqf\vec{\g}}
\RL{\ocf}
\BIC{\vdash \oc_{\e} A, \wn_{\vec{\g}}\A}
\DP
$$
is a $\musuperLLinf(\Sig, \leqg, \leqf, \lequ)$-proof.
\end{lem}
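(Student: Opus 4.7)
The plan mirrors closely the proof of \Cref{ocgreadinesscondition}, with $\leqf$ replacing $\leqg$ and using in a crucial way the restriction that every $(\ocg)$-rule in $\ocProofs$ has an empty context. The target derivation is manifestly well-formed and preserves validity (no new infinite branches are created by the rearrangement), so the only real content is to verify the side-condition $\e \leqf \vec{\g}$ of the final $(\ocf)$-rule in the rewritten proof.

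I would work on the set $\C' := \ocProofs \cup \{\vdash \oc_\e A, \wn_{\vec{\f}}\B\}$, whose cut-connection relation $\cutrel$ induces a tree, rooted at $\vdash \oc_\e A, \wn_{\vec{\f}}\B$. The key lemma would be the stronger inductive claim: for every sequent $\vdash \oc_{\e'} A', \wn_{\vec{\f'}}\B'$ occurring in $\C'$, one has $\e \leqf \vec{\f'}$. Since $\vec{\g}$ is the concatenation of the signatures appearing on the non-cut $\wn$-formulas of the sequents in $\C'$, this claim immediately yields $\e \leqf \vec{\g}$.

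The base case is given by the side condition of the original $(\ocf)$-rule at the root. For the induction step, consider a non-root sequent $\vdash \oc_{\e'} A', \wn_{\vec{\f'}}\B'$ cut-connected to its parent via the pair $(\oc_{\e'} A', \wn_{\e'}({A'}^{\perp}))$; by IH applied to the parent, $\e \leqf \e'$, since $\e'$ is one of the signatures of the parent's $\wn$-context. A case analysis on the top rule of the current sequent then concludes. If it is an $(\ocf)$-rule, its side condition gives $\e' \leqf \vec{\f'}$ and axiom \refleqTrans{} (instantiated at $s = f$) produces $\e \leqf \vec{\f'}$. If it is an $(\ocu)$-rule, then $\vec{\f'}$ consists of a single signature $\f'$ with $\e' \lequ \f'$, and axiom \refleqfu{} yields $\e \leqf \f'$. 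If it is an $(\ocg)$-rule, the hypothesis of the lemma forces $\vec{\f'}$ to be empty, so the claim is vacuous.

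The only subtle point in the plan is precisely this last sub-case: without the empty-context restriction on $(\ocg)$-rules in $\ocProofs$, no axiom in \Cref{cutElimAxs} would allow us to derive $\e \leqf \vec{\f'}$ from $\e \leqf \e'$ and $\e' \leqg \vec{\f'}$. This is exactly why the commutation of $(\ocf)$ is split into two reduction steps, $({\text{comm}}^1_\ocf)$ and $({\text{comm}}^2_\ocf)$; the more delicate case where an $(\ocg)$ with non-empty context appears is handled separately in \Cref{ocfgreadinesscondition}, where axiom \refleqfg{} is invoked and the promotion is replaced by an $(\ocg)$ after inserting a $(\mpx{1})$.
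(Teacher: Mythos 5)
Your proposal is correct and follows essentially the same route as the paper's proof: the same tree induction over the cut-connected sequents rooted at $\vdash \oc_{\e} A, \wn_{\vec{\f}}\B$, the same strengthened invariant $\e\leqf\vec{\f'}$, and the same three-way case analysis using \refleqTrans{} for $(\ocf)$, \refleqfu{} for $(\ocu)$, and the empty-context hypothesis for $(\ocg)$. Your closing remark on why the empty-context restriction is indispensable matches the paper's division of the $(\ocf)$-commutation into two steps.
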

\begin{proof}
We prove that for each sequent $\vdash \oc_{\e'} A', \wn_{\vec{\f'}}\B'$ of $\C':=\ocgProofs \cup \{\vdash \oc_{\e} A, \wn_{\vec{\f}}\B\}$, $\e\leqf\vec{\f'}$.

The $\cutrel$-relation extended to sequent defines a tree on $\C'$. Taking $\vdash \oc_{\e} A, \wn_{\vec{\f}}\B$ as the root, the ancestor relation of this tree is a well-founded relation. We can therefore do an induction proof:
\begin{itemize}
\item The base case is given by the condition of application of $(\ocf)$ in the proof.
\item For heredity,  we have that there is a sequent $\vdash \oc_{\e''} A'', \wn_{\vec{\f''}}\B'', \wn_{\e'} ({A'}^\perp)$ of $C'$, connected on $\oc_{\e'} A'$ to our sequent.
By induction hypothesis, we have that $\e\leqf\e'$.
The rule on top of $\vdash \oc_{\e'} A', \wn_{\vec{\f'}}\B'$ is a promotion. We have three cases:
\begin{itemize}
\item If it's an $(\ocg)$-promotion, then the context is empty and the proof is easily satisfied.
\item If it's an $(\ocf)$-promotion, we can use axiom~\refleqTrans{} with the application condition of the promotion to get $\e\leqf\vec{\f'}$.
\item If it's an $(\ocu)$-promotion, we can use axiom \refleqfu{} with the application condition of the promotion to get $\e\leqf\vec{\f'}$.
\end{itemize}
\end{itemize}
We conclude by induction and use the inequalities to prove that $\e\leqf\vec{\g}$.
\end{proof}

\subsubsection{Justification for step~{\hyperref[ocfcommocgNonEmpty]{$({\text{comm}}^2_\ocf)$}}: proof of \Cref{ocfgreadinesscondition}}
\label{app:ocfgreadinesscondition}

We then have the following case where we commute an $(\ocf)$-rule, but where there is one (at least) $(\ocg)$-promotion with a non-empty context in the premisses of the multicut rule:
\begin{lem}[Justification for step~{\hyperref[ocfcommocgNonEmpty]{$({\text{comm}}^2_\ocf)$}}]
If
$$\AIC{\pi}
\noLine
\UIC{\vdash A, \B}
\AIC{\e\leqf\vec{\f}}
\RL{\ocf}
\BIC{\vdash \oc_{\e} A, \wn_{\vec{\f}}\B}
\AIC{\ocProofs}
\RL{\rmcutpar}
\BIC{\vdash \oc_{\e} A, \wn_{\vec{\g}}\A}
\DP$$
is a $\musuperLLinf(\Sig, \leqg, \leqf, \lequ)$-proof with $\ocgProofs$ containing a sequent conclusion of an $(\ocg)$-rule with at least one formula in the context, then
$$
\AIC{\pi}
\noLine
\UIC{\vdash A, \B}
%\AIC{\text{By axiom~\refleqfg}}
%\noLine
\AIC{\vec{\f}(\mpx{1})}
\doubleLine
\RL{\mpx{1}}
\BIC{\vdash A, \wn_{\vec{\f}}\B}
\AIC{\ocProofs}
\BIC{\vdash A, \wn_{\vec{\g}}\A}
%\AIC{\text{By lemma~\ref{ocfgreadinesscondition}}}
%\noLine
\AIC{\e\leqg\vec{\g}}
\RL{\ocg}
\BIC{\vdash \oc_{\e} A, \wn_{\vec{\g}}\A}
\DP$$
is also a $\musuperLLinf(\Sig, \leqg, \leqf, \lequ)$-proof.
\end{lem}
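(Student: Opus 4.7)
My plan is to mirror the strategy already used for \Cref{ocgreadinesscondition,ocfreadinesscondition}: exploit the fact that the cut-relation on $\ocProofs \cup \{\vdash \oc_{\e} A, \wn_{\vec{\f}}\B\}$ organizes these sequents into a tree, and then prove by induction along this tree the two side-conditions needed for the rewritten derivation to be a proof, namely $\vec{\f}(\mpx{1})$ (so that the displayed $(\mpx{1})$-derivations are admissible) and $\e \leqg \vec{\g}$ (so that the concluding $(\ocg)$ is applicable).

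The plan is to split the argument in two phases, following the proof sketch. Root the tree at $\vdash \oc_{\e} A, \wn_{\vec{\f}}\B$, and let $S^\star = \vdash \oc_{\e^\star} A^\star, \wn_{\vec{\f^\star}}\B^\star$ be any $(\ocg)$-conclusion whose context is non-empty (this exists by assumption). In the first phase I follow the unique path from the root to $S^\star$ and prove by induction on the path that $\e \leqf \e''$ for every signature $\e''$ met along the way. The base case is exactly the side-condition $\e \leqf \vec{\f}$ of the original $(\ocf)$. For the inductive step, the rule above the current sequent is either an $(\ocf)$-promotion (use \refleqTrans) or an $(\ocu)$-promotion (use \refleqfu), because if it were an $(\ocg)$ with non-empty context we would already have reached $S^\star$, and the empty-context $(\ocg)$ case cannot occur on the interior of the path (such a sequent has no formulas in its context to propagate the path further). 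Once the induction reaches $S^\star$, we have $\e \leqf \e^\star$ together with the application condition $\e^\star \leqg \vec{\f^\star}$ of that $(\ocg)$-rule; axiom \refleqfg{} then yields both $\e \leqg \vec{\f^\star}$ and, crucially, $\e''(\mpx{1})$ for every signature $\e''$ occurring in any $\wn_{\vec{\f'''}}$ we might later form, and in particular $\vec{\f}(\mpx{1})$, which legalizes the $(\mpx{1})^\star$ application above the new mcut.

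In the second phase I extend the argument to the whole tree in order to obtain $\e \leqg \e''$ for every signature $\e''$ appearing in the contexts of the sequents of $\ocProofs$; this is needed because $\vec{\g}$ is drawn from those contexts. I proceed by induction on the tree, with base case the inequality $\e \leqg \vec{\f^\star}$ just established, and inductive step: the rule above any sequent we have not yet visited is a promotion, and depending on whether it is $(\ocg)$, $(\ocf)$ or $(\ocu)$, I apply axiom \refleqTrans{} or \refleqgs{} (to combine $\e \leqg \e'$ with $\e' \leq_s \e''$ for the appropriate $s$) to conclude $\e \leqg \e''$. Since the signatures of $\vec{\g}$ are precisely those appearing on the free formulas $\wn_{\vec{\g}}\A$ of $\ocProofs$, this yields $\e \leqg \vec{\g}$.

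The main obstacle is in the first phase, specifically in the use of axiom \refleqfg{}, because this axiom is the only one whose consequent carries the $(\mpx{1})$-information; one must carefully check that the fresh premise it produces, $\e''(\mpx{1})$, indeed applies to all signatures in $\vec{\f}$ and not only to those on the path, since the $(\mpx{1})$ rules we add above $\pi$ act on every formula of $\wn_{\vec{\f}}\B$. I expect this to be routine once one observes that the conjunction quantified in axiom \refleqfg{} ranges over all $\e'''$ with $\e \leqf \e'''$, which by the first-phase induction covers every signature of $\vec{\f}$. Granted these two side-conditions, the rewritten derivation is immediately a well-formed $\musuperLLinfSig$-proof.
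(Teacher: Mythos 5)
Your proof follows essentially the same two-phase tree-induction as the paper's: a path induction establishing $\e\leqf$ along the branch up to the first non-empty-context $(\ocg)$-conclusion, an application of axiom \refleqfg{} to obtain both $\vec{\f}(\mpx{1})$ and $\e\leqg\e'''$ for all $\e'''$ with $\e\leqf\e'''$, and a second whole-tree induction via \refleqgs{} to conclude $\e\leqg\vec{\g}$. One small precision: you must take $S^\star$ to be the \emph{closest} such $(\ocg)$-conclusion to the root (the paper fixes this WLOG), since your claim that no interior node of the path is a non-empty-context $(\ocg)$ presupposes exactly that minimality — with an arbitrary $S^\star$ the inductive step could meet an interior $(\ocg)$ where neither \refleqTrans{} nor \refleqfu{} applies.
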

\begin{proof}
We prove that for each sequent $\vdash \oc_{\e'} A', \wn_{\vec{\f'}}\B'$ of $\ocProofs:=\ocgProofs_1\cup \ocfProofs_2\cup \ocuProofs_3 \cup \{\vdash \oc_{\e} A, \wn_{\vec{\f}}\B\}$, we have that $\e\leqg\vec{\f'}$. Moreover, we prove that $\vec{\f}(\mpx{1})$.
We prove that in two steps:
\begin{enumerate}
\item There is a sequent $\vdash \oc_{\e'} A',  \wn_{\vec{\f'}}\B'$, with $\B'$ being non-empty, which is conclusion of an $(\ocg)$-rule. Let's suppose without loss of generality, that this sequent is the closest such sequent to $S:= \vdash \oc_{\e} A, \wn_{\vec{\f}}\B$.
The $\cutrel$-relation extended to sequents defines a tree with the hypotheses of the multi-cut rule, therefore there is a path from the sequent $S$ to the sequent $S':= \vdash \oc_{\e'} A', \wn_{\vec{\f'}}\B'$, of sequents $\vdash \oc_{\e''} A'', \wn_{\vec{\f''}}\B''$. We prove by induction on this path, starting from $S$ and stopping one sequent before $S'$ that $\e\leqf\f''$:
\begin{itemize}
\item The initialisation comes from the condition of application of $\ocf$ on $S$.

\item For the heredity, we have that the sequent $\vdash \oc_{\e''} A'', \wn_{\vec{\f''}}\B''$ is cut-connected to a $\vdash \oc_{\e^{(3)}} A^{(3)}, \wn_{\vec{\f^{(3)}}}\B^{(3)}, \wn_{\e''}({A''}^\bot)$ on $\oc_{\e''} A''$, therefore $\e\leqf\e''$. We have two cases: either this sequent is the conclusion of an $(\ocu)$-rule and we apply axiom~\refleqfu, either of an $(\ocf)$-rule and we apply axiom~\refleqTrans. In each case, we have that $\e\leqf\vec{\f''}$.
\end{itemize}
We conclude by induction and get a sequent $S'':=\vdash \oc_{\e''} A'', \wn_{\vec{\f''}}\B''$ cut-connected to $S'$ on the formula $\oc_{\e'}A'$ with $\e\leqf\vec{\f''}$. From that we get that $\e\leqf\e'$. Moreover, we have that $\e'\leqg\vec{\f'}$. As $\B'$ is non-empty, there is a signature $\g'\in\vec{\f'}$ such that $\e'\leqg\g'$. We can therefore apply axiom~\refleqfg. We get that for each signatures $\e^{(3)}$ such that $\e\leqf\e^{(3)}$, $\e\leqg\e^{(3)}$ and $\e^{(3)}(\mpx{1})$, which we can apply to $\e$ and $\vec{\f}$ to get that $\e\leqg\vec{\f}$ and $\vec{\f}(\mpx{1})$.

\item Then, we prove by induction on the tree defined with the $\cutrel$-relation and rooted by $S$ that for each sequents $\vdash \oc_{\e''} A'', \wn_{\vec{\f''}}\B''$, $\e\leqg\vec{\f''}$:
\begin{itemize}
\item The initialisation is done with the first step.

\item For heredity, we have that there is a sequent \linebreak$\vdash \oc_{\e^{(3)}} A^{(3)}, \wn_{\vec{\f^{(3)}}}\B^{(3)}, \wn_{\e''} ({A''}^\bot)$ cut-connected to $\vdash \oc_{\e''} A'', \wn_{\vec{\f''}}\B''$ on $\oc_{\e''} A''$, meaning that $\e\leqg\e''$, as the sequent is the conclusion of a promotion, we have that $\e''\leq_s\f''$ for a $s\in\{g,f,u\}$, we conclude using axiom~\refleqgs.
\end{itemize}
\end{enumerate}
We conclude by induction and we use the inequalities from it to prove that $\e\leqg\vec{\g}$.

\end{proof}

\subsubsection{Justification for step~{\hyperref[ocucommOnlyocu]{$({\text{comm}}^1_\ocu)$}}: proof of \Cref{ocureadinesscondition}}
\label{app:ocureadinesscondition}

We then cover the cases where we commute an $(\ocu)$-rule with the multi-cut.
The first case is where there are only a list of $(\ocu)$-rules in the hypotheses of the multi-cut:
\begin{lem}[Justification for step~{\hyperref[ocucommOnlyocu]{$({\text{comm}}^1_\ocu)$}}]
If
$$
\AIC{\pi}
\noLine
\UIC{\vdash A, C}
\AIC{\e\lequ\f}
\RL{\ocu}
\BIC{\vdash \oc_{\e} A, \wn_{\f} C}
\AIC{\ocuProofs}
\RL{\rmcutpar}
\BIC{\vdash \oc_{\e} A, \wn_{\g} B}
\DP
$$
is a $\musuperLLinf(\Sig, \leqg, \leqf, \lequ)$-proof, then
$$
\AIC{\pi}
\noLine
\UIC{\vdash A, C}
\AIC{\C}
\RL{\rmcutpar}
\BIC{\vdash A, B}
%\AIC{\text{By lemma~\ref{ocureadinesscondition}}}
%\noLine
\AIC{\e\lequ\g}
\RL{\ocu}
\BIC{\vdash \oc_{\e} A, \wn_{\g} B}
\DP
$$
is a $\musuperLLinf(\Sig, \leqg, \leqf, \lequ)$-proof.
\end{lem}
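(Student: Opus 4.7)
The plan is to prove that the bottom-most $(\ocu)$ rule in the reduced proof is applicable, i.e., that $\e \lequ \g$ holds, by induction along the chain of $(\ocu)$-premises connected by $\cutrel$.

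First I would observe the key structural feature of this case: every sequent in $\ocuProofs \cup \{\vdash \oc_\e A, \wn_\f C\}$ is the conclusion of an $(\ocu)$-rule, so each such sequent is two-formula, of the shape $\vdash \oc_{\e'} A', \wn_{\f'} C'$. Since the $\cutrel$-relation must pair each non-$\iota$-image formula exactly once, the premises of the multicut organised by $\cutrel$ form a \emph{linear chain} rooted at $\vdash \oc_\e A, \wn_\f C$ (whose $\oc_\e A$ is sent to the conclusion by $\iota$) and terminating at a sequent whose $\wn$-formula is the $\wn_\g B$ of the multicut's conclusion. In particular, the context of the multicut's conclusion once we discard $\oc_\e A$ is the single formula $\wn_\g B$, so after removing the topmost $(\ocu)$ the resulting multicut really concludes $\vdash A, B$, making the final $(\ocu)$ syntactically applicable.

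Next I would do an induction along this chain to show that $\e \lequ \f'$ for every signature $\f'$ of the $\wn$-formula appearing at each sequent of the chain. The base case is the side-condition of the topmost $(\ocu)$, namely $\e \lequ \f$. For the induction step, suppose $\e \lequ \f'$ has been established for the $\wn_{\f'}$-formula of the current sequent; this formula is $\cutrel$-connected to an $\oc_{\f'}$ principal formula of the next sequent $\vdash \oc_{\f'} {C'}^\perp, \wn_{\f''} C''$, which is the conclusion of an $(\ocu)$-rule and so satisfies $\f' \lequ \f''$. Applying axiom \reflequs{} (with $s = u$) to $\e \lequ \f'$ and $\f' \lequ \f''$ yields $\e \lequ \f''$, closing the induction. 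Iterating along the chain until we reach the sequent whose $\wn$-formula is $\wn_\g B$ gives $\e \lequ \g$, which is precisely the side condition required to apply $(\ocu)$ at the bottom.

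I do not anticipate a significant obstacle here: because the case is restricted to $\ocuProofs$, every edge of the chain contributes a $\lequ$-inequality, and the needed transitivity is a direct instance of \reflequs{} (or of \refleqTrans{} specialised to $\lequ$). The only thing worth double-checking is that the induction is really along a linear chain and not a branching tree, but this is forced by the two-formula shape of every $(\ocu)$-conclusion together with the $\cutrel$-degree constraints. The multicut that appears in the reduced proof is the evident one obtained by relaxing $\iota$ so that the formula $A$ of $\vdash A, C$ is sent to the conclusion in place of $\oc_\e A$, all other data on $\cutrel$ and $\iota$ being inherited unchanged.
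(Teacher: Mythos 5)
Your proposal is correct and follows essentially the same route as the paper: an induction along the $\cutrel$-connected chain of $(\ocu)$-concluded sequents, with the base case given by the side-condition of the topmost $(\ocu)$ and the inductive step combining the next promotion's side-condition with transitivity (the paper invokes \refleqTrans{} at $s=u$, which coincides with your use of \reflequs{}), concluding $\e\lequ\g$. Your additional observations that the chain is linear rather than branching and that the relaxed $\iota$ makes the inner multicut well-formed are sound refinements of details the paper leaves implicit.
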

\begin{proof}
We prove that for each sequent $\vdash \oc_{\e'} A', \wn_{\f'}B'$ of $\C' := \ocuProofs \cup \{\vdash \oc_{\e} A, \wn_{\f} B\}$, we have that $\e\lequ\f'$.

The $\cutrel$-relation extended to sequent defines a tree on $\C'$. Taking $\vdash \oc_{\e} A, \wn_\f B$ as the root, the ancestor relation of this tree is a well-founded relation. We can therefore do an induction proof:
\begin{itemize}
\item The base case is given by the condition of application of $(\ocu)$ in the proof.
\item For heredity,  we have that there is a sequent \linebreak$\vdash \oc_{\e''} A'', \wn_{\f''} B'', \wn_{\e'} ({A'}^\bot)$ of $C'$, connected on $\oc_{\e'} A'$ to our sequent.
By induction hypothesis, we have that $\e\lequ\e'$.
The rule on top of $\vdash \oc_{\e'} A', \wn_{\f'} B'$ is an $(\ocu)$-promotion, we can use axiom~\refleqTrans{} and with the application condition of the promotion, we get that $\e\lequ f'$.
\end{itemize}
We conclude by induction and get that $\e\lequ\g$.
\end{proof}

\subsubsection{Justification for step~{\hyperref[ocucommocgEmpty]{$({\text{comm}}^2_\ocu)$}}: proof of \Cref{ocufreadinesscondition}}
\label{app:ocufreadinesscondition}

The second case of $(\ocu)$-commutation is where we have an $(\ocf)$-rule and where the hypotheses concluded by an $(\ocg)$-rule have empty contexts.
\begin{lem}[Justification for step~{\hyperref[ocucommocgEmpty]{$({\text{comm}}^2_\ocu)$}}]
Let
$$
\AIC{\pi}
\noLine
\UIC{\vdash A, B}
\AIC{\e\lequ\f}
\RL{\ocu}
\BIC{\vdash \oc_{\e} A, \wn_{\f} B}
\AIC{\ocProofs}
\RL{\rmcutpar}
\BIC{\vdash \oc_{\e} A, \wn_{\vec{\g}}\A}
\DP
$$
be a $\musuperLLinf(\Sig, \leqg, \leqf, \lequ)$-proof with $\C$ containing at least one proof concluded by an $(\ocf)$-promotion ; and such that for each sequent conclusion of an $(\ocg)$-promotion has empty context. We have that
$$
\AIC{\pi}
\noLine
\UIC{\vdash A, B}
\AIC{\C}
\RL\rmcutpar
\BIC{\vdash A, \A}
%\AIC{\text{By lemma~\ref{ocufreadinesscondition}}}
%\noLine
\AIC{\e\leqf\vec{\g}}
\RL{\ocf}
\BIC{\vdash \oc_{\e} A, \wn_{\vec{\g}}\A}
\DP
$$
is also a $\musuperLLinf(\Sig, \leqg, \leqf, \lequ)$-proof.
\end{lem}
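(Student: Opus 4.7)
Following the pattern of \Cref{ocfreadinesscondition,ocfgreadinesscondition}, we must show $\e \leqf \vec{\g}$ so that the $(\ocf)$-rule at the bottom of the reduced derivation is justified. The difficulty compared with those lemmas is that the root rule is now $(\ocu)$, providing only $\e \lequ \f$; we therefore proceed in two stages: first descend the initial $(\ocu)$-chain until reaching the first $(\ocf)$-premise (whose existence is guaranteed by the hypothesis) in order to transition from $\lequ$ to $\leqf$, then propagate $\leqf$ through the remaining subtree.

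View the premises of the multicut as a tree rooted at $R_0 := \vdash \oc_\e A, \wn_\f B$ via the $\cutrel$-relation extended to sequents. Since $R_0$'s context contains only $\wn_\f B$, the root has a unique child $P_1$ with $\e_{P_1} = \f$. Let $P_1, \ldots, P_{k-1}$ be the maximal initial chain of $(\ocu)$-premises, each having its non-principal formula cut with the next, and let $P_k$ be the following premise. Then $P_k$ must be concluded by an $(\ocf)$: any $(\ocg)$-premise has empty context by hypothesis and is therefore a leaf, and a terminal $(\ocu)$ with non-cut residual is likewise a leaf; in either case, the connectedness of the multicut tree combined with the hypothesis that $\ocProofs$ contains at least one $(\ocf)$ would be violated. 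Iterating axiom~\reflequs on $\e \lequ \f$ and the $(\ocu)$-side-conditions $\e_{P_i} \lequ \e_{P_{i+1}}$ yields $\e \lequ \e_{P_k}$; one further application of \reflequs with the $(\ocf)$-side-condition at $P_k$ gives $\e \leqf \vec{\f_{P_k}}$.

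From there, we propagate $\leqf$ throughout the subtree rooted at $P_k$ by an easy induction on tree-depth. For each premise $S$ whose parent $S'$ already satisfies $\e \leqf \e_{S'}$ by induction hypothesis: if $S'$ is $(\ocf)$, axiom~\refleqTrans gives $\e \leqf \e_S$; if $S'$ is $(\ocu)$, axiom~\refleqfu gives the same; the $(\ocg)$ case cannot arise since such a parent has empty context and hence no children. The analogous case analysis on the promotion at $S$ itself then yields $\e \leqf \g$ for every non-cut signature $\g$ in $S$'s context, whence $\e \leqf \vec{\g}$ as required, so that the $(\ocf)$ inference at the bottom of the reduced derivation is legal.

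The main obstacle lies in the first stage: one has to verify carefully that the shapes of the three promotion rules, together with connectedness of the multicut tree and the hypothesis that $\ocProofs$ contains at least one $(\ocf)$, force the existence of the transitional premise $P_k$. Once this is established, the $\leqf$-propagation in the subtree rooted at $P_k$ is essentially the same routine induction as in \Cref{ocfreadinesscondition}.
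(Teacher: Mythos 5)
Your proof is correct and follows essentially the same two-stage argument as the paper's: descend the initial $(\ocu)$-chain to the first $(\ocf)$-premise, using \reflequs{} to convert $\lequ$ into $\leqf$ there, then propagate $\leqf$ through the remaining subtree via \refleqTrans{} and \refleqfu{}, with $(\ocg)$-premises harmless as empty-context leaves. The only cosmetic difference is that the paper separately dispatches the degenerate case where that first $(\ocf)$ has empty context (so that $\wn_{\vec{\g}}\A$ is empty), which your argument subsumes vacuously.
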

\begin{proof}
If one $(\ocf)$-rule has empty contexts, there is only one $(\ocf)$,$\wn_{\vec{\g}}\A$ is empty and therefore $\e\leqf\vec{\g}$ is easily satisfied.
If not, we do our proof in two steps:
\begin{enumerate}
\item As always, we notice that the $\cutrel$-relation extended to sequent defines a tree on $\C'$, meaning that there is a path in this tree, from $S := \vdash \oc_\e A, \wn_\f B$ to a sequent $S' := \vdash \oc_{\e'} A', \wn_{\vec{\f'}}\B$ being the conclusion of an $\ocf$-rule and with $\B$ being non-empty. Without loss of generality, we ask for $S'$ to be the closest such sequent (with respect to the $\cutrel$-relation).
We prove by induction on this path, starting from $S$ and stopping one sequent before $S'$, that for each sequent $\vdash\oc_{\e''}A'', \wn_{\f''} B''$, that $\e\lequ\f''$:
\begin{itemize}
\item The initialization comes from the condition of application of $(\ocu)$ on $S$.

\item The heredity comes from the condition of application of $\ocu$ on the sequent $\vdash\oc_{\e''}A'', \wn_{\f''} B''$ and from lemma~\refleqTrans.
\end{itemize}
Finally, as $S'$ is linked by the cut-formula $\oc_{\e'} A'$ to one of these sequents, we get that $\e\lequ\e'$. By the condition of application of $(\ocf)$ on $S'$, we get that $\e'\leqf\vec{\f'}$, and from lemma~\reflequs, we have that $\e\leqf\vec{\f'}$.

\item We prove, for the remaining tree (which is rooted in $S'$), that for each sequents $\vdash \oc_{\e''} A'', \wn_{\vec{\f''}} \B''$, that $\e\leqf\f''$. We prove it by induction.
\begin{itemize}
\item Initialization was done at last point.

\item For heredity, if the sequent $\vdash \oc_{\e''} A'', \wn_{\vec{\f''}} \B''$ is the conclusion of an $(\ocu)$-rule, by induction hypothesis, we get that $\e\leqf\e''$, and by $(\ocu)$ application condition we get that $\e''\lequ\vec{\f''}$, we get $\e\leqf\vec{\f''}$ with axiom~\refleqfu.

\item For heredity, if the sequent $\vdash \oc_{\e''} A'', \wn_{\vec{\f''}} \B''$ is the conclusion of an $(\ocf)$-rule, by induction hypothesis, we get that $\e\leqf\e''$, and by $(\ocf)$ application condition we get that $\e''\leqf\vec{\f''}$, we get $\e\leqf\vec{\f''}$ with axiom~\refleqTrans.

\item For heredity, if the sequent $\vdash \oc_{\e''} A'', \wn_{\vec{\f''}} \B''$ is the conclusion of an $(\ocg)$-rule, then $\B''$ is empty and the proposition is easily satisfied.
\end{itemize}
\end{enumerate}
We conclude by induction and we use the inequalities from it to prove that $\e\leqf\vec{\g}$.
\end{proof}

\subsubsection{Justification for step~{\hyperref[ocucommocgNonEmptyocfFirst]{$({\text{comm}}^3_\ocu)$}}: proof of \Cref{ocugreadinessconditionfcase}}
\label{app:ocugreadinessconditionfcase}

The following lemma deals with the case where there are sequents concluded by an $(\ocg)$-rule with non-empty context and where the first rule encountered is an $\ocf$-rule.
\begin{lem}[Justification for step~{\hyperref[ocucommocgNonEmptyocfFirst]{$({\text{comm}}^3_\ocu)$}}]
Let
$$
{\scriptsize
\AIC{\pi_1}
\noLine
\UIC{\vdash A, B}
\AIC{\e\lequ\f}
\RL{\ocu}
\BIC{\vdash \oc_{\e} A, \wn_{\f} B}
\AIC{\ocuProofs_1}
\AIC{\pi_2}
\noLine
\UIC{\vdash C, \B}
\AIC{\e'\leqf\vec{\f'}}
\RL{\ocf}
\BIC{\vdash \oc_{\e'} C, \wn_{\vec{\f'}}\B}
\AIC{\ocProofs_2}
\RL{\rmcutpar}
\QIC{\vdash \oc_{\e} A, \wn_{\vec{\g}}\A}
\DP}
$$
be a $\musuperLLinf(\Sig, \leqg, \leqf, \lequ)$-proof, such that $\ocProofs_2$ contains a sequent conclusion of an $(\ocg)$ rule with non-empty context ; $\C:=\{\vdash \oc_{\e} A, \wn_{\f} B\} \cup \ocuProofs_1 \cup \{\vdash \oc_{\e'} C, \wn_{\vec{\f'}}\B\}$ are a cut-connected subset of sequents ; and $\C':=\{\vdash \oc_{\e'} C, \wn_{\vec{\f'}}\B\} \cup \ocProofs_2$ another one.
We have that
$$
{\scriptsize\AIC{\pi_1}
\noLine
\UIC{\vdash A, B}
\AIC{\mathcal{C}_1}
\AIC{\pi_2}
\noLine
\UIC{\vdash C, \B}
%\AIC{\text{By lemma~\ref{ocugreadinessconditionfcase}}}
%\noLine
\AIC{\vec{\f'}(\mpx{1})}
\doubleLine
\RL{\mpx{1}}
\BIC{\vdash C, \wn_{\vec{\f'}}\B}
\AIC{\ocProofs_2}
\RL{\rmcutpar}
\QIC{\vdash A, \wn_{\vec{\g}}\A}
%\AIC{\text{By lemma~\ref{ocugreadinessconditionfcase}}}
%\noLine
\AIC{\e\leqg\vec{\g}}
\RL{\ocg}
\BIC{\vdash \oc_{\e} A, \wn_{\vec{\g}}\A}
\DP}
$$
is also a $\musuperLLinf(\Sig, \leqg, \leqf, \lequ)$-proof.
\end{lem}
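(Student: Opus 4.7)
The plan closely mirrors the proof of Lemma~\ref{ocfgreadinesscondition}, adapted to begin with a $(\ocu)$-chain rather than a single $(\ocf)$-rule. Two facts must be established in order to validate the target derivation: the side condition $\vec{\f'}(\mpx{1})$ required for the $(\mpx{1})$ step on $\pi_2$, and $\e \leqg \vec{\g}$ required for the concluding $(\ocg)$-step on $\pi_1$. The strategy is to split the traversal of the $\cutrel$-tree of the mcut into three phases: a $\lequ$-phase through $\ocuProofs_1$, a $\leqf$-phase ending at a well-chosen $(\ocg)$-conclusion, and a $\leqg$-phase covering everything else.

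For the first phase, walk from $\vdash \oc_\e A, \wn_\f B$ through the cut-connected $(\ocu)$-chain in $\ocuProofs_1$ until reaching $\vdash \oc_{\e'} C, \wn_{\vec{\f'}}\B$. Every intermediate sequent is the conclusion of a $(\ocu)$-rule, so iterated use of axiom~\refleqTrans{} on $\lequ$ gives $\e \lequ \e'$; combining this with the side condition $\e' \leqf \vec{\f'}$ via axiom~\reflequs{} yields $\e \leqf \vec{\f'}$.

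For the second phase, inside the $\cutrel$-subtree of $\ocProofs_2$ rooted at $\vdash \oc_{\e'} C, \wn_{\vec{\f'}}\B$, select a \emph{closest} $(\ocg)$-conclusion $\vdash \oc_{\e''} A'', \wn_{\vec{\f''}}\B''$ with $\B''$ non-empty (it exists by hypothesis). On the path to it every intermediate step is $(\ocf)$, $(\ocu)$, or empty-context $(\ocg)$, so by induction using axioms~\refleqTrans{} and~\refleqfu{} one maintains the invariant $\e \leqf \sigma$ along the way, in particular $\e \leqf \e''$. Picking any $\e''' \in \vec{\f''}$ (which is non-empty and, by the $(\ocg)$ side condition, satisfies $\e'' \leqg \e'''$), an application of axiom~\refleqfg{} to $\e, \e'', \e'''$ produces $\e \leqg \e'''$ together with the stronger implication $\forall \sigma,\ \e \leqf \sigma \Rightarrow \e \leqg \sigma \land \sigma(\mpx{1})$. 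Instantiating this with each element of $\vec{\f'}$ delivers $\vec{\f'}(\mpx{1})$, legalising the $(\mpx{1})$ inference in the reduct.

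For the third phase, starting from $\vdash \oc_{\e''} A'', \wn_{\vec{\f''}}\B''$ where $\e \leqg \e'''$ is now available, propagate $\e \leqg$ over the remaining $\cutrel$-tree of the mcut as in the proof of Lemma~\ref{ocgreadinesscondition}: transitivity of $\leqg$ handles $(\ocg)$-steps and axiom~\refleqgs{} handles $(\ocf)$- and $(\ocu)$-steps, while on the already-visited portion the second conjunct of axiom~\refleqfg{} directly turns each $\e \leqf \sigma$ into $\e \leqg \sigma$. This yields $\e \leqg \sigma$ for every signature appearing in non-cut positions of the hypotheses, hence $\e \leqg \vec{\g}$. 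The main obstacle is to apply axiom~\refleqfg{} at precisely the right place: choosing the \emph{closest} non-empty-context $(\ocg)$-conclusion is what guarantees that the second-phase path stays inside $(\ocf)/(\ocu)/\text{empty-}\ocg$ steps, which is exactly the regime in which the $\leqf$ invariant is preserved.
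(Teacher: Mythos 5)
Your proof is correct and follows essentially the same three-phase argument as the paper's: a $\lequ$-walk through $\ocuProofs_1$, a $\leqf$-walk to the closest non-empty-context $(\ocg)$-conclusion enabling axiom~\refleqfg{} (which yields $\vec{\f'}(\mpx{1})$ and converts the $\leqf$-facts into $\leqg$-facts), and a final $\leqg$-propagation over the rest of the tree. The only cosmetic differences are that you reorder the first two phases so as to run the $\leqf$-induction with base signature $\e$ rather than $\e'$ (the paper transfers from $\e'$ to $\e$ only at the end via \reflequs{}), and your mention of empty-context $(\ocg)$-sequents as possible intermediate nodes is vacuous since such sequents are necessarily leaves of the $\cutrel$-tree.
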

\begin{proof}
We do our proof in three steps:
\begin{enumerate}
\item There is a sequent $S'' := \vdash \oc_{\e''} A'',  \wn_{\vec{\f''}}\B''$, with $\B''$ being non-empty, which is conclusion of an $(\ocg)$-rule.
The $\cutrel$-relation extended to sequents defines a tree on $\C'$, therefore there is a path from the sequent $S':= \vdash \oc_{\e'} C, \wn_{\vec{\f'}}\B$ to the sequent $S''$, of sequents $\vdash \oc_{\e^{(3)}} A^{(3)}, \wn_{\vec{\f^{(3)}}}\B^{(3)}$.
Let's suppose without loss of generality, that this sequent is the closest such sequent to $S'$.
We prove by induction on this path, starting from $S'$ and stopping one sequent before $S''$ that $\e'\leqf\f^{(3)}$:
\begin{itemize}
\item The initialisation comes from the condition of application of $\ocf$ on $S'$.

\item For the heredity, we have that the sequent $\vdash \oc_{\e^{(3)}} A^{(3)}, \wn_{\vec{\f^{(3)}}}\B^{(3)}$ is cut-connected to a $\vdash \oc_{\e^{(4)}} A^{(4)}, \wn_{\vec{\f^{(4)}}}\B^{(4)}, \wn_{\e^{(3)}} ({A^{(3)}}^\perp)$ on $\oc_{\e^{(3)}} A^{(3)}$, therefore $\e'\leqf\e^{(3)}$. We have two cases: either this sequent is the conclusion of an $(\ocu)$-rule and we apply axiom~\refleqfu, either of an $(\ocf)$-rule and we apply axiom~\refleqTrans. In each case, we have that $\e'\leqf\vec{\f^{(3)}}$.
\end{itemize}
We conclude by induction and get a sequent $S^{(3)}:=\vdash \oc_{\e^{(3)}} A^{(3)}, \wn_{\vec{\f^{(3)}}}\B^{(3)}$ cut-connected to $S''$ on the formula $\oc_{\e''}A''$ with $\e'\leqf\vec{\f^{(3)}}$. From that we get that $\e'\leqf\e''$. Moreover, we have that $\e''\leqg\vec{\f''}$. As $\B''$ is non-empty, there is a signature $\g''\in\vec{\f''}$ such that $\e''\leqg\g''$. We can therefore apply axiom~\refleqfg. We get that for each signatures $\e^{(4)}$ such that $\e'\leqf\e^{(4)}$, $\e'\leqg\e^{(4)}$ and $\e^{(4)}(\mpx{1})$, which we can apply to $\e'$ and $\vec{\f'}$ to get that $\e'\leqg\vec{\f'}$ and $\vec{\f'}(\mpx{1})$.

\item Again, we notice that the $\cutrel$-relation extended to sequent defines a tree on $\C$, meaning that there is a path in this tree, from $S := \vdash \oc_\e A, \wn_\f B$ to $S'$.
We prove by induction on this path, starting from $S$ and stopping one sequent before $S'$, that for each sequent $\vdash\oc_{\e^{(3)}}A^{(3)}, \wn_{\f^{(3)}} B^{(3)}$, that $\e\lequ\f^{(3)}$:
\begin{itemize}
\item The initialization comes from the condition of application of $(\ocu)$ on $S$.

\item The heredity comes from the condition of application of $\ocu$ on the sequent $\vdash\oc_{\e^{(3)}}A^{(3)}, \wn_{\f^{(3)}} B^{(3)}$ and from lemma~\refleqTrans.
\end{itemize}
Finally, as $S'$ is linked by the cut-formula $\oc_{\e'} A'$ to one of these sequents, we get that $\e\lequ\e'$.

\item Finally, we prove that for each sequents $\vdash \oc_{\e^{(3)}}A^{(3)}, \wn_{\f^{(3)}}\B^{(3)}$ of $\C'$, $\e\leqg\f^{(3)}$. We prove it by induction as $\C'$ is a tree with the $\cutrel$-relation.
\begin{itemize}
\item Initialization comes from the face that $\e\lequ\e'$, $\e'\leqg\vec{\f'}$ and axiom~\reflequs.

\item For heredity,  we have that there is a sequent $\vdash \oc_{\e^{(4)}} A^{(4)}, \wn_{\vec{\f^{(4)}}}\B^{(4)}, \wn_{\e^{(3)}} (A^{(3)})^\bot$ of $\C'$, connected on $\oc_{\e^{(3)}} A^{(3)}$ to our sequent.
By induction hypothesis, we have that $\e\leqg\e^{(3)}$.
The rule on top of $\vdash \oc_{\e^{(3)}} A^{(3)}, \wn_{\vec{\f^{(3)}}}\B^{(3)}$ is a promotion. We have two cases:
\begin{itemize}
\item If it's a $(\ocg)$-promotion, we can use axiom~\refleqTrans{} and with the application condition of the promotion, we get that $\e\leqg\vec{\f^{(3)}}$.
\item If it's an $(\ocf)$-promotion or an $(\ocu)$-promotion, we can use axiom~\refleqgs and with the application condition of the promotion, we get that $\e\leqg\vec{\f^{(3)}}$.
\end{itemize}
\end{itemize}
We conclude by induction.
\end{enumerate}
We got two important properties:
\begin{enumerate}
\item For each sequent $\vdash \oc_{\e^{(3)}} A^{(3)}, \wn_{\vec{\f^{(3)}}}\B^{(3)}$ of $\C'$, we have that $\e\leqg\vec{\f^{(3)}}$.
\item We have $\vec{\f'}(\mpx{1})$. %TODO : trouver une solution pour la notation de cloture de dérivabilité qui marche avec \vec
\end{enumerate}
We conclude using inequalities of the first property to find that $\e\leqg\vec{\g}$. And we use the second property for the $(\mpx{1})$-rule.
\end{proof}

\subsubsection{Justification for step~{\hyperref[ocucommocgNonEmptyocgFirst]{$({\text{comm}}^4_\ocu)$}}: proof of \Cref{ocugreadinessconditiongcase}}
\label{app:ocugreadinessconditiongcase}

The last lemma of promotion commutation is about the case where we commute an $(\ocu)$-promotion but when first meeting an $(\ocg)$-promotion.
\begin{lem}[Justification for step~{\hyperref[ocucommocgNonEmptyocgFirst]{$({\text{comm}}^4_\ocu)$}}]
Let
$$
\hspace{-1cm}
{\scriptsize\AIC{\pi_1}
\noLine
\UIC{\vdash A, B}
\AIC{\e\lequ\f}
\RL{\ocu}
\BIC{\vdash \oc_{\e} A, \wn_{\f} B}
\AIC{\ocuProofs_1}
\AIC{\pi_2}
\noLine
\UIC{\vdash C, \wn_{\vec{\f'}}\B}
\AIC{\e'\leqg\vec{\f'}}
\RL{\ocg}
\BIC{\vdash \oc_{\e'} C, \wn_{\vec{\f'}}\B}
\AIC{\ocProofs_2}
\RL{\rmcutpar}
\QIC{\vdash \oc_{\e} A, \wn_{\vec{\g}}\A}
\DP}
$$
be a $\musuperLLinf(\Sig, \leqg, \leqf, \lequ)$-proof such that $\C:=\{\vdash \oc_{\e} A, \wn_{\f} B\} \cup \ocuProofs_1 \cup \{\vdash \oc_{\e'} C, \wn_{\vec{\f'}}\B\}$ are a cut-connected subset of sequents ; and $\C':=\{\vdash \oc_{\e'} C, \wn_{\vec{\f'}}\B\} \cup \ocProofs_2$ another one. Then,
$$
\AIC{\pi_1}
\noLine
\UIC{\vdash A, B}
\AIC{\C_1}
\AIC{\pi_2}
\noLine
\UIC{\vdash C, \wn_{\vec{\f'}}\B}
\AIC{\ocProofs_2}
\RL{\rmcutpar}
\QIC{\vdash A, \wn_{\vec{\g}}\A}
%\AIC{\text{By lemma~\ref{ocugreadinessconditiongcase}}}
%\noLine
\AIC{\e\leqg\vec{\g}}
\RL{\ocg}
\BIC{\vdash \oc_{\e} A, \wn_{\vec{\g}}\A}
\DP
$$
is also a $\musuperLLinf(\Sig, \leqg, \leqf, \lequ)$-proof.
\end{lem}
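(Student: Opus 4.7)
The plan is to verify the single non-trivial side-condition needed to justify the transformed derivation, namely that the final $(\ocg)$ application is legal, i.e.\ $\e \leqg \vec{\g}$; all other inferences in the reduced proof are direct rearrangements. As in the preceding commutation lemmas, the argument proceeds by induction on the tree structure of the multicut premises under the cut-connection relation $\cutrel$, rooted at $S := \vdash \oc_{\e} A, \wn_{\f} B$, but here the reasoning naturally splits in two phases hinged on the bridging sequent $S' := \vdash \oc_{\e'} C, \wn_{\vec{\f'}}\B$.

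First, I would walk along the chain $\C$ from $S$ to $S'$: all intermediate sequents come from $\ocuProofs_1$ and are conclusions of $(\ocu)$-rules, so writing their signatures along the path as $\e_1, \dots, \e_k = \e'$, the initial and successive $(\ocu)$ side-conditions give $\e \lequ \e_1$ and $\e_i \lequ \e_{i+1}$. A straightforward induction using axiom~\refleqTrans{} then yields $\e \lequ \e'$.

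Next comes the glue step. The $(\ocg)$-rule at $S'$ gives $\e' \leqg \vec{\f'}$; combining this with $\e \lequ \e'$ via axiom~\reflequs{} promotes the chain to $\e \leqg \vec{\f'}$. This provides the base case for a second induction, this time on the subtree of $\C'$ rooted at $S'$ and traversing $\ocProofs_2$. I would show that $\e \leqg \vec{\f''}$ holds at every sequent $\vdash \oc_{\e''} A'', \wn_{\vec{\f''}}\B''$ reached, distinguishing cases according to the last rule of the sequent: for $(\ocg)$ one invokes axiom~\refleqTrans, while for $(\ocf)$ and $(\ocu)$ one invokes axiom~\refleqgs, exactly in the style of the proof of \Cref{ocgreadinesscondition}. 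Since every signature appearing in $\vec{\g}$ is attached to a formula which $\iota$ maps into a sequent of $\C \cup \ocProofs_2$, the desired bound $\e \leqg \vec{\g}$ follows.

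The main subtlety is the bridging step: one must use axiom~\reflequs{} at precisely the transition from the $(\ocu)$-chain to the $(\ocg)$-promotion at $S'$ in order to upgrade the accumulated $\lequ$-chain into a $\leqg$-relation; without this axiom, no pathway from the $\lequ$ data collected in phase one to the $\leqg$ data required by the new $(\ocg)$ exists. Once this glue is in place, the two inductions run in complete parallel with those already carried out for $({\text{comm}}_\ocg)$ and $({\text{comm}}^1_\ocu)$ and cover the two complementary parts $\C$ and $\C'$ of the cut-connectedness tree, completing the verification of the side-condition.
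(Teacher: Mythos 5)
Your proposal is correct and follows essentially the same two-phase argument as the paper's proof: a first induction along the $(\ocu)$-chain of $\C$ using \refleqTrans{} to obtain $\e\lequ\e'$, the bridging application of \reflequs{} at $S'$ to get $\e\leqg\vec{\f'}$, and a second induction over the tree $\C'$ with the same case split ($\ocg$ via \refleqTrans{}, $\ocf$/$\ocu$ via \refleqgs{}). The only remark is that in the final step the surviving context $\wn_{\vec{\g}}\A$ in fact originates solely from $\C'$ (the intermediate $(\ocu)$-sequents contribute no conclusion formulas), which is exactly why the $\leqg$-bounds from the second induction suffice.
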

\begin{proof}
We do our proof in two steps:
\begin{enumerate}
\item First, we prove that for each sequents $\vdash \oc_{\e''} A, \wn_{\f''}B$ of $\C\setminus\{\vdash\oc_{\e'}C, \wn_{\vec{\f'}}\B\}$ that $\e\lequ\f''$. We prove it by induction on this list starting with the sequent $S := \vdash\oc_\e A, \wn_{\vec{\f}} B$ (it is a list with the $\cutrel$-relation):
\begin{itemize}
\item Initialization comes from the condition of application of $(\ocu)$ on $S$.
\item Heredity comes from the condition of application of $(\ocu)$ on the concerned sequent, from induction hypothesis and from axiom~\refleqTrans.
\end{itemize}
We conclude by induction and deduce from the obtained property that $\e\lequ\e'$.

\item We then prove that for each sequents $\vdash \oc_{\e''} A, \wn_{\f''}\B$ of $\C'$, $\e\leqg\vec{\f''}$. We prove it by induction on $\C'$ as the $\cutrel$-relation defines a tree on it, for which we take $S' := \oc_{\e'}C, \wn_{\vec{\f'}}\B$ as the root.
\begin{itemize}
\item The initialization comes from $\e\lequ\e'$ that we showed for first step, from $\e'\leqg\vec{\f'}$ which is the condition of application of $(\ocg)$ on $S'$ and from axiom~\reflequs.

\item For heredity,  we have that there is a sequent\linebreak $\vdash \oc_{\e^{(3)}} A^{(3)}, \wn_{\vec{\f^{(3)}}}\B^{(3)}, \wn_{\e''} ({A''}^\perp)$ of $\C'$, connected on $\oc_{\e''} A''$ to our sequent.
By induction hypothesis, we have that $\e\leqg\e''$.
The rule on top of $\vdash \oc_{\e''} A'', \wn_{\vec{\f''}}\B''$ is a promotion. We have two cases:
\begin{itemize}
\item If it's a $(\ocg)$-promotion, we can use axiom~\refleqTrans{} and with the application condition of the promotion, we get that $\e\leqg\vec{\f''}$.
\item If it's an $(\ocf)$-promotion or an $(\ocu)$-promotion, we can use axiom~\refleqgs{} and with the application condition of the promotion, we get that $\e\leqg\vec{\f''}$.
\end{itemize}
\end{itemize}
We conclude by induction
\end{enumerate}
From the inequalities that we get from induction, we can easily prove that $\e\leqg\vec{\g}$.
\end{proof}

\subsubsection{Justification for step~{\hyperref[contrPrincip]{$(\text{principal}_{\contr{}})$}}: proof of \Cref{contrPrincipLemma}}
\label{app:contrPrincipLemma}

Then we have the principal cases, starting with the contraction:

\begin{lem}[Justification for step~{\hyperref[contrPrincip]{$(\text{principal}_{\contr{}})$}}]
If
$$
\AIC{\tikzmark{app:contrPrincipLemma12}\C_\B}
\AIC{\pi}
\noLine
\UIC{\vdash \overbrace{\wn_\e A, \dots, \wn_\e A}^i, \B}
\AIC{\e(\contr{i})}
\RL{\contr{i}}
\BIC{\vdash \wn_\e A, \tikzmark{app:contrPrincipLemma12p}\B}
\AIC{\tikzmark{app:contrPrincipLemma22}\ocProofs_{\wn_\e A}}
\RL{\rmcutpar}
\TIC{\vdash \tikzmark{app:contrPrincipLemma11}\A, \wn_{\vec{\g}}\A'\tikzmark{app:contrPrincipLemma21}}
\DP
$$
\begin{tikzpicture}[overlay,remember picture,-,line cap=round,line width=0.1cm]
   \draw[rounded corners, smooth=2,green, opacity=.4] ([xshift=-2mm] pic cs:app:contrPrincipLemma21) to ([xshift=2mm, yshift=2mm] pic cs:app:contrPrincipLemma22);
   \draw[rounded corners, smooth=2,red, opacity=.4] (pic cs:app:contrPrincipLemma11) to ([xshift=2mm, yshift=2mm] pic cs:app:contrPrincipLemma12);
   \draw[rounded corners, smooth=2,red, opacity=.4] (pic cs:app:contrPrincipLemma11) to ([xshift=2mm, yshift=2mm] pic cs:app:contrPrincipLemma12p);
\end{tikzpicture}
is a $\musuperLLinf(\Sig, \leqg, \leqf, \lequ)$-proof, then
$$
{\scriptsize\AIC{\tikzmark{app:contrPrincipLemmapostred12}\C_\B}
\AIC{\pi}
\noLine
\UIC{\vdash \overbrace{\wn_\e A, \dots, \wn_\e A}^i, \B\tikzmark{app:contrPrincipLemmapostred12p}}
\AIC{\tikzmark{app:contrPrincipLemmapostred22}\overbrace{\ocProofs_{\wn_\e A} \quad\dots\quad \ocProofs_{\wn_\e A}}^i\tikzmark{app:contrPrincipLemmapostred32}}
\RL{\mathsf{mcut(\iota', \perp\!\!\!\perp')}}
\TIC{\tikzmark{app:contrPrincipLemmapostred11}\A, \tikzmark{app:contrPrincipLemmapostred21}\wn_{\vec{\g}}\A', \dots, \wn_{\vec{\g}}\A'\tikzmark{app:contrPrincipLemmapostred31}}
%\AIC{\text{By lemma~\ref{app:contrPrincipLemma}}}
%\noLine
\AIC{\bar{\vec{\g}}(\contr{i})}
\doubleLine
\RL{\contr{i}^{\bar{\vec{\g}}}}
\BIC{\vdash \A, \wn_{\vec{\g}}\A'}
\DP}
$$
\begin{tikzpicture}[overlay,remember picture,-,line cap=round,line width=0.1cm]
   \draw[rounded corners, smooth=2,green, opacity=.4] ([xshift=0mm] pic cs:app:contrPrincipLemmapostred21) to ([xshift=2mm, yshift=2mm] pic cs:app:contrPrincipLemmapostred22);
   \draw[rounded corners, smooth=2,green, opacity=.4] ([xshift=-2mm] pic cs:app:contrPrincipLemmapostred31) to ([xshift=-2mm, yshift=2mm] pic cs:app:contrPrincipLemmapostred32);
   \draw[rounded corners, smooth=2,red, opacity=.4] (pic cs:app:contrPrincipLemmapostred11) to ([xshift=2mm, yshift=2mm] pic cs:app:contrPrincipLemmapostred12);
   \draw[rounded corners, smooth=2,red, opacity=.4] (pic cs:app:contrPrincipLemmapostred11) to ([xshift=-1mm, yshift=1mm] pic cs:app:contrPrincipLemmapostred12p);
\end{tikzpicture}
is also a $\musuperLLinf(\Sig, \leqg, \leqf, \lequ)$-proof.
\end{lem}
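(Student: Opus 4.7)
The plan proceeds in two main parts: first, construct the new multicut rule and check it is well-formed; second, justify the applicability of the $\contr{i}^{\bar{\vec{\g}}}$ derivations that follow the multicut.

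For the first part, I would define $\iota'$ and $\cutrel'$ explicitly. The new multicut has premise context $\C_{\B} \cup \{\pi\} \cup \overbrace{\ocProofs_{\wn_\e A} \sqcup \dots \sqcup \ocProofs_{\wn_\e A}}^{i}$, where the $i$ copies of $\ocProofs_{\wn_\e A}$ are taken to be disjoint copies. The $i$ occurrences of $\wn_\e A$ in the premise of the $(\contr{i})$ rule are each cut-linked (via $\cutrel'$) to the $\wn_\e A$-principal formula of a distinct copy of $\ocProofs_{\wn_\e A}$. Inside each copy, $\cutrel'$ reproduces $\cutrel$ restricted to that copy. The formulas of $\A$ (coming originally from $\C_\B \cup \{\B\}$) keep their $\iota$-image, while each formula of $\wn_{\vec{\g}}\A'$ now produces $i$ output formulas in the new conclusion $\vdash \A, \wn_{\vec{\g}}\A', \dots, \wn_{\vec{\g}}\A'$, one per copy, all mapped by $\iota'$ to the corresponding formula in each copy. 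Injectivity of $\iota'$, totality/symmetry of $\cutrel'$, type-matching, acyclicity and connectedness follow directly from these properties holding for $\iota$ and $\cutrel$ in the source multicut together with the disjointness of the copies and the fact that the $i$ fresh cuts through the contraction premise connect the copies to the proof $\pi$.

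For the second part, the subtle point is justifying the iterated contraction $\contr{i}^{\bar{\vec{\g}}}$: by \Cref{prop:closureDerivabilityDerivationequivalence}, for each signature $\g_k$ occurring in $\vec{\g}$ it suffices to show $\bar{\g_k}(\contr{i})$. To this end, consider any $\wn_{\g_k} A''_k$ occurring in the conclusion $\wn_{\vec{\g}}\A'$: it sits in the context of some promotion sequent of $\ocProofs_{\wn_\e A}$, say $\vdash \oc_{\e_n}F_n, \wn_{\vec{\eta_n}}\Gamma_n$ with $\g_k \in \vec{\eta_n}$, whose conclusion is linked by a $\cutrel$-path to the sequent $\vdash \wn_\e A, \B$. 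Walking this path back from the promotion introducing $\wn_{\g_k} A''_k$ to $\wn_\e A$, we encounter a chain of signatures $\e = \e_0, \e_1, \dots, \e_n$, with each pair $\e_{j} \leq_{s_{j}} \e_{j+1}$ for some $s_j \in \{g,f,u\}$ (determined by the promotion rule used at step $j$), and finally $\e_n \leq_{s_n} \g_k$. Using the composition axioms \refleqTrans, \refleqgs, \refleqfu, \refleqfg, \reflequs, a straightforward induction on the length of this chain collapses it to a single relation $\e \leq_{s^*} \g_k$ for some $s^* \in \{g,f,u\}$. Since we already know $\e(\contr{i})$ from the applicability of $(\contr{i})$ in the original proof, axiom \refcontrAx{} then yields $\bar{\g_k}(\contr{i})$, as required.

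The main obstacle is precisely the case analysis in collapsing the chain, and especially the interplay between $\leqf$ and $\leqg$: passing through an $(\ocg)$-promotion after an $(\ocf)$-promotion requires axiom \refleqfg{} rather than the more uniform \refleqTrans{} or \refleqgs{}, and one must check that the composed relation remains one of the three primitive relations so that \refcontrAx{} can be applied. Once $\bar{\g_k}(\contr{i})$ is established for every $\g_k \in \vec{\g}$, the macro-derivations $\contr{i}^{\bar{\vec{\g}}}$ exist by \Cref{prop:closureDerivabilityDerivationequivalence}, and their composition yields the target sequent $\vdash \A, \wn_{\vec{\g}}\A'$, completing the proof.
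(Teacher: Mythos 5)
Your proposal is correct and follows essentially the same route as the paper's proof: propagate a relation $\e\leq_s\cdot$ ($s\in\{g,f,u\}$) along the $\cutrel$-tree of $\ocProofs_{\wn_\e A}$ using the composition axioms \refleqTrans, \refleqgs, \refleqfu, \refleqfg, \reflequs, then invoke \refcontrAx{} together with \Cref{prop:closureDerivabilityDerivationequivalence} to justify the $\contr{i}^{\bar{\vec{\g}}}$ macro-derivation. The only differences are cosmetic (you induct on each root-to-$\g_k$ path rather than once over the whole tree) plus your explicit verification of $\iota'$ and $\cutrel'$, which the paper leaves implicit.
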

\begin{proof}
We prove for each sequent $\vdash \oc_{\e''} A'', \wn_{\vec{\f''}}\B''\in\C_{\wn_\e A}^\oc$, we have that $\e\leq_s \vec{\f''}$ (for one $s\in\{g, f, u\}$. As the relation $\cutrel$ defines a tree on $\C':\ocProofs_{\wn_\e A}$ (rooted on the sequent $S := \vdash\oc_\e A, \wn_{\vec{\f'}}\B'$ which is the sequent connected to $\vdash\wn_\e A, \B$ on $\wn_\e A$), we do a proof by induction on this tree:
\begin{itemize}
\item Initialization comes from the application condition of the promotion.
\item For heredity, we get from induction hypothesis that $\e\leq_s\e''$ for a $s\in\{g, f, u\}$, from the condition of application of the promotion, we get that $\e''\leq_{s'}\vec{\f''}$ (again for a $s'\in\{g, f, u\}$), depending on the cases, from axioms \refleqTrans, \refleqgs, \refleqfu, \refleqfg, \reflequs, we get that $\e\leq_{s''}\vec{\f''}$ for a $s''\in\{g, f, u\}$.
\end{itemize}
We conclude by induction, we get using the obtained property, the fact that $\e(\contr{i})$ and from axiom~\refcontrAx, that for each sequent $\vdash \oc_{\e''} A'', \wn_{\vec{\f''}}\B''\in\C_{\wn_\e A}^\oc$, $\bar{\vec{\f''}}(\contr{i})$. We use property~\ref{prop:closureDerivabilityDerivationequivalence} to get that $\bar{\vec{\g}}(\contr{i})$ is true, making the derivation valid in the proof of the statement.
\end{proof}

\subsubsection{Justification for step~{\hyperref[mpxcomm]{$(\text{comm}_{\mpx{}})$}}: proof of \Cref{mpxPrincipLemma}}
    \label{app:mpxPrincipLemma}

Before justifying the case for the multiplexing principal reduction, we recall \Cref{defi:mpxPrincipompx} together 
with a graphical representation to make it more understandable:

\begin{defi}[$\opmpx_{S^\oc}(\ocProofs)$ contexts]\label{app:mpxPrincipompx}
Let $\pi$ be some  $\musuperLLinf(\Sig, \leqg, \leqf, \lequ)$-proof concluded in a $\mcut(\iota,\cutrel)$ inference,
$\ocProofs$ a context of the multicut which is a tree with respect to a cut-relation $\cutrel$ and
$S^\oc$ be a sequent of $\ocProofs$ that we shall consider as the root of the tree. 

We define a $\musuperLLinf(\Sig, \leqg, \leqf, \lequ)$-context $\opmpx_{S^\oc}(\ocProofs)$ altogether with two sets of sequents, $\mathcal{S}^{\mpx{}}_{\C^\oc, S^\oc}$ and $\mathcal{S}^{\contr{}}_{\C^\oc, S^\oc}$, by induction on the tree ordering on $\ocProofs$:
%Let $S^\oc$ be a sequent of a $\musuperLLinf(\Sig, \leqg, \leqf, \lequ)$-context $\ocProofs$, such that $\ocProofs$ is a tree with respect to a cut-relation $\cutrel$. We define a $\musuperLLinf(\Sig, \leqg, \leqf, \lequ)$-context $\opmpx_{S^\oc}(\ocProofs)$ by induction on this relation taking $S^\oc$ as the root. We take advantage of this induction definition to define two sets of sequent $\mathcal{S}^{\mpx{}}_{\C^\oc, S^\oc}$ and $\mathcal{S}^{\contr{}}_{\C^\oc, S^\oc}$. 

Let $\ocProofs_1, \dots, \ocProofs_n$ be the sons of $S^\oc$, such that $\ocProofs = (S^\oc, (\ocProofs_1, \dots, \ocProofs_n))$, we have two cases:
\begin{itemize}
\item $S^\oc = S^\ocg$, then we define $\opmpx_{S^\oc}(\ocProofs):= (S, (\ocProofs_1, \dots, \ocProofs_n))$ ; $ \mathcal{S}^{\mpx{}}_{\ocProofs, S^\oc}:=\emptyset$ ; $ \mathcal{S}^{\contr{}}_{\ocProofs, S^\oc} := \ocProofs$.

\item $S^\oc = S^\ocf$ or $S^\oc = S^\ocu$, then let the root of $\ocProofs_i$ be $S_i^\oc$, we define $\opmpx_{S^\oc}(\ocProofs)$ as\\ $(S, \opmpx_{S_1^\oc}(\ocProofs_1), \dots, \opmpx_{S_n^\oc}(\ocProofs_n))$, $\mathcal{S}^{\mpx{}}_{\ocProofs, S^\oc}$ as $\{S^\oc\}\cup \bigcup \mathcal{S}^{\mpx{}}_{\ocProofs_i, S_i^\oc}$ and $\mathcal{S}^{\contr{}}_{\ocProofs, S^\oc}$ as $\bigcup \mathcal{S}^{\contr{}}_{\ocProofs_i, S_i^\oc}$.
\end{itemize}

Below is a graphical picture of the above definition in the second case ($S^\oc = S^\ocf$ or $S^\oc = S^\ocu$) when all its sons (for the tree relation induced by  $\cutrel$) are of the form $S_i^\ocg$ (which illustrates both cases of the definition in one picture) :
\tikzset{every picture/.style={line width=0.75pt}} %set default line width to 0.75pt        

\begin{tikzpicture}[x=0.75pt,y=0.75pt,yscale=-1,xscale=1]
%uncomment if require: \path (0,300); %set diagram left start at 0, and has height of 300

%Shape: Triangle [id:dp5181083026437273] 
\draw   (81.67,137.23) -- (324.28,202.18) -- (325.79,78.22) -- cycle ;
%Straight Lines [id:da9379902328563117] 
\draw    (325.79,78.22) -- (371,78) ;
%Shape: Triangle [id:dp5327197592446059] 
\draw   (404,78) -- (544.49,114.67) -- (545.34,44.77) -- cycle ;
%Straight Lines [id:da634482173461576] 
\draw    (325.79,201.22) -- (371,201) ;
%Shape: Triangle [id:dp2604951721426285] 
\draw   (404,201) -- (544.49,237.67) -- (545.34,167.77) -- cycle ;
%Rounded Rect [id:dp8328988390473778] 
\draw  [color={rgb, 255:red, 243; green, 34; blue, 34 }  ,draw opacity=1 ] (34,91.8) .. controls (34,74.24) and (48.24,60) .. (65.8,60) -- (311.2,60) .. controls (328.76,60) and (343,74.24) .. (343,91.8) -- (343,187.2) .. controls (343,204.76) and (328.76,219) .. (311.2,219) -- (65.8,219) .. controls (48.24,219) and (34,204.76) .. (34,187.2) -- cycle ;
%Rounded Rect [id:dp22799910561861036] 
\draw  [color={rgb, 255:red, 76; green, 74; blue, 226 }  ,draw opacity=1 ] (598,69.4) .. controls (598,43.77) and (577.23,23) .. (551.6,23) -- (401.4,23) .. controls (375.77,23) and (355,43.77) .. (355,69.4) -- (355,208.6) .. controls (355,234.23) and (375.77,255) .. (401.4,255) -- (551.6,255) .. controls (577.23,255) and (598,234.23) .. (598,208.6) -- cycle ;

% Text Node
\draw (49,126) node [anchor=north west][inner sep=0.75pt]   [align=left] {$\displaystyle S^{\oc}$};
% Text Node
\draw (247,130) node [anchor=north west][inner sep=0.75pt]   [align=left] {$\C^{\ocf/\ocu}$};
% Text Node
\draw (375,68) node [anchor=north west][inner sep=0.75pt]   [align=left] {$\displaystyle S_{1}^{\ocg}$};
% Text Node
\draw (375,191) node [anchor=north west][inner sep=0.75pt]   [align=left] {$\displaystyle S_{n}^{\ocg}$};
% Text Node
\draw (494,72) node [anchor=north west][inner sep=0.75pt]   [align=left] {$\displaystyle \ocProofs_1 $};
% Text Node
\draw (495,199) node [anchor=north west][inner sep=0.75pt]   [align=left] {$\ocProofs_n$};
% Text Node
\draw (519,133) node [anchor=north west][inner sep=0.75pt]   [align=left] {$\displaystyle \vdots $};
% Text Node
\draw (377,133) node [anchor=north west][inner sep=0.75pt]   [align=left] {$\displaystyle \vdots $};
% Text Node
\draw (213,222) node [anchor=north west][inner sep=0.75pt]  [color={rgb, 255:red, 238; green, 52; blue, 52 }  ,opacity=1 ] [align=left] {$\mathcal{S}^{\mpx{}}_{\mathcal{C}^\oc, S^\oc}$};
% Text Node
\draw (484,261) node [anchor=north west][inner sep=0.75pt]   [align=left] {\textcolor[rgb]{0.04,0.13,0.94}{{$\mathcal{S}^{\contr{}}_{\mathcal{C}^\oc, S^\oc}$}}};

\end{tikzpicture}

\tikzset{every picture/.style={line width=0.75pt}} %set default line width to 0.75pt        

\begin{tikzpicture}[x=0.75pt,y=0.75pt,yscale=-1,xscale=1]
%uncomment if require: \path (0,300); %set diagram left start at 0, and has height of 300

%Shape: Triangle [id:dp5181083026437273] 
\draw   (81.67,137.23) -- (324.28,202.18) -- (325.79,78.22) -- cycle ;
%Straight Lines [id:da9379902328563117] 
\draw    (325.79,78.22) -- (371,78) ;
%Shape: Triangle [id:dp5327197592446059] 
\draw   (404,78) -- (544.49,114.67) -- (545.34,44.77) -- cycle ;
%Straight Lines [id:da634482173461576] 
\draw    (325.79,201.22) -- (371,201) ;
%Shape: Triangle [id:dp2604951721426285] 
\draw   (404,201) -- (544.49,237.67) -- (545.34,167.77) -- cycle ;

% Text Node
\draw (49,126) node [anchor=north west][inner sep=0.75pt]   [align=left] {$\displaystyle S$};
% Text Node
\draw (247,130) node [anchor=north west][inner sep=0.75pt]   [align=left] {$\C$};
% Text Node
\draw (375,68) node [anchor=north west][inner sep=0.75pt]   [align=left] {$\displaystyle S_{1}$};
% Text Node
\draw (375,191) node [anchor=north west][inner sep=0.75pt]   [align=left] {$\displaystyle S_{n}$};
% Text Node
\draw (494,72) node [anchor=north west][inner sep=0.75pt]   [align=left] {$\displaystyle \ocProofs_1 $};
% Text Node
\draw (495,199) node [anchor=north west][inner sep=0.75pt]   [align=left] {$\ocProofs_n$};
% Text Node
\draw (519,133) node [anchor=north west][inner sep=0.75pt]   [align=left] {$\displaystyle \vdots $};
% Text Node
\draw (377,133) node [anchor=north west][inner sep=0.75pt]   [align=left] {$\displaystyle \vdots $};

\end{tikzpicture}

\end{defi}

Finally, we have the multiplexing principal case:
\begin{lem}[Justification for step~{\hyperref[mpxcomm]{$(\text{comm}_{\mpx{}})$}}]
Let
$$
\AIC{\C_\B}
\AIC{\vdash \overbrace{A, \dots, A}^i, \B}
\AIC{\e(\mpx{i})}
\RL{\mpx{i}}
\BIC{\vdash \wn_\e A, \B}
\AIC{\ocProofs_{\wn_\e A}}
\RL{\rmcutpar}
\TIC{\vdash \A, \wn_{\g'}\A', \wn_{\g''}\A''}
\DP$$
be a $\musuperLLinf(\Sig, \leqg, \leqf, \lequ)$-proof with $\A$ being sent on $\C_{\B}\cup{\B}$ by $\iota$ ; $\wn_{\vec{\g''}} \A''$ being sent on sequent of $\mathcal{S}^{\mpx{}}_{\ocProofs, S^\oc}$ ; and $\wn_{\vec{\g'}}\A'$ being sent on $\mathcal{S}^{\contr{}}_{\ocProofs, S^\oc}$, where $S^\oc :=\oc_\e A, \wn_{\vec{\f'}}\B'$ is the sequent cut-connected to $\vdash \wn_\e A, \B$ on the formula $\wn_\e A$.
We have that
$$
\hspace{-0.7cm}
{\scriptsize\AIC{\C_\B}
\AIC{\vdash \overbrace{A, \dots, A}^i, \B}
\AIC{\overbrace{\opmpx_{S^\oc}(\ocProofs_{\wn_\e A}) \quad \dots \quad \opmpx_{S^\oc}(\ocProofs_{\wn_\e A})}^i}
\RL{\mathsf{mcut(\iota', \perp\!\!\!\perp')}}
\TIC{\vdash\A,\overbrace{\A', \dots, \A'}^i, \overbrace{\wn_{\vec{\g''}}\A'', \dots, \wn_{\vec{\g''}}\A''}^i}
%\AIC{\text{By lemma~\ref{mpxPrincipLemma}}}
%\noLine
\AIC{\hspace{-1.5cm}\bar{\vec{\g'}}(\mpx{i})}
\doubleLine
\RL{\mpx{i}^{\bar{\vec{\g'}}}}
\BIC{\vdash \A, \wn_{\vec{\g'}}\A', \overbrace{\wn_{\vec{\g''}}\A'', \dots, \wn_{\vec{\g''}}\A''}^i}
%\AIC{\text{By lemma~\ref{mpxPrincipLemma}}}
%\noLine
\AIC{\hspace{-1cm}\bar{\vec{\g''}}(\contr{i})}
\doubleLine
\RL{\contr{i}^{\bar{\vec{\g''}}}}
\BIC{\vdash \A, \wn_{\vec{\g'}}\A', \wn_{\vec{\g''}}\A''}
\DP}
$$
is also a $\musuperLLinf(\Sig, \leqg, \leqf, \lequ)$-proof.
\end{lem}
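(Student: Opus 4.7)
The key side-conditions to verify in the reduced proof are $\bar{\vec{\g'}}(\mpx{i})$ and $\bar{\vec{\g''}}(\contr{i})$, licensing the $\mpx{i}^{\bar{\vec{\g'}}}$ and $\contr{i}^{\bar{\vec{\g''}}}$ derivations that close it; by \Cref{prop:closureDerivabilityDerivationequivalence}, both reduce to signature-by-signature checks. My strategy is to propagate the hypothesis $\e(\mpx{i})$ coming from the multiplexing rule above $\vdash \wn_\e A, \B$ along the $\cutrel$-tree of $\ocProofs_{\wn_\e A}$ rooted at $S^\oc$, distinguishing the $\ocf$/$\ocu$ portion from the $\ocg$ subtrees exactly as in the construction of $\opmpx_{S^\oc}$ in \Cref{defi:mpxPrincipompx}. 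The needed multicut side-conditions on $(\iota',\cutrel')$ then follow from those of the original $(\iota,\cutrel)$ together with a routine check of the effect of the $i$-fold duplication.

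The argument proceeds by induction on the $\cutrel$-tree. Along any chain of $\ocf$/$\ocu$ promotions starting from $S^\oc$ (the sequents collected in $\mathcal{S}^{\mpx{}}$), chaining the promotion side-conditions through axioms \refleqTrans{} and \refleqfu{} yields $\e \leqf \e'$ or $\e \lequ \e'$ at each such sequent, and hence the same relation between $\e$ and each of the sequent's context signatures $\f$. Applying \reffumpxAx{} together with $\e(\mpx{i})$ then produces $\bar{\f}(\mpx{i})$ at every such $\f$. When the induction crosses into an $\ocg$ subtree (the sequents contributing to $\mathcal{S}^{\contr{}}$), the chain-accumulated $\leqf$/$\lequ$-relation combines with the $\ocg$ side-condition $\e'\leqg\vec{\f'}$ through axiom \refleqfg{} to yield $\e \leqg \f$ for every context signature $\f$ of the $\ocg$-sequent; further descent beyond this boundary preserves the invariant $\e \leqg \cdot$ using \refleqgs{} and \refleqTrans{}, and axiom \refgmpxAx{} then delivers $\bar{\f}(\contr{i})$ uniformly across the whole $\mathcal{S}^{\contr{}}$ portion.

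Since the context occurrences in the sequents of $\ocProofs_{\wn_\e A}$ that $\iota$ sends to the conclusion of the multicut are exactly the formulas listed in $\wn_{\vec{\g'}}\A'$ and $\wn_{\vec{\g''}}\A''$, the two families of derivability facts collected above cover precisely the signatures appearing in $\vec{\g'}$ and $\vec{\g''}$, with the correct closure property in each case. The main obstacle is the $\ocg$-boundary step: this is the only place where axiom \refleqfg{} is invoked, and it is precisely the axiom whose shape was designed to bridge an $\leqf$/$\lequ$-chain coming from the $\ocf$/$\ocu$ path with the $\leqg$-regime governing the $\ocg$ subtrees — and thereby to switch the propagation from an $\mpx{i}$-closure property to a $\contr{i}$-closure one at the right spot. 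Apart from this bridging argument, the induction structurally mirrors the one already carried out for the contraction-principal case (\Cref{contrPrincipLemma}), with $\e(\contr{i})$ replaced throughout by $\e(\mpx{i})$ and axiom \refcontrAx{} replaced by the pair \refgmpxAx{}/\reffumpxAx{}.
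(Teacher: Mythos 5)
Your proposal follows essentially the same route as the paper's proof: an induction on the $\cutrel$-tree rooted at $S^\oc$ maintaining the invariant ``$\e\leqf\cdot$ or $\e\lequ\cdot$'' on the $\ocf/\ocu$ portion, switching to ``$\e\leqg\cdot$'' upon entering an $\ocg$-subtree, and then converting these inequalities into the $\bar{\f}(\mpx{i})$ and $\bar{\f}(\contr{i})$ facts via \reffumpx{} and \refgmpx{} respectively, before concluding with \Cref{prop:closureDerivabilityDerivationequivalence}. The two-regime decomposition, the bridging role of \refleqfg{}, and the analogy with \Cref{contrPrincipLemma} are all exactly as in the paper.

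One imprecision should be fixed: your axiom bookkeeping does not cover the branches of the case analysis in which the accumulated relation is $\lequ$ rather than $\leqf$. Inside $\mathcal{S}^{\mpx{}}$, if the invariant at a sequent is $\e\lequ\e''$ and the next promotion is an $\ocf$ with side condition $\e''\leqf\vec{\f''}$, neither \refleqTrans{} nor \refleqfu{} applies, and the required axiom is \reflequs{}. Likewise at the $\ocg$-boundary, \refleqfg{} only bridges from an accumulated $\e\leqf\e''$; when the accumulated relation is $\e\lequ\e''$ the correct step is \reflequs{} instantiated with $s=g$. The paper's proof invokes \reflequs{} at both of these points; with these two additions your induction closes in all cases.
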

\begin{proof}
We prove that for each sequent $\vdash \oc_{\e''} A'', \wn_{\vec{\f''}} \B''$ of $\mathcal{S}^{\contr{}}_{\ocProofs, S^\oc}$, $\e\leqg\vec{\f''}$ and that for each sequent $\vdash \oc_{\e''} A'', \wn_{\vec{\f''}} \B''$ of $\mathcal{S}^{\mpx{}}_{\ocProofs, S^\oc}$, $\e\leqf\vec{\f''}$ or $\e\lequ\vec{\f''}$. The $\cutrel$-relation defines a tree rooted on $\S^\oc$, we do a proof by induction:
\begin{itemize}
\item If $\vdash \oc_{\e''} A'', \wn_{\vec{\f''}}\B''$ is in $\mathcal{S}^{\mpx{}}_{\ocProofs, S^\oc}$, then its antecedent is also in $\mathcal{S}^{\mpx{}}_{\ocProofs, S^\oc}$, by induction, we have the $\e\leqf \e''$ or $\e\lequ\e''$. Moreover, the promotion applied on $\vdash \oc_{\e''} A'', \wn_{\vec{f''}}\B''$ is an $\ocf$ or an $\ocu$ promotion. We therefore have either by axiom \reflequs{}, either by axiom \refleqTrans, either by axiom \refleqfu, that $\e\leqf \vec{\f''}$ or $\e\lequ\vec{\f''}$.
\item If $\vdash \oc_{\e''} A'', \wn_{\vec{\f''}}\B''$ is in $\mathcal{S}^{\contr{}}_{\ocProofs, S^\oc}$, and that its antecedent is in $\mathcal{S}^{\mpx{}}_{\ocProofs, S^\oc}$, then by induction, we have that $\e\leqf\e''$ or $\e\leqf\e''$. Moreover, the promotion applied on $\vdash \oc_{\e''} A'', \wn_{\vec{f''}}\B''$ is an $\ocg$ promotion. Therefore, we have by axiom \reflequs{} or \refleqfg{} that $\e\leqg\vec{\f''}$.
\item If $\vdash \oc_{\e''} A'', \wn_{\vec{\f''}}\B''$ is in $\mathcal{S}^{\contr{}}_{\ocProofs, S^\oc}$, and that its antecedent is in $\mathcal{S}^{\contr{}}_{\ocProofs, S^\oc}$, then by induction, we have that $\e\leqg\e''$. Therefore, by axiom \refleqgs{}, $\e\leqg\vec{\f''}$.
\end{itemize}

Finally we get that for all sequents $\vdash \oc_{\e''} A, \wn_{\vec{\f''}}\B''$ of $\mathcal{S}^{\mpx{}}_{\ocProofs, S^{'\oc}}$, $\bar{\vec{\f''}}(\mpx{i})$ are true, as $\e\leq_s\vec{\f''}$, $\mpx{i}(\e)$ ($s\in\{f, u\}$) and by lemma~\reffumpxAx.
We also get that for all sequents $\vdash \oc_{\e''} A, \wn_{\vec{\f''}}\B$ of $\mathcal{S}^{\contr{}}_{\ocProofs, S^{'\oc}}$, $\bar{\vec{\f''}}(\contr{i})$ are true as $\e\leqg\vec{\f''}$, $\contr{i}(\e)$ and by lemma~\refgmpxAx.

From the condition on the proof of the statement and from property~\ref{prop:closureDerivabilityDerivationequivalence}, we get that $\bar{\vec{g'}}(\mpx{i})$ and $\bar{\vec{g''}}(\contr{i})$ are true and so that the right proof is correct.
\end{proof}

\subsection{Rule permutations}
\label{app:rulePerm}
\begin{defi}[Permutation of rules]
We define \emph{one-step rule permutation} on (pre-)proofs of \muLLinf{} with rules of figure~\ref{fig:oneStepRulePermutation}.

Given a \muLLinf{} (pre-)proof $\pi$ and $p\in\{l,r,i\}^*$ a path in the proof, we define $\perm(\pi, p)$ by induction on $p$:
\begin{itemize}
    \item the proof $\perm(\pi, \epsilon)$ is the proof obtained by applying the one-step rule permutation at the root of $\pi$ if it is possible, either it is not defined;
    \item we define $\perm(q(\pi'), i\cdot p') := r(\perm(\pi', q'))$ if $\perm(\pi', q')$ is defined, otherwise it is not defined;
    \item we define$\perm(q(\pi_l, \pi_r), l\cdot q') := q(\perm(\pi_l, q'), \pi_r)$ if $\perm(\pi_l, q')$ is defined, otherwise it is not defined;
    \item we define$\perm(q(\pi_l, \pi_r), r\cdot q') := q(\perm(\pi_l, q'), \pi_r)$ if $\perm(\pi_l, q')$ is defined, otherwise it is not defined;
    \item for each other cases, $\perm(\pi, p)$ is not defined.
\end{itemize}

A \emph{sequence of rule permutation} starting from a \muLLinf{} pre-proof $\pi$ is a (possibly empty) sequence $(p_i)_{i\in \lambda}$ ($\lambda\in \omega$), where $p_i\in\{l, r, i\}$ such that if we set $\pi_0:=\pi$, then the sequence $(\pi_i)_{i\in 1+\lambda}$ defined by induction by $\pi_{i+1}:=\perm(\pi_i, p_i)$ are all defined. We say that the sequence $(\pi_i)_{i\in 1+\lambda}$ is the \emph{sequence of proofs associated to the sequence of rule permutation}.
We say that the sequence \emph{ends on} $\pi_{\lambda}$ if $\lambda$ is finite, we also write it $\perm(\pi, (p_i)_{i\in\lambda})$.
\end{defi}

\begin{figure*}
    \centering
$
\hspace*{-2cm}
\AIC{\pi}
\noLine
\UIC{\vdash \wn A, \wn A, \wn B, \wn B, \A}
\RL{\wncontr}
\UIC{\vdash \wn A, \wn B, \wn B, \A}
\RL\wncontr
\UIC{\vdash \wn A, \wn B, \A}
\DP\rightsquigarrow
\AIC{\pi}
\noLine
\UIC{\vdash \wn A, \wn A, \wn B, \wn B, \A}
\RL{\wncontr}
\UIC{\vdash \wn A, \wn A, \wn B, \A}
\RL\wncontr
\UIC{\vdash \wn A, \wn B, \A}
\DP\qquad
\AIC{\pi}
\noLine
\UIC{\vdash \wn A, \wn A, B, \A}
\RL{\wncontr}
\UIC{\vdash \wn A, B, \A}
\RL\wnde
\UIC{\vdash \wn A, \wn B, \A}
\DP\leftrightsquigarrow
\AIC{\pi}
\noLine
\UIC{\vdash \wn A, \wn A, B, \A}
\RL{\wnde}
\UIC{\vdash \wn A, \wn A, \wn B, \A}
\RL\wncontr
\UIC{\vdash \wn A, \wn B, \A}
\DP
$\\[2ex]
$
\AIC{\pi}
\noLine
\UIC{\vdash \wn A, \wn A, \A}
\RL{\wncontr}
\UIC{\vdash \wn A, \A}
\RL\wnwk
\UIC{\vdash \wn A, \wn B, \A}
\DP\leftrightsquigarrow
\AIC{\pi}
\noLine
\UIC{\vdash \wn A, \wn A, \A}
\RL{\wnwk}
\UIC{\vdash \wn A, \wn A, \wn B, \A}
\RL\wncontr
\UIC{\vdash \wn A, \wn B, \A}
\DP\qquad
\AIC{\pi}
\noLine
\UIC{\vdash \A}
\RL{\wnwk}
\UIC{\vdash \wn A, \A}
\RL\wnwk
\UIC{\vdash \wn A, \wn B, \A}
\DP\rightsquigarrow
\AIC{\pi}
\noLine
\UIC{\vdash \A}
\RL{\wnwk}
\UIC{\vdash \wn B, \A}
\RL\wnwk
\UIC{\vdash \wn A, \wn B, \A}
\DP
$\\[2ex]
$
\AIC{\pi}
\noLine
\UIC{\vdash A, \A}
\RL{\wnwk}
\UIC{\vdash A, \wn B, \A}
\RL\wnde
\UIC{\vdash \wn A, \wn B, \A}
\DP\leftrightsquigarrow
\AIC{\pi}
\noLine
\UIC{\vdash A, \A}
\RL{\wnde}
\UIC{\vdash \wn A, \A}
\RL\wnwk
\UIC{\vdash \wn A, \wn B, \A}
\DP\qquad\qquad
\AIC{\vdash A, B, \A}
\RL{\wnde}
\UIC{\vdash A, \wn B,\A}
\RL{\wnde}
\UIC{\vdash \wn A, \wn B, \A}
\DP\rightsquigarrow
\AIC{\vdash A, B, \A}
\RL{\wnde}
\UIC{\vdash \wn A, B,\A}
\RL{\wnde}
\UIC{\vdash \wn A, \wn B, \A}
\DP
$

    \caption{One-step rule permutation}
    \label{fig:oneStepRulePermutation}
\end{figure*}

\begin{lem}[Robustness of the proof structure to rule permutation]\label{lem:proofStructureRobustnessPerm}
    One-step rule permutation does not modify the structure of the proof.
\end{lem}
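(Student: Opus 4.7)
The plan is to proceed by a straightforward case analysis over the one-step rule permutations defined in Figure~\ref{fig:oneStepRulePermutation}. First, I would fix precisely what is meant here by ``structure'' of a (pre-)proof: since the statement is later used to track infinite branches and threads across permutations (cf.\ the proof sketch of \Cref{prop:validFinitePerm}), the relevant notion is the underlying tree shape of the derivation, together with its conclusion sequent and its (co)inductively generated immediate subproof positions.

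Next, I would inspect each of the six schemes depicted in Figure~\ref{fig:oneStepRulePermutation}. In every scheme the redex consists of two consecutive unary rules sitting above a common sub(pre-)proof~$\pi$, producing a linear segment of two edges in the proof tree. Before and after the permutation: (i)~the conclusion sequent is identical; (ii)~the hypothesis of the upper rule (hence the root sequent of~$\pi$) is identical; (iii)~the two nodes above $\pi$ remain two unary nodes. The only changes are the two rule labels (which are swapped) and the label of the single intermediate sequent lying strictly between $\pi$ and the conclusion. In particular the arity of every inference, the list of immediate subproofs at every position, and the overall shape of the tree are preserved.

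From this, the global statement follows by transporting the local observation through the definition of \(\perm(\pi, p)\) by induction on the path~\(p\): at each step $\perm$ descends along a position of $\pi$ and leaves the subproof at every position off that path literally unchanged, and when it finally applies a one-step permutation at position~$p$, the local argument above shows that the tree shape at $p$ is preserved. Hence the entire derivation tree, minus the labels of the two permuted rule occurrences and of their intermediate sequent, is untouched.

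I do not anticipate any genuine obstacle here: the lemma is really a bookkeeping statement whose proof is an immediate inspection of the figure. The only subtlety worth spelling out explicitly would be to make the informal notion of ``structure'' precise enough for the subsequent \Cref{prop:validFinitePerm} to use it to reindex threads, which is why I would phrase the verification as ``the tree shape together with all subproofs at positions other than the permuted pair is preserved, and at the permuted pair only rule labels and the single intermediate sequent change.''
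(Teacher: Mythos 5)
Your proof is correct and follows essentially the same route as the paper, whose entire argument is the single observation that the permutations are defined between unary rules and hence cannot alter the tree shape. Your additional bookkeeping (making "structure" precise and transporting the local observation through the inductive definition of $\perm$) only spells out what the paper leaves implicit.
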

\begin{proof}
This lemma is immediate as the substitutions are defined between unary rule.
\end{proof}

\begin{defi}[Finiteness of permutation of rules]
Let $\pi$ be a \muLLinf{} (pre-)proof, and let $(p_i)_{i\in \lambda}$ be a sequence of rule permutation starting from $\pi$ and let $(\pi_i)_{i\in 1+\lambda}$ be the sequence of proofs associated to it, let $q\in\{l,r,i\}^*$ be a path to the conclusion sequent of a rule $(r)$ of $\pi$, we define the \emph{sequence of residuals} $(q_i)_{i\in 1+\lambda}$ of $(r)$ in $\pi_i$ to be a sequence of path defined by induction on $i$:
\begin{itemize}
    \item if $i=0$, $q_0=q$;
    \item if $p_i=q_i$, then $q_{i+1}:=q_i \cdot i$.
    \item if $q_i=p_i\cdot i$ then $q_{i+1}:=p_i$.
    \item else $q_{i+1}:=q_i$.
\end{itemize}
We say that a rule $(r)$ in $\pi$ is \emph{finitely permuted} if its sequence of residuals is ultimately constant.
We say that $(p_i)_{i\in\lambda}$ is a \emph{rule permutation sequence with finite permutation of rules} if each rule of $\pi_0$ is finitely permuted.
\end{defi}

\begin{prop}[Convergence of permutation with finite permutation of rules]
\label{prop:convergenceFinitePerm}
Let $\pi$ be a \muLLinf{} pre-proof and let $(p_i)_{i\in \omega}$ be a permutation sequence with finite permutation of rules starting from $\pi$, then the sequence is converging.
\end{prop}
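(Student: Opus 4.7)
The plan is to build the limit pre-proof $\pi_\infty$ position-by-position from the ultimate positions of the rules, and then verify convergence depth-by-depth. By Lemma~\ref{lem:proofStructureRobustnessPerm}, every one-step rule permutation preserves the branching structure of the tree, so all $\pi_i$ share the same underlying tree shape; what may vary at each position is only the rule-label and the intermediate sequents. For each rule $r$ of $\pi_0$ the hypothesis of finite permutation gives a position $q_r$ and an index $N_r$ such that the residual of $r$ equals $q_r$ for every $i \geq N_r$.

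I would then define $\pi_\infty$ by putting, at each position $q$, the unique rule $r$ with $q_r = q$. Well-definedness in one direction (injectivity) is immediate: for $i \geq \max(N_r,N_{r'})$ both $r$ and $r'$ sit at their final positions simultaneously in $\pi_i$, so $q_r \neq q_{r'}$ whenever $r \neq r'$. The other direction (every tree position is the ultimate position of some rule) reduces to showing that the sequence $(f_i(q))_i$ of rules occupying $q$ in $\pi_i$ is eventually constant: the rule at $q$ changes only when a permutation occurs at $q$ itself or at its (unique, since rules permuted are unary) parent $q'$, and if infinitely many such permutations occurred then either some rule would be permuted infinitely often (contradicting the hypothesis), or infinitely many distinct rules would pass through $q$, which would in turn force an analogous pathology on an adjacent position and, propagating, would require a rule with unbounded permutation count along the trajectory.

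Once $\pi_\infty$ is defined and each position is known to be occupied by a stable rule, convergence is routine. Fix a depth $d$: the set of tree positions of depth $\leq d$ is finite (the tree is at most binary), hence only finitely many rules $r$ satisfy $|q_r| \leq d$. Setting $N := \max\{N_r \mid |q_r| \leq d\}$, every such rule has reached its final position by step $N$, and since at each time each tree position hosts exactly one rule, for every $i \geq N$ the pre-proofs $\pi_i$ and $\pi_\infty$ must carry the same rule at every position of depth $\leq d$, which also forces agreement of the intermediate sequents at depth $\leq d$. This establishes pointwise convergence of $(\pi_i)_i$ to $\pi_\infty$. The main obstacle is the surjectivity step: ruling out the combinatorial scenario where infinitely many distinct rules successively visit a single position, without any of them being permuted infinitely often themselves. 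The argument here should exploit both the tree structure preserved by Lemma~\ref{lem:proofStructureRobustnessPerm} and the fact that a rule reaching $q$ from its initial position $q_0^r$ in $\pi_0$ must have been permuted at least $d(q_0^r, q)$ times, so that "recruiting" infinitely many distinct rules into any fixed position is incompatible with the hypothesis.
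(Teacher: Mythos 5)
Your overall strategy matches the paper's: reduce convergence to the eventual stabilization of the occupant of each position, use Lemma~\ref{lem:proofStructureRobustnessPerm} to note that the occupant of $q$ can only change through a permutation at $q$ or at its parent, and argue by contradiction that infinitely many such permutations would force either a single rule permuted infinitely often or infinitely many distinct rules passing through $q$. The construction of the limit from the ultimate positions, the injectivity observation, and the depth-by-depth verification at the end are all fine.

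The gap is exactly where you flag it, and the argument you propose to close it does not work. From ``infinitely many distinct rules visit the fixed position $q$'' you want to contradict ``each rule is permuted finitely often''. But your observation that a rule $r$ starting at $q_0^r$ must be permuted at least $d(q_0^r,q)$ times to reach $q$ only shows that the permutation counts of the visitors are unbounded \emph{across} rules (their initial positions are pairwise distinct, so their distances to $q$ tend to infinity); this is perfectly compatible with every individual rule being permuted only finitely often, which is all the hypothesis forbids. Likewise, propagating the pathology to adjacent positions merely shows that every position along some path is visited by infinitely many distinct rules, which again does not by itself exhibit a single infinitely-permuted rule. The missing ingredient is a finiteness argument localized \emph{below} $q$: there are only $|q|+1$ positions between the root and $q$, hence only finitely many rules start on that segment and only finitely many can have their final position there, so the infinitely many visitors must repeatedly displace the finitely many rules occupying it; a pigeonhole then produces one rule whose residual sequence is not ultimately constant, the desired contradiction. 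This is precisely the step the paper's (terse) proof supplies when it asserts that one of the rules below the position in $\pi_0$ has infinitely many residuals at or below it. Your sketch needs that ingredient, or an equivalent ``barrier'' argument: once some rule settles permanently at a prefix of $q$, no further rule can cross that position without permuting the settled rule again.
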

\begin{proof}
    Let $(\pi_i)_{i\in\omega}$ be the sequence of proofs associated to the sequence.
Let's suppose for the sake of contradiction that the sequence is not converging. It implies, using lemma~\ref{lem:proofStructureRobustnessPerm}, that there is an infinite sequence of strictly increasing indexes $\varphi(i)$ such that the $(r_{\varphi(i)})$ are all at the same position. 
This implies by finiteness of permutation of one rules, than there are an infinite number of rules of $\pi_0$ which have $(r_{\varphi(i)})$ in their residuals, implying that one of the rules below the position of $(r_{\varphi(i)})$ in $\pi_0$ has infinitely many residuals being equal to $(r_i)$ or below $(r_i)$ contradicting the finitess of permutation of one rule hypothesis.
\end{proof}

\begin{prop}[Preservation of validity for permutations with finite permutation of rules]
\label{app:validFinitePerm}
Let $\pi$ be a \muLLinf{} pre-proof and let $(p_i)_{i\in \omega}$ be a permutation sequence with finite permutation of rules starting from $\pi$ and converging (thanks to lemma~\ref{prop:convergenceFinitePerm} to a pre-proof $\pi'$. Then $\pi$ is valid if and only if $\pi'$ is.
\end{prop}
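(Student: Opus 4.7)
The plan is to leverage the finite, local nature of each one-step permutation together with the structural stability provided by \Cref{lem:proofStructureRobustnessPerm} in order to transport (in)valid threads from $\pi$ to $\pi'$ and back. Since each $\pi_i$ has the same underlying tree shape as $\pi$, positions in the proof tree (paths in $\{l,r,i\}^\ast$) are in canonical bijection across the whole sequence, and so are infinite branches. The idea is then to follow a given infinite branch and show that a valid thread on it in $\pi$ corresponds to a valid thread on the same branch in $\pi'$.

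The first step is a careful inspection of the one-step permutations in \Cref{fig:oneStepRulePermutation}. Each such step acts on two adjacent unary exponential rules whose principal formulas are of the form $\wn A$ and $\wn B$ (possibly identical): it only modifies the sequent between these two rules and reroutes the ancestor relation locally. In every case one checks, by direct inspection, that: (i) any thread going through one of the two rules on a formula of the surrounding context is rerouted on the same formula, since contexts are preserved; (ii) any thread going through both rules on their principal formulas $\wn A$ and $\wn B$ (e.g.\ two successive contractions, a contraction and a dereliction, etc.) is rerouted onto a thread with the same sequence of formula occurrences above the two rules and below them, possibly with one extra permutation of two adjacent formula occurrences at the junction. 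In particular, the multiset of formulas occurring infinitely often along any infinite branch, and their principality pattern from some point on, are unchanged. Hence validity of a thread (existence of a minimum recurring $\nu$-formula that is infinitely often principal) is preserved by a one-step permutation.

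The second step lifts this to the full permutation sequence $(p_i)_{i \in \omega}$. Fix an infinite branch $b$ in $\pi'$ (equivalently in $\pi$ via the canonical bijection) and a thread $t$ on $b$ in $\pi$. For each node $n$ on $b$, the rule at $n$ is permuted only finitely often by the finite-permutation assumption, so there is an index $N_n$ beyond which its residual no longer moves; in particular, for every finite prefix $b_{\leq k}$ of $b$, there is a common stabilization index $N_k$ after which the rules along this prefix (and their ancestor relations) are those of $\pi'$. Applying the one-step transport of threads finitely many times on each prefix yields an increasing family of partial threads in $\pi_{N_k}$ that all agree with $t$ up to rerouting, and whose prefixes eventually coincide with a thread $t'$ on $b$ in $\pi'$. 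Since $t'$ and $t$ agree on their tails up to the local reroutings of (i)–(ii), they have the same minimum recurring formula and the same principality pattern; validity of $t$ therefore entails validity of $t'$. The converse direction is symmetric: one-step permutations are reversible in the sense that the same case analysis applies to $\pi_{i+1} \rightsquigarrow \pi_i$.

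The main obstacle is the second step: bookkeeping the local reroutings of a thread along an infinite branch through infinitely many one-step permutations requires being careful that the cumulative effect of these reroutings, at any fixed depth, is finite. This is exactly what the finite-permutation-of-rules hypothesis guarantees, via the stabilization indices $N_k$. Once this is in place, the case analysis for one-step permutations (which involves only weakening, dereliction and contraction, none of which introduces new $\mu$- or $\nu$-formulas) makes the preservation of the minimal recurring formula immediate.
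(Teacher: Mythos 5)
Your proposal is correct and follows essentially the same route as the paper's (much terser) proof: invoke \Cref{lem:proofStructureRobustnessPerm} to identify the tree structures and branches, observe that each one-step permutation only locally reroutes threads without changing their recurring formulas, and use the finite-permutation hypothesis to stabilize principality positions in the limit so that validity transfers in both directions. Your write-up simply makes explicit the thread-transport and stabilization bookkeeping that the paper summarizes as "the threads of $\pi$ and $\pi'$ are the same if we remove indexes where the thread is not active."
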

\begin{proof}
From lemma~\ref{lem:proofStructureRobustnessPerm}, we have that the structure of the trees of the sequence stays the same, therefore the structure of $\pi$ is the same than the structure of $\pi'$, moreover the threads of $\pi$ and $\pi'$ are the same if we remove indexes where the thread is not active.
Therefore validity is easily preserved both ways.
\end{proof}

\subsection{Details on \Cref{redSeqTranslationFiniteness}}
\label{app:detailsredSeqTranslationFiniteness}

\begin{lem}
    \label{app:redSeqTranslationFiniteness}
    Let $\pi_0$ be a $\musuperLLinf(\Sig, \leqg, \leqf, \lequ)$ proof and let $\pi_0\rightsquigarrow \pi_1$ be a $\musuperLLinf(\Sig, \leqg, \leqf, \lequ)$ step of reduction. There exist a finite number of \muLLinf{} proofs $\theta_0, \dots, \theta_n$ such that $\theta_0\redseq\dots\redseq\theta_n,\quad
    \pi_0^\circ = \theta_0$ and $\theta_n = \pi_1^{\circ}$ up to a finite number of rule permutations, done only on rules that just permuted down the $(\mcut)$.
    % Let $\pi_0$ be a $\musuperLLinf(\Sig, \leqg, \leqf, \lequ)$ proof and let $\pi_0\rightsquigarrow \Pi_1$ be a $\musuperLLinf(\Sig, \leqg, \leqf, \lequ)$ step of reduction. For each $\pi_1\in\Pi_1$ and for each $\pi'_0\in\pi^\circ_0$, there exist a finite number of \muLLinf{} proofs $\theta_0, \dots, \theta_n$ such that $\theta_0\redseq\dots\redseq\theta_n,\quad
    % \pi'_0 = \theta_0$ and $\theta_n \in \pi_1^{\circ}$ up to a finite number of rule permutations, done only on rules that just permuted down the $(\mcut)$.%Viens du fait que le cas clef multiplexing/ocf ne produit pas des multiplexing en dessous de la coupure, mais des contractions sur les formules du contexte suivi de dérelictions sur ces formules.
    \end{lem}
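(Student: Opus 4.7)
The plan is to proceed by case analysis on the $(\mcut)$-reduction step $\pi_0 \rightsquigarrow \pi_1$ defined in \Cref{def:musuperLLcutred}. For each of the cases listed in \Cref{fig:musuperllexpcommcutstep}, \Cref{fig:musuperllexpcommcutstep2} and \Cref{fig:musuperllexpprincipcutstep}, I will exhibit an explicit finite sequence $\theta_0 \redseq \dots \redseq \theta_n$ of $\muLLinf$ $(\mcut)$-reductions connecting $\pi_0^\circ$ to a pre-proof that is equal to $\pi_1^\circ$ up to finitely many permutations located strictly below the $(\mcut)$.

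First I would handle the trivial cases: all reductions on \MALL{} connectives and on fixed-point rules are translated identically by $(-)^\circ$ (since $(-)^\circ$ commutes with non-exponential connectives and fixes fixed-point rules), so the same one-step $(\mcut)$-reduction of \muLLinf{} in \Cref{fig:muMALLonestepComm} and \Cref{fig:muMALLonestepPrincip} realises $\pi_1^\circ$ on the nose and no permutation is needed. The commutation steps $(\text{comm}_{\mpx{}})$ and $(\text{comm}_{\contr{}})$ translate, by \Cref{fig:superlltollRules}, to a bundle of $(\wnde)$, $(\wnwk)$, $(\wncontr)$ rules that each commute with the $(\mcut)$ using the exponential commutative rules of \Cref{fig:muLLonestep}; this produces a finite sequence $\theta_0 \redseq \dots \redseq \theta_n$ ending exactly on $\pi_1^\circ$.

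Next I would treat the seven promotion commutation steps ($(\text{comm}_\ocg)$, $(\text{comm}^1_\ocf)$, $(\text{comm}^2_\ocf)$, $(\text{comm}^{1\text{--}4}_\ocu)$). Each translated promotion in \Cref{fig:superlltollRules} is either a plain $(\ocprom)$ or a $(\wnde)$ followed by $(\ocprom)$ (depending on whether the promotion was $(\ocg)$, $(\ocf)$ or $(\ocu)$). So the simulation strategy is: first commute, via a single $(\mcut)$-commutation for $(\ocprom)$, the outermost translated promotion; then, when the reduction target contains an extra $(\mpx{1})$ inserted (cases $(\text{comm}^2_\ocf)$ and $(\text{comm}^3_\ocu)$), this $(\mpx{1})$ translates by the first row of \Cref{fig:superlltollRules} to a single $(\wnde)$ sitting exactly where the translation demands. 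For the $(\ocf)$ and $(\ocu)$ cases, the extra $(\wnde)$ sitting just below the $(\ocprom)$ in $\pi_0^\circ$ may need to be commuted across the $(\mcut)$ as well; this is another one-step $(\wnde)$-commutation. Collecting these steps yields a finite sequence ending on $\pi_1^\circ$ up to possibly permuting the introduced $(\wnde)$'s with neighbouring $(\wnde)$/$(\wncontr)$/$(\wnwk)$ rules that appear just below the multicut (these are exactly the bottom rows of \Cref{fig:oneStepRulePermutation}), which is allowed by the statement.

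The main obstacle is the two principal cases $(\text{principal}_{\contr{}})$ and $(\text{principal}_{\mpx{}})$: the derivations $\contr{i}^{\bar{\vec{\g}}}$ and $\mpx{i}^{\bar{\vec{\g'}}}$ that appear in the target of a $\musuperLLinfSig$-reduction do \emph{not} translate to the exact same ordering of $(\wnwk)$, $(\wnde)$, $(\wncontr)$ rules that a sequence of principal cut-reductions on $(\wncontr)$ and $(\wnde)$/$(\wnwk)$ would produce in \muLLinf. For $(\text{principal}_{\contr{}})$, I would simulate by firing $i{-}1$ principal $(\wncontr)$-reductions in \muLLinf{} (one per $\wn A^\circ$ in the translation of $\contr{i}$), duplicating the relevant proof contexts exactly as $\opmpx_{S^\oc}$ prescribes; this gives a proof whose bottom is a cascade of $(\wncontr)$ rules that, by \Cref{prop:closureDerivabilityDerivationequivalence} applied pointwise in each translated group, matches $(\contr{i}^{\bar{\vec{\g}}})^\circ$ up to a finite rearrangement of the $(\wncontr)$'s using the top-left permutation rule of \Cref{fig:oneStepRulePermutation}. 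For $(\text{principal}_{\mpx{}})$ the same strategy applies but we additionally fire principal $(\wnde)$-/($\wnwk$)-reductions for the $(\ocf)$- and $(\ocu)$-translated premises, following the tree structure of $\opmpx_{S^\oc}(\ocProofs_{\wn_\e A})$; the key observation is that this tree exactly encodes the subset of translated $(\wnde)$'s that get consumed by $i$ copies of the dereliction group, so the $\mcut$-reductions in \muLLinf{} precisely reproduce $\pi_1^\circ$ up to finitely many permutations of $(\wnde)$ with $(\wncontr)$ and $(\wnwk)$, all located strictly below the multicut. Checking that every introduced permutation is of one of the admissible shapes of \Cref{fig:oneStepRulePermutation} and occurs finitely often completes the proof.
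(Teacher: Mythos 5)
Your overall strategy --- a case analysis on the reduction steps of \Cref{def:musuperLLcutred}, simulating each by a finite $\muLLinf$ reduction sequence and absorbing the mismatch into finitely many permutations below the multicut --- is the same as the paper's, and your treatment of the non-exponential cases, of $(\text{comm}_{\mpx{}})$/$(\text{comm}_{\contr{}})$, of $(\text{comm}_\ocg)$ and $(\text{comm}^2_\ocf)$, and (in outline) of the two principal cases is essentially right.

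There is, however, a genuine gap in your handling of the promotion commutations $({\text{comm}}^1_\ocf)$, $({\text{comm}}^2_\ocu)$ and $({\text{comm}}^1_\ocu)$ (and the same issue resurfaces inside $(\text{principal}_{\mpx{}})$). In these steps the target proof applies the multicut to $\C$, i.e.\ to the premises of \emph{all} the promotions of $\ocProofs$ with their promotions stripped off; so in $\muLLinf$ it is not enough to commute the outermost $(\ocprom)$ and then commute a few $(\wnde)$'s. The derelictions produced by translating an $(\ocf)$ or $(\ocu)$ act on formulas that are in general \emph{cut-formulas}, not conclusion formulas, so they cannot be commuted below the multicut at all: each must be consumed by an $(\ocprom/\wnde)$ principal step against the promotion closing the cut-connected premise, and that step strips off a promotion and exposes a fresh batch of derelictions from that premise's own translation, which must in turn be dispatched. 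You therefore need to organise a cascading interleaving of commutations and principal steps over the whole $\cutrel$-connected tree of premises, prove that it terminates, and prove that it never deadlocks (i.e.\ that at every stage either some bottom rule acts on a conclusion formula or some dereliction faces a matching promotion). The paper isolates exactly this into \Cref{redSeqTranslationFinitenessIntermediaryLemma}, proved by induction on the sum $\sum_i d_i + \sum_i p_i$ of pending derelictions and promotions, with the no-deadlock case ruled out by the acyclicity of the $\cutrel$-relation (a stuck configuration would yield an infinite chain of pairwise distinct premises). Your proposal supplies neither the termination measure nor the acyclicity argument, so the claimed finiteness of the simulating sequence is not justified in precisely the cases where it is non-trivial.
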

    
    To prove this lemma, we need the following one.
    This lemma prove that when starting from the translation of a proof containing derelictions promotions and functorial promotions, there exist an order of execution of cut-elimination step that will make them disappear or commute under the cut.
    This order depends on how the proof is translated, for instance the following (opened) proof:
    $$
    \AIC{\vdash A, B, C}
    \RL{\ocf}
    \UIC{\vdash \oc A, \wn B, \wn C}
    \AIC{\vdash C^\perp}
    \RL{\ocf}
    \UIC{\vdash \oc C^\perp}
    \RL{\rmcutpar}
    \BIC{\vdash \oc A, \wn B}
    \DP
    $$
    has two translations:
    $$
    {\small\AIC{\vdash A, B, C}
    \RL{\wnde}
    \UIC{\vdash A, B, \wn C}
    \RL{\wnde}
    \UIC{\vdash A, \wn B, \wn C}
    \RL{\ocprom}
    \UIC{\vdash \oc A, \wn B, \wn C}
    \AIC{\vdash C^\perp}
    \RL{\ocprom}
    \UIC{\vdash \oc C^\perp}
    \RL{\rmcutpar}
    \BIC{\vdash \oc A, \wn B}
    \DP\qquad
    \AIC{\vdash A, B, C}
    \RL{\wnde}
    \UIC{\vdash A, \wn B, C}
    \RL{\wnde}
    \UIC{\vdash A, \wn B, \wn C}
    \RL{\ocprom}
    \UIC{\vdash \oc A, \wn B, \wn C}
    \AIC{\vdash C^\perp}
    \RL{\ocprom}
    \UIC{\vdash \oc C^\perp}
    \RL{\rmcutpar}
    \BIC{\vdash \oc A, \wn B}
    \DP}
    $$
    To eliminate cuts, we apply in both the same cut-elimination steps but in a different order. We apply in both an $(\ocprom)$ commutative step, then apply in the first one a dereliction commutative step and a $(\ocprom)/(\wnde)$ principal case; whereas in the second one we first apply the $(\ocprom)/(\wnde)$ principal case then the dereliction commutative step.
    \begin{lem}
    \label{redSeqTranslationFinitenessIntermediaryLemma}
    Let $n\in\mathbb{N}$, let $d_1, \dots, d_n\in\mathbb{N}$ and let $p_1, \dots, p_n\in\{0, 1\}$. Let $\pi$ be a \muLLinf{}-proof concluded by an (\mcut)-rule, on top of which there is a list of $n$ proofs $\pi_1, \dots, \pi_n$. We ask for each $\pi_i$ to be of one of the following forms depending on $p_i$:
    \begin{itemize}
    \item If $p_i=1$, the $d_i+1$ last rules of $\pi_i$ are $d_i$ derelictions and then a promotion rule. We ask for the principal formula of this promotion to be either a formula of the conclusion, or to be cut with a formula being principal in a proof $\pi_j$ on one of the last $d_j+p_j$ rules.
    
    \item If $p_i=0$, the $d_i$ last rules of $\pi_i$  are $d_i$ derelictions.
    \end{itemize}
    In each of these two cases, we ask for $\pi_i$ that each principal formulas of the $d_i$ derelictions to be either a formula of the conclusion of the multicut, either a cut-formula being cut with a formula appearing in $\pi_j$ such that $p_j=1$.
    We prove that $\pi$ reduces through a finite number of \mcut{}-reductions to a proof where each of the last $d_i+p_i$ rules either were eliminated by a $(\ocprom/\wnde)$-principal case, or were commuted below the cut.
    \end{lem}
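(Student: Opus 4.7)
The plan is to proceed by strong induction on $M(\pi) := \sum_{i=1}^{n}(d_i + p_i)$, the total number of top rules to dispatch. The base case $M(\pi) = 0$ is immediate. For the inductive step, we exhibit an applicable $(\mcut)$-reduction step on $\pi$, check that the resulting proof $\pi'$ still fits the lemma's hypotheses (suitably updated) with $M(\pi') < M(\pi)$, and conclude by the induction hypothesis. The reduction to apply is determined by a dichotomy on the current state.

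Suppose first that some $\pi_i$ has $p_i = 0$ and $d_i > 0$, so that the bottom-most $(\mcut)$-visible rule of $\pi_i$ is a dereliction with principal formula $\wn A$. By the hypothesis, $\wn A$ is either a formula of the $(\mcut)$-conclusion -- in which case we apply dereliction commutation below the $(\mcut)$ -- or it is cut with a principal formula of some rule in $\pi_j$ with $p_j = 1$. In the latter case, the cut partner $\oc A^{\perp}$ is an $\oc$-formula in $\pi_j$'s conclusion, and since $\pi_j$'s bottom-most rule is a promotion whose principal is the unique $\oc$-formula there, the cut partner is precisely that promotion's principal; we apply the $(\ocprom/\wnde)$ principal case at the $(\mcut)$, eliminating both rules. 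The commutation decreases $M$ by $1$; the principal case decreases it by $2$, and sends the configuration $(d_j,1)$ to $(d_j,0)$, thereby exposing to the new $(\mcut)$ the $d_j$ derelictions previously lying above the eliminated promotion.

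Suppose now that every $\pi_i$ with $d_i + p_i > 0$ has $p_i = 1$; then each such $\pi_i$'s bottom-most rule is a promotion. Under this case assumption no dereliction is accessible at the bottom of any premise, so none of these promotions' principals can be cut-bound to a visible rule, and by the hypothesis they must lie in the $(\mcut)$-conclusion. We then apply promotion commutation to one such promotion, sending $(d_i, 1)$ to $(d_i, 0)$ and thus exposing its $d_i$ derelictions for handling in subsequent iterations; $M$ decreases by $1$. In the intended application of the lemma, the $(\mcut)$-premises all arise from translations of $\musuperLLinf$-promotions into \muLLinf{}, whose translations each end in an $\ocprom$, so every premise is promotion-concluded and the \muLLinf{} side-condition for promotion commutation is satisfied.

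The main difficulty will be the bookkeeping needed to verify that the lemma's hypotheses are preserved under each reduction: after a principal case or a promotion commutation, the premise formerly above a promotion becomes a new premise of the updated $(\mcut)$, and one must check that the remaining top rules -- in particular the now-exposed derelictions -- still have their principal formulas either in the $(\mcut)$-conclusion or cut with a surviving promotion of some $\pi_k$ with $p_k = 1$. This amounts to a routine inspection of how the cut-relation $\cutrel$ is updated by each reduction step: principal cases substitute the promotion's premise into the $(\mcut)$ while preserving all other cut-links, and commutations merely push a unary rule past the $(\mcut)$ without disturbing $\cutrel$ on the remaining premises. Once these cases are laid out, the induction goes through and delivers the required finite reduction sequence.
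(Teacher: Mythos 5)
Your overall strategy is the paper's: induction on $\sum_i (d_i+p_i)$, at each step either commuting a bottom rule below the $(\mcut)$ or firing an $(\ocprom/\wnde)$ principal case, and your first case (some premise ends in a dereliction) is handled correctly, including the observation that the cut-partner of $\wn A$ must be the unique $\oc$-formula of a promotion-concluded premise. The gap is in your second case, where every premise with $d_i+p_i>0$ ends in a promotion. You argue that since no dereliction is accessible at the bottom of any premise, none of these promotions' principals can be cut-bound to a visible rule, hence ``by the hypothesis they must lie in the $(\mcut)$-conclusion''. This inference is invalid: the hypothesis only requires the promotion's principal $\oc A$ to be cut with a formula that is principal in \emph{one of the last $d_j+p_j$ rules} of some $\pi_j$, and the cut-partner $\wn A^{\perp}$, being a $\wn$-formula, is then principal in one of the $d_j$ derelictions of $\pi_j$ --- rules that sit \emph{above} $\pi_j$'s own bottom promotion and are therefore not at the bottom of any premise. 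A configuration in which every bottom rule is a promotion whose principal is cut against such a buried dereliction is thus not excluded by your reasoning, and in that configuration neither a commutation nor a principal case is applicable: the induction would be stuck.

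What is missing is exactly the paper's third sub-case, an impossibility argument. If no bottom rule has its principal in the conclusion, pick a bottom promotion; its principal is cut with a buried dereliction of some $\pi_j$ with $d_j>0$, hence (by your case assumption) $p_j=1$, so $\pi_j$ also ends in a promotion whose principal is again not in the conclusion; iterating produces an infinite sequence of pairwise distinct premises, contradicting the finiteness of the $(\mcut)$ together with the acyclicity of $\cutrel$ extended to premise sequents. With this in hand one concludes that \emph{at least one} bottom promotion has its principal in the conclusion (not all of them, as you assert), and commuting that one lets the induction proceed. A minor further remark: you justify the side-condition of the promotion-commutation step by appealing to the intended application rather than to the lemma's hypotheses; the paper's proof relies on the same implicit assumption, so this is not where the two arguments diverge.
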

    \begin{proof}
        We prove the property by induction on the sum of all the $d_i$ and of all the $p_i$:
        \begin{itemize}
        \item (Initialization). As the sum of the $d_i$ and $p_i$  is $0$, all $d_i$ and $p_i$ are equal to $0$, meaning that our statement is vacuously true.
        
        \item (Heredity). We have several cases:
        \begin{itemize}
        \item If the last rule of a proof $\pi_i$ is a promotion or a dereliction for which the principal formula is in the conclusion of the (\mcut), we do a commutation step on this rule obtaining $\pi'$. We apply our induction hypothesis on the proof ending with the (\mcut); and with parameters $d'_1, \dots, d'_n$ as well as $p'_1, \dots, p'_n$ and proofs $\pi'_1, \dots, \pi'_n$. To describe these parameters we have two cases:
        \begin{itemize}
        \item If the rule is a promotion. We take for each $j\in\llbracket 1, n\rrbracket$, $d'_j=d_j$; $p'_j = p_j$ if $j\neq i$, $p'_i=0$; $\pi'_j=\pi_j$ if $j\neq i$.
        \item If the rule is a dereliction. We take for each $j\in\llbracket 1, n\rrbracket$, $d'_j=d_j$ if $j\neq i$, $d'_i=d_i-1$; $p'_j=p_j$.
        \end{itemize}
        The $\pi'_j$ will be the hypotheses of the (\mcut) of $\pi''$. Note that $\sum d'_j +\sum p'_j = \sum d_j+ \sum p_j-1$ meaning that we can apply our induction hypothesis.
        Combining our reduction step with the reduction steps of the induction hypothesis, we obtain the desired result.
        
        \item If there are no rules from the conclusion but that one $\pi_i$ ends with $d_i>0$ and $p_i=0$, meaning that the proof ends by a dereliction on a formula $\wn F$. This means that there is proof $\pi_j$ such that $p_j=1$ and such that $\wn F$ is cut with one of the formula of $\pi_j$, namely $\oc F^\perp$. As there are only one $\oc$-formula, and as $p_j=1$, $\oc F^\perp$ is the principal rule of the last rule applied on $\pi_j$. We therefore can perform an $(\ocprom/\wnde)$ principal case on the last rules from $\pi_i$ and $\pi_j$, leaving us with a proof $\pi'$ with an (\mcut) as conclusion. We apply the induction hypothesis on this proof with parameters $d'_1=d_1, \dots d'_i=d'_i-1 \dots, d'_n=d'_n$, $p'_1=p_1, \dots, p'_j=p'_j-1, \dots, p'_n=p_n$ and with the proofs being the hypotheses of the multicut.
        Combining our steps with the steps from the induction hypotheses, we obtain the desired result.
        
        \item We will show that the case where there are no rules from the conclusion and that no $\pi_i$ are such that $d_i>0$ and $p_i=0$, is impossible. Supposing, for the sake of contradiction, that this case is possible. We will construct an infinite sequence of proofs $(\theta_i)_{i\in\mathbb{N}}$ all different and all being hypotheses of the multi-cut, which is impossible. We know that there exist a proof $\theta_0:=\pi_j$ ending with a promotion on a formula $\oc A$ and that this formula is not a formula from the conclusion. This proof is in relation by the $\cutrel$-relation to another proof $\theta_1:=\pi_{j'}$. We know that this proof cannot be $\pi_j$ because the $\cutrel$-relation extended to sequents is acyclic. This proof also ends with a promotion on a principal formula which is not from the conclusion. By repeating this process, we obtain the desired sequence $(\theta_i)_{i\in\mathbb{N}}$, giving us a contradiction.
        \end{itemize}
        \end{itemize}
        The statement is therefore true by induction
        \end{proof}

    \begin{proof}[Proof of lemma~\ref{redSeqTranslationFiniteness}]
        Reductions from the non-exponential part of $\musuperLLinf(\Sig, \leqg, \leqf, \lequ)$ translates easily to one step of reduction in \muLLinf.
        To prove the result on exponential part, we will describe each translation of the reductions of figure \ref{fig:musuperllexpcommcutstep} and \ref{fig:musuperllexpprincipcutstep}.
        For the commutative steps no commutation of rules are necessary.
        \begin{itemize}
        \item Step~\hyperref[ocgcomm]{$({\text{comm}}_\ocg)$}. This step translates to the commutation of one $(\oc)$-rule in \muLLinf{}, which is one step of reduction.
        
        \item Step~\hyperref[ocfcommocgEmpty]{$({\text{comm}}^1_\ocf)$}. We prove that lemma~\ref{redSeqTranslationFinitenessIntermediaryLemma} applies to step~\hyperref[ocfcommocgEmpty]{$({\text{comm}}^1_\ocf)$}. Taking the left proof from step~\hyperref[ocfcommocgEmpty]{$({\text{comm}}^1_\ocf)$} and translating it in \muLLinf{}, we obtain a proof:
        $$
        \AIC{\pi_1^\circ}
        \noLine
        \UIC{\vdash A_1^\circ, \B_1^\circ}
        \RL{\wnde}
        \doubleLine
        \UIC{\vdash A_1^\circ, \wn\B_1^\circ}
        \RL{\ocprom}
        \UIC{\vdash \oc A_1^\circ, \wn\B_1^\circ}
        \AIC{\dots}
        \noLine
        \UIC{}
        \noLine
        \UIC{}
        \noLine
        \UIC{}
        \AIC{\pi_n^\circ}
        \noLine
        \UIC{\vdash A_n^\circ, \B_n^\circ}
        \RL{\wnde}
        \doubleLine
        \UIC{\vdash A_n^\circ, \wn\B_n^\circ}
        \RL{\ocprom}
        \UIC{\vdash \oc A_n^\circ, \wn\B_n^\circ}
        \RL{\rmcutpar}
        \TIC{\vdash \oc A^\circ, \wn\A^\circ}
        \DP
        $$
        with $\iota(1)=(i, 1)$ for some $i$ and $n=1+\#(\C)$.
        We apply our result on this proof with all the $p_i$ being equal to $1$ and with $d_i=\#(\B_i)$.
        Moreover, we notice that there will be only one promotion rule commuting under the cut and that it commutes before any dereliction, giving us 
        the 
        %one 
        translation of the functorial promotion under the multicut.
        %This step translates to the commutation of one $(\oc)$-rule, followed by $\#(\ocgProofsEmpty_1)+\#(\ocfProofs_2)+\#(\ocuProofs_3)$ $(\oc/\wnde)$ principal steps and $\#(\A)$ commutation of $(\wnde)$.
        
        \item Step~\hyperref[ocfcommocgNonEmpty]{$({\text{comm}}^2_\ocf)$}. As for~\hyperref[ocgcomm]{$({\text{comm}}_\ocg)$}, this step only translates to the commutation of one $(\oc)$-rule in \muLLinf{}, which is one step of reduction.
        
        \item Step~\hyperref[ocucommOnlyocu]{$({\text{comm}}^1_\ocu)$}.
        This step translates to the commutation of one $(\ocprom)$-rule, followed by $\#(\ocuProofs)$ $(\oc/\wnde)$ principal steps and finally one $(\wnde)$ commutation giving us the translation of a unary promotion under the multicut.
        
        \item Step~\hyperref[ocucommocgEmpty]{$({\text{comm}}^2_\ocu)$}.
        We prove this step using lemma~\ref{redSeqTranslationFinitenessIntermediaryLemma} as for step~\hyperref[ocfcommocgEmpty]{$({\text{comm}}^1_\ocf)$}.
        %This step translates to the commutation of one $(\oc)$-rule, followed by $\#(\ocfProofs_1)+\#(\ocuProofs_2)+\#(\ocgProofsEmpty_3)$ $(\oc/\wnde)$ principal steps and $\#(\A)$ commutation of $(\wnde)$.
        
        \item Step~\hyperref[ocucommocgNonEmptyocfFirst]{$({\text{comm}}^3_\ocu)$} and~\hyperref[ocucommocgNonEmptyocgFirst]{$({\text{comm}}^4_\ocu)$}. Both of these steps translate to the commutation of one $(\ocprom)$, followed by $\#(\ocuProofs_1)+1$ $(\oc/\wnde)$ principal steps.
        
        \item Step~\hyperref[mpxcomm]{$(\text{comm}_{\mpx{}})$}. We must distinguish three cases based on $i$:
        \begin{itemize}
        \item $i=0$. This step translate to one $(\wnwk)$-commutative step.
        \item $i=1$. This step translate to one $(\wnde)$-commutative step.
        \item $i>1$. This step translates to $i-1$ commutation of $(\wncontr)$ and $i$ commutation of $(\wnde)$.
        \end{itemize}
        
        \item Step~\hyperref[contrcomm]{$(\text{comm}_{\contr{}})$}. This step translates to $i-1$ commutation of $(\wncontr)$.
        
        \item Step~\hyperref[contrPrincip]{$(\text{principal}_{\contr{}})$}. This step translates to  $i-1$ contraction principal cases. At the end we obtain the following derivation under the multi-cut:
        $$
        \AIC{\vdash\A^\circ, \overbrace{\wn {\A'}^\circ, \dots, \wn {\A'}^\circ}^i}
        \RL{\wncontr}
        \doubleLine
        \UIC{\vdash\A^\circ, \overbrace{\wn {\A'}^\circ, \dots, \wn {\A'}^\circ}^{i-1}}
        \noLine
        \UIC{\vdots}
        \noLine
        \UIC{\vdash\A^\circ, \wn {\A'}^\circ, \wn{\A'}^\circ}
        \RL{\wncontr}
        \doubleLine
        \UIC{\vdash \A^\circ, \wn{\A'}^\circ}
        \DP
        $$
        which we can re-arrange to get the translation of $\#\A'$ $\contr{i}^{\bar{\vec{\g}}}$ rules on each formulas of $\wn{\A'}^\circ$.
        Note that for $i=2$ no rule permutation are needed.
        
        \item Step~\hyperref[mpxPrincip]{$(\text{principal}_{\mpx{}})$}. If $i\geq 1$, this step translates in two phases:
        \begin{enumerate}
        \item First $i-1$ contraction principal cases;
        \item followed by $\#(\mathcal{S}^{\mpx{}}_{\ocProofs, S^{'\oc}})$ $(\wnde/\oc)$-principal cases, and $\#(\A'')$ dereliction commutative cases.
        \end{enumerate}
        To prove the second phase we re-use lemma~\ref{redSeqTranslationFinitenessIntermediaryLemma} as for steps \hyperref[ocucommocgEmpty]{$({\text{comm}}^2_\ocu)$} and \hyperref[ocfcommocgEmpty]{$({\text{comm}}^1_\ocf)$}.
        
        Finally, the obtained proof under the multi-cut look like this:
        $$
        \AIC{\vdash \A^\circ, \overbrace{\wn {\A''}^\circ, \dots, \wn {\A''}^\circ}^i, \overbrace{{\A'}^\circ,\dots, {\A'}^\circ}^i}
        \RL{\wnde}
        \doubleLine
        \UIC{\vdash \A^\circ, \overbrace{\wn {\A''}^\circ, \dots, \wn {\A''}^\circ}^i, \overbrace{{\A'}^\circ,\dots, {\A'}^\circ}^{i-1}, \wn{\A'}^\circ}
        \noLine
        \UIC{\vdots}
        \noLine
        \UIC{\vdash\A^\circ, \overbrace{\wn {\A''}^\circ, \dots, \wn {\A''}^\circ}^i, {\A'}^\circ, \overbrace{\wn{\A'}^\circ, \dots, \wn{\A'}^\circ}^{i-1}}
        \RL{\wnde}
        \doubleLine
        \UIC{\vdash\A^\circ, \overbrace{\wn {\A''}^\circ, \dots, \wn {\A''}^\circ}^i, \overbrace{\wn{\A'}^\circ, \dots, \wn{\A'}^\circ}^i}
        \RL{\wncontr}
        \doubleLine
        \UIC{\vdash\A^\circ, \overbrace{\wn {\A''}^\circ, \dots, \wn {\A''}^\circ}^{i-1}, \overbrace{\wn{\A'}^\circ, \dots, \wn{\A'}^\circ}^{i-1}}
        \noLine
        \UIC{\vdots}
        \noLine
        \UIC{\vdash\A^\circ, \wn {\A''}^\circ, \wn{\A''}^\circ, \wn{\A'}^\circ, \wn{\A'}^\circ}
        \RL{\wncontr}
        \doubleLine
        \UIC{\vdash \A^\circ, \wn{\A''}^\circ, \wn{\A'}^\circ}
        \DP
        $$
        \end{itemize}
        which we can re-arrange to get the translation of $\#\A'$ $\mpx{i}^{\bar{\vec{\g''}}}$, followed by the translation of $\#\A''$ $\contr{i}^{\bar{\vec{\g'}}}$.
        \item If $i=0$, this step translates to a weakening principal case, giving us the translation of $\#\A'$ $\mpx{0}^{\bar{\vec{\g''}}}$ and $\#\A''$ $\contr{0}^{\bar{\vec{\g'}}}$ with no commutation of rules necessary.
        \end{proof}

\subsection{Details on \Cref{mcutonestepCompleteness}}

\begin{lem}[Completeness of the (\mcut)-reduction system]
    \label{app:mcutonestepCompleteness}
If there is a \muLLinf{}-redex $\mathcal{R}$ sending $\pi^\circ$ to ${\pi'}^\circ$ then there exists a $\musuperLLinf(\Sig, \leqg, \leqf, \lequ)$-redex $\mathcal{R}'$ sending $\pi$ to a proof $\pi''$, such that in the translation of $\mathcal{R}'$, $\mathcal{R}$ is applied.
\end{lem}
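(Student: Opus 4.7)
The proof will proceed by case analysis on the shape of the redex $\mathcal{R}$ in $\pi^{\circ}$. The key observation is that the translation $(-)^{\circ}$ maps each super-exponential rule occurrence of $\pi$ to a fixed finite sequence of $\muLLinf{}$-rule occurrences (see \Cref{fig:superlltollRules}). In particular, every $(\mcut)$-occurrence of $\pi^{\circ}$ is the translation of a unique $(\mcut)$-occurrence $M$ of $\pi$, and the $\muLLinf{}$-rule(s) at the top of the hypotheses of $M^{\circ}$ attacked by $\mathcal{R}$ belong to the translation of one or two super-exponential rule(s) of $\pi$ immediately above $M$. This assignment allows us, in each case, to designate a canonical super-exponential redex $\mathcal{R}'$.

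For non-exponential rules, the translation is the identity on rules, so $\mathcal{R}$ is directly realized by a $\musuperLLinfSig$-redex $\mathcal{R}'$ whose simulation is exactly $\mathcal{R}$. The exponential cases are decided by the shape of the attacked rule(s) and by the applicability conditions of the commutation/principal lemmas. If $\mathcal{R}$ is a $(\ocprom)$-commutation, the attacked $(\ocprom)$ is the bottom rule of the translation of some $(\ocg)$, $(\ocf)$ or $(\ocu)$-rule $r$, and $\mathcal{R}'$ is the uniquely applicable step among $({\text{comm}}_{\ocg})$, $({\text{comm}}^1_{\ocf})/({\text{comm}}^2_{\ocf})$ and $({\text{comm}}^1_{\ocu})/({\text{comm}}^2_{\ocu})/({\text{comm}}^3_{\ocu})/({\text{comm}}^4_{\ocu})$, selected by the cut-connectivity shape of $\ocProofs$ at $M$. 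If $\mathcal{R}$ is a $(\wnwk)$-, $(\wnde)$- or $(\wncontr)$-commutation, the attacked rule is the bottom rule of the translation of $(\mpx{i})$ or $(\contr{i})$ for a determined $i$, and $\mathcal{R}'$ is $(\text{comm}_{\mpx{}})$ or $(\text{comm}_{\contr{}})$ accordingly. Principal cases are handled analogously: a $(\wnde/\ocprom)$-, $(\wnwk)$- or $(\wncontr)$-principal step in $\muLLinf{}$ is realized by a $(\text{principal}_{\mpx{}})$ or $(\text{principal}_{\contr{}})$ in $\musuperLLinfSig$, possibly paired with a promotion $r'$ across $M$ that is identified by tracing the $(\ocprom)$ on the other side of the cut back through the translation.

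By inspection of the simulations built in the proof of \Cref{redSeqTranslationFiniteness}, each chosen $\mathcal{R}'$ translates to a specific finite sequence of $\muLLinf{}$-reductions that contains $\mathcal{R}$. Indeed, the simulation is designed so that its first reduction acts on the bottom rule of the translation of $r$ (the one closest to $M^{\circ}$), which is precisely the rule attacked by $\mathcal{R}$; hence $\mathcal{R}$ appears as (one of) the first step(s) of the simulation of $\mathcal{R}'$.

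The main obstacle will be the case-by-case verification that, when the translation of $r$ expands to several $\muLLinf{}$-rules (e.g. for $(\ocf)$, $(\ocu)$, $(\mpx{i})$ or $(\contr{i})$ with $i \geq 2$), the ordering of steps in the simulation of $\mathcal{R}'$ actually contains $\mathcal{R}$ and not merely something equivalent up to further rewriting. This amounts to re-reading carefully the orderings fixed in the proof of \Cref{redSeqTranslationFiniteness}, especially the use of \Cref{redSeqTranslationFinitenessIntermediaryLemma} in the promotion-commutation cases, and checking that the flexibility left by that lemma in the choice of which $\muLLinf{}$-redex to fire first can always be used to start with $\mathcal{R}$. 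Beyond this bookkeeping, no conceptual difficulty arises.
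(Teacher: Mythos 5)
Your proposal is correct and follows essentially the same route as the paper's proof: a case analysis on the type of the \muLLinf{}-redex $\mathcal{R}$ (non-exponential, $\wn$-rule commutation, principal case, promotion commutation), tracing the attacked rule back through $(-)^\circ$ to a unique super-exponential rule sitting above the corresponding $(\mcut)$ of $\pi$, and selecting the matching $\musuperLLinfSig$-step whose simulation (from \Cref{redSeqTranslationFiniteness}) fires $\mathcal{R}$. The one point the paper spells out that you compress into ``the uniquely applicable step'' is the exhaustiveness check that every $(\mcut)$ whose context consists entirely of promotions is covered by exactly one of the seven commutation steps of \Cref{fig:musuperllexpcommcutstep}; the ordering concern you flag at the end is likewise left implicit in the paper, resting on the simulations built in \Cref{redSeqTranslationFiniteness}.
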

\begin{proof}
We only prove the exponential cases, the non-exponential cases being immediate. We have several cases:
\begin{itemize}
\item If the case is the commutative step of a contraction or a dereliction or weakening $(r)$, as it is on top of a (\mcut), it necessarily means that $(r)$ comes from the translation of a multiplexing or a contraction rule $(r')$ which is also on top of an (\mcut) in $\pi$, we can take $\mathcal{R}'$ as the step commutating $(r')$ under the cut.
\item If it is a principal case again, we have that there is a contraction or a dereliction or weakening rule $(r)$ on top of a (\mcut) on a formula $\wn A$. It also means that each proofs cut-connected to $\wn A$ ends with a promotion. As $\pi^\circ$ is the translation of a $\musuperLLinf(\Sig, \leqg,\leqf, \lequ)$-proof, it means that $(r)$ is contained in the translation of a multiplexing or contraction rule $(r')$ on a formula $\wn_\e A$ on top of a (\mcut). It also means that all the proofs cut-connected for this (\mcut) to $\wn_\e A$  are translations of promotions (no other rules than a promotion in $\musuperLLinf{}(\Sig, \leqg, \leqf, \lequ)$ translates to a derivation ending with a promotion). Therefore the principal case on $(r')$ is possible, we can take $\mathcal{R}'$ as it.
\item If it is the commutative step of a promotion $(r)$, it means that all the proofs of the contexts of the (\mcut) are promotions.
Meaning that $(r)$ is contained in the translation of a promotion $(r')$ on top of (\mcut).
We also have that the context of this (\mcut) are only proof ending with a promotion for the same reasons that last point.
We therefore need to make sure that each (\mcut) with a context full of promotions are covered by the $\rightsquigarrow$-relation. Looking back at figure~\ref{fig:musuperllexpcommcutstep} together with conditions given by each corresponding lemmas, we have that:
\begin{itemize}
\item Each $(\ocg)$-commutation is covered by the first case.
\item Each $(\ocf)$-commutation is covered by the two cases that follows: the second of the two covers the case where there is an $(\ocg)$-promotion in hypotheses of the multicut with non-empty context, whereas the first one covers the case where there are no such $(\ocg)$-promotions in the hypotheses.
\item The $(\ocu)$-commutation is covered by all the remaining cases:
\begin{itemize}
\item The first one covers $(\ocu)$-commutation when the hypotheses are all concluded by an $(\ocu)$-rule.
\item $(\ocu)$-commutation with $(\ocf)$-rules and (possibly) $(\ocg)$-rule with empty context are covered by the second case.
\item $(\ocu)$-commutation with $(\ocf)$-rules and $(\ocg)$-rule with non-empty contexts is covered by the third and the fourth cases: the third case covering all the cases where the chain of $(\ocu)$ encounters a $(\ocf)$ first, the fourth one when it encounter a $(\ocg)$ first.
\item $(\ocu)$-commutation without $(\ocf)$ rules but with $(\ocg)$ with or without empty contexts is covered by last case.
\end{itemize}
\end{itemize}
\end{itemize}
\end{proof}
    
\subsection{Details on the translation of fair reduction sequences}

\begin{coro}
    \label{app:FairredSeqTranslationFiniteness}
    % {\color{red} THIS LEMMA SHOULD BE STATED MORE PRECISELY!}
    For every fair $\musuperLLinf(\Sig, \leqg, \leqf, \lequ)$ reduction sequences $(\pi_i)_{i\in\omega}$, there exists:
    \begin{itemize}
        \item a fair \muLLinf{} reduction sequence $(\theta_i)_{i\in\omega}$;
        \item a sequence of strictly increasing $(\varphi(i))_{i\in\omega}$ natural numbers;
        \item for each $i$, an integer $k_i$ and a finite sequence of rule permutations $(p_i^k)_{k\in\llbracket 0, k_i-1\rrbracket}$ starting from $\pi_i^\circ$ and ending $\theta_{\varphi(i)}$. For convenience in the proof, let's denote by $(\pi_i^k)_{k\in\llbracket 0, k_i\rrbracket}$ be the sequence of proofs associated to the permutation;
        \item for all $i>i'$, $p_i^k>p_i^{k'}$ if $k'\in\llbracket 0, k_{i'}-1$ and $k\geq k_{i'}$;
        \item for all $i,k$, $p_i^k$ are positions lower than the multicuts in $\pi_i^\circ$.
        \item for each $i'\geq i$ and for each $k\in\llbracket 0, k_i-1\rrbracket, p^k_{i'}=p^k_{i}$
    \end{itemize}
    \end{coro}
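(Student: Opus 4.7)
The plan is to construct the sequences $(\theta_i)_{i\in\omega}$, $(\varphi(i))_{i\in\omega}$, $(k_i)_{i\in\omega}$, and $(p_i^k)_{i\in\omega, k\in\llbracket 0, k_i-1\rrbracket}$ simultaneously by induction on the index $i\in\omega$ of the original $\musuperLLinf(\Sig, \leqg, \leqf, \lequ)$ reduction sequence. In the base case we set $\theta_0 := \pi_0^\circ$, $\varphi(0) := 0$, $k_0 := 0$ and the associated permutation sequence is empty, so all the listed conditions are vacuously satisfied. In the inductive step, assuming we have already built the data up to index $i$, we apply \Cref{redSeqTranslationFiniteness} to the reduction step $\pi_i \rightsquigarrow \pi_{i+1}$: this gives us a finite sequence of \muLLinf{} reductions $\pi_i^\circ = \theta'_0 \redseq \dots \redseq \theta'_n$, together with a finite sequence of rule permutations transforming $\theta'_n$ into $\pi_{i+1}^\circ$, whose rule permutations all act strictly below the residuals of the multicuts. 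We then extend $(\theta_j)$ by appending $\theta'_1, \dots, \theta'_n$ after $\theta_{\varphi(i)}$, set $\varphi(i+1) := \varphi(i) + n$, and append to the permutation list $(p_i^k)_{k}$ the new permutations provided by \Cref{redSeqTranslationFiniteness} (shifted appropriately so they act on $\pi_{i+1}^\circ$). The key invariants to maintain are that the permutations of previous steps are preserved (so $p_{i+1}^k = p_i^k$ for $k < k_i$) and that all new permutations are positioned strictly below the multicuts — both of which follow directly from the form of the permutations exhibited in the proof of \Cref{redSeqTranslationFiniteness}.

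The first main item to verify is that the $\muLLinf$ sequence thus obtained is well-defined and infinite. Since each step of $\musuperLLinf$ translates to at least one step (and often several) of \muLLinf{} in this construction, but could in principle be $0$, we need to handle the edge case where some step translates to $0$ reduction steps plus only permutations. This would not actually occur because every $\musuperLLinf$ (\mcut)-reduction introduces at least a commutation or principal step at the \muLLinf{} level — inspection of the cases in the proof of \Cref{redSeqTranslationFiniteness} confirms this.

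The hard part will be establishing fairness of $(\theta_i)_{i\in\omega}$, which is where \Cref{mcutonestepCompleteness} is used. Suppose for contradiction that $(\theta_i)_{i\in\omega}$ is not fair: then there exists some $j$ and some \muLLinf-redex $\mathcal{R}$ at position $q$ in $\theta_j$ such that residuals of $\mathcal{R}$ at that position persist in $\theta_{j'}$ for all $j' \geq j$. Choose $i$ minimal with $\varphi(i) \geq j$; by \Cref{mcutonestepCompleteness}, the redex $\mathcal{R}$ (or its residual in $\theta_{\varphi(i)}$) lifts to a $\musuperLLinfSig$-redex $\mathcal{R}'$ in $\pi_i$ whose translation includes $\mathcal{R}$. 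By fairness of $(\pi_i)_{i\in\omega}$, there exists $i' > i$ at which no residual of $\mathcal{R}'$ survives in $\pi_{i'}$; the corresponding translation step in the construction must consume all residuals of $\mathcal{R}$ — either directly by firing them, or by eliminating them via principal reductions in the intermediate $\theta$-proofs. This contradicts the persistence assumption on $\mathcal{R}$. The subtlety here lies in the fact that a single $\musuperLLinfSig$-redex may split into several \muLLinf-redexes along the translation (e.g. a contraction principal step produces $i-1$ contraction principal steps), so one has to be careful to track residuals across the rule permutations as well; this is handled by \Cref{lem:proofStructureRobustnessPerm}, which ensures rule permutations do not create or duplicate (\mcut)-redexes. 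Combining these yields the desired fair \muLLinf{} reduction sequence together with all the bookkeeping data.
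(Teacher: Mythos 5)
Your proposal follows essentially the same route as the paper: the same induction on the $\musuperLLinf$ reduction sequence with base case $\theta_0=\pi_0^\circ$, the same use of \Cref{redSeqTranslationFiniteness} to produce the finite block of $\muLLinf$ steps and the accompanying permutations at each inductive step, and the same appeal to \Cref{mcutonestepCompleteness} (plus tracking of residuals through the translation) to obtain fairness. The only substantive point the paper spends more care on is checking that the previously accumulated permutations, being located below the multicuts, commute with the newly produced reduction steps so that they can be transported along each intermediate $\theta'_j$ — you gesture at this but it is the same argument.
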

    \begin{proof}
    We construct the sequence by induction on the steps of reductions of $(\pi_i)_{i\in\omega}$.
    %For each step $\pi_i \to \pi_{i+1}$, we construct $\varphi(i+1)$, $\theta_{\varphi(i)}, \dots, \theta_{\varphi(i+1)}$, $k_{i+1}$ and $\pi_{i+1}^0, \dots, \pi_{i+1}^{k_{i+1}}$.
    \begin{itemize}
        \item For $i=0$: we take $\theta_0=\pi_0^\circ$, $\varphi(0)=0$ and $k_0=0$.
    
        \item For $i+1$, suppose we constructed everything up to rank $i$. We use lemma~\ref{redSeqTranslationFiniteness} on the step $\pi_i \to \pi_{i+1}$
        %with $\pi'_i$ as the starting proof from $\pi_i^\circ$
        and get a finite sequence of reduction $\theta'_0 \to \dots \to \theta'_n$, such that there is a permutation of rules $(p_1, \dots, p_m)$ ($m\in\mathbb{N}$) starting on
        % a proof $\pi'_{i+1}$ of 
        $\pi_{i+1}^\circ$ and ending on $\theta'_n$ such that $p_1, \dots, p_m$ are at the depths of rules that just commuted down the multicut during the sequence $\theta'_0 \to \dots \to \theta'_n$.
    We have that $\theta'_0
    %=\pi'_i$
    =\pi_i^\circ$, therefore $(p_i^0, \dots, p_i^{k_i-1})$ is a sequence of reduction starting from $\theta'_0$ and ending on $\theta_{\varphi(i)}$.
    As ${\theta'}_0$ and $\theta'_j$ are equal under the multicut rules of $\theta'_0$ (for each $j\in\llbracket 0, n\rrbracket$) %and that rules on top of these multicuts have empty traces under the permutation of rules $(p_i^0, \dots, p_i^{k_i-1})$
    and that depths $p_i^j, j\in\llbracket 0, k_i-1\rrbracket$ are under the multicuts of $\pi_i$
    , we have that $(p_i^0, \dots, p_i^{k_i-1})$ is a sequence of rule permutation starting on proof $\theta'_j$.
    Let's denote by ${\theta'}_j^{0}, \dots, {\theta'}_j^{k_i}$ the sequence of proof associated to it.
    We have that for the same reason, ${\theta'}_j$ is equal to ${\theta'}_{j}^{k_i}$ on top of the depths of multicuts of $\theta'_j$. We therefore have that ${\theta'}_0^{k_i}, \dots, {\theta'}_{n}^{k_i}$ is an (\mcut) reduction sequence of $\muLLinf{}$ starting from $\theta_{\varphi(i)}$.
    As the two sequences of reductions $p_1, \dots, p_m$ and $p_i^0, \dots, p_i^{k_i-1}$ have disjoint sets of rules with non-empty traces,
    %and that rules having non-empty traces from $p_i^0, \dots, p_i^{k_i-1}$ are under the depth of the multicut of $\pi'_i$,
    we have that $p_i^0, \dots, p_i^{k_i-1}, p_1, \dots, p_m$ is a sequence of rule permutation starting from $\pi'_{i+1}$ and ending on the same proof than the proof ending the sequence $p_1, \dots, p_m, p_i^0, \dots, p_i^{k_i-1}$, namely ${\theta'}_n^{k_i}$.
    By setting $\varphi(i+1):=\varphi(i)+n, \quad \theta_{\varphi(i)+j} := {\theta'}_j^{k_i}$ (for $j\in\llbracket 0, n\rrbracket$), \quad $p^j_{i+1} = p^j_i$ for $j\leq k_i-1$\quad and \quad $p^{k_i-1+j}_{i+1} = p_j$ for $j\in\llbracket 1, m\rrbracket$, we have our property.
    
    Here is a summary of the objects used in the inductive step:
    % https://q.uiver.app/#q=WzAsMTksWzAsMCwiXFxwaV9pIl0sWzQsMCwiXFxwaV97aSsxfSJdLFs0LDEsIlxccGlfe2krMX1eXFxjaXJjIl0sWzQsMywiXFx0aGV0YSdfbiJdLFswLDMsIlxccGlfaV5cXGNpcmM9XFx0aGV0YSdfMCJdLFsxLDMsIlxcZG90cyJdLFswLDQsIntcXHRoZXRhJ31eMV8wIl0sWzAsNiwie1xcdGhldGEnfV8wXntrX2l9ID0gXFx0aGV0YV97XFx2YXJwaGkoaSl9Il0sWzMsNiwiXFxkb3RzIl0sWzQsNCwie1xcdGhldGEnfV4xX24iXSxbNCw2LCJcXHRoZXRhX3tcXHZhcnBoaShpKzEpfSJdLFswLDUsIlxcdmRvdHMiXSxbMSw2LCJcXGRvdHMiXSxbMywzLCJcXGRvdHMiXSxbMiwzLCJcXHRoZXRhJ19qIl0sWzIsNCwie1xcdGhldGEnfV4xX2oiXSxbMiw1LCJcXHZkb3RzIl0sWzIsNiwie1xcdGhldGEnfV9qXntrX2p9Il0sWzQsNSwiXFx2ZG90cyJdLFswLDRdLFs0LDVdLFsxLDJdLFswLDFdLFs0LDYsInBfaV4wIiwyXSxbMyw5XSxbOCwxMF0sWzExLDddLFs2LDExXSxbNywxMl0sWzEzLDNdLFs1LDE0XSxbMTQsMTNdLFsxNCwxNV0sWzE1LDE2XSxbMTYsMTddLFsxMiwxN10sWzE3LDhdLFsyLDMsInBfMSwgXFxkb3RzLCBwX20iLDFdLFsxOCwxMF0sWzksMThdXQ==
    \[\begin{tikzcd}
        {\pi_i} &&&& {\pi_{i+1}} \\
        &&&& {\pi_{i+1}^\circ} \\
        \\
        {\pi_i^{\circ}=\theta'_0} & \dots & {\theta'_j} & \dots & {\theta'_n} \\
        {{\theta'}^1_0} && {{\theta'}^1_j} && {{\theta'}^1_n} \\
        \vdots && \vdots && \vdots \\
        {{\theta'}_0^{k_i} = \theta_{\varphi(i)}} & \dots & {{\theta'}_j^{k_i}} & \dots & {\theta_{\varphi(i+1)}}
        \arrow[from=1-1, to=4-1]
        \arrow[from=4-1, to=4-2]
        \arrow[from=1-5, to=2-5]
        \arrow[from=1-1, to=1-5]
        \arrow["{p_i^0}"', from=4-1, to=5-1]
        \arrow[from=4-5, to=5-5]
        \arrow[from=7-4, to=7-5]
        \arrow[from=6-1, to=7-1]
        \arrow[from=5-1, to=6-1]
        \arrow[from=7-1, to=7-2]
        \arrow[from=4-4, to=4-5]
        \arrow[from=4-2, to=4-3]
        \arrow[from=4-3, to=4-4]
        \arrow[from=4-3, to=5-3]
        \arrow[from=5-3, to=6-3]
        \arrow[from=6-3, to=7-3]
        \arrow[from=7-2, to=7-3]
        \arrow[from=7-3, to=7-4]
        \arrow["{p_1, \dots, p_m}"{description}, from=2-5, to=4-5]
        \arrow[from=6-5, to=7-5]
        \arrow[from=5-5, to=6-5]
    \end{tikzcd}
    \]
    \end{itemize}
    
    We get fairness of $(\theta_i)_{i\in\omega}$ from lemma~\ref{mcutonestepCompleteness} and from the fact that after the translation of an (\mcut)-step, $\pi^\circ\rightsquigarrow {\pi'}^\circ$, each residual of a redex $\mathcal{R}$ of $\pi^\circ$, is contained in the translations of residuals of the associated redex $\mathcal{R}'$ of lemma~\ref{mcutonestepCompleteness}.
    \end{proof}

\subsection{Details on the main theorem}

    \begin{thm}
        \label{app:musuperLLmodinfCutElim}%mauvais nom de label, mais je laisse comme ça
        Every fair (\mcut)-reduction sequence of $\musuperLLinf(\Sig, \leqg, \leqf, \lequ)$ converges to a $\musuperLLinf(\Sig, \leqg, \leqf, \lequ)$ cut-free proof.
        \end{thm}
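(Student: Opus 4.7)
The plan is to reduce the cut-elimination question for $\musuperLLinfSig$ to the already-established cut-elimination theorem for \muLLinf{} (\Cref{thm:mullcutelim}) via the translation $(-)^\circ$ of \Cref{musuperlltollTranslation}. Given a fair $\musuperLLinfSig$ reduction sequence $(\pi_i)_{i \in 1+\lambda}$ with $\lambda\in\omega+1$, I would first dispatch the easy case where $\lambda$ is finite by a single application of \Cref{redSeqTranslationFiniteness}: each step is simulated by finitely many \muLLinf{} steps, the final proof is cut-free, and there is nothing more to do. The substance of the argument is then the infinite case.

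For $\lambda = \omega$, the strategy is to invoke \Cref{FairredSeqTranslationFiniteness} to construct a companion fair \muLLinf{} reduction sequence $(\theta_i)_{i \in \omega}$ together with a strictly increasing $\varphi$ and finite rule-permutation sequences $(p_i^k)_{k < k_i}$ located strictly below the multicuts, such that $\pi_i^\circ$ is sent to $\theta_{\varphi(i)}$ by $(p_i^k)_{k < k_i}$ and such that these permutations extend coherently when $i$ grows. By \Cref{thm:mullcutelim}, the sequence $(\theta_i)_{i \in \omega}$ converges to a cut-free \muLLinf{} proof $\theta$. The first step is to transfer this convergence back to the source: since the translation commutes with the tree structure of derivations and the intermediate permutations act only under the multicuts, productivity of $(\theta_i)$ at every finite prefix lifts to productivity of $(\pi_i)$, so that $(\pi_i)$ itself converges to a $\musuperLLinfSig$ pre-proof $\pi$.

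It then remains to show that $\pi$ is valid. The key observation is that the coherence properties of the $(p_i^k)$ guaranteed by \Cref{FairredSeqTranslationFiniteness} make the family $\{p_i^k\}$ assemble into a single infinite permutation sequence on $\pi^\circ$; moreover each concrete rule occurrence is affected only finitely often, so we are in the hypotheses of \Cref{prop:validFinitePerm}. Validity of $\theta$ follows from $\theta$ being cut-free together with the fact that every infinite branch of the limit of a fair \muLLinf{} reduction supports a valid $\nu$-thread (inherited from \Cref{thm:mullcutelim}). Applying \Cref{prop:validFinitePerm} in reverse transports validity from $\theta$ to $\pi^\circ$, and \Cref{circValidityRobustness} finally transports it to $\pi$, establishing that $\pi$ is indeed a $\musuperLLinfSig$ proof.

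The main obstacle is the bookkeeping step of turning the directed family of finite permutation sequences $(p_i^k)_{i,k}$ into a single global permutation whose per-rule multiplicity is finite; without this, neither the identification of $\pi^\circ$ with $\theta$ up to permutation, nor the applicability of \Cref{prop:validFinitePerm}, would go through. Fortunately, the two coherence clauses of \Cref{FairredSeqTranslationFiniteness} (that the prefix of permutations at stage $i$ is preserved at all later stages, and that new permutations are always appended at strictly lower positions) are tailored exactly to overcome this obstacle: they make the colimit of the $(p_i^k)$ well-defined and ensure finiteness on each rule, so the verification reduces to unfolding these guarantees together with \Cref{lem:proofStructureRobustnessPerm}.
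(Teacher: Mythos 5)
Your proposal is correct and follows essentially the same route as the paper's own proof: finite case via \Cref{redSeqTranslationFiniteness}, infinite case via \Cref{FairredSeqTranslationFiniteness} and \Cref{thm:mullcutelim}, transfer of productivity to obtain the limit pre-proof $\pi$, and validity via the assembled global permutation sequence together with \Cref{prop:validFinitePerm} and \Cref{circValidityRobustness}. The only part the paper spells out in more detail is the productivity transfer, which it argues by contradiction on the depth of a persistent $(\mcut)$ using the fact that the intermediate permutations do not change the depths of the multicuts — exactly the observation your sketch relies on implicitly.
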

        \begin{proof}
        Consider a $\musuperLLinf(\Sig, \leqg, \leqf, \lequ)$ fair reduction sequence $(\pi_i)_{i\in 1+\lambda}$ ($\lambda\in\omega+1$).
        If the sequence is finite, we use lemma~\ref{redSeqTranslationFiniteness} and we are done.
        If the sequence is infinite, using corollary~\ref{FairredSeqTranslationFiniteness} we get a fair infinite \muLLinf{} reduction sequence $(\theta_i)_{i\in\omega}$ and a sequence $(\varphi(i))_{i\in\omega}$ of natural numbers.
        % {\color{red} NEED TO BE MORE PRECISE: exists $\varphi$ increasing such that for every i, $\pi_i^\circ \ni \theta_{\varphi(i)}$?}
        By theorem~\ref{thm:mullcutelim}, we know that $(\theta_i)_{i\in\omega}$ converges to a cut-free proof $\theta$ of \muLLinf{}.
        % We want to prove that this sequence converges to the translation of a $\musuperLLinf(\Sig, \leqg, \leqf, \lequ)$ pre-proof $\pi$ (up to a finite number of permutation for each rules).
        % We assume, for the sake of contradiction, that it does not.
        % Meaning that for any permutation of rules (as long as a rule is permuted a finite number of time), the proof is not the translation of a $\musuperLLinf(\Sig, \leqg, \leqf, \lequ)$.
        % Meaning that for any such permutation, there is a sequence of rule ending at depth $d$ which is not the translation of a $\musuperLLinf(\Sig, \leqg, \leqf, \lequ)$ rule.
        % From corollary~\ref{FairredSeqTranslationFiniteness} for each $i$, there is an index $k$ such that $\pi_j^\circ=\theta_i$ up to a fixed and finite number of permutation for each rule in the translation.
        % As $(\theta_i)_{i\in\mathbb{N}}$ converges to $\theta$, we know that for each $d\in\mathbb{N}$, $\theta_i$ is equal to $\theta$ under depth $d$.
        % As $\theta_i$ is the translation of a $\musuperLLinf(\Sig, \leqg, \leqf, \lequ)$ proof, we know that each rules under the $d$-depth of $\theta_i$ are translations of $\musuperLLinf(\Sig, \leqg, \leqf, \lequ)$ rules.
        % This give us our contradiction and we get a $\musuperLLinf(\Sig, \leqg, \leqf, \lequ)$ pre-proof $\pi$ such that $\pi^\circ=\theta$ up to finite rule permutations.
        % We now prove that $\pi_i$
        % Since the validity is preserved in both directions of the translation, and that permuting rules a finite number of time preserves validity, $\pi$ is valid and cut-free and the sequence $(\pi_i)_{i\in\mathbb{N}}$ converges to it.
        We now prove that the sequence $(\pi_i)_{i\in\omega}$ converges to a $\musuperLLinf(\Sig, \leqg, \leqf, \lequ)$ pre-proof $\pi$ such that $\pi^\circ=\theta$ up to a permutation of rules (the permutations of one particular rule being finite).
        
        First, we prove that for each depth $d$, there is an $i$ such that there are no $(\mcut)$-rules under depth $d$ in $\pi_i$. Suppose for the sake of contradiction that there exist a depth $d$ such that there always exist a $(\mcut)$ at depth $d$.
        There is a rank $i'$ and an $(\mcut)$ rule in $\pi_{i'}$ such that for each $i\geq i'$, $\pi_i$ will always contain this $(\mcut)$ and (therefore) the branch $b$ to it never changes.
        The translations $\pi_{i'}^\circ$ contains the translation of the branch $b$ which also ends with an $\mcut$.
        Since $\pi_{i'}^\circ$ is equal to $\theta_{\varphi(i')}$ up to the permutations of rules under the multicut and that these permutations do not change the depths of the $(\mcut)$ rules, we have that the $\theta_{\varphi(i)}$ all contains a $(\mcut)$ at a depth equal to the depth of the translation of $b$.
        This contradicts the productivity of this sequence of reduction, we therefore have that $(\pi_i)$ converges to a pre-proof $\pi$.
        
        Second, we prove that $\pi^\circ$ is equal to $\theta$ up to a permutation of rules (the permutations of one particular rule being finite).
        The condition on the sequence given by corollary~\ref{FairredSeqTranslationFiniteness} defines a sequence of rule permutation starting from $\pi^\circ$:
        $$p_0^0, \dots, p_0^{k_0-1}, p_1^{k_0}, \dots, p_1^{k_1-1}, \dots, p_n^{k_{n-1}}, \dots, p_{n+1}^{k_n}, \dots,$$
        moreover we have that this is a permutation of rules with finite permutation, therefore this sequence of rule permutation converges to a \muLLinf{} pre-proof $\pi'$.
        We have for each $i$, that the end of the sequence of rule permutation 
        $$p_0^0, \dots, p_0^{k_0-1}, p_1^{k_0}, \dots, p_1^{k_1-1}, \dots, p_i^{k_{i-1}}, \dots, p_i^{k_i-1}$$ 
        starting from $\pi^\circ$ is equal to $\pi_i^{k_i}$ under the multicuts% (as $\pi_i^0$ is equal to $\pi^\circ$ under the multicuts)
        . Therefore we have that the sequence $(\pi_i^{k_i})_{i\in\omega}=(\theta_{\varphi(i)})_{i\in\omega}$ converges to $\pi'$ and therefore that $\pi'=\theta$.
        As rule permutation with finite permutation and $(-)^\circ$ translation are robust to validity (both ways), we have that $\pi$ is valid.
        \end{proof}

\subsection{Details on corollary~\ref{coro:superllcutelim2}}

\begin{coro}[Cut Elimination for \superLL]
    \label{app:superllcutelim2}
Cut elimination holds for $\superLL(\Sig,\leqg,\leqf,\lequ)$ as soon as the 8 cut-elimination axioms of definition~\ref{cutElimAxs} are satisfied.
\end{coro}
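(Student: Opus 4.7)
The plan is to derive the corollary directly from \Cref{musuperLLinfCutElim} by viewing \superLL{} as a syntactic fragment of \musuperLLinf{}. Given a $\superLL(\Sig, \leqg, \leqf, \lequ)$ proof $\pi$ of $\vdash \Gamma$, I would first gather all its cuts into a single $(\mcut)$-inference at the root (a standard replacement of binary cuts by a multicut, suitably parameterized by $\iota$ and $\cutrel$), obtaining a $\musuperLLinf(\Sig, \leqg, \leqf, \lequ)$ pre-proof $\tilde\pi$ that is finite, contains no $\mu$- nor $\nu$-inference, and no fixed-point subformula. Being finite, $\tilde\pi$ has no infinite branch, so the validity condition is vacuously satisfied and $\tilde\pi$ qualifies as a valid \musuperLLinf-proof.

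Fixing any fair $(\mcut)$-reduction sequence starting from $\tilde\pi$, \Cref{musuperLLinfCutElim} yields convergence to a cut-free $\musuperLLinf(\Sig, \leqg, \leqf, \lequ)$ proof $\pi^\star$ of $\vdash \Gamma$. The remaining task is then to argue that $\pi^\star$ is in fact a \emph{finite} cut-free proof that uses only \superLL{} inferences, at which point the corollary is established.

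For this, first inspect the reduction rules of \Cref{fig:musuperllexpcommcutstep,fig:musuperllexpcommcutstep2,fig:musuperllexpprincipcutstep} together with the non-exponential steps of \Cref{app:mumallonestep}: no reduction step ever introduces a $\mu$- or $\nu$-inference nor a fixed-point subformula, so an easy induction on the reduction sequence shows that every $\pi_i$ (and hence $\pi^\star$, as the limit of a productive sequence) remains fixed-point free. Second, the validity condition on $\pi^\star$ requires every infinite branch to carry a thread whose minimum recurring formula is a $\nu$-formula; but since $\pi^\star$ has no $\nu$-formula at all, no thread can be valid, so $\pi^\star$ cannot contain any infinite branch. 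As $\pi^\star$ is finitely branching, König's lemma then forces $\pi^\star$ to be finite, and it is therefore a cut-free proof of $\vdash\Gamma$ in $\superLL(\Sig,\leqg,\leqf,\lequ)$.

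The main (mild) delicacy is the rule-by-rule verification that no reduction step creates fresh fixed-point connectives, which in particular has to be checked on the multicut-restriction operations used in the principal cases for $(\contr{i})$ and $(\mpx{i})$; beyond this, the argument simply exploits the fact that the validity criterion degenerates to plain finiteness in the fixed-point-free fragment, turning the convergence result of \Cref{musuperLLinfCutElim} into the expected cut-elimination statement for \superLL{}.
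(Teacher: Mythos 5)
Your proposal is correct and follows essentially the same route as the paper: both derive the result from \Cref{musuperLLinfCutElim} by observing that a \superLL{} proof is a finite, hence vacuously valid, \musuperLLinf{} proof, and that the cut-free limit, being fixed-point-free, admits no valid thread and therefore no infinite branch, so it is a finite cut-free \superLL{} proof (you merely spell out the K\H{o}nig-style finiteness step that the paper leaves implicit). The only cosmetic caveat is that one cannot literally merge all the cuts of $\pi$ into a single root $(\mcut)$ when they are separated by other inferences; the standard setup is to wrap $\pi$ in a unary $(\mcut)$ and let the $(\cut)/(\mcut)$ reduction step absorb the cuts as the reduction proceeds, which does not affect your argument.
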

\begin{proof}
Any $\superLL(\Sig, \leqg, \leqf, \lequ)$-proof is also $\musuperLLinf(\Sig, \leqg, \leqf, \lequ)$-proof therefore any sequence of (\mcut)-reductions converges to a cut-free proof. A cut-free proof of sequents containing only $\superLL(\Sig, \leqg, \leqf, \lequ)$-formulas and valid rules from \linebreak $\musuperLLinf(\Sig, \leqg, \leqf, \lequ)$ is necessarily a $\superLL(\Sig, \leqg, \leqf, \lequ)$ (cut-free) proof.
\end{proof}

 %\clearpage
%  \input{TODO}

\clearpage
\tableofcontents

\end{document}